\theoremstyle{plain}
\newtheorem{thm}{Theorem}[section]
\newtheorem{cor}[thm]{Corollary}
\newtheorem{lem}[thm]{Lemma}
\newtheorem{prop}[thm]{Proposition}
\newtheorem{conj}[thm]{Conjecture}
\newtheorem{prob}[thm]{Problem}
\theoremstyle{definition}
\newtheorem{defin}[thm]{Definition}
\newtheorem{rem}[thm]{Remark}
\newtheorem{ass}[thm]{Assumption}
\newcommand\blsfootnote[1]{%
  \begingroup
  \renewcommand\thefootnote{}\footnote{#1}%
  \addtocounter{footnote}{-1}%
  \endgroup
}
\numberwithin{equation}{section}
\let\frontmatter\relax
\def\mainmatter{\def\baselinestretch{1}\normalfont \setlength{\parskip}{0.5em}}
\def\l@subsection{\@tocline{2}{0pt}{2.5pc}{5pc}{}}
\renewcommand{\tocsection}[3]{%
  \indentlabel{\@ifnotempty{#2}{\bfseries\ignorespaces#1 #2.~}}\bfseries#3}
\renewcommand{\tocsubsection}[3]{%
  \indentlabel{\@ifnotempty{#2}{\ignorespaces#1 #2.~}}#3}
\renewcommand{\section}{\@startsection
{section}
{1}
{\z@}
{-\baselineskip}
{0.8\baselineskip}
{\centering\scshape\large}} 
\renewcommand{\subsection}{\@startsection
{subsection}
{2}
{\z@}
{-0.8\baselineskip}
{0.5\baselineskip}
{\normalfont \bf \normalsize}} 
\renewcommand{\subsubsection}{\@startsection
{subsubsection}
{3}
{\z@}
{-0.8\baselineskip}
{0.5\baselineskip}
{\normalfont \it \normalsize}} 
\let\emptyset\varnothing
\DeclareMathOperator*{\Res}{Res}
\newcommand{\zhu}{\operatorname{Zhu}}
\newcommand{\ad}{\operatorname{ad}}
\newcommand{\g}{{\mathfrak{g}}}
\newcommand{\cA}{{\mathcal{A}}}
\newcommand{\bea}{\begin{eqnarray}}
\newcommand{\eea}{\end{eqnarray}}
\newcommand{\beq}{\begin{equation}}
\newcommand{\eeq}{\end{equation}}
\newcommand{\dd}{\mathrm{d}}
\newcommand{\op}[1]{\operatorname{#1}}
\definecolor{darkgreen}{rgb}{0.1, 0.8, 0.1}
\definecolor{amber}{rgb}{1.0,0.49,0}
\begin{document}
\frontmatter

\title{Whittaker vectors for $\mathcal{W}$-algebras from topological recursion}

\author{Ga\"etan Borot}
\address{Humboldt Universit\"at zu Berlin, Institut f\"ur Ma\-the\-ma\-tik und Ins\-ti\-tut f\"ur Phy\-sik \\ Ru\-do\-wer Chau\-s\-see 25, 12489 Berlin, Germany.}
\address{Max-Planck-Institut f\"ur Mathematik, Vivatsgasse 7, 53111 Bonn, Germany.}
\email{gaetan.borot@hu-berlin.de}

\author{Vincent Bouchard}
\address{Department of Mathematical \& Statistical Sciences,
University of Alberta, 632 CAB\\
Edmonton, Alberta, Canada T6G 2G1}
\email{vincent.bouchard@ualberta.ca}

\author{Nitin K. Chidambaram}
\address{Max Planck Institut f\"ur Mathematik \\ 
Vivatsgasse 7, 53111 Bonn, Germany}
\email{kcnitin@mpim-bonn.mpg.de}

\author{Thomas Creutzig}
\address{Department of Mathematical \& Statistical Sciences,
University of Alberta, 632 CAB \\
Edmonton, Alberta, Canada T6G 2G1}
\email{creutzig@ualberta.ca}



\begin{abstract} 
	We identify Whittaker vectors for $\mathcal{W}^{\mathsf{k}}(\mathfrak{g})$-modules with partition functions of higher Airy structures. This implies that Gaiotto vectors, describing the fundamental class in the equivariant cohomology of a suitable compactification of the moduli space of $G$-bundles over $\mathbb{P}^2$ for $G$ a complex simple Lie group, can be computed by a non-commutative version of the Chekhov--Eynard--Orantin topological recursion. We formulate the connection to higher Airy structures for Gaiotto vectors of type A, B, C, and D, and explicitly construct the topological recursion  for type A (at arbitrary level) and type B (at self-dual level). On the physics side, it means that the Nekrasov partition function for pure $\mathcal{N} = 2$ four-dimensional supersymmetric gauge theories can be accessed by topological recursion methods.\end{abstract}

\vspace{0.5cm}

\blsfootnote{\begin{small} \noindent \hspace{-0.67cm} \textsc{MSC Classification:} 14H81, 17B69, 81R60, 81T13, 81T40 \\ \textsc{Keywords}: supersymmetric gauge theories, W-algebras, topological recursion, Airy structures, spectral curves, instanton partition function, Gaiotto vectors. \end{small}}

\maketitle

\newpage

\tableofcontents

\mainmatter

\newpage


\section{Introduction}

\subsection{Motivation}

The Alday--Gaiotto--Tachikawa (AGT) correspondence \cite{Alday:2009aq} relates four-dimensional gauge theories to two-dimensional conformal field theories whose underlying vertex algebras are $\mathcal W$-algebras. On the gauge theory side, one is interested in moduli spaces of instantons, while on the conformal field theory side, the objects of interest are conformal blocks.  This physics correspondence translates into an interesting connection between geometry and representation theory, which was phrased mathematically in a theorem of Schiffman and Vasserot \cite{SV} and Maulik and Okounkov \cite{MaulikOk}.
The geometric object of interest is  the moduli space of rank $r$ torsion-free coherent sheaves on $\mathbb P^2$ with a choice of framing at infinity. The connection to representation theory is that the equivariant Borel--Moore homology carries an action of a $\mathcal W$-algebra of $\mathfrak{gl}_r$ via correspondences. In fact, the equivariant Borel--Moore homology is isomorphic to a Verma module, which is
specified by the equivariant torus parameters $(\epsilon_1,\epsilon_2,\mathbf{Q})$ of the torus action on the framing. The level of the $\mathcal W$-algebra is determined by the torus action on the $\mathbb{P}^2$. 
The $\mathcal W$-algebra action is quasi-unitary with respect to the intersection pairing and most importantly the fundamental class, also called the Gaiotto state, is a Whittaker vector for the $\mathcal W$-algebra, as first conjectured by Gaiotto in \cite{Gai}. Nekrasov's partition function, which is the norm of the Gaiotto state, is then the norm of the Whittaker vector. 

This correspondence has been generalized in various ways: to all simple and simply-laced Lie groups \cite{Braverman:2014xca}, to spiked instantons \cite{Rapcak:2018nsl}, and to three complex dimensions \cite{Chuang:2019qdz}. Furthermore, a general program for real four-dimensional manifolds has been outlined by Feigin and Gukov \cite{Feigin:2018bkf}. However, only in the first generalization to simply-laced Lie groups is the role of Whittaker vectors presently understood. 

It is fair to say that the AGT correspondence has led to a deep connection between geometry and representation theory of $\mathcal W$-algebras, and that this connection is still  far from being completely understood. It  serves however as inspiration for progress. For example, on the geometric side  a major insight was the Nakajima--Yoshioka blow-up equations, asserting that the instanton partition function for the blow-up of $\mathbb{P}^2$ decomposes into an infinite sum over a lattice of torus parameters of products of partition functions \cite{Nakajima:2003pg}. Verma modules (and thus Whittaker vectors) satisfy a similar blow-up equation due to the recent insight that integrable representations of affine vertex algebras serve as a translation functor of $\mathcal W$-algebras and their modules \cite{Bershtein:2013oka, Arakawa:2020oqo}. 

Our aim is to relate this picture to topological recursion and its reformulation by Kontsevich and Soibelman in terms of Airy structures. More precisely, the key observation in this article is that Whittaker vectors can be constructed as partition functions of higher Airy structures, and thus are computed by topological recursion. The topological recursion is a universal recursive structure initially identified by Eynard, Chekhov and Orantin in the study of large $N$ expansion of matrix models \cite{E1MM,CE06,EO2MM,EORev,BEO} building on earlier work of Ambj\o{}rn--Chekhov--Makeenko \cite{ACM}. In the past fifteen years, it has found many other applications in enumerative geometry and in the geometry of the moduli space of curves e.g. \cite{EOwp,EInter,Norbu,DBOSS,TheseLewanski,Milanov,DoNorburyBessel,Norclass,ACEH}, in integrable systems \cite{BEInt,dxdsys,Belliard1,BDBKS} and in particular Painlev\'e equations \cite{TRPainleve}, in quantization problems \cite{MulaseDumitrescu,MarchalOrantin,EynardGF}, in two-dimensional conformal field theory, and in knot theory  \cite{DiFuji2,BEknots,BESeifert,BorotBrini}. It can be interpreted as a recursion on the topology of worldsheets \cite{EOGeo,GRpaper}. We refer to \cite{EORev,EynardICM} for reviews and \cite{Ebook,BorotLN} for introductory material. In the approach of Chekhov--Eynard--Orantin, the initial data for the recursion is packaged in a spectral curve. In particular, the Bouchard--Klemm--Mari{\~{n}}o--Pasquetti conjecture \cite{BKMP} later proved in \cite{EOBKMP,Liu1} states that taking as initial data the mirror curve of a toric Calabi--Yau $3$-fold $\mathfrak{X}$, topological recursion computes the all-genera open Gromov--Witten invariants  of $\mathfrak{X}$ and can be considered as an appropriate definition of the $B$-model topological string theory. From the physical perspective, this proposal is consistent with geometric engineering of gauge theories from topological strings, the identification of the four-dimensional limit of the mirror curve with the Seiberg--Witten curve, and the known fact that the genus $0$ part (prepotential) of the Nekrasov partition function is governed by special geometry on the Seiberg--Witten curve in a way compatible with the properties of the genus $0$ sector of the topological recursion.

In this article, combining representation theory of $\mathcal{W}$-algebras, its aforementioned connections with four-dimensional gauge theory established in \cite{SV,MaulikOk,Braverman:2014xca}, and the formalism of Airy structures, we prove that the topological recursion for the $\Lambda \rightarrow 0$ limit of a spectral curve (or a non-commutative version thereof) governs the all-genera expansion of certain Whittaker vectors and the corresponding instanton partition functions. Here $\Lambda$ is an energy scale and is treated as a formal parameter near $0$. The extension of this result to finite values of $\Lambda$ will be treated in a subsequent work\footnote{Even for finite $\Lambda$, we stress that the spectral curve to which one applies topological recursion to get the Whittaker vector is not the Seiberg--Witten curve, but rather the ``UV curve''. The Seiberg--Witten curve itself is a double cover of the UV curve and appears when one wants to access the Nekrasov partition function via the topological recursion. We will return to this in a future work.} We see the present work as a first step towards using topological recursion techniques to put all the aforementioned topics under the same roof.

\subsection{Main results and outline}

In Section 2, we review how Whittaker vectors for $\mathcal{W}$-algebras appear in four-dimensional supersymmetric gauge theories, state Theorem~\ref{ThmBFN} (taken from  \cite{Braverman:2014xca}) which is our starting point, and recall the notion of Airy structures and topological recursion. The equivariant parameters for the standard action of $(\mathbb{C}^*)^2$ on $\mathbb{P}^2$ are denoted $(\epsilon_1,\epsilon_2)$, and we set
$$
\kappa = -\frac{\epsilon_2}{\epsilon_1},\qquad \hbar = - \epsilon_1\epsilon_2.
$$
The variable $\kappa$ relates to the level of the $\mathcal{W}$-algebra, while the parameter $\hbar$ can be interpreted as a cohomological grading \cite{Braverman:2014xca}. Throughout the article we will study formal expansions when $\hbar \rightarrow 0$, which correspond to genus expansions in string theory interpretations, while $\kappa$ can either take finite values, or be of order $\hbar^{-\frac{1}{2}}$ (the Nekrasov--Shatashvili regime, precisely defined in Section~\ref{NSregime}).

In Section 3, we first provide some background on vertex algebras. We then prove two new results. Firstly, we establish the existence of Whittaker vectors in Theorem \ref{whittaker}, and secondly relate the existence theorem to a particular condition on ideals in vertex algebras in Theorem~\ref{Lem24}. The latter is directly relevant for the construction of Airy structures. 
In fact our existence result holds for any simple Verma-type module of a vertex algebra that has a commutative Zhu algebra. This is useful, since one eventually would like to study Whittaker vectors and their connection to geometry for the $\mathcal W_{n, m, l}$-algebras, \emph{i.e.} the algebras that act on moduli spaces of spiked instantons \cite{Rapcak:2018nsl}.

In Section 4, we prove our first main result, \emph{i.e.} the identification of the $A$-type Whittaker vector (Gaiotto state) with the partition function of an Airy structure (Theorem~\ref{Whit1prop}) and hence its computation from topological recursion. Using structural properties of the partition function (Lemma~\ref{lem:prophbar}), we also show how to extract the corresponding instanton partition function (Proposition~\ref{instton}): as a certain sum over graphs with vertex weights specified by the topological recursion amplitudes.

In Section 5, we rewrite the topological recursion for these partition functions, first in the case $\epsilon_1 + \epsilon_2 = 0$, as a period computation on the unramified spectral curve
\begin{equation}
\label{S01w}
\prod_{a = 1}^r \big(y -  \tfrac{Q_a}{x}\big) = 0,
\end{equation}
thus highlighting the role played by the geometry of spectral curves as in Chekhov--Eynard--Orantin's original approach: this is our second main result, Theorem~\ref{lemTRA}. For comparison, the Seiberg--Witten curve in this case \cite{Marsh} is related to the curve with the following equation, up to a change of variables,
\begin{equation}
\label{AS01w} \prod_{a = 1}^r \big(y - \tfrac{Q_a}{x}\big) =  (-1)^{r + 1}\tfrac{\Lambda}{x^{r + 1}} .
\end{equation} 
For the present paper, the relevant geometry is not \eqref{AS01w} but its $\Lambda \rightarrow 0$ limit \eqref{S01w}. We comment in Section~\ref{Generald} on the treatment of more general unramified spectral curves and the differences with the Chekhov--Eynard--Orantin topological recursion that concerns \emph{ramified} spectral curves. Our result provides a geometric meaning to the unramified topological recursion whose existence was anticipated for $r = 2$ in \cite{ABCD}. For general $\epsilon_1,\epsilon_2$, we find a recursive period computation that is best expressed in terms of the geometry of the regular D-module on $\mathbb{C}$ generated by
\begin{equation}
\label{dnsidun} \hat{\mathcal{Y}} = \prod_{a = 1}^r \big((\epsilon_1 + \epsilon_2)\partial_x + \tfrac{Q_a - (\epsilon_1 + \epsilon_2)}{x}\big). 
\end{equation} 
This is our third main result, Theorem~\ref{lemTRAarbit}. One can view this as a non-commutative generalization of the topological recursion. In Section~\ref{NCTR} we comment on the relation with non-commutative versions of the topological recursion that have already appeared in the literature \cite{CEMq1,CEMq2,BelEyn0}, mainly for the $r = 2$ cases; the $r > 2$ case that we propose in this work is new and its structure is more intricate.

In Sections \ref{Sec:AiryB}-\ref{Sec:AiryCD}, we extend Theorem~\ref{Whit1prop} to the B-type (Theorem~\ref{BAiry}), the C-type (Theorem~\ref{Prop73C}) and the D-type (Theorem~\ref{prop:typeD}) Whittaker vectors. In the B and C cases, our result is conditional to a conjectural check of the ``subalgebra property" of Airy structures, which is equivalent to a conjecture on the existence of Whittaker vectors: it is expected to be true in physics \cite{KMST} but currently lacks a mathematical proof. In the B case, we also describe in Theorem~\ref{TRBcase} the spectral curve description of the corresponding topological recursion, which involves the geometry of a two-fold ramified cover of the unramified spectral curve
$$
\prod_{a = 1}^n \big(y - \tfrac{Q_a}{2 x}\big) = 0.
$$
We note that Giacchetto, Kramer and Lewa\'nski have a(n apparently different) B-type topological recursion for ramified spectral curves also appears in enumerative geometry and integrable systems, in the context of spin Hurwitz numbers and BKP tau functions \cite{GKLBspin}.

Although the techniques of this paper can \emph{in principle} be adapted to the EFG cases, there are currently some shortcomings. While a geometric construction of Whittaker vectors of type $E$ can be found in \cite{Braverman:2014xca} and an explicit basis of strong generators for $\mathcal{W}(E_6)$ is described in \cite{ABCDEFG}, their expression is rather lengthy ($E_7$ and $E_8$ would be even more complicated).  Hence it seems rather difficult to formulate topological recursion on an explicit spectral curve in these cases. 
In physics it is believed that Whittaker vectors of modules of orbifolds of $\mathcal W$-algebras of simply-laced types are the physical relevant objects for non-simply-laced types and Whittaker vectors for such orbifolds are conjectured to exist in \cite{ABCDEFG}.

 \subsection{Notational conventions}
 \label{s:notation}
 
 We use $[n]$ to denote the set of integers $\{1,\ldots,n\}$. We use the notation $\mathbb{Z}_{>0} = \{1,2,3,\ldots \}$ for the set of positive integers and $\mathbb{Z}_{\geq 0} = \{0,1,2,\ldots\}$ for the set of non-negative integers. $\mathbf{L} \vdash [n]$ means that $\mathbf{L}$ is an unordered sequence of pairwise disjoint non-empty subsets of $[n]$ whose union is $[n]$, and $|\mathbf{L}|$ denotes the length of this sequence. We use $\sqcup$ for a disjoint union. If $z_1,\ldots,z_n$ is a $n$-tuple of variables and $I$ is a subset of $[n]$, we denote $z_I = (z_i)_{i \in I}$.
 
 When talking about modes of the fields in a vertex algebra, we will always use regular math fonts to denote the standard modes, and sans-serif fonts to denote the $\hbar$-rescaled modes as defined in Section \ref{sec:filtrations}. To be consistent, we will also use sans-serif fonts to denote the operators of an Airy structure in the algebra $\mathcal{D}^{\hbar}$.

\subsection*{Recent work} 
After the first version of this article was posted, Osuga \cite{Osuga} has extended some of our results to the case of Gaiotto vectors of $\mathcal{N} = 1$ superconformal blocks, whose square-norm is conjecturally related to instanton counting in $\mathcal{N} = 2$ supersymmetric gauge theories on $\mathbb{C}^2/\mathbb{Z}_2$. This extension is based on the formalism of super Airy structures and super topological recursion developed in \cite{BCHORS,BOsuga}.

\subsection*{Acknowledgments}

We thank Satoshi Nawata for explanations and references on Gaiotto vectors and Argyres--Douglas theories, Albrecht Klemm and Davide Scazzuso for remarks on Seiberg--Witten curves, and Andrew Linshaw for helpful discussions. We thank Kento Osuga for pointing out an error in Proposition~\ref{instton} in an earlier version of this work, and for various related and unrelated helpful discussions.  V.B. and T.C. acknowledge the support of the National Science and Engineering Research Council of Canada. G.B. and N.K.C. acknowledge the excellent working conditions at the Max Planck Institute for Mathematics, Bonn, where part of their work was conducted.


\section{Preliminaries}

\subsection{Instanton moduli spaces and \texorpdfstring{$\mathcal{W}$}{W}-algebras}

\label{SecInstanton}

As a starting point, we review the connection between instanton moduli spaces and $\mathcal{W}$-algebra modules for arbitrary simply-laced Lie groups, following \cite{Braverman:2014xca}. We mostly use the notation of \textit{loc.cit.} The main result is that the Gaiotto vector for the instanton moduli space, whose norm computes the instanton partition function for pure $\mathcal{N}=2$ four-dimensional supersymmetric gauge theory, is a Whittaker vector for a $\mathcal{W}^{\mathsf{k}}(\mathfrak{g})$-module at an appropriate level. Such Whittaker vectors are the main objects of study in this paper.

\subsubsection{Uhlenbeck spaces and intersection cohomology}\label{Sec:Uhlenbeck}
We consider a simple and simply-laced Lie group $ G $ over $ \mathbb C $. The corresponding Lie 	algebra is denoted by $ \mathfrak g $ and the Cartan subalgebra by $ \mathfrak h $. The dual Coxeter number is denoted $h^{\vee}$. When $G$ is of type $A$, the \emph{instanton number} of a $ G $-bundle on $ \mathbb P^2 $ is defined as the second Chern class of the associated complex vector bundle; it is defined more generally in \cite[Section 2.1]{Braverman:2014xca}.

We define $ \operatorname{Bun}_G^d $ to be the moduli space of $ G $-bundles over $ \mathbb P^2 $ with instanton number $ d \geq 0$ and a fixed trivialization on the line at infinity $ l_\infty \subset \mathbb{P}^2$ (called the \emph{framing at infinity}).  This is a smooth quasi-affine variety of dimension $ 2d h^\vee $. There exists a partial compactification of $ \operatorname{Bun}_G^d $ to an affine variety $ \mathcal U^d_G $, called the \emph{Uhlenbeck space of $ G $}.
 
 Set-theoretically, the Uhlenbeck space $ \mathcal U^d_G  $ can be described as 
 \[
	 \mathcal U^d_G  = \bigsqcup_{d' = 0}^d \operatorname{Bun}_G^{d'} \times {\rm Sym}^{d-d'}(\mathbb A^2).
 \]
 As an algebraic variety, $ \mathcal U^d_G $ is always singular, and carries an action of $ G \times  \operatorname{GL}_2(\mathbb{C}) $, where $ G $ acts by changing the framing at infinity, while $ \operatorname{GL}_2(\mathbb{C}) $ acts on the underlying $ \mathbb P^2 $. We will be interested in the action of the subgroup $ \mathbb G := G \times \mathbb{T}^2 $, where $\mathbb{T}^2 \subset  \operatorname{GL}_2(\mathbb{C}) $ is the diagonal torus. We consider the Cartan Lie subalgebra $ \mathfrak h \times \mathbb C^2 $ of  $ \mathbb G $, and denote an element of it by 
 \[
	 (Q, \epsilon_1, \epsilon_2), \qquad \text{where } Q =  (Q_1,\ldots, Q_{\ell}),
 \] and $\ell$ is the rank of $G$.
 
 We will be interested in the $ \mathbb G $-equivariant intersection cohomology $ \op{IH}^*_{\mathbb G} (\mathcal U^d_G)  $ of the Uhlenbeck space with respect to a stratification that we do not describe here. We note that it is a module over $\mathbb{A}_G :=  H^*_{\mathbb G} (\op{pt}) \cong \mathbb C[\mathfrak h \times \mathbb C^2 ]^{\mathfrak{W}} $, the ring of regular functions on $ \mathfrak h \times \mathbb C^2 $ that are invariant under the Weyl group $\mathfrak{W}$. We use $\mathbb{F}_G$ to denote the field of fractions of $\mathbb{A}_G$. We note that there is a canonical unit cohomology class $ \ket{1^d} \in \op{IH}^*_{\mathbb G} (\mathcal U^d_G) $ for every $ d \geq 0 $.
 
 By localizing  to $ \mathbb{F}_G $, we introduce the following $\mathbb{F}_G$-vector spaces
 \[
 M_{\mathbb F_G}^d(Q) := \op{IH}^*_{\mathbb G} (\mathcal U^d_G) \mathop{\otimes}_{\mathbb A_G} \mathbb F_G, \qquad M_{\mathbb F_G}(Q) := \bigoplus_{d \geq 0} M^{d}_{\mathbb F_G} (Q).
 \] 
 The \emph{Gaiotto vector} is defined to be:
 \begin{equation}
 \label{Gaiodef}
\ket{\mathfrak{G}} := \sum_{d \geq 0} \Lambda^{h^\vee d}\,\ket{1^d} \in M_{\mathbb{F}_G}(Q)[\![\Lambda]\!],
\end{equation}
where $\Lambda$ is a (redundant) parameter introduced for convenience.

 \subsubsection{\texorpdfstring{$ \mathcal W $}{W}-algebra action}
 
 Let $\ell$ be the rank of  $\mathfrak{g}$ and $d_1,\ldots,d_{\ell}$ the primitive exponents. For each $\mathsf{k} \in \mathbb{C}$, $\mathcal{W}^{\mathsf{k}}(\mathfrak{g})$ is a vertex operator algebra (VOA) strongly and freely generated by fields $(W^i)_{i = 1}^\ell$, with $W^i$ of conformal degree $d_i + 1$. We have $d_1 = 1$ and the field of lowest degree can be chosen as the Virasoro field $W^1 = L$. There is a canonical choice up to scale for the generator of maximal conformal degree (from representation theory), and the choice can be pinned down up to a sign (from geometry). We denote the  modes of these generators by
 $$
 W^i(z) = \sum_{m \in \mathbb{Z}} \frac{W^i_m}{z^{m + d_i + 1}},
 $$
 and the shifted level by $\kappa := \mathsf{k} + h^\vee$ . As we will explain in Section \ref{sec:filtrations}, we introduce a formal parameter $\hbar$ that keeps track of the filtration by conformal weight. We denote by 
 \begin{equation}\label{eq:rescaledW}
 \mathsf{W}^i_m = \hbar^{\frac{d_i+1}{2}} W^i_m
 \end{equation}
  the rescaled modes, which correspond to the rescaled modes $\widetilde{W}_m^{(i)}$ presented in Appendix B of \cite{Braverman:2014xca}.
 
 Braverman, Finkelberg and Nakajima \cite{Braverman:2014xca} show that that the Gaiotto vector \eqref{Gaiodef} is a Whittaker vector for a $\mathcal W^{\mathsf{k}}(\mathfrak{g})$-module at a well-chosen level. The result in the case of $G = {\rm SL}_r(\mathbb{C})$ was obtained previously by Schiffman and Vasserot \cite{SV} and Maulik and Okounkov \cite{MaulikOk}. More precisely, a $\mathbb{F}_G$-localized version of the main result of \cite{Braverman:2014xca} is the following:
 \begin{thm}[{\cite[Theorem 1.4.1]{Braverman:2014xca}}]
	 \label{ThmBFN} The space  $M_{\mathbb{F}_G}(Q)$ can be equipped with the structure of a $\mathcal W^{\mathsf{k}}(\mathfrak g)$-module with shifted level $\kappa = - \frac{\epsilon_2}{\epsilon_1}$, satisfying:	
	 \begin{enumerate}
	 \item The module $ M_{\mathbb F_G}(Q) $ is isomorphic to the Verma module of $ \mathcal W^{\mathsf{k}}(\mathfrak g) $ with highest weight $\lambda  = \frac{Q}{\epsilon_1} - \rho$, where $ \rho $ is the half-sum of the positive roots of $ \mathfrak g $.
	 \item Under the above isomorphism, the bilinear pairing $  \braket{\ \cdot\ |\  \cdot\  }  $ \footnote{ This bilinear pairing is described in more detail in Section~\ref{Sec:instanton} and is referred to as the Kac--Shapovalov form in \cite{Braverman:2014xca}. }  on the Verma module corresponds to a twisted Poincar\'e pairing on $ M_{\mathbb F_G}(Q) $ and $M^{d}_{\mathbb F_G} (Q)$ is an eigenspace for $L_0$ with eigenvalue $d$.
	 \item For any $d \geq 0$, $i \in [\ell]$ and $m > 0$, we have
	 \[
		 \mathsf{W}^i_m \ket{1^d} = \pm \delta_{i,\ell} \delta_{m,1} \ket{1^{d-1}}
	 \]
	 for some sign that may depend on $d$ and $\mathsf{k}$, and by convention $\ket{1^{-1}} = 0$.	 \end{enumerate}	 
 \end{thm}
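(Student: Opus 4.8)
Since the statement is the $\mathbb{F}_G$-localization of \cite[Theorem 1.4.1]{Braverman:2014xca}, the plan is to reconstruct the ingredients of \emph{loc.\ cit.}\ and then observe that base change along $\mathbb{A}_G \hookrightarrow \mathbb{F}_G$ costs nothing. I would organize the argument in four parts: (i) build the $\mathcal{W}^{\mathsf{k}}(\mathfrak{g})$-action on $\bigoplus_{d\geq 0}\op{IH}^*_{\mathbb{G}}(\mathcal{U}^d_G)$ at shifted level $\kappa=-\epsilon_2/\epsilon_1$; (ii) identify the resulting module with the Verma module of highest weight $\lambda=Q/\epsilon_1-\rho$; (iii) match the $\mathbb{G}$-equivariant intersection pairing with the Shapovalov form; (iv) prove the Whittaker identities of part~(3). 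For $G$ of type $A$ all four are available from \cite{SV,MaulikOk}; the real content, which I would reproduce, is the uniform treatment valid for all simply-laced $G$.

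For (i), I would use the factorization structure of the family $\{\mathcal{U}^d_G\}_{d\geq 0}$ along one of the two coordinate axes of $\mathbb{A}^2=\mathbb{P}^2\setminus l_\infty$: the transversal slice to a stratum $\operatorname{Bun}_G^{d'}\times{\rm Sym}^{d-d'}(\mathbb{A}^2)$ at a point whose point-defects collide is again (a product of) Uhlenbeck spaces, the deepest ones being governed by the affine Grassmannian of $G$. Hyperbolic restriction along the collision locus, combined with geometric Satake, should equip $\bigoplus_d\op{IH}^*_{\mathbb{G}}(\mathcal{U}^d_G)$ with a module structure over the vertex algebra presented by the Feigin--Frenkel screening operators, which one then checks equals $\mathcal{W}^{\mathsf{k}}(\mathfrak{g})$ with the shifted level read off from the two $\mathbb{T}^2$-weights. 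For (ii): the canonical class $\ket{1^0}$ spans $\op{IH}^*_{\mathbb{G}}(\mathcal{U}^0_G)$ and generates the module; it is annihilated by every $\mathsf{W}^i_m$ with $m>0$ (these strictly lower the instanton number) and is a joint eigenvector of the $W^i_0$, hence a $\mathcal{W}^{\mathsf{k}}(\mathfrak{g})$-highest weight vector, with highest weight $\lambda=Q/\epsilon_1-\rho$ --- the $\mathfrak{h}$-part $Q/\epsilon_1$ read off from the framing torus acting on the fiber over the trivial bundle, the $-\rho$ the shift intrinsic to the screening presentation; finally, the Poincar\'e polynomial of $\op{IH}^*(\mathcal{U}^d_G)$ equals the character of the Verma module, so a highest weight module with that character is the whole Verma module. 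The $L_0$-grading statement in (2) is built into the construction of the conformal vector $L=W^1$ from the cohomological/instanton-number grading.

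For (iii), the intersection pairing normalized by the equivariant parameters as in the Kac--Shapovalov form of \cite{Braverman:2014xca} is contravariant for the action --- transposing the defining correspondences makes $\mathsf{W}^i_m$ and $\mathsf{W}^i_{-m}$ adjoint up to the standard sign --- and a contravariant bilinear form on a Verma module is unique up to a scalar, which is fixed by $\braket{1^0|1^0}=1$; non-degeneracy over $\mathbb{F}_G$ comes from irreducibility of the Verma module at generic weight. For (iv), I would regard $\ket{1^d}$ as (dual to) the fundamental class of $\mathcal{U}^d_G$ and $\mathsf{W}^i_m$ with $m>0$ as acting by pull-push along a Hecke-type correspondence $Z\subset\mathcal{U}^d_G\times\mathcal{U}^{d-m}_G$ with a tautological-class insertion; a dimension count shows the top-dimensional contribution vanishes when $m\geq 2$ and when $i<\ell,\ m=1$, whereas for $(i,m)=(\ell,1)$ it is $\pm[\mathcal{U}^{d-1}_G]=\pm\ket{1^{d-1}}$, and the lower-dimensional corrections vanish by purity of intersection cohomology --- the residual sign being precisely the sign ambiguity in the canonical top generator $W^\ell$. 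Applying $\otimes_{\mathbb{A}_G}\mathbb{F}_G$ at the end preserves the module structure, the Verma identification, the pairing and the Whittaker relations, since all of these are $\mathbb{A}_G$-linear (and the comparison map only becomes an isomorphism onto the full Verma module after localizing to generic weight).

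The hard part is (i): checking that the geometrically defined operators obey the non-linear defining relations of $\mathcal{W}^{\mathsf{k}}(\mathfrak{g})$ at the claimed level, together with the supporting character computation. In type $A$ this is the content of \cite{SV,MaulikOk}; for general simply-laced $G$ the argument of \cite{Braverman:2014xca} through the affine Grassmannian and geometric Satake is indispensable, and is where essentially all the difficulty lies. By contrast, the reduction of the integral statement of \cite{Braverman:2014xca} to the $\mathbb{F}_G$-localized form stated here is a routine flat base change.
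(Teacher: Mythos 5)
The paper does not prove this theorem: it is imported verbatim (in $\mathbb{F}_G$-localized form) from Braverman--Finkelberg--Nakajima and serves only as the starting point for the constructions in Sections 3--7, so there is no in-paper proof to compare your argument against. As a reconstruction of the strategy of \cite{Braverman:2014xca} your outline is broadly faithful --- the $\mathcal{W}$-action there is indeed built from the factorization structure of the Uhlenbeck spaces together with hyperbolic restriction (to the fixed locus of the Cartan torus of $G$, which lands the module in a Heisenberg/free-field realization cut out by screening operators, rather than primarily via geometric Satake as you emphasize), the Verma identification follows from the highest-weight property of $\ket{1^0}$ plus a character count, the pairing statement from contravariance and uniqueness of the Shapovalov form, and the Whittaker relations from degree considerations on the correspondences --- and your closing remark that the $\mathbb{F}_G$-localization is a routine base change from the integral statement of \emph{loc.\ cit.} is exactly the reduction the present paper performs implicitly. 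You are also right that essentially all of the difficulty sits in establishing the $\mathcal{W}^{\mathsf{k}}(\mathfrak{g})$-relations for general simply-laced $G$; none of that content is reproduced or needed in this paper, which only uses the statement as a black box (and, as the Remark following the theorem notes, even the sign in part (3) is left unresolved and simply assumed to be $+1$).
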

\begin{rem}
It is expected but not proved that the sign is always $+1$, and we will assume that it is positive for the purposes of this paper.
\end{rem}
For the Gaiotto vector \eqref{Gaiodef} we get
\begin{equation}\label{eq:gaiotto}
\forall i \in [\ell],\,\,\forall m > 0,\qquad \mathsf{W}^i_m \ket{\mathfrak G} = \Lambda^{h^\vee} \delta_{i,\ell} \delta_{m,1} \ket{\mathfrak G},
\end{equation}
which corresponds to the state identified by the AGT correspondence. A vector $\ket{\mathfrak{G}}$ satisfying this equation is called a \emph{Whittaker} vector --- see Definition~\ref{d:whittaker}. In comparison to \cite{Braverman:2014xca}, our $Q$ is their $\mathbf{a}$, our $\Lambda^{2h^\vee}$ is their $Q$, and our $\mathsf{W}^i_m$ correspond to their rescaled generators $\widetilde{W}^{(i)}_m$ presented in Appendix B. In fact, \cite{Braverman:2014xca} establishes an integral version of the  correspondence by working with (four different) non-localized versions of the intersection cohomology groups on the geometric side. On the algebraic side, the authors define a  $ \mathcal W $-algebra over the ring $ \mathbb A_G$ denoted by $ \mathcal W_{\mathbb A_G} (\mathfrak g) $ which localizes to the $ \mathcal W^{\mathsf{k}}(\mathfrak g) $ introduced above. They also prove some general properties such as the presentation of $ \mathcal W^{\mathsf k} (\mathfrak g)  $ as a kernel of screening operators. Working with the integral version of the $ \mathcal W $-algebra allows them to exploit a certain cohomological grading that they call $ ^c\op{deg} $ and which will correspond to our $ \hbar $-grading.

\begin{rem}
	The connection between Whittaker vectors and instanton moduli spaces given in Theorem \ref{ThmBFN} is only proved for the case where $G$ is a simply-laced Lie group \cite{Braverman:2014xca}. However, a similar connection between Whittaker vectors and instanton moduli spaces is expected to hold for the general case of $G$ an almost simple simply-connected algebraic group. In particular, in the case of $G$ being of type $B$ (resp. $C$), the proposal from physics \cite{KMST} is that the Gaiotto vector should be a Whittaker vector for an appropriate twisted $\mathcal{W}^{\mathsf{k}}(\mathfrak{g})$-module, where $\mathfrak{g}$ is a of type $A$ (resp. $D$). We will study such Whittaker vectors from the point of view of Airy structures in Sections \ref{Sec:AiryB} and \ref{Sec:AiryCD}.
\end{rem}

 \subsubsection{Instanton partition function}\label{Sec:instanton}
 Here we describe the pairing, denoted $  \braket{\ \cdot\ |\  \cdot\  }  $  in agreement with the bra-ket notation, on the Verma modules of $ \mathcal W^{\mathsf{k}}(\mathfrak g) $. There exists a canonical embedding of $\mathcal{W}^{\mathsf{k}}(\mathfrak{g})$ into the Heisenberg VOA algebra $\mathcal{H}(\mathfrak{g})$. We will fix an orthonormal basis of $\mathfrak{h}$, such that the latter is generated by the modes $\mathsf{J}_m^a$ indexed by $a \in [\ell]$ and $m \in \mathbb{Z}$ that satisfy the commutation relations\footnote{The $ \mathsf{J}^a_m $ defined here correspond to the $ \widetilde{P}^a_m $ of  \cite{Braverman:2014xca} (see section 6.3 of \textit{loc.cit.}).}
\begin{equation}
\label{Heissanh}
\forall m,n \in \mathbb{Z},\quad \forall a,b \in [\ell],\qquad [{\mathsf{J}}_m^a,{\mathsf{J}}_{n}^{b}] = -\epsilon_1\epsilon_2\, m \,\delta_{m + n,0}\delta_{a,b}.
\end{equation}
Consider the linear involution $ \iota$ that acts on the Heisenberg algebra modes as follows,
\begin{equation}
\label{invol}
\iota({\mathsf{J}}^a_m) = - {\mathsf{J}}^a_{-m} + 2(\epsilon_1+\epsilon_2) \delta_{m,0},
\end{equation}
and extend it to monomials in modes $\mathsf{X}_1,\ldots,\mathsf{X}_p$ by setting $\iota(\mathsf{X}_1\cdots \mathsf{X}_p) = \iota(\mathsf{X}_1)\cdots \iota(\mathsf{X}_p)$. The pairing on the Verma module $M(\lambda)$ corresponding to a highest-weight vector $\ket{\lambda}$ is defined uniquely by requiring  $\langle\lambda|\lambda\rangle = 1$ and for any mode $\mathsf{X}$ of $ \mathcal W^{\mathsf{k}}(\mathfrak g) $.
		 \[
 	 	\forall u,v \in M(\lambda),\qquad \langle \lambda|\lambda\rangle = 1, \qquad \langle u|\mathsf{X} v\rangle = \langle \iota(\mathsf{X}) u | v\rangle.
 	 	 \]
 	 	 
 	 	 Using this form,  the (Nekrasov) instanton partition function for pure $\mathcal{N} = 2 $ four-dimensional supersymmetric gauge theory gets identified with the norm of the Gaiotto vector 
 	 	 \begin{equation}\label{eq:inst}
	 	 	 \mathfrak{Z}(Q, \epsilon_1, \epsilon_2,\Lambda) = \sum_{d \geq 0} \Lambda^{2 h^\vee d} \braket{1^d|1^d}\quad \in \mathbb{F}_{G}[\![\Lambda]\!],
 	 	 \end{equation}
 	 	 where $\Lambda$ can be interpreted as an energy scale.

\subsection{Airy structures and topological recursion}

\label{SecAiryTR}

The Whittaker vectors introduced in the previous section are the main object of study in this paper. In particular, we will show that we can construct Whittaker vectors for various types of $\mathcal{W}$-algebras using Airy structures and topological recursion. Let us now review the basic definition and properties of Airy structures, and briefly outline the connection with topological recursion.

\subsubsection{Airy structures}

The concept of quantum Airy structures was first introduced by Kontsevich and Soibelman in \cite{KS}, as an abstract algebraic structure generalizing the Chekhov--Eynard-Orantin topological recursion \cite{CE06,EO}. The construction was further studied in \cite{ABCD} --- see also \cite{BorotLN} for a pedagogical introduction. 

The original construction of quantum Airy structures was generalized to higher Airy structures in \cite{BBCCN18} building on earlier work of Milanov \cite{Milanov}, as an algebraic counterpart to the higher topological recursion of \cite{BHLMR,BE2,BE}. In this paper, when we talk about Airy structures, we mean higher crosscapped quantum Airy structures, as defined in \cite{BBCCN18}: we think of the original Kontsevich--Soibelman construction as being the special case in which Airy structures are quadratic and have only integral powers of $\hbar$ (crosscapped means that we allow half-integer powers of $\hbar$). Here we only review the basic facts about Airy structures, and refer the interested reader to Sections 2 of \cite{BBCCN18} and \cite{BKS} for more details.

Let $V$ be a $\mathbb{C}$-vector space. If $V$ is finite-dimensional, let $D$ be its dimension, and $I=[D]$. If $V$ is countably infinite, let $I = \mathbb{Z}_{>0}$.
 Let $(y_i )_{i \in I}$ be a basis for $V$, and $( x_i )_{i \in I}$ be the dual basis for $V^*$. Let $\mathcal{M}^{\hbar}= \mathbb{C}[\![(x_i)_{i \in I}]\!][\![\hbar^{\frac{1}{2}}]\!]$ be the algebra of formal functions on $V$, and $\mathcal{D}^{\hbar} = \mathbb{C}[\![(x_i)_{i\in I}, (\hbar \partial_{x_i})_{i \in I} ]\!][\![\hbar^{\frac{1}{2}}]\!]$ be the completed algebra of differential operators on $V$. We introduce an algebra grading on $\mathcal{M}^{\hbar}$ and $\mathcal{D}^{\hbar}$ as follows:
\begin{equation}\label{eq:grading}
	\deg(x_i) = \deg(\hbar \partial_{x_i}) = \deg(\hbar^{\frac{1}{2}})=1.
\end{equation}

\begin{defin}\label{d:Airy}
	An \emph{Airy structure in normal form} is a family of differential operators $S = \{\mathsf{H}_i\,\,|\,\,i \in I\}$ with $\mathsf{H}_i \in \mathcal{D}^{\hbar}$ for any $i \in I$, and satisfying the following two requirements:
	\begin{enumerate}
		\item {\bf Degree condition:}
			\begin{equation}
				\forall i \in I,\qquad \mathsf{H}_i = \hbar \partial_{x_i} + \mathsf{P}_i,
			\end{equation}
			where $\mathsf{P}_i \in \mathcal{D}^{\hbar}$ is a sum of terms of degree $\geq 2$. 
		\item {\bf Subalgebra condition:}
			The left $\mathcal{D}^{\hbar}$-ideal generated by $S$ (denoted $\mathcal{D}^{\hbar} \cdot S$)  is a graded Lie subalgebra of $\mathcal{D}^{\hbar}$ equipped with the Lie bracket $\hbar^{-1}[\cdot,\cdot]$. That is, 
			\begin{equation}
				[\mathcal{D}^{\hbar} \cdot S, \mathcal{D}^{\hbar} \cdot S] \subseteq \hbar \mathcal{D}^{\hbar} \cdot S.
			\end{equation}
			In other words, there exists $\mathsf{f}_{i,j}^k \in \mathcal{D}^{\hbar}$  such that $[\mathsf{H}_i,\mathsf{H}_j] = \sum_{k \in I} \hbar \mathsf{f}_{i,j}^k\,\mathsf{H}_k$ for any  $i,j \in I$.
			\end{enumerate}

			An \emph{Airy structure} is a set of differential operators $S = \{ \mathsf{H}_i ~|~ i \in I \}$ such that $\hat{\mathsf{H}}_i = \sum_{j \in I} \mathsf{N}_{i,j} \mathsf{H}_j$ is an Airy structure in normal form, for some invertible matrix $\mathsf{N}$.
\end{defin}

We note that being an Airy structure does not depend on a choice of basis for $V$, but being an Airy structure in normal form does.

\begin{rem}\label{r:infinite}
When $V$ is infinite-dimensional (as it will always be the case in this article), $\mathcal{D}^{\hbar}$ should be a suitable completion of the algebra of differential operators such that $\mathcal{M}^{\hbar}$ is a well-defined $\mathcal{D}^{\hbar}$-module. We will not dwell on this completion here and rather refer the interested reader to \cite[Section 2.1.2]{BKS}. The key point that can be routinely checked throughout the paper (see e.g. Remark~\ref{refinite}) is that all computations we do give finite answers.
\end{rem}

The key result in the theory of Airy structures is the existence and uniqueness theorem due to Kontsevich and Soibelman.

\begin{thm}[{\cite[Theorem 2.4.2]{KS}}]\label{t:KS}
	Let $S= \{\mathsf{H}_i\,\,|\,\,i \in I\}$ be an Airy structure. Then the system of differential constraints
	\begin{equation}
		\forall \mathsf{H}_i \in S, \qquad \mathsf{H}_i \mathcal{Z} = 0,
	\end{equation}
	has a unique solution of the form $\mathcal{Z} = e^F$, where $F \in \hbar^{-1} \mathcal{M}^{\hbar}$ has only terms of positive degree with respect to the grading \eqref{eq:grading}.
	We call $\mathcal{Z}$ the \emph{partition function} of the Airy structure $S$.
\end{thm}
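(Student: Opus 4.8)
The plan is to construct the solution $\mathcal{Z} = e^F$ order by order in the grading \eqref{eq:grading} and show that at each step the coefficients are uniquely determined, with the subalgebra condition guaranteeing consistency. First I would observe that writing $\mathcal{Z} = e^F$ with $F = \sum_{d \geq 1} F_d$, where $F_d \in \hbar^{-1}\mathcal{M}^{\hbar}$ is the homogeneous component of degree $d$, the equation $\mathsf{H}_i \mathcal{Z} = 0$ is equivalent to $(e^{-F}\mathsf{H}_i e^F)\cdot 1 = 0$, i.e.
\begin{equation*}
\hbar\partial_{x_i} F + \big(e^{-F}\mathsf{P}_i e^F\big)\cdot 1 = 0.
\end{equation*}
Since $\mathsf{P}_i$ has degree $\geq 2$ and conjugation by $e^F$ only increases degree (each $\hbar\partial_{x_j}$ in $\mathsf{P}_i$ can act on $F$, producing terms of degree $\geq \deg(\mathsf{P}_i\text{-monomial}) - 1 + (\text{degrees of the }F_d\text{'s hit})\geq 2$), the term $\big(e^{-F}\mathsf{P}_i e^F\big)\cdot 1$ has lowest degree $\geq 2$. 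Extracting the degree-$d$ part of the equation for $d \geq 2$ gives
\begin{equation*}
\hbar\partial_{x_i} F_d = -\big[\text{degree-}d\text{ part of }\big(e^{-F}\mathsf{P}_i e^F\big)\cdot 1\big],
\end{equation*}
and the right-hand side depends only on $F_2, \ldots, F_{d-1}$. Thus the $F_d$ are determined recursively: knowing $F_{<d}$, the equation prescribes $\partial_{x_i} F_d$ for every $i$, hence $F_d$ (it has no constant term since it is of positive degree), provided the prescribed data is consistent, i.e. the mixed partials agree: $\partial_{x_j}(\partial_{x_i} F_d) = \partial_{x_i}(\partial_{x_j} F_d)$.

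The heart of the argument — and the main obstacle — is establishing this integrability/consistency, and this is exactly where the subalgebra condition enters. The idea is to apply $\mathsf{H}_j$ to the constraint $\mathsf{H}_i \mathcal{Z}=0$: from $[\mathsf{H}_i,\mathsf{H}_j] = \sum_k \hbar\,\mathsf{f}_{i,j}^k \mathsf{H}_k$ we get, on the subspace where all $\mathsf{H}_k\mathcal{Z}$ vanish up to the relevant degree, that $\mathsf{H}_j(\mathsf{H}_i\mathcal{Z})$ and $\mathsf{H}_i(\mathsf{H}_j\mathcal{Z})$ differ by $\sum_k \hbar\,\mathsf{f}_{i,j}^k(\mathsf{H}_k\mathcal{Z})$, which is higher-degree-vanishing by the inductive hypothesis. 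Concretely, one runs a single induction on $d$: assume $F_2,\ldots,F_{d-1}$ have been constructed so that $\mathsf{H}_i\mathcal{Z}^{(d-1)} = O(\text{degree } d)$ for all $i$, where $\mathcal{Z}^{(d-1)} = \exp(\sum_{e<d}F_e)$; define the degree-$d$ data $g_i := -[\text{deg-}d\text{ part of }(e^{-F}\mathsf{P}_i e^F)\cdot 1]$ using $F_{<d}$; then the commutator relation, evaluated modulo degree $d+1$, forces $\partial_{x_j} g_i = \partial_{x_i} g_j$, so there is a unique $F_d$ (of positive degree) with $\partial_{x_i}F_d = g_i$; and with this choice $\mathsf{H}_i\mathcal{Z}^{(d)} = O(\text{degree }d+1)$. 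This closes the induction and simultaneously yields existence and uniqueness.

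For the reduction to normal form: given a general Airy structure $S=\{\mathsf{H}_i\}$ with $\hat{\mathsf{H}}_i = \sum_j \mathsf{N}_{i,j}\mathsf{H}_j$ in normal form, the systems $\mathsf{H}_i\mathcal{Z}=0$ and $\hat{\mathsf{H}}_i\mathcal{Z}=0$ are equivalent since $\mathsf{N}$ is invertible (here one should note $\mathsf{N}$ is an $I\times I$ matrix over $\mathbb{C}[\![\hbar^{1/2}]\!]$ or scalars, invertible in the appropriate completed sense), so the theorem for normal-form structures gives the result in general. Two technical points deserve care: (i) in the infinite-dimensional case one must check that all the infinite sums appearing in $(e^{-F}\mathsf{P}_i e^F)\cdot 1$ converge in $\mathcal{M}^\hbar$ in the sense of Remark~\ref{r:infinite}, which follows because each fixed degree-$d$ coefficient receives only finitely many contributions; and (ii) one must verify that $F_d$ indeed lies in $\hbar^{-1}\mathcal{M}^\hbar$ with only positive-degree terms — the $\hbar^{-1}$ being produced when a single power of $\hbar$ from the leading $\hbar\partial_{x_i}$ is, in the degree-$2$ step, stripped off against $\mathsf{P}_i$. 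I would end by remarking that this is precisely Theorem 2.4.2 of \cite{KS} (and its higher/crosscapped refinement in \cite{BBCCN18}), so I would cite those for the full details rather than reproduce every routine verification.
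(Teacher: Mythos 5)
The paper does not prove this statement at all: it is imported verbatim, with the citation to \cite[Theorem 2.4.2]{KS}, and the surrounding text only records the consequence \eqref{eq:PF} and the resulting recursion on $2g-2+n$. Your sketch is the standard Kontsevich--Soibelman argument from that reference (degree-by-degree construction of $F$, integrability of the prescribed $\partial_{x_i}F_d$ via the subalgebra condition, uniqueness from the absence of degree-$0$ terms) and is essentially sound, up to a harmless off-by-one in the degree bookkeeping (since $\hbar\partial_{x_i}$ has degree $1$, the degree-$d$ part of the constraint determines $F_{d-1}$, not $F_d$), so it is consistent with, and more informative than, what the paper provides.
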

Since $\mathcal{Z} = e^F$ with $F \in \hbar^{-1} \mathcal{M}^{\hbar}$ containing only terms of positive degree, we can write an explicit decomposition as
	\begin{equation}\label{eq:PF}
		 F = \sum_{\substack{g \in \frac{1}{2} \mathbb{Z}_{\geq 0}, n \in \mathbb{Z}_{>0} \\ 2g-2+n >0}} \frac{\hbar^{g-1}}{n!} \sum_{\alpha_1, \ldots, \alpha_n \in I} F_{g,n}[\alpha_1\, \cdots\, \alpha_n] \prod_{i=1}^n x_{\alpha_i} .
	\end{equation}
	Concretely, Airy structures are interesting because of the existence and uniqueness of this solution to the differential constraints. Once existence of a solution of the form \eqref{eq:PF} is established, the differential constraints can be used to construct it quite explicitly. Indeed, applying the differential constraints, one obtains a recursion on $2g - 2 + n$ for the coefficients $F_{g,n}[\alpha_1\, \cdots\, \alpha_n]$ (which is the degree of the corresponding terms in $F$ according to the grading \eqref{eq:grading}). We refer the reader to \cite[Section 2.2.2]{BBCCN18} for the explicit form of the recursion.

	\subsubsection{Topological recursion}
	
	\label{s:TR}

In fact, the original motivation of Kontsevich and Soibelman in defining quantum Airy structures was to construct an algebraic structure underpinning the Chekhov--Eynard--Orantin (CEO) topological recursion \cite{CE06,EO}. The CEO topological recursion is a formalism that constructs a collection of symmetric differential forms $\omega_{g,n}$, with $g \in \frac{1}{2} \mathbb{Z}_{\geq 0}$, $n \in \mathbb{Z}_{>0}$, and $2g-2+n >0$, through residue analysis on a spectral curve\footnote{In the original formulation of topological recursion \cite{CE06,EO} and its higher generalization \cite{BE,BHLMR,BE2}, it was assumed that $g \in \mathbb{Z}_{\geq 0}$, but the formalism is easily extended to $g \in \frac{1}{2} \mathbb{Z}_{\geq 0}$ \cite{BBCCN18,BKS}.}. It turns out that the recursive formalism can be reformulated as a special case of Airy structures.

More precisely, the differentials $\omega_{g,n}$ constructed by topological recursion have a very specific pole structure. As a result, there exists a particular basis of differentials on the spectral curve such that the expansion of the $\omega_{g,n}$ in this basis is finite. From the coefficients of this expansion, which we can write as $F_{g,n}[\alpha_1\, \cdots\, \alpha_n]$, one can construct a partition function $\mathcal{Z}$, and show that, for any spectral curve satisfying some admissibility condition, $\mathcal{Z}$ is the partition function of an Airy structure constructed as a particular module for a direct sum of $\mathcal{W}^{\mathsf{k}}(\mathfrak{gl}_r)$-algebras  at self-dual level \cite{BBCCN18}. The recursive structure inherent in the CEO topological recursion is then recovered by applying the differential constraints on $\mathcal{Z}$.

 All Airy structures coming from vertex algebras can in principle be converted to a spectral curve topological recursion, adapting the strategy presented in \cite{BKS}, but the details of how the geometry of the spectral curve appears in the recursion depends on the vertex algebra module under consideration. We will work these details out in Sections~\ref{Sec:TRA} and \ref{Sec:TRB} for the Airy structures of type A and B that are met in this paper.

\section{Whittaker modules}

We saw in Section \ref{SecInstanton} that, for a simply-laced Lie group $G$, the Gaiotto vector, whose norm gives the instanton partition function for pure $\mathcal{N}=2$ four-dimensional supersymmetric gauge theory, is a Whittaker vector for a $\mathcal{W}^{\mathsf{k}}(\mathfrak{g})$-module at an appropriate level, where $\mathfrak{g}$ is the Lie algebra of $G$. In this section we study in more details the definition and properties of Whittaker vectors for $\mathcal{W}^{\mathsf{k}}(\mathfrak{g})$-modules, where $\mathfrak{g}$ is an arbitrary simple Lie algebra. We also explain how Airy structures can be used to construct Whittaker vectors.

\subsection{Background on vertex algebras}
\label{SecBackground}

\subsubsection{Vertex algebras}

The best reference for our purposes is  \cite{Ara}. Let $V$ be a vertex algebra and 
\[
Y(v, z) = \sum_{n \in \mathbb Z} v_n z^{-n-1} \quad \in  \ \text{End}(V)[\![z^{\pm 1}]\!]
 \]
be the field associated to $v\in V$. The $v_n$ are called modes. A vertex algebra is usually defined over $\mathbb C$; however, vertex algebras are defined over general commutative rings and we will eventually consider vertex algebras over the field $\mathbb C(\hbar^{\frac{1}{2}})$ for some indeterminate $\hbar^{\frac{1}{2}}$. In the latter case we will always write $\mathsf{V}$ for $V \otimes_{\mathbb C} \mathbb C(\hbar^{\frac{1}{2}})$.

 The vacuum is denoted by $\ket{0}$ and the translation operator by $T$. It satisfies
 \[
 [T, Y(v, z)] = Y(Tv, z) = \partial_z Y(v, z). 
 \]
A Hamiltonian $H$ is an element in  $\text{End}(V)$ that satisfies
\[
\forall v \in V, \qquad [H, Y(v, z)] = Y(Hv, z) + z Y(Tv, z),
\]
and  acts semisimply on $V$. For example if $V$ is conformal in the sense that it has a non-trivial action of a Virasoro vertex algebra, then the zero-mode of the Virasoro field is a Hamiltonian. If $V$ is a vertex algebra with Hamiltonian $H$, a vector of $H$-eigenvalue $\Delta$ is said to be of conformal weight $\Delta$ and the subspace of vectors of conformal weight $\Delta$ is denoted by $V_{-\Delta}$. We assume that 
\begin{equation}\label{eq:posV}
V = \bigoplus_{n \in \mathbb{Z}_{\geq 0}} V_{-n}
\end{equation}
and that $V_0 = \mathbb C. \ket{0}$. A mode $v_n$ is called positive (respectively negative)  if $[H, v_n] = \Delta v_n$ for some $\Delta <0$ (respectively $\Delta >0$). 
We denote by $\cA$ the suitably completed algebra (the current algebra) of modes of $V$ and by $\cA_\pm$ the subalgebra generated by positive (resp. negative) modes and by $\cA_{\geq 0}, \cA_{\leq 0}$  the subalgebras generated by non-negative respectively non-positive modes. See  \cite[Section 3]{Ara} for technical details on current algebras and the Appendix of \cite{Ara} for details on the completion.

Let $I$ be an index set and $U = \{ w^i \,\,|\,\, i \in I \} \subset V$.
The set $U$ strongly generates $V$ if $V$ is spanned by  $w^{i_1}_{-n_1} \cdots\,  w^{i_s}_{-n_s} \ket{0}$ with $w^{i_1}, \ldots, w^{i_s} \in U$ and $n_1, \ldots, n_s > 0 $. Let us fix an ordering  $\preceq$ on the set $U$. Then, we say that the monomial $w^{i_1}_{-n_1} \cdots\,  w^{i_s}_{-n_s}$ in $\cA$ as well as the monomial 
$w^{i_1}_{-n_1} \cdots\,  w^{i_s}_{-n_s} \ket{0}$ in $V$
is lexicographically ordered if $n_j \geq  n_{j+1}$, and whenever $n_j = n_{j+1}$, we have $w^{i_j} \succeq w^{i_{j+1}}$. We say that $U$ is a minimal strong generating set if no proper subset of $U$ strongly generates $V$. The vertex algebra $V$ is freely generated by $U$ if  lexicographically ordered monomials form a basis of $V$ and such a basis is called a Poincar\'e--Birkhoff--Witt (PBW) basis.  

Three families of strongly and freely generated vertex algebras with Hamiltonian are:
\begin{enumerate}
	\item
$\mathcal{W}$-algebras \cite{Ara}, 
\item generalized free field algebras \cite{CL1}, 
\item the $\mathcal{W}_\infty$-algebras of \cite{Lin, KL}.
\end{enumerate}
All three appear in connection to geometry, and they are in fact tightly connected \cite{CL1, CL2}. 
The $\mathcal W_{n, m , l}$-algebras that act on moduli spaces of spiked instantons \cite{Rapcak:2018nsl} coincide conjecturally with the corner vertex algebras of Gaiotto and Rap{\v{c}\'a}k \cite{GR}, proven in some cases \cite{CGN}. The corner vertex algebras are quotients of $\mathcal{W}_\infty$ \cite{CL1, CL2} and hence are usually not strongly freely generated. With this in mind we will also consider non-strongly  freely generated vertex algebras.

\subsubsection{Filtrations}

\label{sec:filtrations}

We say that $\mathcal{F} = \{ \mathcal{F}_p V\,\, |\,\, p \in \mathbb Z \}$ is an increasing, separated and exhaustive filtration of $V$ if  $\mathcal{F}$ satisfies 
\[
\mathcal{F}_{p-1}V \subseteq \mathcal{F}_pV, \qquad \bigcap_{p \in \mathbb Z} \mathcal{F}_p V = \{0\}, \qquad \bigcup_{p \in \mathbb Z} \mathcal{F}_p V = V.
\]
Such a filtration is a vertex algebra filtration if in addition
$\ket{0} \in \mathcal{F}_0 V/\mathcal{F}_{-1}V$, $T(\mathcal{F}_p V) \subseteq \mathcal{F}_p V$ and $v_n(\mathcal{F}_p V) \subseteq \mathcal{F}_{p+q}V$ for all $v \in \mathcal{F}_q V$ and all $p, q, n \in \mathbb Z$. If there is a Hamiltonian $H$, we also require compatibility of the filtration with respect to $H$, that is $H(\mathcal{F}_p V) \subseteq \mathcal{F}_p V$. The associated graded vertex algebra is then 
\[
\text{gr}^{\mathcal{F}} V = \bigoplus_{p \in \mathbb Z} \mathcal{F}_p V/\mathcal{F}_{p-1}V.
\]

We continue with the  setup of the previous section and in addition require that the strong generators are eigenvectors of the Hamiltonian. The conformal weight of a vector $w^i$ is denoted $\Delta_i$.  Let $\mathcal{F}_p V$ be the subspace of $V$ spanned by elements $w^{i_1}_{-n_1} \cdots\,  w^{i_s}_{-n_s} \ket{0}$  with $\Delta_{i_1} + \cdots + \Delta_{i_s} \leq  p$. 
Then $\mathcal F = \{ \mathcal F_p V\,\, |\,\, p \in \mathbb Z \}$ is an increasing, separated and exhaustive vertex algebra filtration --- the standard filtration of Li \cite{HLi}. In fact, because of \eqref{eq:posV}, it follows that $\mathcal{F}_p V = 0$ for all $p < 0$. Moreover, the associated graded vertex algebra is commutative. This filtration induces a filtration of the current algebra $\cA$ denoted by $\mathcal F_p \cA$ \cite[Section 3.13]{Ara}. The restriction of this filtration to $\cA_-$ and $\cA_{\leq 0}$ is exhaustive.

A filtration of a $V$-module is defined in analogy. 
Let $M$ be a $V$-module, and $\mathcal G = \{ \mathcal{G}_p V\,\, |\,\, p \in \mathbb Z \}$ be an increasing, separated and exhaustive filtration of $M$. That is,  $\mathcal{G}$ satisfies 
\[
\mathcal{G}_{p-1}M \subseteq \mathcal{G}_p M, \qquad \bigcap_{p \in \mathbb Z} \mathcal{G}_p M = \{0\}, \qquad \bigcup_{p \in \mathbb Z} \mathcal{G}_p M = M.
\]
Such a filtration is a vertex algebra module filtration compatible with $\mathcal F$ if in addition
 $T(\mathcal{G}_p M) \subseteq \mathcal{G}_p M$ and $v_n(\mathcal{G}_p M) \subseteq \mathcal{G}_{p+q}M$ for all $v \in \mathcal{F}_q V$ and all $p, q, n \in \mathbb Z$. If there is a Hamiltonian $H$, we also require compatibility of the filtration with respect to $H$, that is $H(\mathcal{G}_p M) \subseteq \mathcal{G}_p M$. The associated graded module is then 
\[
	\text{gr}^{\mathcal{G}} M = \bigoplus_{p \in \mathbb Z} \mathcal{G}_p M /\mathcal{G}_{p-1}M,
\]
which is a $\text{gr}^{\mathcal{F}} V$-module. In the following we will denote such a compatible filtration by the same symbol $\mathcal F$ as the vertex algebra filtration.

Let $\mathcal F$ be Li's standard filtration, such that the associated graded vertex algebra is commutative. The algebra of modes $\cA$ is a Lie algebra with respect to the commutator $[\cdot, \cdot]$. We can turn it into a graded Lie algebra using the induced filtration. Let $\hbar$
be a formal variable and consider $\cA[\![\hbar^{\frac{1}{2}}]\!]$. We introduce a grading where $\deg \hbar^{\frac{1}{2}} = 1$, and $\deg x= 0$ for all $x\in \cA$.
Define the map $\eta: \cA \rightarrow \cA[\![\hbar^{\frac{1}{2}}]\!]$ by $\eta(x):= \hbar^\frac{p}{2} x$  for 
 $x \in \mathcal F_p \cA$, but $x \notin \mathcal F_{p-1} \cA$. We note that the image of $\eta$ is in $\cA[\![\hbar^{\frac{1}{2}}]\!]$, since for Li's standard filtration, the induced filtration on the algebra of modes satisfies $\mathcal{F}_p \mathcal{A} = 0$ for all $p<0$.
 \begin{lem}\label{l:Ahbar}
Let $\cA^{\hbar}$ be the subalgebra of $\cA[\![ \hbar^{\frac{1}{2}} ] \!]$ generated by the set  $\{ \eta(x)~|~x \in \cA \} \subset \cA[\![ \hbar^{\frac{1}{2}}]\!]$. Then $\cA^{\hbar}$ is a graded Lie subalgebra, in the sense that the commutator $[\cdot, \cdot]$ respects the grading. Moreover,
$$[\cA^{\hbar} , \cA^{\hbar}] \subseteq \hbar^{\frac{1}{2}} \cA^{\hbar}.$$
 \end{lem}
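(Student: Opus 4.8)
The plan is to reduce everything to the fact that, for elements $x \in \mathcal{F}_p\mathcal{A}$ and $y \in \mathcal{F}_q\mathcal{A}$, the commutator $[x,y]$ lands in $\mathcal{F}_{p+q-1}\mathcal{A}$. This is precisely the statement that $\mathrm{gr}^{\mathcal{F}}\mathcal{A}$ is commutative — which the excerpt has already established, since Li's standard filtration is chosen so that $\mathrm{gr}^{\mathcal{F}}V$ is commutative, and the induced filtration on $\mathcal{A}$ inherits this (see \cite[Section 3.13]{Ara}). So the first step is to record this commutativity estimate: $[\mathcal{F}_p\mathcal{A}, \mathcal{F}_q\mathcal{A}] \subseteq \mathcal{F}_{p+q-1}\mathcal{A}$.

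Next I would check that $\cA^\hbar$ is graded. A general element of $\cA^\hbar$ is a (possibly infinite, in the completed sense) sum of products $\eta(x_1)\cdots\eta(x_s)$ with $x_j \in \mathcal{F}_{p_j}\mathcal{A}\setminus\mathcal{F}_{p_j-1}\mathcal{A}$; such a product equals $\hbar^{(p_1+\cdots+p_s)/2}\,x_1\cdots x_s$ and sits in degree $p_1+\cdots+p_s$ with respect to the grading $\deg\hbar^{1/2}=1$, $\deg x=0$. Since $x_1\cdots x_s \in \mathcal{F}_{p_1+\cdots+p_s}\mathcal{A}$, every homogeneous degree-$d$ piece of $\cA^\hbar$ is of the form $\hbar^{d/2}z$ with $z \in \mathcal{F}_d\mathcal{A}$; conversely $\hbar^{d/2}\eta^{-1}$-images generate it. The point is that $\cA^\hbar$ is the span of such monomials, so it is visibly a graded subspace, and the commutator is bilinear and hence respects the grading provided it maps degree $d$ and degree $e$ into degree $\leq d+e$ — which follows from the algebra being $\hbar$-adically complete and the next step.

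The heart of the matter is the bracket estimate. Take homogeneous generators $\alpha = \eta(x) = \hbar^{p/2}x$ (degree $p$) and $\beta = \eta(y) = \hbar^{q/2}y$ (degree $q$), with $x\in\mathcal{F}_p\mathcal{A}$, $y\in\mathcal{F}_q\mathcal{A}$. Then $[\alpha,\beta] = \hbar^{(p+q)/2}[x,y]$, and by Step 1, $[x,y]\in\mathcal{F}_{p+q-1}\mathcal{A}$, so $[x,y] = \eta^{-1}$ applied appropriately gives $[x,y]\in\mathcal{F}_{p+q-1}\mathcal{A}$ and hence $\hbar^{(p+q-1)/2}[x,y] \in \cA^\hbar$ is a homogeneous element of degree $p+q-1$. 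Therefore $[\alpha,\beta] = \hbar^{1/2}\cdot\big(\hbar^{(p+q-1)/2}[x,y]\big) \in \hbar^{1/2}\cA^\hbar$. For general elements of $\cA^\hbar$ one extends by the Leibniz rule: $[ab,c] = a[b,c] + [a,c]b$, so brackets of products reduce to brackets of generators (times elements of $\cA^\hbar$, using that $\cA^\hbar$ is an algebra), and one concludes $[\cA^\hbar,\cA^\hbar]\subseteq\hbar^{1/2}\cA^\hbar$ by induction on word length together with $\hbar$-adic completeness to handle the completed sums. That the bracket respects the grading is then automatic from the homogeneous computation above.

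The main obstacle I anticipate is purely bookkeeping around the completion: $\cA$ is only a \emph{completed} current algebra, so "generated by" and "span of monomials" must be interpreted in the completed sense, and one must confirm that $\eta$ is well-defined (each $x$ lies in some $\mathcal{F}_p$ but not $\mathcal{F}_{p-1}$, using $\mathcal{F}_p\mathcal{A}=0$ for $p<0$, so the valuation $p$ exists) and that infinite sums of homogeneous pieces stay in $\cA^\hbar$. None of this is conceptually hard, but it is where care is needed; the algebraic content — commutativity of the associated graded, i.e. the drop by one in filtration degree under bracket — is immediate from what has already been set up.
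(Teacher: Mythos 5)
Your proof is correct and follows essentially the same route as the paper: the key input in both is that commutativity of the associated graded algebra forces $[\mathcal{F}_p\mathcal{A},\mathcal{F}_q\mathcal{A}]\subseteq\mathcal{F}_{p+q-1}\mathcal{A}$, so the bracket of two generators drops one filtration degree and hence picks up a factor of $\hbar^{\frac{1}{2}}$, after which the Leibniz rule and gradedness handle general elements. The paper's own proof is just a terser version of exactly this argument.
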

 
 \begin{proof}
 Since $\mathcal F_p \cA \cdot \mathcal F_q \cA \subseteq \mathcal F_{p+q} \cA$, $\cA^{\hbar}$ is a subalgebra of $\cA[\![\hbar^{\frac{1}{2}}]\!]$. By definition of the map $\eta$, it is clearly a graded Lie subalgebra. Moreover since the associated graded algebra of $V$ is commutative, it follows that  $[ \mathcal F_p \cA,  \mathcal F_q \cA] \subseteq \mathcal F_{p+q-1} \cA$ and hence $[\cA^{\hbar} , \cA^{\hbar}] \subseteq \hbar^{\frac{1}{2}} \cA^{\hbar}.$ 
 \end{proof}
 
As explained in Section \ref{s:notation},  we use regular math fonts to denote modes in $\cA$, and sans-serif math fonts to denote the corresponding modes in $\cA^{\hbar}$. For instance, if $x \in \cA$, then $\mathsf{x} = \eta(x) \in \cA^{\hbar}$. According to the grading, it follows that $ \deg (\mathsf{x}) = p$ when $x
 \in \mathcal F_p \cA$ but not in $ \mathcal F_{p-1} \cA$. 
 
 As mentioned earlier, we will also sometimes consider the vertex algebra over the field $\mathbb{C}(\hbar^{\frac{1}{2}})$, which we denote by $\mathsf{V} := V \otimes_{\mathbb C}\mathbb C(\hbar^{\frac{1}{2}})$.

\begin{defin}\label{d:invol}
Let $V$ be a vertex algebra with Hamiltonian $H$ and strong generating set  $U= \{w^i\,\, |\,\, i \in I\}$ consisting of eigenvectors $w^i$ of $H$ of integral conformal weight $\Delta_i$. 
Let $\rho$ be an involution of $V$ such that its action on strong generators is given by $\rho(w^i) = (-1)^{\Delta_i}w^i$. Then we say that $\rho$ is compatible with Li's standard filtration with respect to $U$.
\end{defin}

\begin{lem}\label{l:oneh}
Let $V, U$ as in Definition~\ref{d:invol} and $\rho$ an involution compatible with Li's standard filtration. Then
$$
[\cA^{\hbar} , \cA^{\hbar}] \subseteq \hbar \cA^{\hbar}.
$$
\end{lem}

\begin{proof}
		
	The involution $\rho$ induces a $\mathbb Z_2$-grading on $V$ and hence on $\cA$. Let $x$ be a homogeneous element of $\cA $ with respect to Li's standard filtration $\mathcal F$ on $V$. This implies that $x \in \mathcal F_p \cA$ but not in $\mathcal F_{p-1} \cA$ so that by the definition of Li's filtration $\rho(x) = (-1)^px$. It follows immediately that $[ \mathcal F_p \cA,  \mathcal F_q \cA] \subseteq \mathcal F_{p+q-2} \cA$, and hence $[\cA^{\hbar} , \cA^{\hbar}] \subseteq \hbar \cA^{\hbar}$.
	
\end{proof}

\begin{rem}
Some examples of vertex algebras that have an involution compatible with Li's filtration with respect to a strong generating set are
\begin{enumerate}
\item Vertex algebras that are strongly generated by elements of even conformal weight. The main examples are firstly the even $\mathcal W_\infty$-algebra of \cite{KL} and any of its quotients, these are the orthosymplectic $Y$-algebras \cite{GR, CL2} and secondly principal $\mathcal W$-algebras of Dynkin types $B_n, C_n, E_7, E_8, G_2, F_4$ and $D_{2n}$;
\item Principal $\mathcal W$-algebras of type $D_{2n+1}$, e.g.  by \cite{CL2};
\item The $\mathcal W_\infty$-algebra of \cite{Lin} (see section 5 of \cite{Lin} for its properties including the involution) and any of its quotients, these are the $Y$-algebras of unitary type   \cite{GR, CL1} and they include principal $\mathcal W$-al\-ge\-bras of type $A$;
\item The principal $\mathcal W$-algebra of $E_6$, see Appendix D of \cite{KMST};
\item Any integer graded generalized free field algebra. 
\end{enumerate}
\end{rem}

\subsubsection{Verma modules}

Let $\zhu(V)$ be Zhu's associative algebra (a quotient of the algebra of modes of weight zero) \cite{Zhu}, see also \cite[Section 3.12]{Ara}.
The Zhu functor $\zhu(\cdot)$ associates to a lower bounded $V$-module $M$ a Zhu-algebra module structure on the top level subspace of $M$. 
Let $E$ be a finite-dimensional module for $\zhu(V)$. Then $E$ induces an $\cA_{\geq 0}$-module by letting positive modes act trivially. The induced module
\[
M(E) := \cA \otimes_{\cA_{\geq 0} } E
\]
is a lower bounded graded $V$-module, which we call the Verma module with top level $E$ \cite{NT}.
The induction functor $M(\cdot)$ is left adjoint to $\zhu(\cdot)$. 
 Li's standard filtration $\mathcal F$ induces a filtration on $\zhu(V)$, since the Zhu algebra is a quotient of the algebra of zero-modes. If $E$ is filtered compatible with $\mathcal F$, then since $\mathcal F$ also induces a filtration on $\cA$, $M(E)$ must be filtered as well, \emph{i.e.}
\[
\mathcal F_p M(E) = \bigoplus_{q+r = p} \mathcal F_q \cA\cdot \mathcal F_r E.
\]
This is compatible with $\mathcal F$, since $\mathcal F_p \cA\cdot \mathcal F_q M(E) \subseteq \mathcal F_{p+q} M(E)$.  

Let $n$ be the dimension and $\{ \mu_1, \ldots, \mu_n \}$ a basis of $E$. 
Let $I$ be an index set. 
Assume that $V$ is strongly  generated by $U = \{ w^i\,\, |\,\, i \in I \} \subset V$ and all elements of $U$ are eigenvectors for the Hamiltonian $H$. 
Let $w^i \in U$ be a strong generator of conformal weight $\Delta_i$, then it is convenient to set 
\[
Y(w^i_{-1} \ket{0}, z) = \sum_{n  \in  \mathbb Z}w^i_n z^{-n-1} =  \sum_{n \in  \mathbb Z} W^i_n z^{-n-\Delta_i},
\]
\emph{i.e.}, we redefine the modes as $W^i_{n} = w^i_{n+\Delta_i -1}$. The $W^*_n$ then have weight $-n$. 
Then $M(E)$ is spanned by  monomials
\begin{equation}\label{spanningset}
	W^{i_1}_{-n_1} \cdots\,  W^{i_s}_{-n_s} \mu_i,
\end{equation}
with $w^{i_1}, \ldots, w^{i_s} \in U$, $n_1, \ldots, n_s \geq 1$, and $i \in [n]$. 
Such a monomial is in $\mathcal F_p$ with $p = \Delta_{i_1} + \cdots + \Delta_{i_s} + q_i$ with $q_i$ minimal such that $\mu_i \in \mathcal F_{q_i} E$. Thus the filtration $\mathcal F$ on $M$ is separated and exhausting. Moreover the spanning set \eqref{spanningset} can be taken to be lexicographically ordered as any reordering only adds terms in lower degrees of the filtration. 

Assume that $I$ is finite, say $|I| = \ell$, then the module $M(E)$  is graded by the Hamiltonian $H$ with finite dimensional weight spaces,
 \[
M(E) = \bigoplus_{n \in \mathbb Z_{\geq 0}} M_{ h_M -n}(E)
\]
for some $h_M \in \mathbb C$. The top graded piece  $M_{h_M}(E)$ is a $\zhu(V)$-module and isomorphic to $E$. 

The contragredient dual of $M(E)$ is denoted by $M'(E)$ and as a graded vector space it is the graded dual, that is
\[
M'(E) = \bigoplus_{n \in  \mathbb Z_{\geq 0}} \text{Hom}_{\mathbb C}(M_{ h_M -n}(E), \mathbb C).
\]
Assume that $\zhu(V)$ is commutative. In that case $\zhu(V)$ is  generated by the zero modes of the strong generators, \emph{i.e.}, it is isomorphic to a quotient of  $\mathbb C[W^1_0, \ldots, W^\ell_0]$. 
Consider a one-dimensional module $E$. Then the action of $\zhu(V)$ is given by a character $\chi : \zhu(V) \rightarrow \mathbb C$. Let $\lambda = (\lambda_1, \ldots, \lambda_\ell)$ and $\lambda_i = \chi(W^i_0)$. We write $M^\lambda$ for $M(E)$ and say that $M^\lambda$ is a highest-weight module of highest weight $\lambda$. The conformal weight of the top level is then denoted by $h_\lambda$. 

Fix a vector  $\ket{\lambda} \in  M^\lambda_{ h_\lambda}$ and $\bra{\lambda} \in \text{Hom}_{\mathbb C}(M^\lambda_{ h_\lambda }, \mathbb C)$ dual to $\ket{\lambda}$. We write this as $\braket{\lambda | \lambda} =1$. Denote the action of $x_n$ on $\bra{\lambda}$ by $\bra{\lambda x_n}$ and that of $x_n $ on  $\ket{\lambda}$ by $\ket{x_n \lambda}$. 
This determines a unique bilinear pairing $ \braket{\ \cdot\ |\  \cdot\  }: M'^\lambda \times M^\lambda \rightarrow \mathbb C$, that is specified by 
$\braket{\lambda x_{n}| \lambda} = \braket{\lambda | \iota(x_n) \lambda}$ 
for any $n \in \mathbb{Z}$ and any $x_n \in \cA$, with an anti-algebra involution $\iota: \cA \rightarrow \cA$ that is described (for example) in  \cite[Proposition 3.9.1]{Ara}, especially  $\iota^2(x_n) =x_n $ for any element $x_n \in \cA$. Moreover, $\iota$ maps the $H$-eigenspace of conformal weight $\Delta$ to the eigenspace of conformal weight $-\Delta$. 
The pairing $  \braket{\ \cdot\ |\  \cdot\  }  $ is non-degenerate if and only if $M^\lambda$ is simple.  This pairing was described in Section~\ref{Sec:instanton} in the case of principal $ \mathcal W $-algebras of simply-laced Lie algebras.

\begin{rem}\label{rem:liftedVerma}
We can go through the exact same construction by allowing weight labels to depend on $\hbar$: for example, we can let them be rational functions in $\hbar^{\frac{1}{2}}$.  In this case we set $\lambda_i =\chi(\mathsf{W}_0^i)$ and denote the corresponding highest-weight Verma module by $\mathsf{M}^\lambda$. We then require that $\lambda = (\lambda_1,\, \ldots\,, \lambda_\ell)$ is independent of $\hbar$ (which means that the weight labels $\lambda_i$ for the standard Verma module $M^\lambda$ become monomials in $\hbar^{-\frac{1}{2}}$ of degree equal to the conformal weight of $W^i$).
  This means that $\mathsf{M}^\lambda$ is a module of $V \otimes_{\mathbb C}\mathbb C[\hbar^{-\frac{1}{2}}]$, and hence, also a module of $\mathsf{V} = V \otimes_{\mathbb C}\mathbb C(\hbar^{\frac{1}{2}})$.
\end{rem}

\subsection{Whittaker vectors}
\label{SecWhi1}

We now construct Whittaker vectors in the setup described in Section \ref{SecBackground}. More precisely, we show the existence and uniqueness of certain Whittaker vectors, under a given set of assumptions on the vertex algebra. 
Our assumptions are:
\begin{ass}\label{assumption} $V$ is a vertex algebra satisfying:
\begin{enumerate} 
\item $V = \bigoplus\limits_{n \in \mathbb{Z}_{\geq 0}} V_{-n}$ is  graded by a Hamiltonian $H$ with $V_0 = \mathbb C.\ket{0}$. 
\item The set $U = \{  W^1, \ldots, W^{\ell}\}$ is an ordered finite minimal strong generating set of $V$. The $W^i$ are $H$-eigenvectors and the ordering is such that the conformal weight $\Delta_i$ of $W^i$ satisfies $\Delta_i \geq \Delta_j > 0$ for $i>j$. 
\item There is an involution $\rho$ that is compatible with Li's standard filtration with respect to $U$ (see Definition \ref{d:invol}).
\item $\zhu(V)$ is commutative. 
\item The modes of the corresponding fields are defined as 
\[
W^j(z) := \sum_{n \in \mathbb Z} W^j_n z^{-n-\Delta_j}, \]
with the rescaled fields $\mathsf{W}^j(z)$ and modes $\mathsf{W}^j_n \in \cA^{\hbar}$.
 Assume that $W^\ell_{-1}$ is not nilpotent in $\cA$, that is $(W^\ell_{-1})^d \neq 0$ for any $d \in \mathbb Z_{>0}$. 
\end{enumerate}
\end{ass}
\noindent Throughout this section, we assume that the vertex algebra $ V $ satisfies the above assumptions.

We let $\chi$ be a linear form on $\text{span}_{\mathbb C}(\mathsf{W}^1_1, \ldots, \mathsf{W}^\ell_1)$ and set $\chi_i := \chi(\mathsf{W}^i_1)$.

\begin{defin}\label{d:whittaker}
	A \emph{Whittaker module} associated to $\chi$ is a $\mathsf{V}$-module $M$ generated by a vector $\ket{v}$ (the \emph{Whittaker vector}) satisfying
\[
\forall i \in [\ell],\,\,\forall m > 0, \qquad \mathsf{W}^i_m \ket{v} =\chi_i  \delta_{m, 1}\ket{v}.
\]
\end{defin}

Let $i_0 \in [\ell]$ be the minimal index with $\chi_{i_0} \neq 0$. If we change the $(\mathsf{W}^j)_{j > i_0}$ by adding appropriate multiples of iterated derivatives of $\mathsf{W}^{i_0}$ the Whittaker condition reduces to 
\[
\forall j \in [\ell],\,\,\forall m > 0,\qquad \mathsf{W}^j_m \ket{v} = \chi_{i_0} \delta_{m,1}\delta_{j, i_0}\ket{v}.
\]
Furthermore, by rescaling of $\mathsf{W}^{i_0}$ we can assume without loss of generality that $\chi_{i_0} =1$. 

We now show existence and uniqueness of Whittaker vectors with $i_0 = \ell$ under Assumption \ref{assumption}.

Let $\mathsf{M}^\lambda$ be a simple Verma module of $\mathsf{V}$ of highest weight $\lambda$. For any $d \in \mathbb Z_{\geq 0}$ consider the finite-dimensional subspace
$\mathsf{M}^\lambda_{h_\lambda -d }$ of $ \mathsf{M}^\lambda $. Let $S_d$ be a set of lexicographically ordered monomials in the negative modes such that $\left\{\, \mathsf{x}\ket{\lambda} \,\,|\,\, \mathsf{x} \in S_d\right\}$ is a basis of $\mathsf{M}^\lambda_{h_\lambda -d }$. Moreover, we require that $(\mathsf{W}^{\ell}_{-1})^d$ is in $S_d$. 
Note that $(\mathsf{W}^{\ell}_{-1})^d\ket{\lambda}$ is in $\mathcal F_{\ell d}\mathsf{M}^\lambda$ and since the conformal weight of $\mathsf{W}^\ell$ is maximal among the strong generators in $U$, we have
$$
\mathcal F_{\ell d+1}\mathsf{M}^\lambda \cap \mathsf{M}^\lambda_{h_\lambda -d } = \{0\}.
$$

As $\mathsf{M}^\lambda$ is simple, the bilinear pairing $ \braket{\ \cdot\ |\  \cdot\  }$ is non-degenerate. Since $\iota$ maps the $H$-eigenspace of conformal weight $\Delta$ to the eigenspace of conformal weight $-\Delta$, it restricts non-degenerately to the conformal weight subspace $\mathsf{M}'^{\lambda}_{h_\lambda -d } \times \mathsf{M}^\lambda_{h_\lambda -d }$.
Thus there exists a vector $\ket{w^d}$  of conformal weight $d + h_\lambda  $, uniquely specified by $\braket{\lambda | \iota(\mathsf{x}) w^d} =1$ for $\mathsf{x}=   (\mathsf{W}^{\ell}_{-1})^d$ and $\braket{\lambda |\iota(\mathsf{x}) w^d} =0$ for $\mathsf{x}$  in $S_d \setminus (\mathsf{W}^{\ell}_{-1})^d $.  By convention, we set $\ket{w^d} = 0$ for $d < 0$.
 
 Consider  a not necessarily lexicographically ordered vector $\mathsf{W}^{i_r}_{-m_r} \cdots\, \mathsf{W}^{i_1}_{-m_1} \ket{\lambda}$ in $\mathcal F_p M^\lambda$ and let $\sigma$ be the permutation on these modes, such that 
 $\sigma(\mathsf{W}^{i_r}_{-m_r}) \,\cdots\, \sigma(\mathsf{W}^{i_1}_{-m_1}) \ket{\lambda}$
 is lexicographically ordered. Then the difference 
 \[
 \mathsf{W}^{i_r}_{-m_r} \cdots\, \mathsf{W}^{i_1}_{-m_1} \ket{\lambda} - \sigma(\mathsf{W}^{i_r}_{-m_r}) \,\cdots\, \sigma(\mathsf{W}^{i_1}_{-m_1}) \ket{\lambda}
 \]
 is in $\mathcal F_{p-1} \mathsf{M}^\lambda$ since $\text{gr}^{\mathcal F} \mathsf{V}$ inherits commutativity from  $\text{gr}^{\mathcal F} V$.  We can deduce the following result.
  \begin{lem}\label{lemma2.2}
 Let $\mathsf{x} =\mathsf{W}^{i_r}_{-n_r} \cdots\, \mathsf{W}^{i_1}_{-n_1}$ not necessarily lexicographically ordered, such that $\mathsf{x}\ket{\lambda} \neq (\mathsf{W}^{\ell}_{-1})^d \ket{\lambda}$. Then 
 $\braket{\lambda|\iota(\mathsf{x}) w^d}  =0$. 
 \end{lem}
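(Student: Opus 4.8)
The plan is to deduce the lemma from the reordering observation recorded just above, the construction of $\ket{w^d}$, and the bilinearity of the pairing $\braket{\ \cdot\ |\ \cdot\ }$. First I would dispose of the trivial case $n_1+\cdots+n_r\neq d$: then $\iota$ shifts the $H$-grading so that $\iota(\mathsf{x})\ket{w^d}$ leaves the top space $\mathsf{M}^\lambda_{h_\lambda}$, and its pairing with $\bra\lambda$ vanishes for weight reasons. So from now on I assume $n_1+\cdots+n_r=d$, in which case $\mathsf{x}\ket\lambda$ and $\ket{w^d}$ both lie in the finite-dimensional space $\mathsf{M}^\lambda_{h_\lambda-d}$.

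The key step is to observe that, for any $\mathsf{y}\in\cA^\hbar$ with $\mathsf{y}\ket\lambda\in\mathsf{M}^\lambda_{h_\lambda-d}$ and any $v\in\mathsf{M}^\lambda_{h_\lambda-d}$, the quantity $\braket{\lambda|\iota(\mathsf{y})v}$ equals the value of the contravariant form of $\mathsf{M}^\lambda$ on $(\mathsf{y}\ket\lambda,v)$ — concretely, this is just the adjunction $\braket{\lambda|\iota(\mathsf{y})v}=\braket{\mathsf{y}\lambda|v}$ built into the pairing (Section~\ref{SecBackground}, \cite[Prop.~3.9.1]{Ara}). In particular it depends on $\mathsf{y}$ only through the vector $\mathsf{y}\ket\lambda$ and is linear in it. Reading off the defining relations $\braket{\lambda|\iota(\mathsf{z})w^d}=\delta_{\mathsf{z},(\mathsf{W}^\ell_{-1})^d}$ for $\mathsf{z}\in S_d$, it therefore suffices to prove that $\mathsf{x}\ket\lambda$ lies in the span of $\{\mathsf{z}\ket\lambda\mid\mathsf{z}\in S_d\setminus\{(\mathsf{W}^\ell_{-1})^d\}\}$.

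To get this I would use the filtration. Write $p$ for the filtration degree of $\mathsf{x}$, and recall from the discussion above that $(\mathsf{W}^\ell_{-1})^d\ket\lambda$ is the unique element of $S_d$ attaining the maximal filtration degree $P$ on $\mathsf{M}^\lambda_{h_\lambda-d}$, every other element of $S_d$ having strictly smaller degree; in particular $p\leq P$ (a quick count using $r\leq d$, from $n_j\geq1$, and $\Delta_{i_j}\leq\Delta_\ell$). Reorder $\mathsf{x}$ into its lexicographically ordered form $\tilde{\mathsf{x}}$; by the reordering remark above, $\mathsf{x}\ket\lambda-\tilde{\mathsf{x}}\ket\lambda\in\mathcal F_{p-1}\mathsf{M}^\lambda\cap\mathsf{M}^\lambda_{h_\lambda-d}$, and since $p-1\leq P-1$ this intersection is spanned by those $\mathsf{z}\ket\lambda$ with $\mathsf{z}\in S_d$ of filtration degree $\leq P-1$ — in particular $\mathsf{z}\neq(\mathsf{W}^\ell_{-1})^d$. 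Moreover $\tilde{\mathsf{x}}\neq(\mathsf{W}^\ell_{-1})^d$: were it equal, $\mathsf{x}$ would be a reordering of $d$ copies of $\mathsf{W}^\ell_{-1}$ and hence equal $(\mathsf{W}^\ell_{-1})^d$ itself, so that $\mathsf{x}\ket\lambda=(\mathsf{W}^\ell_{-1})^d\ket\lambda$, contradicting the hypothesis. Thus $\tilde{\mathsf{x}}\ket\lambda$ is itself an allowed basis vector, so $\mathsf{x}\ket\lambda$ lies in their span, and the previous step finishes the proof.

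The one genuinely delicate point is the identification of $\braket{\lambda|\iota(\mathsf{x})w^d}$ with the contravariant form, i.e. the passage from the operator $\mathsf{x}$ to the vector $\mathsf{x}\ket\lambda$; this is standard for Verma-type modules, but when $V$ is not freely generated one should additionally take $S_d$ to be a filtration-compatible basis — possible with $(\mathsf{W}^\ell_{-1})^d$ as its unique top-degree member, using Assumption~\ref{assumption}(5) — so that the intersection with $\mathcal F_{P-1}\mathsf{M}^\lambda$ above is genuinely spanned by the corresponding part of $S_d$. Everything else is routine reordering in $\cA^\hbar$ and degree counting.
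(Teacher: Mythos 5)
Your proof is correct and takes essentially the same route as the paper's: a conformal-weight check, followed by the observation that reordering $\mathsf{x}$ into lexicographic form changes $\mathsf{x}\ket{\lambda}$ only by terms of strictly lower filtration degree, so that its expansion in the basis $S_d$ cannot involve $(\mathsf{W}^{\ell}_{-1})^d\ket{\lambda}$, whence the pairing with $\ket{w^d}$ vanishes. The paper leaves implicit the two points you spell out --- that $\braket{\lambda|\iota(\mathsf{x})w^d}$ depends on $\mathsf{x}$ only through the vector $\mathsf{x}\ket{\lambda}$, and that $\tilde{\mathsf{x}}=(\mathsf{W}^{\ell}_{-1})^d$ would force $\mathsf{x}=(\mathsf{W}^{\ell}_{-1})^d$ --- so your write-up is simply a more explicit version of the same argument.
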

 \begin{proof}
 Since the pairing preserves conformal weight, the conformal weight of $\mathsf{x}$ needs to be $d$. But then the filtration degree of $\mathsf{x}$ is at most $d\ell$ and so the 
 difference 
  $$\mathsf{W}^{i_r}_{-n_r} \cdots\, \mathsf{W}^{i_1}_{-n_1}\ket{\lambda} - \sigma(\mathsf{W}^{i_r}_{-n_r})\,\cdots\, \sigma(\mathsf{W}^{i_1}_{-n_1})\ket{\lambda}$$
  is in $\mathcal F_{\ell d-1} \mathsf{M}^\lambda$. Thus the expansion of $\mathsf{W}^{i_r}_{-n_r} \cdots\, \mathsf{W}^{i_1}_{-n_1}\ket{\lambda}$ in terms of basis vectors does not contain 
$(\mathsf{W}^{\ell}_{-1})^d \ket{\lambda}$. Hence,  $\braket{\lambda|\iota(\mathsf{x}) w^d}  =0$,  which proves the lemma.
 \end{proof}
 \begin{cor} \label{cor2.3}
 For any $m >0$, $i \in [\ell]$ and $d \geq 0$, we have  $\mathsf{W}^i_m \ket{w^d}  = \delta_{i,\ell}\delta_{m, 1} \ket{w^{d-1}}$.
 \end{cor}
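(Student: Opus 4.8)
The plan is to pin down the vector $\mathsf{W}^i_m\ket{w^d}$ exactly as $\ket{w^d}$ itself was pinned down: by testing it against the non‑degenerate pairing on the simple Verma module $\mathsf{M}^\lambda$ and reading off the answer from Lemma~\ref{lemma2.2}. First I would record the easy reductions. Since $\mathsf{W}^i_m$ has conformal weight $-m$, the vector $\mathsf{W}^i_m\ket{w^d}$ lies in the graded piece $\mathsf{M}^\lambda_{h_\lambda-(d-m)}$, which is $\{0\}$ as soon as $m>d$; so one may assume $1\le m\le d$. For such $m$, the form $\braket{\ \cdot\ |\ \cdot\ }$ restricts non‑degenerately to $\mathsf{M}'^\lambda_{h_\lambda-(d-m)}\times\mathsf{M}^\lambda_{h_\lambda-(d-m)}$, and the functionals $v\mapsto\braket{\lambda\,|\,\iota(\mathsf{y})\,v}$ with $\mathsf{y}$ running over the basis monomials $S_{d-m}$ span the dual of that level (invertibility of the Shapovalov matrix at level $d-m$). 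So it suffices to prove, for every $\mathsf{y}\in S_{d-m}$,
\[
\braket{\lambda\,|\,\iota(\mathsf{y})\,\mathsf{W}^i_m\,w^d}\;=\;\delta_{i,\ell}\,\delta_{m,1}\,\braket{\lambda\,|\,\iota(\mathsf{y})\,w^{d-1}},
\]
where by construction and Lemma~\ref{lemma2.2} the right‑hand side equals $\delta_{i,\ell}\delta_{m,1}$ if $\mathsf{y}=(\mathsf{W}^\ell_{-1})^{d-1}$ and vanishes otherwise.

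Next I would move $\mathsf{W}^i_m$ across $\iota$. Since $\iota$ is an anti‑involution, $\iota(\mathsf{y})\,\mathsf{W}^i_m=\iota\big(\iota(\mathsf{W}^i_m)\,\mathsf{y}\big)$, and $\iota(\mathsf{W}^i_m)$ is a combination of modes of conformal weight $+m$ whose leading part, with respect to Li's standard filtration, is a nonzero multiple $c_i\,\mathsf{W}^i_{-m}$ of the creation mode (for Virasoro‑primary strong generators one has simply $\iota(\mathsf{W}^i_m)=c_i\,\mathsf{W}^i_{-m}$, with no lower terms). Expanding $\iota(\mathsf{W}^i_m)\,\mathsf{y}$ in monomials in negative modes and applying Lemma~\ref{lemma2.2} term by term, $\braket{\lambda\,|\,\iota(\iota(\mathsf{W}^i_m)\mathsf{y})\,w^d}$ picks out precisely the coefficient of $(\mathsf{W}^\ell_{-1})^d\ket{\lambda}$ in $\iota(\mathsf{W}^i_m)\,\mathsf{y}\ket{\lambda}\in\mathsf{M}^\lambda_{h_\lambda-d}$. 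Now $(\mathsf{W}^\ell_{-1})^d$ has the maximal Li‑filtration degree $d\,\Delta_\ell$ at level $d$, while $\mathsf{y}$ has filtration degree $\le(d-m)\Delta_\ell$ and $\iota(\mathsf{W}^i_m)$ has filtration degree $\le\Delta_i\le\Delta_\ell$, so $\iota(\mathsf{W}^i_m)\mathsf{y}\ket{\lambda}$ has filtration degree $\le\Delta_\ell(d-m+1)$. For $m\ge 2$ this is $<d\Delta_\ell$, the coefficient vanishes, and $\mathsf{W}^i_m\ket{w^d}=0$. For $m=1$ a nonzero coefficient forces the filtration degree to be saturated, hence $\Delta_i=\Delta_\ell$ and $\mathsf{y}=(\mathsf{W}^\ell_{-1})^{d-1}$, and then the top‑degree part of $\iota(\mathsf{W}^i_1)\mathsf{y}\ket{\lambda}$ is $c_i\,\mathsf{W}^i_{-1}(\mathsf{W}^\ell_{-1})^{d-1}\ket{\lambda}$; since $\text{gr}^{\mathcal F}\mathsf{M}^\lambda$ is a polynomial algebra in the modes and $U$ is a minimal strong generating set ($W^i\ne W^\ell$ for $i\ne\ell$), the monomial $\mathsf{W}^i_{-1}(\mathsf{W}^\ell_{-1})^{d-1}$ equals $(\mathsf{W}^\ell_{-1})^d$ iff $i=\ell$. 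Thus $\braket{\lambda\,|\,\iota(\mathsf{y})\,\mathsf{W}^i_1\,w^d}=\delta_{i,\ell}\,c_\ell\,\delta_{\mathsf{y},(\mathsf{W}^\ell_{-1})^{d-1}}$, and comparing with the displayed identity and invoking non‑degeneracy yields $\mathsf{W}^i_1\ket{w^d}=\delta_{i,\ell}\,c_\ell\,\ket{w^{d-1}}$. Finally $c_\ell=1$: specialising to $d=1$ (so $\ket{w^0}=\ket{\lambda}$) and pairing with $\bra{\lambda}$ identifies $c_\ell$ with the leading coefficient of $\iota(\mathsf{W}^\ell_1)$, which is $1$ with the normalisations of the top generator $\mathsf{W}^\ell$ and of $\iota$ adopted in the paper (for $W^1=L$ this is the classical $\iota(L_1)=L_{-1}$).

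The step I expect to demand the most care is the bookkeeping around the anti‑involution $\iota$: that it is compatible with Li's standard filtration — so that it descends to $\cA^\hbar$ and $\iota(\mathsf{W}^i_m)=c_i\mathsf{W}^i_{-m}+(\text{lower filtration})$ with $c_i\ne 0$ — and that the one surviving constant $c_\ell$ is genuinely $1$ and not merely $\pm 1$. Choosing the strong generators to be Virasoro‑primary sidesteps both issues: then $\iota(\mathsf{W}^i_m)=c_i\mathsf{W}^i_{-m}$ exactly, and the whole filtration discussion collapses to a one‑line monomial matching via Lemma~\ref{lemma2.2}. A minor additional point is the case of coinciding conformal weights $\Delta_i=\Delta_\ell$ with $i<\ell$, dealt with by the ``distinct PBW monomial'' remark above; this does not occur for the type A, B, C, D $\mathcal W$‑algebras actually used in the paper.
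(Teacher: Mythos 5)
Your argument is correct and follows essentially the same route as the paper: test $\mathsf{W}^i_m\ket{w^d}$ against the non-degenerate pairing, rewrite $\iota(\mathsf{y})\,\mathsf{W}^i_m$ as $\iota(\mathsf{W}^i_{-m}\mathsf{y})$, and invoke Lemma~\ref{lemma2.2} to kill every contribution except $i=\ell$, $m=1$, $\mathsf{y}=(\mathsf{W}^\ell_{-1})^{d-1}$, after which uniqueness of the $\ket{w^d}$ identifies $\mathsf{W}^\ell_1\ket{w^d}$ with $\ket{w^{d-1}}$. The only divergence is that the paper takes $\iota(\mathsf{W}^i_m)=\mathsf{W}^i_{-m}$ on the nose (writing $\mathsf{x}'=\mathsf{W}^i_{-m}\mathsf{y}$ directly), so your tracking of the constant $c_i$ and of lower-filtration corrections to $\iota(\mathsf{W}^i_m)$ is extra care within the same argument rather than a different one, and your explicit filtration-degree count is precisely the content already packaged in the proof of Lemma~\ref{lemma2.2}.
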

\begin{proof}
 Assume that $ \mathsf{W}^i_m \ket{w^d} \neq 0$; then by the non-degeneracy of the bilinear form, there exists $\mathsf{y} =\mathsf{W}^{i_r}_{-n_r} \cdots\, \mathsf{W}^{i_1}_{-n_1}$, such that
 $\braket{\lambda|\iota(\mathsf{y})\mathsf{W}^i_m {w^d}} \neq 0$. Let us introduce the notation $ \mathsf{x}' $, such that   $\iota(\mathsf{y})  \mathsf{W}^i_m = \iota(\mathsf{x}')$, which implies that $\mathsf{x}' = \mathsf{W}^i_{-m} \mathsf{y}$.  Applying Lemma~\ref{lemma2.2} to $ x' $, we see that $ i = \ell $ and $ m = 1 $. The identification of the vector $ \mathsf{W}^\ell_{1} \ket{w^d} $ with $ \ket{w^{d-1}} $ follows from the  uniqueness of the $\ket{w^d}$ for all $ d $.
\end{proof}

\begin{thm}\label{whittaker} Assume that $V$ satisfies Assumption \ref{assumption} and that the  Verma module $\mathsf{M}^\lambda$ of  $\mathsf{V}$ is simple. We define the vector $ \ket{w} $ in  $ \mathsf{M}^\lambda[\![\Lambda]\!] $ as 
	\begin{equation*}
	\ket{w} := \sum_{d \geq 0} \Lambda^d \ket{w^d}.
	\end{equation*}
	Then,  this vector $ \ket{w} $ is the unique vector in $  \mathsf{M}^\lambda[\![\Lambda]\!] $  satisfying
	\begin{equation}\label{eq:ketw}
\forall i \in [\ell],\quad \forall m > 0,\qquad \mathsf{W}^i_m \ket{w} = \Lambda\delta_{i, \ell}\delta_{m,1}  \ket{w}.
	\end{equation}
\end{thm}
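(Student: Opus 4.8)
The plan is to establish \eqref{eq:ketw} for the explicit vector $\ket{w}$ by a one-line computation, and then to show that every solution in $\mathsf{M}^\lambda[\![\Lambda]\!]$ is a $\mathbb{C}(\hbar^{\frac{1}{2}})[\![\Lambda]\!]$-multiple of $\ket{w}$, with $\ket{w}$ itself singled out among the solutions by its normalisation.

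For existence, Corollary~\ref{cor2.3} and the convention $\ket{w^{-1}} = 0$ give, for every $i \in [\ell]$ and every $m > 0$,
\[
\mathsf{W}^i_m\ket{w} = \sum_{d \geq 0}\Lambda^d\,\mathsf{W}^i_m\ket{w^d} = \delta_{i,\ell}\,\delta_{m,1}\sum_{d \geq 1}\Lambda^d\ket{w^{d-1}} = \Lambda\,\delta_{i,\ell}\,\delta_{m,1}\ket{w}.
\]
Observe also that the $\Lambda^0$-coefficient of $\ket{w}$ is $\ket{w^0}$, which lies in the one-dimensional top level of $\mathsf{M}^\lambda$ and satisfies $\braket{\lambda|w^0} = 1 = \braket{\lambda|\lambda}$, hence $\ket{w^0} = \ket{\lambda}$.

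For uniqueness, write a solution as $\ket{u} = \sum_{d \geq 0}\Lambda^d\ket{u_d} \in \mathsf{M}^\lambda[\![\Lambda]\!]$; comparing coefficients of $\Lambda^d$ in \eqref{eq:ketw} gives $\mathsf{W}^i_m\ket{u_d} = \delta_{i,\ell}\,\delta_{m,1}\ket{u_{d-1}}$ for all $i \in [\ell]$, $m > 0$ and $d \geq 0$, with $\ket{u_{-1}} := 0$. The one genuine ingredient is the standard fact that, since $\mathsf{M}^\lambda$ is \emph{simple}, any vector annihilated by all positive modes $\mathsf{W}^i_m$ ($m > 0$) lies in the one-dimensional top level $\mathbb{C}(\hbar^{\frac{1}{2}})\ket{\lambda} = \mathbb{C}(\hbar^{\frac{1}{2}})\ket{w^0}$: a nonzero such vector of conformal weight $h_\lambda + e$ with $e > 0$ would generate a nonzero proper submodule (contained in conformal weights $\geq h_\lambda + e$), contradicting simplicity. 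Granting this, I would prove by induction on $d$ that $\ket{u_d} = \sum_{e=0}^{d} c_e\,\ket{w^{d-e}}$ for scalars $c_0,\ldots,c_d \in \mathbb{C}(\hbar^{\frac{1}{2}})$. The base case $d = 0$ is immediate; for the inductive step, the vector $\ket{u_d} - \sum_{e=0}^{d-1}c_e\ket{w^{d-e}}$ is annihilated by every positive mode --- for $\mathsf{W}^\ell_1$ this uses the induction hypothesis, the recursion above, and Corollary~\ref{cor2.3} to obtain $\ket{u_{d-1}} - \sum_{e=0}^{d-1}c_e\ket{w^{d-1-e}} = 0$, while each of the other positive modes kills both summands --- so it equals $c_d\ket{w^0}$, completing the induction. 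Summing over $d$ yields $\ket{u} = c(\Lambda)\ket{w}$ with $c(\Lambda) = \sum_{e \geq 0}c_e\Lambda^e$.

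Thus the solution set of \eqref{eq:ketw} is precisely the free rank-one module $\mathbb{C}(\hbar^{\frac{1}{2}})[\![\Lambda]\!]\cdot\ket{w}$, and this is the sense in which $\ket{w}$ is ``the unique'' solution. Pairing $\ket{u} = c(\Lambda)\ket{w}$ with $\bra{\lambda}$ and using $\braket{\lambda|w^e} = 0$ for $e > 0$ (conformal weights do not match) gives $\braket{\lambda|u} = c(\Lambda)$, so $\ket{w}$ is characterised among the solutions by $\braket{\lambda|w} = 1$, equivalently by having $\ket{\lambda}$ as its $\Lambda^0$-coefficient. The only substantive ingredient is the simplicity hypothesis on $\mathsf{M}^\lambda$, needed for the singular-vector fact above; the rest is routine matching of powers of $\Lambda$. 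I expect the only point genuinely requiring care to be the formulation of uniqueness in this up-to-scalar form.
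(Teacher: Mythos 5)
Your proof is correct, and its existence half coincides with the paper's (both are immediate from Corollary~\ref{cor2.3} together with the convention $\ket{w^{-1}}=0$). For uniqueness you take a genuinely different, more self-contained route: the paper simply projects a putative solution onto each conformal-weight subspace $\mathsf{M}^\lambda_{h_\lambda-d}$ and observes that the Whittaker condition forces the pairing values $\braket{\lambda|\iota(\mathsf{x})\,\cdot\,}$, $\mathsf{x}\in S_d$, to be exactly those used to define $\ket{w^d}$, whereas you expand in powers of $\Lambda$ and run an induction whose engine is the fact that a simple module with one-dimensional top level has no vectors of positive conformal weight killed by all positive modes. The two arguments are closer than they look: the cleanest proof of your ``standard fact'' is again non-degeneracy of the restricted pairing plus the observation that $\iota(\mathsf{x})$ annihilates any vector killed by all positive modes, which is precisely the computation hidden in the paper's one line (and, exactly as in Lemma~\ref{lemma2.2} and Corollary~\ref{cor2.3}, both arguments tacitly use $\iota(\mathsf{W}^i_{-m})=\mathsf{W}^i_m$ rather than your submodule-generation phrasing, which would require a normal-ordering argument to make the generated submodule proper). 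What your version buys is an explicit resolution of a point the paper elides: as literally stated, \eqref{eq:ketw} is also satisfied by $f(\Lambda)\ket{w}$ for any $f\in\mathbb{C}(\hbar^{\frac{1}{2}})[\![\Lambda]\!]$ (in particular by $0$), so ``unique'' must be read with the normalisation $\braket{\lambda|w}=1$, equivalently constant term $\ket{\lambda}$. Your identification of the full solution set as the rank-one module $\mathbb{C}(\hbar^{\frac{1}{2}})[\![\Lambda]\!]\cdot\ket{w}$ makes this precise, while the paper builds the normalisation silently into ``the properties that define $\Lambda^d\ket{w^d}$''.
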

\begin{proof}
The property~\eqref{eq:ketw} follows directly from Corollary~\ref{cor2.3}. As for uniqueness, the projection of any  vector satisfying equation~\eqref{eq:ketw} to the conformal weight subspace $\mathsf{M}_{h_\lambda-d}$ for $d \geq 0$ must satisfy the properties that define  $\Lambda^d \ket{w^d}$ uniquely.
\end{proof}

Theorem \ref{whittaker} establishes existence and uniqueness of a Whittaker vector $\ket{w}$ annihilated by all positive modes except $\mathsf{W}^\ell_1$, which acts as a constant, for vertex algebras $V$ satisfying Assumption \ref{assumption} and such that $\mathsf{M}^\lambda$ is simple as a $\mathsf{V}$-module.

\subsection{Left ideals}
\label{SecWhi2}

In this section we establish an important property that will be needed to realize Whittaker vectors as Airy structures. We introduce the following notation. Let $I_a$ be a multi-index set. We write:
\begin{itemize}
	\item $S_a = \{\mathsf{W}^i_m~|~(i,m) \in I_a \} \subset \mathcal{A}^{\hbar}$ for the corresponding subset of rescaled modes of the strong generators of the vertex algebra;
	\item $\mathcal{A}^{\hbar} \cdot S_a$ for the left $\mathcal{A}^{\hbar}$-ideal generated by the modes in $S_a$.
\end{itemize}

Consider the index set $I_+ = \{(i,m) ~|~i \in [\ell], m \geq 1\}$. Then $S_{+}$ is the subset of positive modes. Because of the existence of the involution $\rho$, we know that $[\cA^{\hbar}, \cA^{\hbar}] \subseteq \hbar \cA^{\hbar}$. Furthermore, because the vertex algebra is strongly generated, 
 we know that the left ideal $\mathcal{A}^{\hbar} \cdot S_{+}$ satisfies the following property (see e.g.\footnote{This reference shows that the left ideal generated by the set of non-negative modes satisfies this property, but the same argument holds for the left ideal generated by positive modes. \label{f:subalgebra}} \cite[Proposition 3.14]{BBCCN18}):
\begin{equation}
	\label{Asplus} [\mathcal{A}^{\hbar} \cdot S_{+}, \mathcal{A}^{\hbar} \cdot S_{+}] \subseteq  \hbar \mathcal{A}^{\hbar}\cdot S_{+}.
\end{equation}
In other words, the commutator of any two positive modes in $S_+$ is in $\hbar \mathcal{A}^{\hbar}\cdot S_{+}$.

What we prove below is a stronger statement: assuming that a Whittaker vector exists, we show that the commutator of any two positive modes is in fact in ($\hbar$ times) the left ideal generated \emph{only by the positive modes that annihilate the Whittaker vector}. This stronger requirement will be needed to construct Whittaker vectors as Airy structures. 

To prove this property however we need to strengthen the assumptions of the last section: we require that $V$ is freely generated. In this case the last assumption in Assumption \ref{assumption} is automatically satisfied. 
\begin{ass}\label{assumption2} $V$ is a vertex algebra satisfying:
\begin{enumerate} 
\item $V = \bigoplus\limits_{n \in \mathbb{Z}_{\geq 0}} V_{-n}$ is  graded by a Hamiltonian $H$ with $V_0 = \mathbb C.\ket{0}$. 
\item The set $U = \{  W^1, \ldots, W^{\ell}\}$ is an ordered finite free strong generating set of $V$. The $W^i$ are $H$-eigenvectors of conformal weight $\Delta_i>0$.
\item There is an involution $\rho$ that is compatible with Li's standard filtration with respect to $U$ (see Definition \ref{d:invol}).
\item $\zhu(V)$ is commutative. 
\end{enumerate}
\end{ass}

Recall that $\mathsf{M}^\lambda$ is the Verma module of $\mathsf{V}$ generated by a highest-weight vector  $\ket{\lambda}$ of highest weight $\lambda = (\lambda_1, \ldots, \lambda_\ell)$. Then $\mathsf{W}^i_0  \ket{\lambda} = \lambda_i  \ket{\lambda}$ for any $i \in [\ell]$. Fix a $t \in [\ell]$, and let $I_{t} = \{ (i,m)~|~i\in\{t, \ldots, \ell\},\,\,m = 1\}$. Then $S_{t} = \{\mathsf{W}^t_1, \ldots, \mathsf{W}^\ell_1 \}$. Let $I_{+ \setminus t} = I_+ \setminus I_t$, and define $S_{+ \setminus t}$ and $\mathcal{A}^{\hbar} \cdot S_{+ \setminus t} $ as usual. Then $S_{+ \setminus t}$ is the set of all positive modes except $\mathsf{W}^t_1, \ldots, \mathsf{W}^\ell_1$.

\begin{thm}\label{Lem24} Let $V$ be such that Assumption \ref{assumption2} holds. Let $\ket{w_t^\lambda}$ be  a putative vector in $\mathsf{M}^\lambda[\![\Lambda_t, \ldots, \Lambda_{\ell}]\!]$ satisfying 
	\begin{equation*}
	\mathsf{W}^i_m \ket{w_t^\lambda} = \begin{cases}
	\Lambda_i \ket{w_t^\lambda} & \text{if $\mathsf{W}^i_m \in S_t$,}\\
	0 & \text{if $\mathsf{W}^i_m \in S_{+ \setminus t}$,}
	\end{cases}
	\end{equation*}
	and such that the constant term of $\ket{w_t^\lambda}$ is the highest-weight vector $\ket{\lambda}$ of $\mathsf{M}^\lambda$. If such a $\ket{w_t^\lambda}$ exists for generic $\lambda$, then for all $\mathsf{W}^i_m, \mathsf{W}^j_n \in S_+$,
$$
[\mathsf{W}^i_m, \mathsf{W}^j_n] \in \hbar \mathcal A^{\hbar} \cdot S_{+\setminus t}.
$$
\end{thm}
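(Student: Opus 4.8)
The plan is to use the putative Whittaker vector $\ket{w_t^\lambda}$ as a ``probe'' that detects the structure of the left ideal $\mathcal{A}^{\hbar}\cdot S_{+\setminus t}$. The starting point is the known weaker statement \eqref{Asplus}: for any $\mathsf{W}^i_m,\mathsf{W}^j_n\in S_+$ we have $[\mathsf{W}^i_m,\mathsf{W}^j_n]\in \hbar\,\mathcal{A}^{\hbar}\cdot S_+$. So we may write
\begin{equation*}
[\mathsf{W}^i_m,\mathsf{W}^j_n] = \hbar\!\!\sum_{(k,r)\in I_+}\!\! \mathsf{g}^{k,r}_{(i,m),(j,n)}\,\mathsf{W}^k_r
\end{equation*}
for suitable $\mathsf{g}^{k,r}_{(i,m),(j,n)}\in\mathcal{A}^{\hbar}$. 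The goal is to show that the ``bad'' part of this sum --- the terms with $(k,r)\in I_t$, i.e.\ $\mathsf{W}^k_r\in\{\mathsf{W}^t_1,\ldots,\mathsf{W}^\ell_1\}$ --- can be absorbed into $\mathcal{A}^{\hbar}\cdot S_{+\setminus t}$. The first step is to reduce to controlling, for each fixed $\mathsf{W}^k_1\in S_t$, the part of the commutator of the form $\hbar\,\mathsf{a}_{k}\,\mathsf{W}^k_1$ modulo $\hbar\,\mathcal{A}^{\hbar}\cdot S_{+\setminus t}$, where $\mathsf{a}_k\in\mathcal{A}^{\hbar}$.

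\textbf{Key steps.} First I would apply both sides of the (decomposed) commutator identity to the vector $\ket{w_t^\lambda}$. Every mode in $S_{+\setminus t}$ annihilates $\ket{w_t^\lambda}$, and the commutator $[\mathsf{W}^i_m,\mathsf{W}^j_n]$ applied to $\ket{w_t^\lambda}$ can be computed directly from the defining relations: if both $\mathsf{W}^i_m$ and $\mathsf{W}^j_n$ are in $S_{+\setminus t}$, the left side kills $\ket{w_t^\lambda}$; if (say) $\mathsf{W}^j_n\in S_t$ so that $\mathsf{W}^j_n\ket{w_t^\lambda}=\Lambda_j\ket{w_t^\lambda}$, one gets $[\mathsf{W}^i_m,\mathsf{W}^j_n]\ket{w_t^\lambda}=\mathsf{W}^i_m\mathsf{W}^j_n\ket{w_t^\lambda}-\mathsf{W}^j_n\mathsf{W}^i_m\ket{w_t^\lambda}$, which again evaluates to something proportional to $\ket{w_t^\lambda}$ times the $\Lambda$'s, or to $0$. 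In all cases $[\mathsf{W}^i_m,\mathsf{W}^j_n]\ket{w_t^\lambda}$ is an explicit scalar (a polynomial in the $\Lambda_i$'s, possibly times $\hbar$) multiple of $\ket{w_t^\lambda}$. On the other hand, applying the right-hand side, only the terms $\hbar\,\mathsf{g}^{k,1}_{(i,m),(j,n)}\,\mathsf{W}^k_1$ with $(k,1)\in I_t$ survive modulo the ideal generated by $S_{+\setminus t}$ (which kills $\ket{w_t^\lambda}$), and these contribute $\hbar\sum_{k=t}^\ell \Lambda_k\,\mathsf{g}^{k,1}_{(i,m),(j,n)}\ket{w_t^\lambda}$. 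Comparing, and using that $\ket{w_t^\lambda}$ has constant term $\ket{\lambda}\neq 0$ and lives in a free module (so the coefficient-extraction is faithful on the $\Lambda^0$ piece), one extracts a polynomial identity in $\lambda$ and the $\Lambda_i$. Since this holds for generic $\lambda$ --- and the $\Lambda_i$ are independent formal parameters --- one concludes the degree-zero-in-$\Lambda$ part of the $\mathsf{W}^k_1$-coefficients is forced, and, with a bit of care about the filtration degrees (the commutator drops filtration degree by at least one by Lemma~\ref{l:oneh}, so all $\mathsf{g}$'s have the right degree), one can rearrange the expression so that the $\mathsf{W}^k_1$-terms are replaced by elements of $\mathcal{A}^{\hbar}\cdot S_{+\setminus t}$.

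\textbf{Main obstacle.} The delicate part is the bookkeeping needed to go from ``$[\mathsf{W}^i_m,\mathsf{W}^j_n]\ket{w_t^\lambda}$ is a known scalar multiple of $\ket{w_t^\lambda}$'' back to a statement about $[\mathsf{W}^i_m,\mathsf{W}^j_n]$ \emph{as an element of $\mathcal{A}^{\hbar}$}, not just its action on one vector. The existence of $\ket{w_t^\lambda}$ for \emph{generic} $\lambda$ is what makes this work: varying $\lambda$ over a Zariski-dense set, the annihilation conditions on the differences (commutator minus its $S_{+\setminus t}$-ideal part minus its scalar part) cut out enough of $\mathsf{M}^\lambda$ to force the operator identity, using the freeness of the generating set (Assumption~\ref{assumption2}) and the PBW structure of $\mathcal{A}^{\hbar}$ acting on $\mathsf{M}^\lambda$. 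Concretely I expect one argues: the operator $[\mathsf{W}^i_m,\mathsf{W}^j_n] - \hbar\sum_{k\geq t}(\text{scalar in }\lambda)\,\mathsf{W}^k_1$ annihilates $\ket{w_t^\lambda}$ for generic $\lambda$; since it is a fixed element of $\hbar\,\mathcal{A}^{\hbar}\cdot S_+$ whose only potentially-problematic components are the $\mathsf{W}^k_1$-terms, and since the $\mathsf{W}^k_1$ for $k\in\{t,\ldots,\ell\}$ act on $\ket{w_t^\lambda}$ by the distinct scalars $\Lambda_k$ while all other positive modes kill it, a linear-independence argument over $\mathbb{C}(\hbar^{1/2})[\![\Lambda_t,\ldots,\Lambda_\ell]\!]$ and genericity of $\lambda$ pin down all the $\mathsf{W}^k_1$-coefficients in terms of $\lambda$, after which the residual element manifestly lies in $\hbar\,\mathcal{A}^{\hbar}\cdot S_{+\setminus t}$. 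Making the ``linear-independence/genericity'' step fully rigorous --- i.e.\ ruling out accidental cancellations hidden in higher-$\Lambda$-order or higher-filtration-degree terms --- is where the real work lies, and I would handle it by inducting on the filtration degree, peeling off the leading graded piece first where the associated graded is commutative and everything is transparent.
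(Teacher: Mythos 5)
Your proposal is correct and follows essentially the same route as the paper: probe the commutator (which acts by exactly zero on $\ket{w_t^\lambda}$, since modes acting by scalars commute on a common eigenvector) with the Whittaker vector, use that $S_{+\setminus t}$ annihilates it while the $\mathsf{W}^k_1\in S_t$ act by independent parameters $\Lambda_k$ and the zero modes by $\lambda$, and then invoke genericity of $\lambda$ together with the PBW structure coming from free strong generation to upgrade the vanishing on $\ket{w_t^\lambda}$ to an operator statement. The one place you defer the work --- ruling out hidden cancellations --- is handled in the paper not by an induction on filtration degree but by writing the residual (non-good) part in the normal form $\sum_\alpha \mathsf{B}_\alpha\, p_\alpha(\mathsf{W}^1_0,\dots,\mathsf{W}^\ell_0)\, q_\alpha(\mathsf{W}^t_1,\dots,\mathsf{W}^\ell_1)$ with each $\mathsf{B}_\alpha$ ending in a negative mode, restricting to the constant term $\ket{\lambda}$, and using linear independence of the $\mathsf{B}_\alpha\ket{\lambda}$ to force $p_\alpha q_\alpha=0$.
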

\begin{proof}

Let $R$ be the set of all $\lambda \in \mathbb{C}^\ell$ such that the Whittaker vector $\ket{w^\lambda_t}$ exists.
As before we fix an ordering for monomials in $\cA^{\hbar}$ by saying that  
\[
\mathsf{W}^{i_1}_{n_1} \cdots\, \mathsf{W}^{i_r}_{n_r}
\]
is ordered if $n_j \geq n_{j-1}$, and $i_j \geq i_{j-1}$ when $n_j=n_{j-1}$. We say that a polynomial is ordered if each of its monomials is ordered. Moreover we call a monomial good if the rightmost term $\mathsf{W}^{i_r}_{n_r}$ satisfies $n_r \geq 1$,  and $i_r < t$ when $n_r =1$. In other words, a monomial is good if it is in $\mathcal{A}^{\hbar} \cdot S_{+\setminus t}$.

By Assumption \ref{assumption2} and Lemma \ref{l:oneh}, we know that $[\mathsf{W}^i_m, \mathsf{W}^j_n] \in \hbar \mathcal A^{\hbar} \cdot S_{+}$.
We want to show that all the monomials on the right-hand-side are good for any $\mathsf{W}^i_r, \mathsf{W}^j_s \in S_+$. By the Whittaker property, we have $[\mathsf{W}^i_r, \mathsf{W}^j_s]\ket{w_t^\lambda} =0$ for any $\lambda \in R$. We prove the lemma by showing that an ordered polynomial $\mathsf{A} \in \mathcal{A}^{\hbar}$ satisfying $\mathsf{A} \ket{w^\lambda_t} =0$ for $\lambda \in R$ must be good.

Consider $\mathsf{A} \in \mathcal{A}^{\hbar}$, such that $\mathsf{A}\ket{w^\lambda_t} =0$ for any $\lambda \in R$, and assume that  
\[
\mathsf{A} = \mathsf{B} \cdot p(\mathsf{W}^1_0, \ldots, \mathsf{W}^{\ell}_0)\cdot q(\mathsf{W}^t_1, \ldots, \mathsf{W}^\ell_1) ,
\]
where $\mathsf{B} \in \mathcal{A}^{\hbar}$ is an ordered monomial having a negative mode as rightmost factor, so that $\mathsf{B}$ can be identified with a PBW-basis vector of any Verma module. Here, $p$ (resp. $q$) are polynomials in $\ell$ (resp. $\ell+1-t$) variables. We have:
\begin{equation}
\begin{split}
 0 &= \mathsf{A} \ket{w_{t}^{\lambda}} \\
 &= \mathsf{B}\cdot  p(\mathsf{W}^1_0, \ldots, \mathsf{W}^{\ell}_0)\cdot q(\mathsf{W}^t_1, \ldots, \mathsf{W}^\ell_1) \ket{w_t^\lambda} \\
 &= \mathsf{B}\cdot p(\mathsf{W}^1_0, \ldots, \mathsf{W}^{\ell}_0)\cdot q(T_t, \ldots, T_\ell) \ket{w_t^\lambda}. \\
\end{split}
\end{equation} As the action of the zero modes $ \mathsf{W}^i_0 $ preserves the grading (by $ H $) on $ \mathsf{M}^\lambda $, we can
restrict $\mathsf{A}$ to the constant term of $\ket{w_t^\lambda}$, that is $\ket{\lambda}$, to get the following vanishing:
\begin{equation}
\begin{split}
 0 &= \mathsf{B} \cdot p(\mathsf{W}^1_0, \ldots, \mathsf{W}^{\ell}_0) \cdot q(T_t, \ldots, T_\ell) \ket{\lambda} \\
 &= p(\lambda_1, \ldots, \lambda_\ell) q(T_t, \ldots, T_\ell)\,\mathsf{B} \ket{\lambda}.
 \end{split}
\end{equation}
Since  $\lambda$ is generic, we deduce that $pq=0$ and hence $\mathsf{A}=0$. Now, let
\[
\mathsf{A}= \sum_\alpha \mathsf{B}_\alpha \cdot p_\alpha
(\mathsf{W}^1_0, \ldots, \mathsf{W}^{\ell}_0) \cdot q_\alpha(\mathsf{W}^t_1, \ldots, \mathsf{W}^\ell_1) 
\]
be a finite sum of elements of the previous type, such that the $\mathsf{B}_\alpha \ket{\lambda}$ are linearly independent. By the same argument as above, we get that
$$
 0  = \sum_\alpha p_\alpha (\lambda_1, \ldots, \lambda_\ell) q_\alpha(T_t, \ldots, T_\ell) \mathsf{B}_\alpha \ket{\lambda}.
$$
By linear independence, each summand must vanish and hence for each $\alpha$ the product $p_\alpha q_\alpha$ must vanish, and hence $\mathsf{A}=0$. It follows that any ordered polynomial $\mathsf{A}$ that annihilates all Whittaker vectors is good.
 \end{proof}

 \subsection{Existence of Whittaker vectors generically}

 \label{SecExistence}

 In Theorem \ref{whittaker}, we showed that Whittaker vectors --- annihilated by all positive modes except $\mathsf{W}^\ell_1$ which acts as a constant --- exist if the vertex algebra satisfies Assumption \ref{assumption} and the Verma module is simple. But when does this apply? We would like to justify that for certain vertex algebras such Whittaker vectors exist generically. This can be done for principal $\mathcal{W}$-algebras, as for those, Verma modules are generically simple, and their top level subspace is one-dimensional.

\subsubsection{Affine vertex algebras}
\label{Sec_affine}

Let $\g$ be a simple complex Lie superalgebra and $\langle \cdot,\cdot \rangle\,:\, \g \times \g \rightarrow \mathbb C$ a non-degenerate invariant bilinear form  $\langle \cdot,\cdot \rangle$ on $\g$. 
We normalize $\langle \cdot,\cdot \rangle$ in the usual way, such that long roots have norm $2$. 
Let 
\[
\widehat \g = \g \otimes_{\mathbb C} \mathbb C[t^{\pm 1}] \oplus \mathbb C .\mathsf{K} \oplus \mathbb C. \mathsf{d}
\]
be the affinization of $\g$. Here $\mathsf{K}$ is central and $\mathsf{d}$ is a derivation; its action on modules can be identified with minus the Hamiltonian $\mathsf{d} = -H=-L_0$. For $x \in \g$ write $x_n$ for $x \otimes t^n$. These elements are called positive (resp. negative modes) if $n > 0$ (resp. $n < 0$). Then the commutation relations are
\[
\forall m,n \in \mathbb{Z},\quad \forall x,y \in \g,\qquad [x_m, y_n] = [x, y]_{m+n} + \delta_{m+n, 0}\, m\,\langle x, y \rangle K. 
\]
Let $\mathsf{k} \in \mathbb C$ and  $\widehat \g_{\mathsf{k}}$ be the Lie superalgebra obtained from $\widehat \g$ by setting $K=\mathsf{k}$.
Let $\widehat \g_{\mathsf{k}}^\pm$ the subalgebras generated by the positive (resp. negative) modes. The subalgebra of zero-modes is identified with $\g$ (ignoring the derivation). The universal enveloping algebras of $\widehat \g_{\mathsf{k}}$, $\widehat \g_{\mathsf{k}}^\pm$ and $\widehat \g_{\mathsf{k}}^\pm \oplus \g$ are denoted by $\cA, \cA_\pm, \cA_{\geq 0}, \cA_{\leq 0}$, and we construct the $\hbar$-rescaled versions as usual.

Let $\mathcal B$ be a basis of $\g$. 
For $\mathsf{k} \in \mathbb C$, the universal affine vertex superalgebra $V^{\mathsf{k}}(\g)$ of $\g$ at level $\mathsf{k}$ (associated to $\kappa$) is generated by fields $\{ X^x(z) = \sum\limits_{n\in\mathbb{Z}} x_n z^{-n-1}\,\, |\,\, x \in \g\}$ with OPE
\[
	X^x(z) X^y(w) \sim \frac{\mathsf{k}\,\langle x, y \rangle}{(z-w)^2} + \frac{X^{[x, y]}(w)}{(z-w)}.
\]
The set $\{X^x \,\,|\,\, x \in \mathcal B \}$ strongly and freely generates  $V^{\mathsf{k}}(\g)$.
As $\widehat \g_{\mathsf{k}}$-modules we have $V^{\mathsf{k}}(\g) \cong \cA \otimes_{\cA_{\geq 0}} \mathbb C. \ket{0}$, \emph{i.e.} it is the Verma module induced from the trivial representation 
$\mathbb C. \ket{0}$ of $\cA_{\geq 0}$. More generally let $\rho$ be a representation of $\g$. Then $\rho$ induces an ${\cA_{\geq 0}}$-module by letting ${\cA_{+}}$ act trivially. The Verma module of $V^{\mathsf{k}}(\g)$ with top level $\rho$ is then 
\[
V^{\mathsf{k}}(\rho) = \cA \otimes_{\cA_{\geq 0}} \rho.
\]
Let $\g = \g_- \oplus \mathfrak h \oplus \g_+$ be as usual a triangular decomposition of $\g$ into a Cartan subalgebra $\mathfrak h$ and positive and negative part. A highest-weight vector $v$ is an eigenvector of  $\mathfrak h$ that is annihilated by $\g_+$. 
If $\rho$ is an irreducible  highest-weight representation of $\g$ of highest weight $\lambda$, we write $V^{\mathsf{k}}(\lambda)$ for $V^{\mathsf{k}}(\rho)$. In this case the conformal weight of the top level is given by 
\[
	h_\lambda = \frac{(\lambda, \lambda +2\rho)}{2(\mathsf{k}+ h^\vee)},
\]
with $\rho$ the Weyl vector and $h^\vee$ the dual Coxeter number of $\g$.

A module is simple if it has no proper graded submodule. 
 It is well-known that $V^{\mathsf{k}}(\lambda)$ is generically simple:
\begin{lem}\label{lem:simpleaffine}
$V^{\mathsf{k}}(\lambda)$ is simple for generic $\lambda \in \mathfrak h^*$ and arbitrary $\mathsf{k}$. 
\end{lem}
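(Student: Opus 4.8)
The plan is to exhibit, for generic $\lambda$, a nondegenerate invariant bilinear pairing on $V^{\mathsf{k}}(\lambda)$; simplicity then follows because the radical of such a pairing is the maximal proper submodule, so nondegeneracy forces it to vanish. Concretely, the contragredient dual $V^{\mathsf{k}}(\lambda)'$ carries a canonical $\widehat{\g}_{\mathsf{k}}$-module structure, and there is a natural pairing $V^{\mathsf{k}}(\lambda)' \times V^{\mathsf{k}}(\lambda) \to \mathbb{C}$; equivalently, one works with the Shapovalov-type form built from the PBW basis $y_{-n_1}^{(1)}\cdots y_{-n_s}^{(s)} v$ (with $y^{(i)}$ negative-mode generators and $v$ a top-level weight vector) and the anti-involution $\iota$ described in \cite[Proposition 3.9.1]{Ara}. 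The form restricts to each conformal weight space $V^{\mathsf{k}}(\lambda)_{h_\lambda - n}$, which is finite-dimensional, so on each graded piece the question becomes: is the determinant of the Gram matrix (a Shapovalov-type determinant) nonzero for generic $\lambda$?

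First I would set up the weight-space decomposition: by Assumption-style grading, $V^{\mathsf{k}}(\lambda) = \bigoplus_{n \geq 0} V^{\mathsf{k}}(\lambda)_{h_\lambda - n}$ with each summand finite-dimensional, and $\iota$ maps weight $\Delta$ to weight $-\Delta$, so the pairing restricts to a finite-dimensional pairing on each level $n$. Next I would observe that the entries of the Gram matrix at level $n$ are polynomials in $\lambda \in \mathfrak{h}^*$ (with coefficients depending on $\mathsf{k}$), hence its determinant $\det_n(\lambda)$ is a polynomial in $\lambda$. Then the key point: $\det_n$ is not identically zero. For this I would either invoke the known Shapovalov/Kac--Kazhdan determinant formula for affine (super)algebras — which expresses $\det_n(\lambda)$ as an explicit product of affine-linear factors in $(\lambda, \cdot)$ and $\mathsf{k}$, manifestly not the zero polynomial — or argue directly: pick a specific generic $\lambda_0$ (e.g. dominant and far from all affine walls) for which $V^{\mathsf{k}}(\lambda_0)$ is known to be irreducible, so the form is nondegenerate at $\lambda_0$ and $\det_n(\lambda_0) \neq 0$ for every $n$. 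Finally, the (countable) union over $n$ of the zero loci $\{\det_n = 0\}$ is a countable union of proper Zariski-closed subsets of $\mathfrak{h}^*$; its complement is the set of $\lambda$ for which the form is nondegenerate at every level, hence for which $V^{\mathsf{k}}(\lambda)$ is simple. This is "generic $\lambda$" in the appropriate sense, for arbitrary fixed $\mathsf{k}$.

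The main obstacle is the nonvanishing of $\det_n$ as a polynomial in $\lambda$. In the Lie algebra case this is classical (Shapovalov), and for affine Lie algebras it is the Kac--Kazhdan determinant formula; for the Lie \emph{super}algebra case one must be slightly careful — the relevant determinant formulas are available in the literature (Kac, and subsequent work on affine superalgebras), but one should make sure the normalization conventions and the role of the bilinear form $\langle\cdot,\cdot\rangle$ match. An alternative that sidesteps explicit determinant formulas is the "one good point" argument above: it suffices to produce a single $\lambda_0 \in \mathfrak{h}^*$ (for the given $\mathsf{k}$) at which the Verma module is irreducible. For generic $\mathsf{k}$ this is immediate (e.g. $V^{\mathsf{k}}(0)$ or any generic dominant weight at non-rational level), and a short continuity/semicontinuity argument in $\lambda$ upgrades this to a genericity statement. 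I would present the proof via this route, citing the Kac--Kazhdan formula as the clean justification and remarking that only its non-identical-vanishing is needed.
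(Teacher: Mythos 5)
Your route is genuinely different from the paper's. The paper does not use the contragredient pairing or any determinant formula: it argues that a proper graded submodule of $V^{\mathsf{k}}(\lambda)$ must contain a $\widehat{\g}$-singular vector of weight $\mu=\lambda+\beta$ with $\beta$ in the root lattice, sitting in conformal degree $h_\lambda+n$ for some $n\in\mathbb{Z}_{>0}$; the Sugawara formula then forces
$h_\mu-h_\lambda=\frac{(\beta,\beta+2(\lambda+\rho))}{2(\mathsf{k}+h^\vee)}$
to be a positive integer, which for each fixed $(\beta,n)$ is a nontrivial affine condition on $\lambda$, so the union over countably many $(\beta,n)$ excludes only a meagre set. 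This is shorter and self-contained: it needs only the Sugawara eigenvalue, not the Kac--Kazhdan determinant (whose super-analogue you rightly flag as requiring care with conventions and isotropic odd roots). Your approach buys more --- the determinant formula locates the reducibility locus exactly, and the Shapovalov-form viewpoint meshes with the non-degeneracy of $\braket{\,\cdot\,|\,\cdot\,}$ used later in the paper --- but it imports heavier machinery for a statement that only needs "generic".

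Two points in your write-up need repair. First, the graded pieces $V^{\mathsf{k}}(\lambda)_{h_\lambda-n}$ are \emph{not} finite-dimensional in the paper's setting: the top level is an irreducible highest-weight $\g$-module, which for generic $\lambda$ is an infinite-dimensional Verma module of $\g$. You should refine the grading by the $\mathfrak{h}$-weight as well, so that each $(\text{weight},n)$-block is finite-dimensional before taking Gram determinants. Second, your "one good point" fallback is circular for the stated claim: the lemma fixes an \emph{arbitrary} $\mathsf{k}$ and varies $\lambda$, so "for generic $\mathsf{k}$ this is immediate" does not produce the required $\lambda_0$ at a given $\mathsf{k}$. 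You must commit to the determinant-formula route (where, for fixed non-critical $\mathsf{k}$, each real-root factor is a non-constant affine function of $\lambda$), or otherwise exhibit an irreducible module at every fixed level. Note finally that both your argument and the paper's silently exclude the critical level $\mathsf{k}=-h^\vee$: there the Sugawara formula degenerates and the imaginary-root factors of the determinant vanish identically, so generic simplicity genuinely fails; "arbitrary $\mathsf{k}$" should be read as non-critical.
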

\begin{proof}
	Assume that $V^{\mathsf{k}}(\lambda)$ has a proper indecomposable graded submodule $M$. Since $\g_+$ acts nilpotently on every graded subspace of $V^{\mathsf{k}}(\lambda)$, the same needs to be true for $M$:
$M$ needs to be a quotient of $V^{\mathsf{k}}(\mu)$ for some $\mu \in \lambda  + Q$  with $Q$ the root lattice. Thus $h_\lambda + n = h_\mu$ for some positive integer $n$. Let $\mu  = \lambda  +\beta$ for some $\beta \in Q$. Then
\[ 
h_\mu - h_\lambda  =  \frac{(\beta, \beta +2(\lambda+\rho))}{2(\mathsf{k}+h^\vee)}.
\]
It is easy to see that for generic $\lambda$ this cannot be an integer. 
\end{proof}

\subsubsection{Principal \texorpdfstring{$\mathcal{W}$}{W}-algebras}

The general construction of $\mathcal{W}$-algebras is due to \cite{KW} (a condensed review can be found in \cite[Section 3]{CL1}).
Let $(f, h, e)$ be an $\mathfrak{sl}_2 $-triple in $\g$. Here the Cartan subalgebra element $h$ is normalized such that $[h, e] = e$ and $[h, f] = -f$. Then $\g$ is $\frac{1}{2}\mathbb{Z}$-graded by $h$-eigenvalues.  

There exists a free field vertex superalgebra $C(\g, f)$ depending on $\g$ and $f$ and the zero-mode ${\rm d}^f$ of an odd field in $C(\g, f, \mathsf{k}) = V^{\mathsf{k}}(\g) \otimes C(\g, f)$, such that $(C(\g, f, \mathsf{k}), {\rm d}^f)$ is a complex whose homology is a vertex superalgebra, the \emph{$\mathcal{W}$-algebra of $\g$ at level $\mathsf{k}$ associated to $f$}: 
\[
	\mathcal{W}^{\mathsf{k}}(\g, f) := H_*\big(C(\g, f, \mathsf{k}), {\rm d}^f\big).
\]
If $f$ is principal nilpotent, then one usually omits  $f$ in the above notation.

The complex is $\mathbb Z$-graded and the degree is called the \emph{ghost number}. The main structural theorem is  \cite[Theorem 4.1]{KW}. It tells us that the all homologies vanish except in degree zero. Moreover, it states that as a graded vector space, $\mathcal{W}^{\mathsf{k}}(\g, f) \cong V(\g^f)$, and a set of strong and free generators is associated to a homogeneous basis of $\g^f$. (As usual we denote  the kernel of $f$ on $\g$  by $\g^f$. It is the space of lowest weight vectors for the $\mathfrak{sl}_2$-triple). The conformal weight of a generator corresponding to a homogeneous element of $h$-eigenvalue $-n$ is $1-n$.
The homology 
\[
M^{\mathsf{k}}(\lambda,  f) = H_*\big(C(\g, f) \otimes V^{\mathsf{k}}(\lambda), {\rm d}^f\big)
\]
is a module for $\mathcal{W}^{\mathsf{k}}(\g, f)$, and it is in fact a Verma module. 

The following is believed to be true, but a proof does not seem to exist in the literature: 
\begin{conj}\label{conj:simple}
$M^{\mathsf{k}}(\lambda, f)$ is generically simple. 
\end{conj}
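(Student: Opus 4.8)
The plan is to obtain $M^{\mathsf{k}}(\lambda,f)$ as the Drinfeld--Sokolov reduction of a \emph{simple} affine module and then to appeal to the homological properties of that reduction. By definition $M^{\mathsf{k}}(\lambda,f) = H_\bullet\big(C(\g,f)\otimes V^{\mathsf{k}}(\lambda),{\rm d}^f\big)$ is the reduction associated to $f$ of the affine module $V^{\mathsf{k}}(\lambda)$, whose top level is $L_\g(\lambda)$. First I would fix $\lambda$ outside the countable union of affine hyperplanes where $L_\g(\lambda)\neq M_\g(\lambda)$ or where the criterion used in the proof of Lemma~\ref{lem:simpleaffine} fails; for such generic $\lambda$ the module $V^{\mathsf{k}}(\lambda)$ is the affine Verma module and, by Lemma~\ref{lem:simpleaffine}, is simple as a $\widehat\g_{\mathsf{k}}$-module (this also asks that $\mathsf{k}$ be non-critical, which holds automatically once $\mathsf{k}$ is generic; the critical level $\mathsf{k}=-h^\vee$ would need a separate argument through the Feigin--Frenkel centre). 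Thus, for generic $\lambda$, $M^{\mathsf{k}}(\lambda,f)$ is the Drinfeld--Sokolov reduction of a simple object of the category $\mathcal{O}$ of $\widehat\g_{\mathsf{k}}$.

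The second step is to invoke the structural results on Drinfeld--Sokolov reduction (Arakawa; see \cite{Ara} and the references therein, building on \cite{KW}): for $\mathsf{k}$ non-critical the functor $H^0_{DS,f}$ is exact on category $\mathcal{O}$, one has $H^i_{DS,f}(M)=0$ for $i\neq 0$ and $M\in\mathcal{O}$, and --- the point on which everything hinges --- $H^0_{DS,f}$ carries a simple module to one that is either simple or zero. Granting this, the vanishing of higher cohomology gives $M^{\mathsf{k}}(\lambda,f) = H^0_{DS,f}\big(V^{\mathsf{k}}(\lambda)\big)$ as $\mathcal{W}^{\mathsf{k}}(\g,f)$-modules, and since $V^{\mathsf{k}}(\lambda)$ is simple we conclude that $M^{\mathsf{k}}(\lambda,f)$ is simple or zero. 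It is however not zero: as recalled in the text it is a Verma module for $\mathcal{W}^{\mathsf{k}}(\g,f)$ and so has a one-dimensional, non-trivial top level. Hence $M^{\mathsf{k}}(\lambda,f)$ is simple for all $\lambda$ in the complement of a countable union of hyperplanes, that is, generically, which is the assertion of Conjecture~\ref{conj:simple}.

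The hard part will be the ``simple $\mapsto$ simple-or-zero'' input of the second step. For $f$ principal nilpotent and $\mathsf{k}$ non-critical --- the only case actually used in this paper --- this is available in the literature. But Conjecture~\ref{conj:simple} is stated for an arbitrary nilpotent $f$ (as is relevant for the $\mathcal{W}_{n,m,l}$-algebras of \cite{Rapcak:2018nsl}), and it is the general-$f$ versions of the exactness and simplicity statements for the Drinfeld--Sokolov (``minus'') reduction, together with the treatment of non-standard levels, that require care. A more elementary-looking but ultimately no easier route would be to show directly that the Kac--Shapovalov determinant of $M^{\mathsf{k}}(\lambda,f)$ on each graded piece is not identically zero as a function of $\lambda$; a natural attempt uses the free-field (Miura) realisation $\mathcal{W}^{\mathsf{k}}(\g)\hookrightarrow \pi^{\mathfrak{h}}$ and the fact that a generic Verma module and a generic Fock module for $\pi^{\mathfrak{h}}$ have the same graded character, reducing matters to the simplicity of a generic Fock module as a $\mathcal{W}^{\mathsf{k}}(\g)$-module via a Felder-type screening complex --- but that last point is essentially as hard as the original statement. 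I would therefore pursue the Drinfeld--Sokolov route and put the effort into extending the requisite reduction results to general $f$.
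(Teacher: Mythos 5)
The statement you are proving is left as a \emph{conjecture} in the paper precisely because no proof for arbitrary nilpotent $f$ exists in the literature; the paper only records (as a corollary of Lemma~\ref{lem:simpleaffine}) that the principal-nilpotent case is known, citing Arakawa's results on the minus reduction: \cite[Lemma 7.5.1]{Ara} (the minus reduction of an affine Verma module is a $\mathcal{W}$-algebra Verma module) and \cite[Theorem 7.6.3]{Ara} (the minus reduction of the simple quotient of a Verma module of antidominant highest weight is simple). For the principal case your route is essentially identical to the paper's: make $\lambda$ generic so that $V^{\mathsf{k}}(\lambda)$ is simple, then push simplicity through the Drinfeld--Sokolov functor using exactness, concentration in degree zero, and preservation of simplicity, and rule out the zero module because the reduction is a Verma module with one-dimensional top level. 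Your diagnosis of where the difficulty lies is also the correct one: the ``simple $\mapsto$ simple-or-zero'' property (and, to a lesser extent, the cohomology vanishing) for the minus reduction at a \emph{general} nilpotent $f$ is exactly the missing ingredient, and it is why the statement is a conjecture rather than a theorem. So your proposal does not close the gap --- it reproduces the known argument in the principal case and correctly names the open step in general --- but it is consistent with the paper.

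Two small points. First, in the paper genericity is imposed only on $\lambda$: Lemma~\ref{lem:simpleaffine} holds for \emph{arbitrary} $\mathsf{k}$, and in the application $\mathsf{k}$ is fixed by the equivariant parameters ($\kappa = -\epsilon_2/\epsilon_1$), so you cannot assume $\mathsf{k}$ is generic or non-critical as part of the genericity hypothesis; if your second-step input genuinely requires non-critical level, that restriction has to be stated separately. Second, the relevant preservation-of-simplicity statement in \cite{Ara} is phrased for antidominant highest weights, so you should check (as the paper implicitly does) that generic $\lambda$ lands in the antidominant locus before invoking it.
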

If $f$ is principal nilpotent and $\g$ is a simple Lie algebra, then Conjecture \ref{conj:simple} is well-known to be true. Let us denote $M^{\mathsf{k}}(\lambda, f)$ just by $M^{\mathsf{k}}(\lambda)$ in the principal nilpotent case.
By \cite[Lemma 7.5.1]{Ara} the minus reduction of a Verma module of $V^{\mathsf{k}}(\g)$ is a Verma module for $\mathcal{W}^{\mathsf{k}}(\g)$ and by \cite[Theorem 7.6.3]{Ara} the minus reduction of the simple quotient of a Verma module of antidominant (not dominant) highest weight is simple. The following result is a direct consequence of Lemma \ref{lem:simpleaffine}.
\begin{cor}[\cite{Ara}]
For generic $\lambda$ a Verma module $M^{\mathsf{k}}(\lambda)$ of a principal $\mathcal{W}^{\mathsf{k}}(\g)$-algebra 
 is  simple. 
\end{cor}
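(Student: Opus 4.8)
The plan is to obtain the corollary as a formal consequence of Lemma~\ref{lem:simpleaffine} together with the two properties of the minus quantum Drinfeld--Sokolov reduction functor $H^{0}_{-}$ quoted above: by \cite[Lemma 7.5.1]{Ara}, $H^{0}_{-}$ carries the affine Verma module $V^{\mathsf{k}}(\lambda)$ to the Verma module of $\mathcal{W}^{\mathsf{k}}(\g)$ --- this is precisely the identification of $M^{\mathsf{k}}(\lambda)$ as a Verma module used in the text, so $M^{\mathsf{k}}(\lambda) \cong H^{0}_{-}\!\big(V^{\mathsf{k}}(\lambda)\big)$ --- while by \cite[Theorem 7.6.3]{Ara}, $H^{0}_{-}$ carries the simple quotient of an affine Verma module of \emph{antidominant} highest weight to a simple $\mathcal{W}^{\mathsf{k}}(\g)$-module. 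The key point is that for generic $\lambda$ the module $V^{\mathsf{k}}(\lambda)$ is \emph{already} its own simple quotient, so these two statements combine directly.

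In detail, the steps would be the following. First, note that antidominance of $\lambda$ at the fixed level $\mathsf{k}$ is a generic condition: it amounts to $\langle \lambda + \rho, \alpha^{\vee}\rangle \notin \mathbb{Z}_{>0}$ for a certain countable family of coroots $\alpha^{\vee}$, hence excludes only a countable union of proper affine subspaces of $\mathfrak{h}^{*}$; and Lemma~\ref{lem:simpleaffine} guarantees that $V^{\mathsf{k}}(\lambda)$ is simple for $\lambda$ outside another such countable union. Thus there is a generic set $\mathcal{G} \subseteq \mathfrak{h}^{*}$ on which both conditions hold simultaneously. Second, fix $\lambda \in \mathcal{G}$: since $V^{\mathsf{k}}(\lambda)$ is simple it coincides with its unique simple quotient, and since $\lambda$ is antidominant, \cite[Theorem 7.6.3]{Ara} shows that $H^{0}_{-}\!\big(V^{\mathsf{k}}(\lambda)\big)$ is simple. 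Combining this with $M^{\mathsf{k}}(\lambda) \cong H^{0}_{-}\!\big(V^{\mathsf{k}}(\lambda)\big)$ from \cite[Lemma 7.5.1]{Ara}, we conclude that $M^{\mathsf{k}}(\lambda)$ is simple for all $\lambda$ in the generic set $\mathcal{G}$, which is the assertion.

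The substance here is entirely in Lemma~\ref{lem:simpleaffine} and the cited results of Arakawa; what remains is bookkeeping, and this is where the only real care is needed. One must confirm that the ${\rm d}^{f}$-homology defining $M^{\mathsf{k}}(\lambda)$ is the \emph{minus} reduction and that it is concentrated in degree $0$ (so that ``$H_{*}$'' and ``$H^{0}_{-}$'' agree) --- this follows from \cite[Theorem 4.1]{KW} and \cite[Lemma 7.5.1]{Ara} --- and that ``antidominant'' in \cite[Theorem 7.6.3]{Ara}, a condition on the affine weight at fixed $\mathsf{k}$, remains a generic condition on $\lambda \in \mathfrak{h}^{*}$ for arbitrary fixed $\mathsf{k}$, which it does. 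I do not anticipate any further obstacle: once Lemma~\ref{lem:simpleaffine} is in hand, the corollary is immediate from the functoriality of $H^{0}_{-}$.
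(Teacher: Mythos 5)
Your proposal is correct and follows the same route as the paper: the text immediately preceding the corollary invokes exactly \cite[Lemma 7.5.1]{Ara} (minus reduction of a Verma module is a Verma module) and \cite[Theorem 7.6.3]{Ara} (minus reduction of the simple quotient of an antidominant Verma module is simple), and then declares the corollary a direct consequence of Lemma~\ref{lem:simpleaffine}. Your spelled-out bookkeeping --- that antidominance and the simplicity condition of Lemma~\ref{lem:simpleaffine} each fail only on a countable union of proper affine subspaces, so both hold on a common generic set where $V^{\mathsf{k}}(\lambda)$ is its own simple quotient --- is precisely what the paper leaves implicit.
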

Let $\{W^1, \ldots, W^\ell \}$ be a minimal strong generating set of a principal $\mathcal{W}^{\mathsf{k}}(\g)$-algebra and $W^\ell$ the generator of maximal conformal weight. Denote by $\mathsf{W}^i$ the corresponding rescaled generators, with modes $\mathsf{W}^i_m \in \mathcal{A}^{\hbar}$. As usual, we denote by $\mathsf{M}^{\mathsf{k}}(\lambda)$ the Verma module constructed from rescaled modes. It is generically simple as well;  Remark \ref{rem:liftedVerma} says that the weight labels $\lambda_i$ for the standard Verma module $M^{\mathsf{k}}(\lambda)$ become polynomials in $\hbar^{-\frac{1}{2}}$, and hence $\mathsf{M}^{\mathsf{k}}(\lambda)$ is a module over both $\mathcal{W}^{\mathsf{k}}(\g) \otimes_{\mathbb C} \mathbb C[\hbar^{-\frac{1}{2}}]$ and $\mathcal{W}^{\mathsf{k}}(\g) \otimes_{\mathbb C} \mathbb C(\hbar^{-\frac{1}{2}})$. The evaluation $\hbar^{-\frac{1}{2}} \mapsto z \in \mathbb C$ thus gives a $\mathcal{W}^{\mathsf{k}}(\g)$ Verma module that must be generically simple. Thus $\mathsf{M}^{\mathsf{k}}(\lambda)$ cannot have a proper submodule.
By Theorems \ref{whittaker} and  \ref{Lem24}, it then follows that:

\begin{prop}
\label{thecoth}For generic $\lambda$ a Verma module $\mathsf{M}^{\mathsf{k}}(\lambda)$ of a principal $\mathcal{W}^{\mathsf{k}}(\g)$-algebra admits a unique Whittaker vector 
 $\ket{w} \in \mathsf{M}^{\mathsf{k}}(\lambda)[\![\Lambda]\!]$ satisfying
\begin{equation}
\label{giufngizgn}\forall i \in [\ell],\quad \forall m > 0,\qquad \mathsf{W}^i_m \ket{w} =\Lambda \delta_{i, \ell}\delta_{m,1} \ket{w}.
\end{equation}
Moreover the condition on the positive modes stated in Theorem \ref{Lem24} holds, that is, for any $\mathsf{W}^i_m, \mathsf{W}^j_n \in S_+$,
	\begin{equation}\label{eq:condpost}
		[\mathsf{W}^i_m, \mathsf{W}^j_n] \in \hbar \mathcal{A}^{\hbar} \cdot S_{+ \setminus \ell}.
\end{equation}
\end{prop}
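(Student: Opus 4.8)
The plan is to derive Proposition~\ref{thecoth} as a direct consequence of the two earlier theorems, after verifying that the hypotheses of both theorems are met by a principal $\mathcal{W}^{\mathsf{k}}(\g)$-algebra for generic $\lambda$. First I would check that such a vertex algebra satisfies Assumption~\ref{assumption2} (and hence also Assumption~\ref{assumption}): items (1), (2), (4) are structural facts recalled in Section~\ref{SecBackground} --- the grading by the Hamiltonian with one-dimensional lowest piece, the existence of a finite free strong generating set $\{W^1,\ldots,W^\ell\}$ with positive integral conformal weights obtained from a homogeneous basis of $\g^f$ (Kac--Wakimoto), and the commutativity of $\zhu(\mathcal{W}^{\mathsf{k}}(\g))\cong\mathbb{C}[W^1_0,\ldots,W^\ell_0]$ for principal nilpotent $f$. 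For item (3), the compatible involution $\rho$, I would invoke the remark following Lemma~\ref{l:oneh}: principal $\mathcal{W}$-algebras of types $B_n,C_n,E_7,E_8,G_2,F_4,D_{2n}$ are strongly generated in even conformal weight, type $D_{2n+1}$ and type $A$ carry the required involution by \cite{CL2,Lin}, and type $E_6$ by \cite{KMST}, covering all simple $\g$.

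Next I would record that, by the Corollary to Conjecture~\ref{conj:simple} proved above (using \cite[Lemma 7.5.1, Theorem 7.6.3]{Ara} together with Lemma~\ref{lem:simpleaffine}), the Verma module $M^{\mathsf{k}}(\lambda)$ is simple for generic $\lambda$, and hence --- by the $\hbar^{-\frac12}\mapsto z$ specialisation argument spelled out just before the proposition --- the rescaled Verma module $\mathsf{M}^{\mathsf{k}}(\lambda)$ over $\mathsf{V}=\mathcal{W}^{\mathsf{k}}(\g)\otimes_{\mathbb{C}}\mathbb{C}(\hbar^{\frac12})$ is simple as well. Since $\mathcal{W}^{\mathsf{k}}(\g)$ is freely generated, the last clause of Assumption~\ref{assumption} ($W^\ell_{-1}$ not nilpotent in $\cA$) holds automatically, as noted in the text. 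With these verifications in place, Theorem~\ref{whittaker} applies with $i_0=\ell$ and produces the unique vector $\ket{w}=\sum_{d\geq 0}\Lambda^d\ket{w^d}\in\mathsf{M}^{\mathsf{k}}(\lambda)[\![\Lambda]\!]$ satisfying \eqref{giufngizgn}; this gives the existence and uniqueness half of the statement.

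For the second half, equation~\eqref{eq:condpost}, I would apply Theorem~\ref{Lem24} with the choice $t=\ell$, so that $S_t=S_\ell=\{\mathsf{W}^\ell_1\}$ and $S_{+\setminus t}=S_{+\setminus \ell}$ consists of all positive modes except $\mathsf{W}^\ell_1$. The hypothesis of Theorem~\ref{Lem24} --- existence, for generic $\lambda$, of a vector $\ket{w_\ell^\lambda}\in\mathsf{M}^\lambda[\![\Lambda_\ell]\!]$ annihilated by $S_{+\setminus\ell}$, with $\mathsf{W}^\ell_1$ acting by $\Lambda_\ell$, and with constant term $\ket{\lambda}$ --- is exactly the content of the Whittaker vector $\ket{w}$ just produced by Theorem~\ref{whittaker} (after relabelling $\Lambda=\Lambda_\ell$; note $\ket{w^0}=\ket{\lambda}$ by construction, so the constant-term condition holds). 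Therefore Theorem~\ref{Lem24} yields $[\mathsf{W}^i_m,\mathsf{W}^j_n]\in\hbar\,\mathcal{A}^{\hbar}\cdot S_{+\setminus\ell}$ for all $\mathsf{W}^i_m,\mathsf{W}^j_n\in S_+$, which is \eqref{eq:condpost}.

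The proposition is thus essentially a bookkeeping corollary, and there is no genuine mathematical obstacle in the argument itself; the only point requiring care is the hypothesis-checking, and within that the subtlest item is the existence of the compatible involution $\rho$ for \emph{every} simple Lie type --- this is not uniform and relies on the case-by-case list in the remark after Lemma~\ref{l:oneh} (especially the $E_6$ case from \cite{KMST} and the type-$A$ and $D_{2n+1}$ cases from \cite{Lin,CL2}). I would state the verification of Assumption~\ref{assumption2} as the first lemma-like step of the proof and then let the rest follow formally from Theorems~\ref{whittaker} and~\ref{Lem24}.
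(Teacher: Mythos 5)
Your proposal is correct and follows essentially the same route as the paper: verify that a principal $\mathcal{W}^{\mathsf{k}}(\g)$-algebra satisfies the standing assumptions, use Arakawa's results (via Lemma~\ref{lem:simpleaffine} and the reduction theorems) to get generic simplicity of $\mathsf{M}^{\mathsf{k}}(\lambda)$, then invoke Theorem~\ref{whittaker} for existence and uniqueness and Theorem~\ref{Lem24} with $t=\ell$ for the condition \eqref{eq:condpost}. The only difference is that you spell out the case-by-case verification of the involution $\rho$ more explicitly than the paper does, which is a reasonable addition but not a different argument.
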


	Proposition \ref{thecoth} holds for any simple Lie algebra $\mathfrak{g}$. We will use this result to construct Whittaker vectors explicitly using Airy structures for the case where $\mathfrak{g}$ is of type A or D. By the discussion in Section \ref{SecInstanton}, these Whittaker vectors correspond to Gaiotto vectors for the corresponding instanton moduli spaces.

	However, we are also interested in the Gaiotto vectors for instantons of type B and C. From physics, those are not expected to be given by Whittaker vectors for $\mathcal{W}^{\mathsf{k}}(\mathfrak{g})$-modules with $\mathfrak{g}$ of type B or C; rather, they are expected to correspond to Whittaker vectors for certain twisted $\mathcal{W}^{\mathsf{k}}(\mathfrak{g})$-modules with $\mathfrak{g}$ again of type A and D respectively (see \cite{KMST} for instance). We expect the arguments of Sections~\ref{SecWhi1} and \ref{SecWhi2} to extend to twisted modules, which would allow us to prove an analogue of Proposition~\ref{thecoth} about existence of Whittaker vectors, and in particular of Theorem~\ref{Lem24} about the condition on the positive modes \eqref{eq:condpost}, for these cases as well. However, we do not have a proof at this stage. As a result, the necessary condition on the positive modes \eqref{eq:condpost} for the construction of Whittaker vectors for instantons of type B and C using Airy structures are stated as Conjectures \ref{theconjB} and \ref{theconjC} later in the text.

\subsubsection{The self-dual level of principal $\mathcal W$-algebras}

\label{ss:selfdual}

All of our results hold for principal $\mathcal{W}^{\mathsf{k}}(\mathfrak{g})$ at arbitrary level. However, there is one choice of level which is particularly nice, where the construction of Whittaker vectors using Airy structures simplifies considerably --- the self-dual level.

A second way to realize $\mathcal W$-algebras is as subalgebras of a combination of free field algebras and affine vertex algebras, characterized as the joint kernel of a set of screening charges \cite{Genra1}. The principal $\mathcal W$-algebras are nice from this point of view as they are realized as subalgebras of a Heisenberg vertex algebra of the same rank as the corresponding Lie algebra. 

A third realization of principal $\mathcal W$-algebras of type $ADE$ is as a coset \cite{Arakawa:2018iyk}, namely $\mathcal W^{\mathsf{k}}(\g) \cong \text{Com}(V^\ell(\g), V^{\ell-1}(\g) \otimes L_1(\g))$ where $\mathsf{k}$ and $\ell$ are related by the formula
\[
	\frac{1}{\mathsf{k}+ h^\vee} + \frac{1}{\ell + h^\vee} =1,
\]
with $h^\vee$ the dual Coxeter number of $\g$. This coset statement holds for generic level or equivalently as one parameter vertex algebras. Large level limits, that is $\mathsf{k} \rightarrow \infty$, of such structures exist and are isomorphic to orbifolds \cite{Creutzig:2012sf, Creutzig:2014lsa}. In our specific case the orbifold limit of the coset is
\[
	\mathcal W^{-h^\vee +1}(\g) \cong L_1(\g)^G,
\]
with $G$ the compact Lie group whose Lie algebra is $\g$. But $L_1(\g)$ is nothing else than the lattice vertex algebra of the root lattice $Q$ of $\g$ and the orbifold is in degree zero, \emph{i.e.} it is a subalgebra of the Heisenberg subalgebra of $L_1(\g)$. 
Principal $\mathcal W$-algebras enjoy Feigin--Frenkel duality \cite{FF}, that is in the simply-laced case $\mathcal W^{\mathsf{k}}(\g) \cong \mathcal W^\ell(\g)$ for $(\mathsf{k}+h^\vee)(\ell + h^\vee) = 1$. 
We observe that if $\ell = - h^\vee +1$ then $\ell$ is at the Feigin--Frenkel self-dual level. This level is thus rather special both from the coset point of view of $\mathcal W$-algebras and from the perspective of Feigin--Frenkel duality. 

In particular, the coset point of view tells us that the subgroup of $G$ that restricts to  automorphisms of the Heisenberg subalgebra leaves the $\mathcal W$-algebra invariant. Thus twisted modules for the Heisenberg algebra restrict to untwisted $\mathcal{W}$-modules. This point of view was exploited in \cite{BBCCN18,BKS} to construct higher Airy structures as $\mathcal{W}$-modules, but in this paper such twists will not be needed.

\subsection{Whittaker vectors and Airy structures}

\label{s:WhitAiry}

Proposition \ref{thecoth} establishes existence and uniqueness of Whittaker vectors --- annihilated by all positive modes except $\mathsf{W}^\ell_1$ which acts as a constant --- for arbitrary Verma modules of principal $\mathcal{W}^{\mathsf{k}}(\mathfrak{g})$-algebras. We now outline the main ideas behind our construction of Whittaker vectors using Airy structures, that will be put in practice in the next sections. We also explain how this approach may lead to the construction of new Whittaker vectors and ``Whittaker-like'' vectors for vertex algebras.

\subsubsection{Airy structures and $\mathcal{W}^{\mathsf{k}}(\mathfrak{g})$-modules}

Recall the definition of Airy structures in Definition \ref{d:Airy}; from now on we will only consider Airy structures where the dimension of the vector space $V$ is countably infinite. We are therefore looking for a set of differential operators $\{ \mathsf{H}_k ~|~k \in \mathbb{Z}_{>0}\}$ that satisfy two properties: the degree and subalgebra conditions. It turns out that vertex algebras, in particular $\mathcal{W}^{\mathsf{k}}(\mathfrak{g})$-algebras, provide a natural source of Airy structures, as explained in \cite{BBCCN18}.

Let $V$ be a vertex algebra that satisfies Assumption \ref{assumption}. Let $S \subset \cA^{\hbar}$ be a subset of the modes of the strong generators. The idea is to find a representation for the modes $\mathsf{W}^i_m \in S$ as differential operators that satisfy the degree and subalgebra conditions from Definition \ref{d:Airy}.

In the language of vertex algebras, the subalgebra condition of Definition \ref{d:Airy} can be reformulated as follows:

\begin{defin}\label{d:subVOA}
	Let $S \subset \cA^{\hbar}$ be a subset of modes of the strong generators of a vertex algebra. We say that $S$ satisfies the \emph{subalgebra condition} if $\mathcal{A}^{\hbar} \cdot S$ satisfies the property:
	\[
		[\mathcal{A}^{\hbar} \cdot S, \mathcal{A}^{\hbar} \cdot S] \subseteq \hbar \mathcal{A}^{\hbar} \cdot S.
	\]
	In other words, given any two modes in $S$, their commutator is in $\hbar \mathcal{A}^{\hbar} \cdot S$.
\end{defin}
We already mentioned in \eqref{Asplus} that $S_+$ satisfies the subalgebra condition whenever the vertex algebra is strongly generated and has an involution $\rho$, but we could also consider other general subsets of modes (as was done in \cite{BBCCN18} for instance) satisfying the subalgebra condition.

Now we are looking for a representation of $V$ into an algebra of differential operators, hence having a module $M$ of functions on which these act. 
If $S$ is a set satisfying the subalgebra condition, the restriction of this representation to $\mathsf{W}_m^i \in S$ automatically satisfies the subalgebra condition in Definition~\ref{d:Airy}. It remains to examine whether we can choose such a representation obeying the degree condition, and this must be constructed on a case by case basis for the given vertex algebra and the chosen module $\mathsf{M}$.

When this is realized, we  obtain an Airy structure, and it gives us (Theorem~\ref{t:KS}) a unique partition function $\mathcal{Z}$ annihilated by the differential operators representing $\mathsf{W}_m^i \in S$. Equivalently it describes a state $\ket{v} \in \mathsf{M}$ satisfying 
\[
	\forall \mathsf{W}^i_m \in S, \qquad \mathsf{W}^i_m \ket{v} = 0.
\]
Besides, this state can be constructed recursively by the topological recursion as discussed in Section \ref{SecAiryTR}.

\subsubsection{Airy structures and Whittaker vectors}

Can one obtain Whittaker states in this way? The answer is yes, when the subset of modes has a special property closely connected to the condition on the positive modes stated in Theorem~\ref{Lem24}. The key is the following elementary observation, which we formulate in the countably infinite-dimensional setting to match our purposes, but is of course valid in all generality.

\begin{lem}\label{l:AiryW}
	Let $S=\{ \mathsf{H}_k~|~k \in \mathbb{Z}_{>0}\}$ be an Airy structure. Suppose that there exists a subset $S' \subset S$, with $S'' = S \setminus S'$, such that
	\begin{equation}\label{e:special}
		\forall  \mathsf{H}_i, \mathsf{H}_j \in S, \qquad [\mathsf{H}_i, \mathsf{H}_j] \in \hbar \mathcal{D} \cdot S''.
	\end{equation}

	Then, the set of differential operators $\overline{S} = 
\{\overline{\mathsf{H}}_k ~|~k \in \mathbb{Z}_{>0}\}$ defined by	
	\[
		\overline{\mathsf{H}}_k = \begin{cases}	\mathsf{H}_k -  \hbar^{\alpha_k+1}T_k & \text{if $\mathsf{H}_k \in S'$,}\\
			\mathsf{H}_k & \text{if $\mathsf{H}_k \in S''$,}
		\end{cases}
	\]
	for any constants $T_k \in \mathbb{C}$ and $\alpha_k \in \frac{1}{2} \mathbb{Z}_{\geq 0}$, also form an Airy structure, whose partition function\footnote{To keep light notations, we use $\mathcal{Z}$ to denote the partition function associated to the shifted Airy structure $\overline{\mathsf{H}}_k$, but it should not be confused with the partition function associated to the unshifted Airy structure $\mathsf{H}_k$, which would be annihilated by all $\mathsf{H}_k$.} $\mathcal{Z}$ is annihilated by the $\overline{\mathsf{H}}_k$s, which in terms of the original differential operators becomes
	\[
		\mathsf{H}_k \mathcal{Z} = \begin{cases}
			\hbar^{\alpha_k+1}T_k \mathcal{Z} & \text{if $\mathsf{H}_k \in S'$,}\\
			0 & \text{if $\mathsf{H}_k \in S''$.}
		\end{cases}
	\]
\end{lem}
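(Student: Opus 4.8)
The plan is to verify the two defining conditions of an Airy structure (Definition~\ref{d:Airy}) for $\overline{S}$, and then read off the annihilation property of its partition function directly from Theorem~\ref{t:KS}. Since $\overline{\mathsf{H}}_k$ differs from $\mathsf{H}_k$ only by the constant $\hbar^{\alpha_k+1}T_k$, which has degree $\geq 2$ with respect to the grading \eqref{eq:grading} (because $\alpha_k+1 \geq 1$ forces the $\hbar^{\frac{1}{2}}$-degree to be $\geq 2$), the degree condition is immediate: $\overline{\mathsf{H}}_k = \hbar\partial_{x_k} + \overline{\mathsf{P}}_k$ with $\overline{\mathsf{P}}_k = \mathsf{P}_k - \hbar^{\alpha_k+1}T_k$ still a sum of terms of degree $\geq 2$. (If the $\mathsf{H}_k$ are only an Airy structure, not one in normal form, the same matrix $\mathsf{N}$ that brings $\{\mathsf{H}_k\}$ to normal form brings $\{\overline{\mathsf{H}}_k\}$ to normal form as well, up to adding a constant, which is harmless.)

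The heart of the matter is the subalgebra condition, and this is where hypothesis~\eqref{e:special} is used. First I would compute, for $\mathsf{H}_i, \mathsf{H}_j \in S$,
\[
[\overline{\mathsf{H}}_i, \overline{\mathsf{H}}_j] = [\mathsf{H}_i - c_i, \mathsf{H}_j - c_j] = [\mathsf{H}_i, \mathsf{H}_j],
\]
where $c_k := \hbar^{\alpha_k+1}T_k$ is a scalar, so all cross-terms involving a constant vanish. By~\eqref{e:special} this commutator lies in $\hbar\,\mathcal{D}^{\hbar}\cdot S''$. Now the point is that $S'' \subset \overline{S}$ as a set of operators, because $\overline{\mathsf{H}}_k = \mathsf{H}_k$ for $\mathsf{H}_k \in S''$; hence $\hbar\,\mathcal{D}^{\hbar}\cdot S'' \subseteq \hbar\,\mathcal{D}^{\hbar}\cdot \overline{S}$, and therefore $[\overline{\mathsf{H}}_i, \overline{\mathsf{H}}_j] \in \hbar\,\mathcal{D}^{\hbar}\cdot \overline{S}$. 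This establishes that $\mathcal{D}^{\hbar}\cdot\overline{S}$ is a graded Lie subalgebra for the bracket $\hbar^{-1}[\cdot,\cdot]$, i.e.\ the subalgebra condition holds for $\overline{S}$. (One should note that~\eqref{e:special} is stated for $\mathsf{H}_i, \mathsf{H}_j$ ranging over \emph{all} of $S$, not just $S'$, which is exactly what is needed here since the commutator of two elements of $\overline{S}$ always reduces to a commutator of the corresponding $\mathsf{H}$'s regardless of which subset they lie in.)

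With both conditions verified, $\overline{S}$ is an Airy structure, so Theorem~\ref{t:KS} provides a unique partition function $\mathcal{Z} = e^F$, $F \in \hbar^{-1}\mathcal{M}^{\hbar}$ with only positive-degree terms, such that $\overline{\mathsf{H}}_k\mathcal{Z} = 0$ for all $k \in \mathbb{Z}_{>0}$. Unwinding the definition of $\overline{\mathsf{H}}_k$: for $\mathsf{H}_k \in S'$, $0 = \overline{\mathsf{H}}_k\mathcal{Z} = \mathsf{H}_k\mathcal{Z} - \hbar^{\alpha_k+1}T_k\mathcal{Z}$, so $\mathsf{H}_k\mathcal{Z} = \hbar^{\alpha_k+1}T_k\mathcal{Z}$; and for $\mathsf{H}_k \in S''$, $\mathsf{H}_k\mathcal{Z} = \overline{\mathsf{H}}_k\mathcal{Z} = 0$. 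This is precisely the claimed system, and uniqueness of $\mathcal{Z}$ is inherited from the uniqueness clause in Theorem~\ref{t:KS}. The only mild subtlety — and the step I would flag as needing a line of care rather than being a genuine obstacle — is checking that the shift by $\hbar^{\alpha_k+1}T_k$ is compatible with the completion of $\mathcal{D}^{\hbar}$ in the infinite-dimensional setting (Remark~\ref{r:infinite}), but since a constant times $\hbar$ to a positive power is manifestly an element of $\mathcal{D}^{\hbar}$ of positive degree, this is automatic.
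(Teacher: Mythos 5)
Your proof is correct and follows essentially the same route as the paper's: the commutator of the shifted operators equals that of the unshifted ones, hypothesis \eqref{e:special} places it in $\hbar\,\mathcal{D}^{\hbar}\cdot S''$, which is unaffected by the shift since the elements of $S''$ are unchanged, and the shift $\hbar^{\alpha_k+1}T_k$ has degree $2(\alpha_k+1)\geq 2$ so the degree condition survives. The paper's proof is just a terser version of the same argument; your extra remarks on the normal-form matrix $\mathsf{N}$ and on Theorem~\ref{t:KS} supplying the partition function are harmless elaborations.
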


\begin{proof}
Due to the property that
	\begin{equation}\label{eq:supersmall}
		\forall \mathsf{H}_i, \mathsf{H}_j \in S,\qquad [\mathsf{H}_i, \mathsf{H}_j] \in \hbar \mathcal{D}^{\hbar} \cdot S'' .\,
	\end{equation}
	the commutator of any two differential operators in $S$ is a linear combination of the $\mathsf{H}_k$s not in $S'$, with coefficients in $\mathcal{D}^{\hbar}$. It is then possible to shift the $\mathsf{H}_k \in S'$ by constant terms, as \eqref{eq:supersmall} still holds for the shifted operators. Thus the shifted operators still satisfy the subalgebra condition in Definition \ref{d:Airy}. To make sure that they also satisfy the degree condition, we must make sure that the shifts have degree two or more according to the grading in $\mathcal{D}^{\hbar}$. This is achieved if they come with a factor of $\hbar^{\alpha_k+1}$ for any $\alpha_k \in \frac{1}{2}\mathbb{Z}_{\geq 0}$.
\end{proof}

It now becomes clear how Airy structures can be used to construct Whittaker vectors. We are looking for Airy structure representations for the positive modes of the strong generators of a vertex algebra, such that the special property \eqref{e:special} is satisfied with $S'$ consisting of one-modes. In fact, we can be more general. Let us define the following property for subsets of modes:
\begin{defin}\label{d:extra}
	Let $S \subset \cA^{\hbar}$ be a subset of rescaled modes of the strong generators of a vertex algebra. Let $S' \subset S$, and set $S'' = S \setminus S'$. We say that $S'$ is \emph{extraneous} if
	\begin{equation}\label{eq:extra}
		\forall  \mathsf{W}^i_m, \mathsf{W}^j_n \in S, \qquad[\mathsf{W}^i_m, \mathsf{W}^j_n] \in \hbar \mathcal{A}^{\hbar} \cdot S'' .
	\end{equation}
\end{defin}

\begin{rem}\label{r:extrasub}
We note that if \eqref{eq:extra} is satisfied for some subset $S' \subset S$, then $S$ necessarily satisfies the subalgebra condition Definition \ref{d:subVOA}. Thus \eqref{eq:extra} is a stronger condition, and any set of modes $S$ that contains an extraneous subset necessarily satisfies the subalgebra condition. \end{rem}

The following result is a direct consequence of Lemma \ref{l:AiryW}.
\begin{cor}
	Suppose that we have constructed an Airy structure representation for a subset of modes $S$ of the strong generators of a vertex algebra, and that $S' \subset S$ is extraneous. Let $S'' = S \setminus S'$. Then there exists a unique vector $\ket{w}$  in $ \mathsf{M} $ such that
	\[
		\mathsf{W}^i_m \ket{w} = \begin{cases}
			 \hbar^{\alpha_{i,m}+1} T^i_{m} \ket{w} & \text{if $\mathsf{W}^i_m \in S'$,}\\
			0 & \text{if $\mathsf{W}^i_m \in S''$,}
		\end{cases}
	\]
	for any constants $T^i_m \in \mathbb{C}$ and $\alpha_{i,m} \in \frac{1}{2} \mathbb{Z}_{\geq 0}$. Moreover, this vector is explicitly constructed recursively as the partition function of the shifted Airy structure. We call such a vector a \emph{Whittaker-like vector} for the vertex algebra.
\end{cor}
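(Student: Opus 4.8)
The plan is to derive everything from Lemma~\ref{l:AiryW} and Theorem~\ref{t:KS} by unwinding the definitions; the corollary is essentially a dictionary translation. First I would make precise what ``an Airy structure representation for a subset of modes $S$'' supplies: a countably infinite-dimensional vector space $V$ with a basis $(y_k)_{k\in\mathbb{Z}_{>0}}$ and dual basis $(x_k)_{k\in\mathbb{Z}_{>0}}$, an identification of the module $\mathsf{M}$ with $\mathcal{M}^{\hbar}$ sending a distinguished generating vector to the constant $1$, and a homomorphism $\mathcal{A}^{\hbar}\to\mathcal{D}^{\hbar}$ under which each $\mathsf{W}^i_m\in S$ acts as an operator $\mathsf{H}_{i,m}\in\mathcal{D}^{\hbar}$, in such a way that the family $\{\mathsf{H}_{i,m}\mid \mathsf{W}^i_m\in S\}$, reindexed by $\mathbb{Z}_{>0}$, is an Airy structure in the sense of Definition~\ref{d:Airy}. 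Since this assignment is an algebra homomorphism compatible with the filtration and the commutator, it carries the left ideal $\mathcal{A}^{\hbar}\cdot S''$ into $\mathcal{D}^{\hbar}\cdot S''$ and intertwines commutators and powers of $\hbar$; hence the extraneous condition~\eqref{eq:extra} for $S'$ becomes exactly the hypothesis~\eqref{e:special} of Lemma~\ref{l:AiryW}, with the image of $S''$ in the role of $S''$ there.

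Next I would apply Lemma~\ref{l:AiryW} verbatim: the shifted family $\overline{S}=\{\overline{\mathsf{H}}_{i,m}\}$ with $\overline{\mathsf{H}}_{i,m}=\mathsf{H}_{i,m}-\hbar^{\alpha_{i,m}+1}T^i_m$ for $\mathsf{W}^i_m\in S'$ and $\overline{\mathsf{H}}_{i,m}=\mathsf{H}_{i,m}$ for $\mathsf{W}^i_m\in S''$ is again an Airy structure, for any constants $T^i_m\in\mathbb{C}$ and $\alpha_{i,m}\in\frac{1}{2}\mathbb{Z}_{\geq 0}$. Applying Theorem~\ref{t:KS} to $\overline{S}$ produces a unique partition function $\mathcal{Z}=e^F$ with $F\in\hbar^{-1}\mathcal{M}^{\hbar}$ of positive degree. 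Transporting back along $\mathcal{M}^{\hbar}\cong\mathsf{M}$, $\mathcal{Z}$ corresponds to a vector $\ket{w}\in\mathsf{M}$ whose degree-zero component is the distinguished generating vector, and the constraints $\overline{\mathsf{H}}_{i,m}\mathcal{Z}=0$ translate precisely into $\mathsf{W}^i_m\ket{w}=\hbar^{\alpha_{i,m}+1}T^i_m\ket{w}$ for $\mathsf{W}^i_m\in S'$ and $\mathsf{W}^i_m\ket{w}=0$ for $\mathsf{W}^i_m\in S''$. Uniqueness of $\ket{w}$, with this normalization, is precisely the uniqueness clause of Theorem~\ref{t:KS}. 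Finally, the assertion that $\ket{w}$ is constructed recursively is immediate: the constraints $\overline{\mathsf{H}}_{i,m}\mathcal{Z}=0$ yield the recursion on $2g-2+n$ for the coefficients $F_{g,n}[\alpha_1\,\cdots\,\alpha_n]$ in the decomposition~\eqref{eq:PF}, which becomes the topological recursion after the spectral-curve reformulation of Section~\ref{s:TR}.

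The main thing that needs care, keeping this from being a literal one-liner, is the bookkeeping hidden in ``representation'' and ``Airy structure''. I would check: (i) that passing to normal form via an invertible matrix $\mathsf{N}$ survives the shifts, since subtracting from $\hbar\partial_{x_i}+\mathsf{P}_i$ (with $\mathsf{P}_i$ a sum of terms of degree $\geq 2$) a further constant of $\hbar$-degree $\geq 2$ keeps the operator of the form $\hbar\partial_{x_i}+(\text{terms of degree}\geq 2)$, exactly as in the proof of Lemma~\ref{l:AiryW}; and (ii) that the homomorphism $\mathcal{A}^{\hbar}\to\mathcal{D}^{\hbar}$ genuinely carries $\mathcal{A}^{\hbar}\cdot S''$ to $\mathcal{D}^{\hbar}\cdot S''$ and that no convergence obstruction arises in the completed algebras $\mathcal{A}^{\hbar}$, $\mathcal{D}^{\hbar}$ (cf.\ Remark~\ref{r:infinite}). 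With the standing conventions on completions in force these are formal, and the corollary then follows from Lemma~\ref{l:AiryW} and Theorem~\ref{t:KS} as indicated.
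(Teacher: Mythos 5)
Your proposal is correct and follows exactly the route the paper intends: the paper states this corollary as a direct consequence of Lemma~\ref{l:AiryW} (with uniqueness supplied by Theorem~\ref{t:KS}), which is precisely the translation you carry out. The extra bookkeeping you supply about the homomorphism $\mathcal{A}^{\hbar}\to\mathcal{D}^{\hbar}$ carrying the extraneous condition~\eqref{eq:extra} into the hypothesis~\eqref{e:special} is the content the paper leaves implicit, and it is handled correctly.
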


\begin{rem}
If $S=S_+$ is the subset of positive modes, and $S' \subset S$ consists only of one-modes, then $\ket{w}$ is a  Whittaker vector for the vertex algebra, in the usual sense (see Definition \ref{d:whittaker}). However, the construction is more general here. First, the extraneous subset $S'$ may contain more than just one-modes. Second, the starting set of modes $S$ may be larger or smaller than the subset $S_+$ of positive modes, as long as it satisfies the subalgebra property. This raises the possibility that Airy structures may be useful in constructing much more general Whittaker-like vectors.
\end{rem}

The strategy to construct Whittaker vectors using Airy structures then becomes clear:
\begin{enumerate}
	\item Find subsets of modes $S$ of the strong generators of a vertex algebra that contain an extraneous subset $S' \subset S$.
	\item Find a differential representation for the modes in $S$ that satisfies the degree condition in Definition \ref{d:Airy}, and hence is an Airy structure.
\end{enumerate}
If these two steps are achieved, the resulting shifted Airy structure uniquely constructs a Whittaker-like vector annihilated by the modes in $S'' = S \setminus S'$, and such that the modes in $S'$ act as constants.

As the first step is purely algebraic, it can be formulated as the following problem in the theory of vertex algebras:

\begin{prob}\label{theprobeopen}
	Consider a vertex algebra satisfying Assumption \ref{assumption}. Classify all subsets of modes $S$ of the strong generators that contain extraneous subsets $S' \subset S$, according to Definition \ref{d:extra}. 
\end{prob}

An example of this was already achieved in the previous section. In Proposition \ref{thecoth}, it was shown that for any principal $\mathcal{W}^{\mathsf{k}}(\mathfrak{g})$-algebra, the subset $S_+$ of positive modes of the strong generators has an extraneous subset $S_\ell \subset S_+$ consisting of the one-mode with maximal conformal weight $S_\ell = \{ \mathsf{W}^\ell_1 \}$. This means that if we can find an Airy structure representation for the positive modes of the strong generators of a principal $\mathcal{W}^{\mathsf{k}}(\mathfrak{g})$-algebra, the partition function of the shifted Airy structure will construct the Whittaker vector $\ket{w}$ from Proposition \ref{thecoth}. We will do exactly that for the simple Lie algebras $\mathfrak{g} = \mathfrak{sl}_r$ (or rather $\mathfrak{gl}_r$) in Section \ref{Sec:AiryA} and $\mathfrak{g}=\mathfrak{so}_{2r}$ in Section \ref{Sec:AiryCD}.

Going back to Problem \ref{theprobeopen}, for $\mathcal{W}^{\mathsf{k}}(\mathfrak{gl}_2)$ and $\mathcal{W}^{\mathsf{k}}(\mathfrak{gl}_3)$ it can be solved explicitly by examining the known commutation relations between the modes of the strong generators. In these two cases we obtain a full classification of the subsets of modes of the strong generators that contain extraneous subsets. We expect that it is possible to construct Airy structure based on each of these subsets, thus proving existence and uniqueness of the corresponding Whittaker-like vectors. Whittaker-like vectors have been investigated  in physics in the context of Argyres--Douglas theories, see e.g. \cite{Gaiotto:2012sf, Gai, Nishinaka:2019nuy, Tanzini,Xie}, and we plan to study whether they can be produced from Airy structures in future work.

\section{Airy structures for Gaiotto vectors of type A}
\label{Sec:AiryA}

Recall from Section \ref{SecInstanton} that the Gaiotto vector for a simply-laced Lie group $G$ with Lie algebra $ \mathfrak g $, whose norm calculates the instanton partition function for pure $\mathcal{N}=2$ four-dimensional supersymmetric gauge theory, is a Whittaker vector for a Verma module of a principal $\mathcal{W}^{\mathsf{k}}(\mathfrak{g})$-algebra at shifted level $\kappa = -\frac{\epsilon_2}{\epsilon_1}$. More precisely, the Gaiotto vector $\ket{\mathfrak G}$ is the unique Whittaker vector satisfying
\begin{equation}\label{eq:goalwhitt}
	\forall i \in [\ell], \quad \forall m>0, \qquad \mathsf{W}_m^i \ket{\mathfrak G} = \Lambda^{h^\vee} \delta_{i,\ell} \delta_{m,1} \ket{\mathfrak G},
\end{equation}
where $\ell$ is the rank of $\mathfrak{g}$, $h^\vee$ is the dual Coxeter number, $\Lambda$ is an arbitrary constant, and the $\mathsf{W}^i_m$s are a certain representation (to be constructed below) of the rescaled modes (see \eqref{eq:rescaledW} --- those correspond to the modes in $\cA^{\hbar}$ according to Lemma \ref{l:Ahbar}) of the strong generators of the vertex algebra.

In this section we focus on Lie groups $G$ of type A. We follow the strategy outlined in Section \ref{s:WhitAiry} to construct this Whittaker vector using Airy structures:
\begin{enumerate}
	\item From Proposition \ref{thecoth}, we know that the subset $S_+$ of positive modes of the strong generators of the principal $\mathcal{W}$-algebra has an extraneous subset $S' \subset S_+$ consisting of the one-mode of the generator of highest conformal weight: $S' = \{ \mathsf{W}^\ell_1 \}$.
	\item We construct a differential representation for the positive modes in $S_+$ that satisfies the degree condition in Definition \ref{d:Airy}, and hence is an Airy structure.
\end{enumerate}
Then the partition function of the shifted Airy structure (where only the one-mode $\mathsf{W}^\ell_1$ is shifted) is precisely the Whittaker vector in \eqref{eq:goalwhitt}.

The parameter $\hbar$ which was introduced in Section \ref{SecBackground} plays a crucial role in the recursive construction of the partition function. In fact, it is related to the parameters $ \epsilon_1 $ and $ \epsilon_2 $ introduced in Section~\ref{Sec:Uhlenbeck} as follows:
\begin{equation}
\label{epsdef} \epsilon_1 = \hbar^{\frac{1}{2}}\kappa^{-\frac{1}{2}}\quad {\rm and}\quad \epsilon_2 = - \hbar^{\frac{1}{2}}\kappa^{\frac{1}{2}},
\end{equation}
which implies
\[
\kappa = -\frac{\epsilon_2}{\epsilon_1},\quad {\rm and}\quad  \hbar = - \epsilon_1\epsilon_2.
\] For  notational convenience, we also define 
\[
	 \alpha_0 := \hbar^{-\frac{1}{2}} (\epsilon_1+\epsilon_2) = \kappa^{- \frac{1}{2}} - \kappa^{\frac{1}{2}}.
\]
We note that at the self-dual level (see Section \ref{ss:selfdual}) we have   $\kappa = 1$, which implies that $ \alpha_0 = 0 $ and $\epsilon_1 = - \epsilon_2$. The grading with $\deg(\hbar) = 2$ can be induced by setting
\begin{equation}\label{eq:grading2}
\deg(\epsilon_1) = \deg(\epsilon_2) = 1, \qquad \text{so that} \qquad \deg(\kappa)=\deg(\alpha_0) = 0.
\end{equation}

Before we start, we remark that in type A, instead of working with the simple Lie algebra $\mathfrak{sl}_r$, it is easier to use the semi-simple Lie algebra $\mathfrak{gl}_r$. All statements regarding $\mathcal{W}$-algebra actions can easily be reformulated from $\mathfrak{sl}_r$ to $\mathfrak{gl}_r$, or deduced by reduction from $\mathfrak{gl}_r$ to $\mathfrak{sl}_r$. Thus we now focus on the principal $\mathcal{W}^{\mathsf{k}}(\mathfrak{gl}_r)$-algebras.

\subsection{Heisenberg algebra and \texorpdfstring{$\mathcal{W}$}{W}-generators}

Our goal is to construct a differential representation for the positive modes of the strong generators of $\mathcal{W}^{\mathsf{k}}(\mathfrak{gl}_r)$ that takes the form of an Airy structure. To achieve this, we will use the natural realization of $\mathcal{W}^{\mathsf{k}}(\mathfrak{gl}_r)$ as a subalgebra of the Heisenberg VOA $\mathcal{H}(\mathfrak{gl}_r)$. In this section, we describe  briefly this relation and fix our notation.

For $\mathfrak{g} = \mathfrak{gl}_r$, we have $\ell = h^\vee = r$. We equip the  Cartan subalgebra  $\mathfrak{h} = \mathbb{C}^r$ with the orthogonal canonical basis $(\chi^a)_{a = 1}^{r}$, such that
\[
	\forall a,b \in [r], \qquad \langle \chi^a,\chi^{b} \rangle = \delta_{a,b}.
\]  The Heisenberg VOA $\mathcal{H}(\mathfrak{gl}_r)$ is the VOA freely generated by $\chi^a_{-1} \ket{0}$ where  $\ket{0}$ is the vacuum vector. We define the fields $ J^a(z) $, and the corresponding modes $ J^a_m $ as
\[
J^a(z) := Y\big( \chi^a_{-1} \ket{0} , z\big) =: \sum_{m \in \mathbb Z} J^a_m z^{-m-1}, \qquad a \in [r].
\]
We introduce a parameter $\hbar$ as in Section \ref{sec:filtrations}, and define the corresponding rescaled fields $\mathsf{J}^a(z)$ and modes $\mathsf{J}^a_m \in \cA^{\hbar}$. Those satisfy the commutation relations
\[
\forall m,n \in \mathbb{Z}, \quad \forall a,b \in [r], \qquad	[\mathsf{J}^a_{m}, \mathsf{J}^{b}_{n}] = \hbar\, m \, \delta_{a,b} \delta_{m+n, 0} . 
\]

The $ \mathcal W^{\mathsf{k}}(\mathfrak{gl}_r) $-algebra is  strongly freely generated, and a set of  rescaled generators $ \mathsf{W}^i(z) $ (with modes in $\cA^{\hbar}$), where $ i \in [r] $, is obtained from the quantum Miura transform \cite{Arakawa2017}:
\begin{equation} \label{Miura}
  	 \sum_{i=0}^{r}  \mathsf{W}^{i}(z) \big(\hbar^{\frac{1}{2}} \alpha_0\big)^{r - i} \partial^{r -i}_z	 = \big( \hbar^{\frac{1}{2}} \alpha_0 \partial_z + \mathsf{J}^1(z) \big) \big( \hbar^{\frac{1}{2}}\alpha_0 \partial_z+ \mathsf{J}^2(z) \big) \,\cdots\, \big( \hbar^{\frac{1}{2}} \alpha_0 \partial_z + \mathsf{J}^r(z) \big)\,, 
\end{equation}
with the convention that $\mathsf{W}^0(z) = 1$.  In the $\mathfrak{sl}_r$ case, $\mathsf{W}^1(z)$ is identically $ 0 $, but for $\mathfrak{gl}_r$ it is non-zero. In the $\mathfrak{gl}_r$ case the Virasoro field is $\mathsf{W}^2(z) = \mathsf{L}(z)$.
 
More explicitly, the right-hand side of the Miura transform can be written as:
\[
\begin{split}
\,\,\,\,& \sum_{j = 0}^r (\hbar^{\frac{1}{2}}\alpha_0)^{r - j}  \!\!\!\!\!\! \sum_{1 \leq a_1 < \cdots < a_{j} \leq r} \!\!\! \partial_z^{a_1 - 1} \mathsf{J}^{a_1}(z) \partial_z^{a_2 - a_1 - 1} \mathsf{J}^{a_2}(z) \,\cdots\, \partial_z^{a_{j} - a_{j - 1} - 1} \mathsf{J}^{a_{j}}(z) \partial_z^{r - a_{j}} \\
& = \sum_{j = 0}^r  (\hbar^{\frac{1}{2}}\alpha_0)^{r - j} \sum_{\substack{1 \leq a_1 < \cdots < a_{j} \leq r \\ b_1,\ldots,b_{j} \in \mathbb{Z}_{\geq 0}}}N_{\mathbf{a},\mathbf{b}}\,\,\prod_{l = 1}^{j} (\partial_z^{b_j} \mathsf{J}^{a_j}(z))\,\, \partial_z^{r - j - (b_1 + \cdots + b_{j})},
\end{split}
\]
where
$$
N_{\mathbf{a},\mathbf{b}} = \sum_{\substack{c \in \mathcal{T}_{j}^+ \\ \sum_{u = v}^j c_{u,v} = b_v}} \prod_{u = 1}^{j} \frac{(a_u - a_{u - 1} - 1)!}{(a_u - a_{u - 1} - 1 - \sum_{v = u}^j c_{u,v})!\,\prod_{v = u}^{j} c_{u,v}!},
$$
is a weighted sum over the set $\mathcal{T}_{j}^{+}$ of upper triangular matrices with nonnegative entries, where $a_0 = 0$ by convention. This can be computed explicitly, via the generating series
$$
\sum_{b_1,\ldots,b_j \in \mathbb{Z}_{\geq0}} N_{\mathbf{a},\mathbf{b}}\prod_{v = 1}^j x_v^{b_v} = \prod_{u = 1}^j \bigg(1 + \sum_{v \geq u} x_v\bigg)^{a_u - a_{u - 1} - 1} ,
$$
which gives (see also \cite[Proposition 2.1]{BelEyn0})
$$
N_{\mathbf a, \mathbf b} = \prod_{u=1}^j \binom{a_u - \sum_{v=1}^{u-1} b_v -u }{b_u}.
$$
We deduce that for $i \in [r]$, we get the expression
\begin{equation} 
\label{Miuraexplicit} \mathsf{W}^i(z) = \sum_{j = 1}^{i}  \big(\hbar^{\frac{1}{2}}\alpha_0\big)^{i - j} \sum_{\substack{1 \leq a_1 < \cdots < a_j \leq r \\ b_1,\ldots,b_j \in \mathbb{Z}_{\geq0} \\ b_1 + \cdots + b_j = i - j}} N_{\mathbf{a},\mathbf{b}}\,\,\prod_{l = 1}^j (\partial_{z}^{b_l}\mathsf{J}^{a_l}(z)).
\end{equation} 
The  field $ \mathsf{W}^i(z) $ has conformal weight $i$, and its mode expansion is denoted
\begin{equation}\label{e:Miuramodes}
\mathsf{W}^i(z) = \sum_{m \in \mathbb{Z}} \mathsf{W}^i_m z^{-m-i}, \qquad i \in [r].
\end{equation}

At the self-dual level $\kappa=1$ (see Section \ref{ss:selfdual}), the form of the strong generators simplifies dramatically: 
\begin{equation}\label{Wcritlev}
W^i(z) = Y( e_i(\chi^1_{-1},\ldots,\chi^r_{-1}) \ket{0}, z) =  \sum_{1 \leq a_1 < \cdots < a_j \leq r }\,\,\prod_{l = 1}^i J^{a_l}(z), \qquad i \in [r],
\end{equation}
where $ e_i $ denotes the $ i $-th elementary symmetric polynomial. Of course, the same expression holds for the rescaled fields $\mathsf{W}^i(z)$ and $\mathsf{J}^a(z)$.

\subsection{The Airy structure}\label{Sec:AiryA2}

Using the description of the $\mathcal{W}^{\mathsf{k}}(\mathfrak{gl}_r)$-algebra as an (explicit) subalgebra of the Heisenberg algebra $\mathcal{H}(\mathfrak{gl}_r)$ from the previous section, we shall obtain a  representation for the modes of the former as differential operators, using the following representation for the modes of the latter.

 As in Section \ref{SecAiryTR}, we introduce the graded algebra of formal functions on $\mathfrak{h}$
\[
\mathcal{M}^{\hbar} : =\mathbb C \big[\!\big[(x^a_m)_{a \in [r],\,\,m > 0}\big]\!\big][\![\hbar^{1/2}]\!],
\]
with the grading \eqref{eq:grading} and \eqref{eq:grading2}. Let $ Q_1,\ldots,Q_r \in \mathbb{C}$, which have degree $0$ in the algebra. The algebra $\mathcal{M}^{\hbar}$ carries a representation of $\mathcal{H}(\mathfrak{gl}_r)$ by differential operators
\begin{equation}
\label{eq:hrep}
\begin{split}
\mathsf{J}^a_{m} = \left\{
\begin{array}{lr}
\hbar \partial_{x^a_m}& m > 0,\\
-m x^{a}_{-m}& m < 0, \\
Q_a - (\epsilon_1 + \epsilon_2) = Q_a - \hbar^{\frac{1}{2}}\alpha_0 &  m = 0. 
\end{array}
\right.
\end{split}
\end{equation}

Note that $\mathsf{J}_m^a$ for $m \neq 0$ is an operator of degree $1$, but $\mathsf{J}_0^a$ is not homogeneous unless $\kappa = 1$ (equivalently $ \alpha_0 = 0 $) --- in the language of \cite{BBCCN18}, this is because one can think of the degree $0$ term $Q_a$ as a dilaton shift of the zero-modes. We define $\ket{\lambda}$ to be the element $1 \in M$, where we recall that $ \lambda  $ is the highest weight  defined in Theorem~\ref{ThmBFN}.

We now show that this representation of the $\mathcal{W}^{\mathsf{k}}(\mathfrak{gl}_r)$-algebra can be used  to construct an Airy structure.

\begin{thm}
	\label{Whit1prop}Assume that $r \geq 2$,  $Q_1, \ldots, Q_r \in \mathbb{C}$ are  pairwise distinct, and that $T \in \mathbb{C}$. Let $\mathsf{W}^i_m$ be the differential operators associated to the rescaled modes of the strong generators $\mathsf{W}^i(z)$ of the principal $\mathcal{W}^{\mathsf{k}}(\mathfrak{gl}_r)$-algebra, obtained  via equation \eqref{eq:hrep}. Then the family of differential operators 
\[
	\forall  i \in [r],\, m \in \mathbb{Z}_{>0}, \qquad \overline{\mathsf{W}}^i_m := \mathsf{W}^i_m -\hbar^{\frac{r}{2}} T \delta_{i,r} \delta_{m, 1}
	\]
forms an Airy structure.
\end{thm}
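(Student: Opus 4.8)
The plan is to verify the two defining conditions of an Airy structure (Definition~\ref{d:Airy}) for the shifted operators $\overline{\mathsf{W}}^i_m$. Since shifting by a constant of degree $\geq 2$ (here $\hbar^{r/2}T\delta_{i,r}\delta_{m,1}$, which has degree $r \geq 2$) does not affect the subalgebra condition and, thanks to Proposition~\ref{thecoth}, the extraneous subset $S'=\{\mathsf{W}^\ell_1\}$ ensures that Lemma~\ref{l:AiryW} applies once we know $\{\mathsf{W}^i_m\}_{i\in[r],m>0}$ is an Airy structure, the real content is: \emph{the unshifted family $\{\mathsf{W}^i_m \mid i\in[r],\,m\in\mathbb{Z}_{>0}\}$, in the differential representation \eqref{eq:hrep}, is an Airy structure}. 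The subalgebra condition for this family is already guaranteed: $S_+$ satisfies it by \eqref{Asplus} (strong generation plus the involution $\rho$), and the representation \eqref{eq:hrep} of $\mathcal{H}(\mathfrak{gl}_r)$, hence of $\mathcal{W}^{\mathsf{k}}(\mathfrak{gl}_r)$, is a genuine algebra homomorphism, so commutators and the relation $[\mathsf{W}^i_m,\mathsf{W}^j_n]\in\hbar\mathcal{A}^\hbar\cdot S_+$ descend to the differential operators. Thus the crux is the \textbf{degree condition}: one must exhibit an invertible linear change of generators (a matrix $\mathsf{N}$ mixing the $\mathsf{W}^i_m$ among themselves, possibly with $\hbar$- and $Q$-dependent entries of the right degree) after which each operator takes the normal form $\hbar\partial_{x^{a}_{m}} + (\text{terms of degree}\geq 2)$.

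The key steps I would carry out are: (1) Expand $\mathsf{W}^i_m$ using the explicit Miura formula \eqref{Miuraexplicit} together with the representation \eqref{eq:hrep}, separating each mode into its linear part in the $x$'s and $\hbar\partial_x$'s and its higher-degree remainder. The linear part comes only from the $j=1$ term of \eqref{Miuraexplicit}, i.e. from single currents $\partial_z^{b}\mathsf{J}^{a}(z)$ with the zero-mode $\mathsf{J}^a_0 = Q_a - \hbar^{1/2}\alpha_0$ substituted into all \emph{other} factors; the rest is degree $\geq 2$ because each nonconstant $\mathsf{J}^a_{\pm}$ carries degree $1$ and products of two or more such give degree $\geq 2$, while the $Q_a$'s have degree $0$. (2) Compute this linear part explicitly: it will be of the form $\sum_{a,k} c^{i,m}_{a,k}\,\mathsf{J}^a_{m+k-?}$-type combinations; concretely, for each fixed $m>0$, the vector of linear parts $(\text{lin}\,\mathsf{W}^i_m)_{i\in[r]}$ should be expressible as an explicit matrix $M^{(m)}(Q,\hbar)$ applied to the vector $(\hbar\partial_{x^a_m})_{a\in[r]}$ plus lower-$|m'|$ mode contributions (terms involving $\hbar\partial_{x^a_{m'}}$ or $x^a_{m'}$ with $m'<m$, which are harmless for the normal-form recursion as long as the \emph{leading} block is invertible). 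The matrix $M^{(m)}$ will essentially be a Vandermonde-type matrix in the $Q_a$'s — this is where the hypothesis that $Q_1,\dots,Q_r$ are pairwise distinct enters. (3) Invert, degree by degree in $m$: set $\mathsf{N}$ so that $\hat{\mathsf{H}}^a_m := \sum_{i} (M^{(m)})^{-1}_{a,i}\mathsf{W}^i_m$ (reindexing $I = \{(a,m)\}$) has leading linear term exactly $\hbar\partial_{x^a_m}$, absorbing the off-diagonal lower-mode junk into the "degree $\geq 2$'' bookkeeping by a triangular adjustment. (4) Conclude that $\{\mathsf{W}^i_m\}$ is an Airy structure (not in normal form, but Airy structure is basis-independent per the remark after Definition~\ref{d:Airy}), then invoke Lemma~\ref{l:AiryW} with $S'=\{\mathsf{W}^r_1\}$, $\alpha_r=r-1\in\tfrac12\mathbb{Z}_{\geq0}$, $T_r = T$, to obtain that $\{\overline{\mathsf{W}}^i_m\}$ is again an Airy structure; the needed extraneity \eqref{e:special} is exactly \eqref{eq:condpost} of Proposition~\ref{thecoth}, which transports to the differential representation since the representation is a homomorphism.

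The main obstacle I anticipate is step (2)–(3): showing that the linear ($=$ degree-one) part of the Miura-transformed modes, arranged appropriately, is governed by an \emph{invertible} matrix, and identifying precisely how the non-homogeneity of $\mathsf{J}^a_0$ (the dilaton/$Q$-shift, only homogeneous at $\kappa=1$) interacts with the grading. One has to check that every term produced by substituting $\mathsf{J}^a_0 = Q_a - \hbar^{1/2}\alpha_0$ into \eqref{Miuraexplicit} either contributes to the linear part with the claimed leading behavior, or has degree $\geq 2$ (the $\hbar^{1/2}\alpha_0$ pieces carry a factor $\hbar^{1/2}$ and combine to degree $\geq 2$), and that no degree-$1$ term with a "wrong'' structure (e.g. a bare constant, which would violate the degree condition) survives — in particular that the constant terms in $\mathsf{W}^i_m$ for $m>0$ all vanish, which should follow since $\mathsf{W}^i(z)$ has conformal weight $i$ and $\mathsf{W}^i_m$ for $m>0$ lowers weight, so it cannot act as a nonzero scalar on the vacuum sector. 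A secondary but routine technical point (Remark~\ref{r:infinite}) is that all these differential operators land in the appropriate completion $\mathcal{D}^\hbar$ and all sums are locally finite; this is checked exactly as in \cite{BKS}.
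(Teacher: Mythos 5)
Your proposal is correct and follows essentially the same route as the paper: reduce to the unshifted family via Lemma~\ref{l:AiryW} and Proposition~\ref{thecoth}, check the absence of degree-zero terms, and identify the degree-one part of $\mathsf{W}^i_m$ ($m>0$) as a Vandermonde-type combination of the $\mathsf{J}^a_m$ in the $Q_a$'s, invertible precisely when the $Q_a$ are pairwise distinct, with the extra $\hbar^{\frac{1}{2}}\alpha_0$ contributions at general $\kappa$ landing only in the zero-modes. Two small slips that do not affect the argument: the linear part comes from the $j=i$ (not $j=1$) summand of \eqref{Miuraexplicit} --- one non-constant current times $i-1$ zero-modes, which is what you actually describe --- and the shift exponent in Lemma~\ref{l:AiryW} should be $\alpha_r=\tfrac{r}{2}-1$ rather than $r-1$; also no ``lower-mode'' degree-one terms or triangular adjustment are needed, since mode conservation forces the degree-one part of $\mathsf{W}^i_m$ to involve only $\mathsf{J}^a_m$.
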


Before we prove the theorem, let us state an immediate corollary, which follows directly from the theory of Airy structures discussed in Section \ref{SecAiryTR}. The partition function $\mathcal{Z}$ associated to the Airy structure\footnote{As in Lemma \ref{l:AiryW}, $\mathcal{Z}$ is the partition function associated to the shifted Airy structure $\overline{\mathsf{W}}^i_m$, not the unshifted one $\mathsf{W}^i_m$.} $\overline{\mathsf{W}}^i_m$ (see Theorem \ref{t:KS}) is a solution to the differential constraints
\[
	\mathsf{W}^i_m \mathcal{Z} = \hbar^{\frac{r}{2}} T \delta_{i,r} \delta_{m,1} \mathcal{Z}.
\]
In other words, the partition function $\mathcal{Z}$ can be identified with a Whittaker vector $\ket{w}$ in a Verma module $M^{\mathsf{k}}(\lambda)$ --- whose highest weight $\lambda$ is uniquely determined by the action of the zero-modes in \eqref{eq:hrep} --- satisfying
\[
	\mathsf{W}^i_m \ket{w} = \hbar^{\frac{r}{2}} T \delta_{i,r} \delta_{m,1} \ket{w}.
\]
As the theory of Airy structures guarantees the existence and uniqueness of $\mathcal{Z} = \ket{w}$, it is none other than the Gaiotto vector $ \ket{\mathfrak G} $ of type A from Section \ref{SecInstanton} (see \eqref{eq:gaiotto}), with the identification of parameters
\[
	\hbar^{\frac{r}{2}}T = \Lambda^r.
\]

\begin{cor}\label{cor:AiryG} Under the above identification of parameters, the unique partition function $\mathcal{Z}$ associated to the Airy structure $\overline{\mathsf{W}}^i_m$ coincides with the Gaiotto vector $ \ket{\mathfrak G} $ of type A defined in \eqref{Gaiodef}. \qed
	\end{cor}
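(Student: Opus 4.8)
The plan is to show that $\mathcal{Z}$ and $\ket{\mathfrak{G}}$ are two vectors living in one and the same Verma module and satisfying identical Whittaker-type constraints, and then to invoke uniqueness. The proof thus rests on three ingredients already established: the existence and uniqueness of the Airy partition function (Theorem~\ref{t:KS}), the fact (Theorem~\ref{Whit1prop}) that $\overline{\mathsf{W}}^i_m$ is an Airy structure, and the uniqueness of Whittaker vectors for principal $\mathcal{W}$-algebras (Proposition~\ref{thecoth}).

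First I would pin down the ambient module. The representation \eqref{eq:hrep} realizes the Heisenberg VOA $\mathcal{H}(\mathfrak{gl}_r)$ on the Fock space $\mathcal{M}^{\hbar}$, and through the Miura embedding \eqref{Miura} this equips $\mathcal{M}^{\hbar}$ with the structure of a $\mathcal{W}^{\mathsf{k}}(\mathfrak{gl}_r)$-module on which the vacuum $\ket{\lambda} = 1$ is a highest-weight vector: every strictly positive mode annihilates it, while the zero-modes $\mathsf{J}^a_0 = Q_a - (\epsilon_1 + \epsilon_2)$ act by scalars and hence determine the $\mathcal{W}$-algebra highest weight $\lambda_i = \mathsf{W}^i_0$ via \eqref{Miuraexplicit}. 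Since $Q_1, \ldots, Q_r$ are pairwise distinct, the resulting $\lambda$ is generic, so by Proposition~\ref{thecoth} the Verma module $\mathsf{M}^{\mathsf{k}}(\lambda)$ is simple and the Fock space is isomorphic to it as a $\mathcal{W}$-module.

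Next I would match this highest weight with the one appearing on the geometric side. The key check is that the $\lambda$ read off from the zero-modes coincides with $\lambda = \tfrac{Q}{\epsilon_1} - \rho$ of Theorem~\ref{ThmBFN}: expanding $\mathsf{W}^i_0$ through \eqref{Miuraexplicit} in terms of $\mathsf{J}^a_0 = Q_a - (\epsilon_1+\epsilon_2)$ and comparing with the Braverman--Finkelberg--Nakajima weight, one sees that the dilaton shift $-(\epsilon_1+\epsilon_2)$ in the zero-modes is precisely what produces the $-\rho$ shift, so the two Verma modules are literally the same $\mathsf{M}^\lambda$. With this in place, Theorem~\ref{Whit1prop} combined with Theorem~\ref{t:KS} says the partition function obeys $\mathsf{W}^i_m \mathcal{Z} = \hbar^{\frac{r}{2}} T\,\delta_{i,r}\delta_{m,1}\,\mathcal{Z}$; after the substitution $\hbar^{\frac{r}{2}} T = \Lambda^r = \Lambda^{h^\vee}$ (using $h^\vee = r$ in type A) this is exactly the defining Whittaker relation \eqref{eq:gaiotto} of $\ket{\mathfrak{G}}$.

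Finally, both $\mathcal{Z}$ and $\ket{\mathfrak{G}}$ are then vectors in $\mathsf{M}^\lambda[\![\Lambda]\!]$ annihilated by every positive mode except $\mathsf{W}^r_1$, which acts as $\Lambda^r$, and both have constant term the highest-weight vector $\ket{\lambda}$ (for $\mathcal{Z} = e^F$ this is the degree-zero part, for $\ket{\mathfrak{G}}$ the $d=0$ class $\ket{1^0}$). The uniqueness statement of Proposition~\ref{thecoth}, equivalently Theorem~\ref{whittaker}, then forces $\mathcal{Z} = \ket{\mathfrak{G}}$. The one genuinely nontrivial step is the weight-matching of the previous paragraph: one must carry out the Miura computation carefully enough to confirm the $-\rho$ shift and the correct normalization relating the $Q_a$ of \eqref{eq:hrep} to the equivariant parameters of Theorem~\ref{ThmBFN}; everything else is a formal consequence of the two existence-and-uniqueness theorems.
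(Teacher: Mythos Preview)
Your proposal is correct and follows essentially the same route as the paper: the paragraph preceding the corollary already sketches the argument that $\mathcal{Z}$ satisfies the Whittaker constraints \eqref{eq:gaiotto} in the same Verma module, so by uniqueness it must be $\ket{\mathfrak{G}}$. You supply more detail than the paper does, in particular on the weight-matching step (the $-\rho$ shift), which the paper simply absorbs into the phrase ``whose highest weight $\lambda$ is uniquely determined by the action of the zero-modes in \eqref{eq:hrep}''. One small caution: your inference ``$Q_a$ pairwise distinct $\Rightarrow$ $\lambda$ generic $\Rightarrow$ Verma module simple'' is not literally justified by what is written; the paper similarly does not spell this out, but strictly speaking genericity in Proposition~\ref{thecoth} is a Zariski-open condition not equivalent to pairwise distinctness of the $Q_a$.
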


\begin{proof}[Proof of Theorem~\ref{Whit1prop}]
	To prove that $\{\overline{\mathsf{W}}^i_m\,\,|\,\, i \in [r],\,\,m \in \mathbb{Z}_{>0}\}$ forms an Airy structure, we can apply Lemma~\ref{l:AiryW}, which requires checking that
	\begin{enumerate}
		\item $\{\mathsf{W}^i_m\,\,|\,\,i \in [r],\,\,m \in \mathbb{Z}_{>0}\}$ forms an Airy structure;
		\item the set $S_+ = \{\mathsf{W}^i_m\,\,|\,\,i \in [r],\,\,m \in \mathbb{Z}_{>0}\}$ of positive modes contains the extraneous subset $S' \subset S$ (see Definition \ref{d:extra}) with $S' = (\mathsf{W}^r_1)$. 
	\end{enumerate}

	Fortunately, Proposition~\ref{thecoth} for the principal $\mathcal{W}^{\mathsf{k}}(\mathfrak{g})$-algebra already gives (2).

	For (1), according to Definition \ref{d:Airy}, we need to show that the degree and subalgebra conditions are satisfied. The subalgebra condition is necessarily satisfied, since the existence of the extraneous subset $S' \subset S_+$ implies it (see Remark \ref{r:extrasub}). 

	All that remains is to verify that the degree condition is satisfied for the $\{\mathsf{W}^i_m\,\,|\,\, i \in [r],\,\,m \in \mathbb{Z}_{>0}\}$, or equivalently for the $\{\overline{\mathsf{W}}^i_m\,\,|\,\,i \in [r],\,\,m \in \mathbb{Z}_{>0}\}$ since the shift has degree $\geq 2$. Our starting point is \eqref{Miuraexplicit}, expanded in $z$ to extract the modes $\mathsf{W}^i_m$ as in \eqref{e:Miuramodes}, with the $\mathsf{J}^a_m$ understood as the differential operators in \eqref{eq:hrep}.

	First, we must check that the $\mathsf{W}^i_m$ have no degree zero term according to the grading \eqref{eq:grading} on the algebra of differential operators. Looking at \eqref{eq:hrep}, the only degree zero terms can come from products of zero-modes $\mathsf{J}_0^a$. It is easy to see, by direct inspection of \eqref{Miuraexplicit} that terms that only involve products of zero-modes $\mathsf{J}_0^a$ never occur for $\mathsf{W}^i_m$ with $m > 0$.

	Next, we need to check that the degree one condition is satisfied. Let $\pi_1$ the operator projecting  differential operators to their degree $1$ component according to the grading \eqref{eq:grading}.

	Let us  first consider the self-dual level $\kappa=1$. The degree $1$ projection of the $\overline{\mathsf{W}}^i_m$ takes the form:
\begin{equation}
	\label{pi1critlev}\pi_1(\overline{\mathsf{W}}^i_m) = \sum_{1 \leq a_1 < \cdots < a_i \leq r} \sum_{j = 1}^i \bigg(\prod_{\substack{1 \leq l \leq i \\ l \neq j}} Q_{a_l}\bigg) \mathsf{J}_m^{a_j}.
\end{equation}
This is better handled in generating series form
\begin{equation}
\label{indin} 
\sum_{i = 1}^r  t^{r - i}\,\pi_1(\overline{\mathsf{W}}^i_m)\, = \sum_{a = 1}^r \bigg(\prod_{b \neq a} (t + Q_b)\bigg) \mathsf{J}^a_m  .
\end{equation}
When $Q_1,\ldots,Q_r$ are  pairwise distinct, we can extract the Heisenberg modes $ \mathsf{J}^a_m $ for each $a \in [r]$ from \eqref{indin} as follows,
\[
	\mathsf{J}^a_m = \Res_{t = -Q_a} \dd t \,\frac{\sum_{i = 1}^r t^{r - i}\,\pi_1(\overline{\mathsf{W}}^i_m)}{\prod_{b = 1}^r (t + Q_b)} .
\]
Evaluating the right-hand side gives
\begin{equation}
\label{deg1W}
\frac{\sum_{i = 1}^r (-Q_a)^{r - i}\,\pi_1(\overline{\mathsf{W}}^i_m)}{\prod_{b \neq a} (Q_b - Q_a)} = \mathsf{J}_m^a.
\end{equation}
Therefore, the operators
\begin{equation}
\label{Ham}
\mathsf{H}_m^a = \frac{\sum_{i = 1}^r (-Q_a)^{r - i} \overline{\mathsf{W}}^i_m}{\prod_{b \neq a} (Q_b - Q_a)},\qquad a \in [r],\quad m \in \mathbb{Z}_{>0},
\end{equation}
give an Airy structure in normal form, and hence $\{\overline{\mathsf{W}}^i_m\,\,|\,\,i \in [r],\,\,m \in \mathbb{Z}_{>0}\}$ is an Airy structure.

For general $\kappa$, the terms in equation~\eqref{Miuraexplicit} when $ i = j $ contribute to the degree $ 1 $ part of the differential operators $\mathsf{W}^i_m $, and match with the ones computed in equation~\eqref{pi1critlev} for $ \kappa = 1 $. For degree reasons, the only possible additional contributions in degree $1$ come from the $ j = i-1 $ terms of equation~\eqref{Miuraexplicit}. These contributions are from  $\pi_0(\mathsf{J}^a(z)) = \frac{Q_a}{z}$, which yields
$$
\pi_1(\mathsf{W}^i(z)) = \pi_1(\mathsf{W}^i(z))|_{\kappa = 1} - \hbar^{\frac{1}{2}}\alpha_0 \sum_{1 \leq a_1 < \cdots < a_{i - 1} \leq r} \bigg(\prod_{l = 1}^{i - 1} Q_{a_l}\bigg) \bigg( \sum_{v = 1}^{i - 1} N_{\mathbf{a},\boldsymbol{\delta}_v}\bigg) z^{-i},
$$
where $\boldsymbol{\delta}_v$ is the sequence of length $i - 1$ with a single non-zero entry equal to $1$ at position $v$. Since $\mathsf{W}^i(z)$ has conformal weight $i$, the term $z^{-i}$ appears in the zero-mode, and therefore the degree $1$ terms of positive modes agree with the $\kappa = 1$ case.
\end{proof}

\begin{rem}\label{r:difference}
There are important differences between the Airy structures constructed in Theorem~\ref{Whit1prop} as $\mathcal{W}^{\mathsf{k}}(\mathfrak{gl}_r)$-modules and those constructed in \cite{BBCCN18}:
\begin{enumerate}
\item In contrast to \cite{BBCCN18}, we do not restrict ourselves to the self-dual level, but construct Airy structures for arbitrary levels.
\item The module of the Heisenberg algebra that we start with in the construction above is untwisted, in contrast to \cite{BBCCN18} where the construction started with a module of the Heisenberg algebra twisted by an element of the Weyl group (the case considered here would correspond to the untwisted case  in \cite{BBCCN18}).
\item We consider a smaller subset of modes than considered in \cite{BBCCN18}, namely only positive modes of the strong generators. Because of this, there is an extraneous mode (the one-mode $\mathsf{W}^r_1$), and we can shift it by a constant $-\hbar^{\frac{r}{2}} T $. As a result, the partition functions of our Airy structures in Theorem~\ref{Whit1prop} give rise to Whittaker vectors (as they are annihilated by all positive modes except $\mathsf{W}^r_1$), while in \cite{BBCCN18} the modes are not shifted and hence the partition functions correspond to highest-weight vectors.
\end{enumerate}
\end{rem}

\begin{rem}\label{r:shift}
We note that instead of shifting $\mathsf{W}^r_1$ by $-\hbar^{\frac{r}{2}} T $, we could have shifted it by $\sum_{j \geq 2} \hbar^{\frac{j}{2}} T_j$ for arbitrary $T_2, T_3, \ldots \in \mathbb{C}$. Since these terms all have degree $\geq 2$ according to the grading \eqref{eq:grading}, we would still obtain an Airy structure. However, the recursive structure to compute the partition function $\mathcal{Z}$ is more efficient (more information gained for the same number of recursion steps) if the shift is placed in degree 2, \emph{i.e.} choosing $T_j=0$ for $j \geq 3$. Nevertheless, in gauge theory, one would like to place the shift in degree 0, corresponding to the parameter $\Lambda$ in Theorem~\ref{ThmBFN}. At first sight, this goes out of the framework of Airy structures, and consequently, the $\hbar$-expansion of the partition function (or the $\hbar$-expansion of the Whittaker vector, which we will explore in \eqref{Phidef}) is not computed right-away by a recursion on $2g-2+n$, but rather via Corollary~\ref{corresum}. For this reason, for gauge-theoretic applications in this paper we have to consider $\Lambda$ as a formal variable near $0$.

If we want finite values of $\Lambda$, non-zero contributions of $(g,n) = (0,1)$ and $(0,2)$ make their apparition in the partition function (see Remark~\ref{rem02s}). We expect that these can be absorbed by resummation, which gives rise to the usual topological recursion on the spectral curve $\prod_{a = 1}^r \big(y - \tfrac{Q_a}{x}\big) = (-1)^{r+1}\frac{\Lambda}{x^{r + 1}}$ instead of $\prod_{a = 1}^r \big(y - \tfrac{Q_a}{x}\big) = 0$ appearing in Section~\ref{SelfdualTRspec}. The situation is similar to the case of the 1-hermitian matrix model: its Virasoro constraints do not form Airy structures, but they can be transformed by non-trivial resummations and analytic continuation on the spectral curve of the model, so as to arrive to an Airy structure and thus Chekhov--Eynard--Orantin topological recursion for the correlators, see e.g. \cite{BEO}. We shall return to this in a future article.
\end{rem}

\subsection{Structural properties of the Gaiotto vector}\label{Sec:stG}

We now use the theory of Airy structures to extract some structural properties of the Gaiotto vector, which we will use in the next section in order to compute the instanton partition function.

Let $\mathcal{Z}$ be the partition function of the Airy structure from Theorem~\ref{Whit1prop}, which is identified with the Gaiotto vector of type A. By Section \ref{SecAiryTR}, we know that $\mathcal{Z} = e^F$ with $F \in \hbar^{-1} \mathcal{M}^{\hbar}$ containing only terms of positive degree. $F$ has an explicit expansion:
\begin{equation}\label{eq:Fexp}
	F = \sum_{\substack{g \in \frac{1}{2} \mathbb{Z}_{\geq 0}, n \in \mathbb{Z}_{>0} \\ 2g-2+n >0}} \frac{\hbar^{g-1}}{n!} \sum_{\substack{a_1, \ldots, a_n \in [r] \\ k_1, \ldots, k_n \in \mathbb{Z}_{>0}}} F_{g,n}\big[\begin{smallmatrix}a_1 & \cdots & a_n \\ k_1 & \cdots & k_n \end{smallmatrix}\big] \prod_{i=1}^n x^{a_i}_{k_i} .
	\end{equation}
	We now extract some properties of the coefficients $F_{g,n}$.

\begin{lem}
\label{lem:prophbar}Assume that $r \geq 2$ and $2g - 2 + n > 0$. Then the $ F_{g,n} $ satisfy the following properties 
\begin{enumerate}
	\item The  coefficient  $ F_{g,n}\big[\begin{smallmatrix} a_1 & \cdots & a_n \\ k_1 & \cdots & k_n \end{smallmatrix}\big] $ depends polynomially on $ T $ and $ \alpha_0 $, and  whenever it is non-zero, it factors as $T^{k_1 + \ldots + k_n} P(\alpha_0)$ with $P(\alpha_0)$ a polynomial in $\alpha_0$ of degree at most $2g$. More precisely, 
	\[
		\frac{F_{g,n}\big[\begin{smallmatrix} a_1 & \cdots & a_n \\ k_1 & \cdots & k_n \end{smallmatrix}\big]} { T^{k_1 + \cdots + k_n}} \in   \mathbb{Q}(Q_1,\ldots,Q_r)\big[\alpha_0 \big]
	\]
	is a polynomial in $\alpha_0$ of degree at most $2g$.
	\item If $ (k_1 + \cdots + k_n)r > 2g $, we have $ F_{g,n}\big[\begin{smallmatrix} a_1 & \cdots & a_n \\ k_1 & \cdots & k_n \end{smallmatrix}\big] = 0 $.   In particular, this condition implies that $F_{g,n} = 0$ for $n > \frac{2g}{r}$.
	\item The coefficient $ F_{g,1} \big[\begin{smallmatrix} a \\ 1 \end{smallmatrix}\big]  = \frac{T}{\prod_{b \neq a} (Q_b - Q_a)} \delta_{g, \frac{r}{2}} $.
\end{enumerate}  
\end{lem}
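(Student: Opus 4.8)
The three statements are of different natures, so I would prove them with two mechanisms: a scaling symmetry of the shifted Airy structure, which handles the $T$-dependence in Part~1 and all of Part~3 (via Corollary~\ref{cor:AiryG}); and a grading estimate carried through the Airy recursion, which handles the $\alpha_0$-degree bound in Part~1 together with Part~2. I would work with the normal-form Hamiltonians $\mathsf{H}^a_m$ of \eqref{Ham}, write $\mathsf{P}^a_m=\mathsf{H}^a_m-\hbar\partial_{x^a_m}$, and abbreviate $F_{g,n}\big[\begin{smallmatrix}\mathbf a\\\mathbf k\end{smallmatrix}\big]:=F_{g,n}\big[\begin{smallmatrix}a_1&\cdots&a_n\\k_1&\cdots&k_n\end{smallmatrix}\big]$. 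I use that the Airy recursion of \cite[Sec.~2.2.2]{BBCCN18} writes each $F_{g,n}\big[\begin{smallmatrix}\mathbf a\\\mathbf k\end{smallmatrix}\big]$ as a universal $\mathbb{Q}$-polynomial in the finitely many coefficients of the $\mathsf{P}^a_m$ reached in $2g-2+n$ steps; since by \eqref{Miuraexplicit}, \eqref{eq:hrep} and the form of the shift those coefficients lie in $\mathbb{Q}(Q_1,\dots,Q_r)[\alpha_0,T]$ (polynomial in $\alpha_0$, of degree $\le1$ in $T$), the asserted polynomial dependence of $F_{g,n}$ on $T$ and $\alpha_0$ follows at once, and it remains to pin down the $T$-degree and bound the $\alpha_0$-degree.

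For the $T$-degree, I would apply the algebra automorphism $G_t$ acting by $x^a_m\mapsto t^mx^a_m$ (hence $\hbar\partial_{x^a_m}\mapsto t^{-m}\hbar\partial_{x^a_m}$, and fixing the zero modes). A short computation from the Miura formula \eqref{Miuraexplicit} gives $G_t\,\mathsf{W}^i_m\,G_t^{-1}=t^{-m}\mathsf{W}^i_m$, so $G_t$ conjugates the shifted Airy structure with parameter $T$ into the one with parameter $tT$; by uniqueness (Theorem~\ref{t:KS}) and $G_t\cdot1=1$ one gets $G_t\mathcal{Z}^{[T]}=\mathcal{Z}^{[tT]}$. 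Passing to logarithms and comparing coefficients of $\prod_ix^{a_i}_{k_i}$ yields $F^{[tT]}_{g,n}\big[\begin{smallmatrix}\mathbf a\\\mathbf k\end{smallmatrix}\big]=t^{k_1+\cdots+k_n}F^{[T]}_{g,n}\big[\begin{smallmatrix}\mathbf a\\\mathbf k\end{smallmatrix}\big]$, which together with polynomiality in $T$ forces the exact factor $T^{k_1+\cdots+k_n}$.

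The core of the proof --- and the step I expect to be the main obstacle --- is a single grading estimate that produces both the $\alpha_0$-bound and Part~2. To each monomial of $\mathcal{M}^{\hbar}$ or $\mathcal{D}^{\hbar}$ I would attach the integer $\mu:=(\text{exponent of }\hbar^{\frac12})-r\cdot(\text{exponent of }T)-(\text{exponent of }\alpha_0)$, where the $\hbar^{\frac12}$ hidden inside an atom $\hbar\partial_x$ do not count; $\mu$ is additive over products. The key point is that every monomial of $\mathsf{W}^i_m$ has $\mu=0$: in \eqref{Miuraexplicit}--\eqref{eq:hrep} each loose $\hbar^{\frac12}$ is paired with an $\alpha_0$ (inside a factor $(\hbar^{\frac12}\alpha_0)^{i-j}$, or inside the zero mode $\mathsf{J}^a_0=Q_a-\hbar^{\frac12}\alpha_0$), and no $T$ occurs; moreover the shift monomial $\hbar^{r/2}T$ also has $\mu=0$. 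Hence all monomials of $\overline{\mathsf{W}}^i_m$, of $\mathsf{H}^a_m$, and of $\mathsf{P}^a_m$ have $\mu=0$. I would then induct on $2g-2+n$ using the Airy recursion $\hbar\partial_{x^a_m}F=-e^{-F}\mathsf{P}^a_m(e^F)$: every term on the right is $(\text{coeff})\cdot\hbar^{P}\cdot\prod x\cdot\prod_B\partial^{B}F$ for a set partition into $\#\{B\}\le P$ blocks, where $P$ is the number of derivatives used; since $\mu(\text{coeff})=\mu(\prod x)=0$, $\mu(\hbar^{P})=2P$, applying $\partial_x$ does not change $\mu$, and inductively $\mu\ge-2$ on the lower $F_{g',n'}$, this term has $\mu\ge 2P-2\#\{B\}=2(P-\#\{B\})\ge0$ (the case $P=0$ being the monomial of $\mathsf{P}^a_m$ itself, with $\mu=0$). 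As $\hbar\partial_{x^a_m}$ raises $\mu$ by exactly $2$, every term of $F$ has $\mu\ge-2$. A nonzero $F_{g,n}\big[\begin{smallmatrix}\mathbf a\\\mathbf k\end{smallmatrix}\big]$ contributes to $F$ the term $\hbar^{g-1}\alpha_0^{q}T^{d}(\cdots)\prod_ix^{a_i}_{k_i}$ with $d=k_1+\cdots+k_n$ (Part~1) and $q$ its $\alpha_0$-degree, so $\mu=(2g-2)-rd-q\ge-2$, i.e.\ $rd+q\le2g$; this yields $rd\le2g$ (Part~2; and $F_{g,n}=0$ for $n>2g/r$ because $k_i\ge1$ forces $d\ge n$) and $q\le2g$ (the remaining claim of Part~1).

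For Part~3, I would use Corollary~\ref{cor:AiryG}: $\mathcal{Z}=\ket{\mathfrak G}=\sum_{d\ge0}\Lambda^{rd}\ket{1^{d}}$ with $\hbar^{r/2}T=\Lambda^{r}$, where $\ket{1^{d}}\in M^{d}_{\mathbb{F}_G}(Q)$ has $L_0$-weight $d$ --- hence no constant term for $d\ge1$, and all its $x^a_k$-monomials satisfy $\sum k_i=d$. Thus the part of $F=\log\mathcal{Z}$ linear in the $x^a_k$ equals that of $\sum_{d\ge1}\Lambda^{rd}\ket{1^{d}}$, and the coefficient of $x^a_1$ only receives $\Lambda^{r}$ times the coefficient of $x^a_1$ in $\ket{1^{1}}$. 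Writing $\ket{1^1}=\sum_ac_ax^a_1$ and using that the Whittaker relations $\mathsf{W}^i_1\ket{1^1}=\delta_{i,r}\ket{\lambda}$ only involve the degree-one part $\pi_1(\mathsf{W}^i_1)$ of \eqref{pi1critlev}--\eqref{indin}, I would read off from \eqref{indin} the identity $\sum_it^{r-i}\big(\mathsf{W}^i_1\cdot x^a_1\big)=\hbar\prod_{b\ne a}(t+Q_b)$, whence $\hbar\sum_ac_a\prod_{b\ne a}(t+Q_b)=1$ and $c_a=\big(\hbar\prod_{b\ne a}(Q_b-Q_a)\big)^{-1}$. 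Therefore $\sum_g\hbar^{g-1}F_{g,1}\big[\begin{smallmatrix}a\\1\end{smallmatrix}\big]=\Lambda^{r}c_a=\hbar^{r/2-1}T/\prod_{b\ne a}(Q_b-Q_a)$, which is the claimed formula with Kronecker delta supported at $g=r/2$.
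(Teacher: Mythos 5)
Your Parts 1 and 2 are correct and take a genuinely different route from the paper. For the $T$-dependence, the paper tracks the power of $T$ combinatorially through its recursion (its constraints (i)--(vii)), whereas you obtain it from the rescaling automorphism $G_t:x^a_m\mapsto t^mx^a_m$ together with uniqueness in Theorem~\ref{t:KS}; this works because each monomial of $\mathsf{W}^i_m$ is a product of modes $\mathsf{J}^{a_l}_{p_l}$ with $\sum_lp_l=m$, so indeed $G_t\mathsf{W}^i_mG_t^{-1}=t^{-m}\mathsf{W}^i_m$ while the shift $\hbar^{r/2}T$ is inert, and homogeneity plus polynomiality pins down the factor $T^{k_1+\cdots+k_n}$. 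For the $\alpha_0$-degree and the vanishing, your additive grading $\mu$ is a clean repackaging of the paper's bookkeeping: the key inputs (every monomial of \eqref{Miuraexplicit} under \eqref{eq:hrep} pairs each loose $\hbar^{\frac12}$ with an $\alpha_0$, the modes with distinct upper indices commute so no extra $\hbar$'s arise from normal ordering, and $\mathsf{W}^i_m$ with $m>0$ has no derivative-free monomials) are all verified in the paper's proof, and your induction through $e^{-F}\mathsf{P}^a_me^F$ is sound. It even yields the slightly sharper combined bound $r(k_1+\cdots+k_n)+\deg_{\alpha_0}\le 2g$, which implies both of the paper's separate bounds.

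Part 3 has a genuine gap for arbitrary level. Your pivotal identity $\sum_it^{r-i}\big(\mathsf{W}^i_1\cdot x^a_1\big)=\hbar\prod_{b\ne a}(t+Q_b)$ is extracted from \eqref{pi1critlev}--\eqref{indin}, i.e.\ from $\pi_1(\mathsf{W}^i_1)$ alone, but the constant-in-$x$ part of $\mathsf{W}^i_1x^a_1$ is \emph{not} controlled by the degree-one projection when $\kappa\ne1$: any monomial of $\mathsf{W}^i_1$ of the form $(\text{scalar})\cdot\hbar\partial_{x^b_1}$ contributes a constant, and for $\alpha_0\ne0$ there are such monomials of degree $\ge2$, coming both from the $(\hbar^{\frac12}\alpha_0)^{i-j}\partial_z^{b_l}$ terms of \eqref{Miuraexplicit} and from expanding the zero-modes $\mathsf{J}^c_0=Q_c-\hbar^{\frac12}\alpha_0$. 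Already for $r=2$ one has $\mathsf{W}^2_1=\mathsf{J}^1_0\mathsf{J}^2_1+\mathsf{J}^1_1\mathsf{J}^2_0+\cdots-2\hbar^{\frac12}\alpha_0\mathsf{J}^2_1$, so the constant part of $\mathsf{W}^2_1x^a_1$ is $\hbar(Q_{3-a}-\hbar^{\frac12}\alpha_0)-2\hbar^{\frac32}\alpha_0\delta_{a,2}$ rather than $\hbar Q_{3-a}$, and your linear system for the $c_a$ acquires $O(\hbar^{\frac12}\alpha_0)$ corrections. Establishing that these corrections do not propagate into $F_{g,1}\big[\begin{smallmatrix}a\\1\end{smallmatrix}\big]$ for $g>\frac{r}{2}$ is exactly the content of the paper's argument for (3) (its constraints (v)--(vi) ruling out all recursion terms with a single positive mode), and your proposal supplies no substitute for it: as written, your derivation of Part 3 is only valid at the self-dual level $\kappa=1$. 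Note also that your Parts 1--2 give $q\le 2g-r$ for $F_{g,1}\big[\begin{smallmatrix}a\\1\end{smallmatrix}\big]$, hence $\alpha_0$-independence only at $g=\frac{r}{2}$; the vanishing for $g>\frac{r}{2}$ is not a consequence of them and genuinely requires the missing step.
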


\begin{proof} 
	
	Recall the definition of $\mathsf{H}_m^a$ in \eqref{Ham}, and set 
$$
T_a = \frac{T}{\prod_{b \neq a} (Q_b - Q_a)}.
$$
According to Theorem~\ref{Whit1prop} and more specifically  \eqref{deg1W}, $\mathsf{H}_m^a - \hbar^{\frac{r}{2}}\delta_{m,1}T_a$ indexed by $m > 0$ and $a \in [r]$ forms an Airy structure in normal form: $\mathsf{H}_m^a = \hbar \partial_{x_m^a} + O(2)$. Extracting the coefficient of a monomial $\hbar^g\,x_{k_2}^{a_2}\,\cdots \,x_{k_n}^{a_n}$ in $\mathcal{Z}^{-1} \mathsf{H}_{k_1}^{a_1} \mathcal{Z}$ for $k_1 > 0$ and $a_1 \in [r]$ then gives a formula for $F_{g,n}\big[\begin{smallmatrix} a_1 & \cdots & a_n \\ k_1 & \ldots & k_n \end{smallmatrix}\big]$ in terms of $F_{g',n'}$ with $2g' - 2 + n' < 2g - 2 + n$ and the coefficients of the normally ordered monomials in $\mathsf{J}$ in $\mathsf{H}_{k_1}^{a_1}$.

\emph{$ \bullet \, \, $Proof of conditions (1) and (2) for $ \kappa = 1 $:}
 
For $\kappa = 1$, collecting the coefficient of $z^{-(k_1 + i)}$ in $\mathsf{W}^i(z) = e_i(\mathsf{J}^1(z),\ldots,\mathsf{J}^r(z))$ yields a linear combination over $\mathbb{Q}(Q_1,\ldots,Q_r)$ of terms of the form $ \prod_{s \in U_-} \mathsf{J}^{b_s}_{-m_s} \prod_{s \in U_+} \mathsf{J}^{c_s}_{p_s}$ with  $b_s,c_s \in [r]$, and $m_s,p_s > 0$ such that
\begin{equation}
\label{ksum}
k_1 = \sum_{s \in U_+} p_s - \sum_{s \in U_-} m_s,\qquad |U_+| + |U_-| \leq r.
\end{equation}
The same is true for $\mathsf{H}_{k_1}^{a_1}$, and apart from the degree $1$ term $\mathsf{J}_{k_1}^{a_1}$, we have $|U_-| + |U_+| \geq 2$. Since $k_1 > 0$, \eqref{ksum} imposes $U_+ \neq \emptyset$. Then, $F_{g,n}\big[\begin{smallmatrix} a_1 & \cdots & a_n \\ k_1 & \ldots & k_n \end{smallmatrix}\big] - \delta_{g,\frac{r}{2}}\delta_{n,1}\delta_{k_1,1}T_{a_1}$ is a linear combination over $\mathbb{Q}(Q_1,\ldots,Q_r)$  of terms of the form
\begin{equation}
\label{productF}
\prod_{L \in \mathbf{L}} F_{g_L,|L| + |N_L|}\big[\begin{smallmatrix} \mathbf{c}_L & \mathbf{a}_{N_L}  \\ \mathbf{p}_L & \mathbf{k}_{N_L} \end{smallmatrix}\big],
\end{equation}
where:
\begin{itemize}
	\item[(i)] $\mathbf{L}$ is a partition of $U_+$ and for each $L \in \mathbf{L}$, we have $g_L \in \frac{1}{2}\mathbb{Z}_{\geq 0}$, $\mathbf{p}_L \in \mathbb{Z}_{>0}^L$ and $\mathbf{c}_L \in [r]^L$.
\item[(ii)] $(N_L)_{L \in \mathbf{L}}$ is a sequence of pairwise disjoint, possibly empty subsets of $\{2,\ldots,n\}$. We denote $\mathbf{k}_{N_L} = (k_l)_{l \in N_L}$ and $\mathbf{a}_{N_L} = (a_l)_{l \in N_L}$.
\item[(iii)] For each $L \in \mathbf{L}$, we have $2g_L - 2 + |L| + |N_L| > 0$. 
\item[(iv)] The complement of $\sqcup_{L \in \mathbf{L}} N_L$ is in bijection with $U_-$. 
\item[(v)] $|U_+| + |U_-| \geq 2$.
\item[(vi)] $k_1 = \sum_{s \in U_+} p_s - \sum_{s \in U_-} k_s$. 
\item[(vii)] $|U_+| + \sum_{L \in \mathbf{L}} (g_L - 1)  = g$.
\end{itemize}
Taking into account these constraints, we see that
$$
|U_+| + |U_-| - 1 + \sum_{L \in \mathbf{L}} (2g_L - 2 + |L| + |N_L|) = 2g - 2 + n.
$$
Since the  sum above is non-empty, all the terms occurring in the sum are positive, and $|U_+| + |U_-| - 1 > 0$, it is clear that $ F_{g,n} $ is constructed recursively on $2g - 2 + n > 0$. Condition (vii) ensures that all $ g_L \leq g $ and hence we get that $F_{g,n} = 0$ for $g < \frac{r}{2}$. When $g = \frac{r}{2}$ and $n = 1$ we get
\begin{equation}
\label{F1r3r}F_{\frac{r}{2},1}\big[\begin{smallmatrix} a_1 \\ k_1 \end{smallmatrix}\big] = T_{a_1}\delta_{k_1,1}.
\end{equation}

We will prove the statement of the Lemma by induction on $2g - 2 + n > 0$. When $2g - 2 + n \leq r - 1$, we have just proved that the only non-zero $ F_{g,n} $ is $ F_{\frac{r}{2},1}\big[\begin{smallmatrix} a \\ 1 \end{smallmatrix}\big]  = T_a $.  Condition (1) of the statement of the Lemma is then satisfied (as $\alpha_0=0$ at self-dual level); condition (2) is also satisfied trivially.

Now, we move on to the induction step. Assume that $g \in \frac{1}{2}\mathbb{Z}_{\geq 0}$ and $n \in \mathbb{Z}_{>0}$ such that $2g - 2 + n > r - 1$, and assume that conditions (1) and (2) hold for all $F_{g',n'}$ such that $2g' - 2 + n' < 2g - 2 +n$. Then, the non-zero terms arising from \eqref{productF} have a power of $T$ equal to 
$$
\sum_{s \in U_+} p_s + \sum_{l \notin U_-} k_l = k_1 + \sum_{l \in U_-} k_l + \sum_{l \notin U_-} k_l = \sum_{l = 1}^n k_l,
$$ which proves the condition (1). For every $  L $ in $ \mathbf{L} $, we have the inequality
\begin{equation}
\label{consufnf}
 \sum_{s \in L} p_s + \sum_{s \in N_L} k_s \leq \frac{2g_L}{r}.
\end{equation}
Summing this up over all $ L $ in $ \mathbf L $ and using (vii) yields $\sum_{l = 1}^n k_l \leq \frac{2}{r}(g - |U_+| + |\mathbf{L}|) \leq \frac{2g}{r}$, and this is exactly the condition (2).

\emph{$ \bullet \, \,$Proof of conditions (1) and (2) for arbitrary $ \kappa $:}
 
Now we assume that $\kappa$ is arbitrary. Again, we see that $\mathsf{W}_m^i$ is a linear combination over $\mathbb{Q}(Q_1,\ldots,Q_r)$ of terms of the form $\hbar^{\frac{d}{2}}\alpha_0^d \prod_{s \in U_-} \mathsf{J}^{b_s}_{-m_s} \prod_{s \in U_+} \mathsf{J}^{c_s}_{p_s}$ with only notable difference that $|U_+| + |U_-| \leq r - d$ for some $d \in \{0,\ldots,r -1 \}$. Notice that the upper bound for $|U_+| + |U_-|$ was not used in the previous argument. Hence, repeating the same argument yields that $F_{g,n}\big[\begin{smallmatrix} a_1 & \cdots & a_n \\ k_1 & \cdots & k_n \end{smallmatrix}\big]$ is a  linear combinations over $\mathbb{Q}(Q_1,\ldots,Q_r)$ of terms of the form
\begin{equation}
\label{productF2}
\alpha_0^{d} \prod_{L \in \mathbf{L}} F_{g_L,|L| + |N_L|}\big[\begin{smallmatrix} \mathbf{c}_L & \mathbf{a}_{N_L}  \\ \mathbf{p}_L & \mathbf{k}_{N_L} \end{smallmatrix}\big],
\end{equation}
with the same constraints (i)-(vi), and (vii) modified to 
\begin{itemize}
\item[(vii')] $|U_+| + \sum_{L \in \mathbf{L}} (g_L - 1) = g - d$ and $0 \leq d \leq r - 1$. 
\end{itemize}
Summing up the constraints \eqref{consufnf} over all $ L $ in $ \mathbf L $  gives $\sum_{l = 1}^n k_l \leq \frac{2}{r}(g - d - |S_+| + |\mathbf{L}|) \leq \frac{2g}{r}$, thus proving condition (2) of the lemma for arbitrary $ \kappa $.

The proof of the dependence in $T$ by induction is the same as before. As for the maximal degree in $\alpha_0$, we prove it by induction on $2g-2+n$ as well. The only non-vanishing $F_{g,n}$ with $2g-2+n \leq r-1$ is $F_{\frac{r}{2},1}\big[\begin{smallmatrix} a \\ 1 \end{smallmatrix}\big]  = T_a $, which is certainly a polynomial in $\alpha_0$ of degree $\leq 2g =2 (r/2) = r$ in this case. Now assume that the condition holds for all $F_{g',n'}$ with $2g'-2+n' < 2g-2+n$. $F_{g,n}\big[\begin{smallmatrix} a_1 & \cdots & a_n \\ k_1 & \cdots & k_n \end{smallmatrix}\big]$ is a  linear combination over $\mathbb{Q}(Q_1,\ldots,Q_r)$ of terms of the form \eqref{productF2}. By the induction hypothesis, the maximal degree in $\alpha_0$ is thus
\[
d + \sum_{L \in \mathbf{L}} 2 g_L =  2 g - d  - 2 |U_+|  + 2 |\mathbf{L}| \leq 2g,
\]
where we used (vii') and the fact that $|\mathbf{L}| \leq |U_+|$.

\emph{$ \bullet \, \, $Proof of condition (3):}

 We directly consider the case when $ \kappa  $ is arbitrary. Let us consider the terms \eqref{productF2} that contribute to  $  F_{g,1} \big[\begin{smallmatrix} a \\ 1 \end{smallmatrix}\big] - \delta_{g, \frac{r}{2}}  T_a  $. Using the previous notation, we should consider $ k_1 = 1 $ and $ U_{-} = \emptyset $. Then,  constraint (vi) states  that $ 1 = \sum_{s \in U_+} p_s $, which forces $ |U_+| = 1 $. However, this violates  constraint  (v), and hence there are no terms that contribute to  $  F_{g,1} \big[\begin{smallmatrix} a \\ 1 \end{smallmatrix}\big] - \delta_{g, \frac{r}{2}} T_a  $. This proves condition (3).

\end{proof}

\begin{rem}
\label{refinite}	The proof of Lemma~\eqref{lem:prophbar} is an explicit way to see that the $ F_{g,n} $s of the Airy structure of Theorem~\ref{Whit1prop} are constructed as finite sums, using $F_{g',n'}$s with $ 2g'  -2 + n' \leq 2g -2+n $. This justifies that it is well-defined even if the Airy structure is infinite-dimensional (see Remark \ref{r:infinite} and the discussion in Section 2.1.2 of \cite{BKS}).
\end{rem}

Now, we use the  identification $\Lambda^r = \hbar^{\frac{r}{2}}T$, in order to rewrite $ F  = \log \mathcal Z $ of the Airy structure in terms of $ \Lambda $. This is required for the comparison with the instanton partition function.

\begin{cor}
	\label{corresum}There is a unique sequence of $\Phi_{h,n}\big[\begin{smallmatrix} \mathbf{a} \\ \mathbf{k} \end{smallmatrix}\big] \in \mathbb{Q}(Q_1,\ldots,Q_r)\big[\alpha_0 \big]$ indexed by $h \in \frac{1}{2}\mathbb{Z}_{\geq 0}$, $n > 0$, $\mathbf{a} \in [r]^n$ and $\mathbf{k} \in \mathbb{Z}_{> 0}^{n}$, 
	such that
	\begin{equation} 
	\label{modifg}\sum_{\substack{h \in \frac{1}{2}\mathbb{Z}_{\geq 0} \\ n \in \mathbb{Z}_{>0}}} \sum_{\substack{\mathbf{a} \in [r]^n \\ \mathbf{k} \in \mathbb{Z}_{> 0}^n}} \frac{\hbar^{h - 1}}{n!} \,\Lambda^{r(k_1 + \cdots + k_n)}\, \Phi_{h,n}\big[\begin{smallmatrix} \mathbf{a} \\ \mathbf{k} \end{smallmatrix}\big] \prod_{l = 1}^n x_{k_l}^{a_l} = \sum_{\substack{g \in \frac{1}{2}\mathbb{Z}_{\geq 0} \\ n \in \mathbb{Z}_{>0} \\ 2g - 2 + n > 0}} \sum_{\substack{\mathbf{a} \in [r]^n \\ \mathbf{k} \in \mathbb{Z}_{> 0}^n}}  \frac{\hbar^{g - 1}}{n!}\,F_{g,n}\big[\begin{smallmatrix} \mathbf{a} \\ \mathbf{k} \end{smallmatrix}\big] \prod_{l = 1}^n x_{k_l}^{a_l},
	\end{equation}
	under the identification $\Lambda^r = \hbar^{\frac{r}{2}}T$. This is an identity between formal series in $\hbar^{\frac{1}{2}}$ and $\Lambda^r$ (or formal series in $\hbar^{\frac{1}{2}}$ and $T$). More precisely, we have
	\begin{equation}
	\label{Phihn}\Phi_{h,n}\big[\begin{smallmatrix} \mathbf{a} \\ \mathbf{k} \end{smallmatrix}\big] = F_{h + \frac{r}{2}(k_1 + \cdots + k_n),n}\big[\begin{smallmatrix} \mathbf{a} \\ \mathbf{k} \end{smallmatrix}\big]\big|_{T = 1}.
	\end{equation}
\end{cor}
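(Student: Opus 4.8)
The plan is to deduce both \eqref{modifg} and the explicit formula \eqref{Phihn} from a direct substitution into the expansion \eqref{eq:Fexp} of $F=\log\mathcal Z$, with all the bookkeeping controlled by Lemma~\ref{lem:prophbar}. Write $K=k_1+\cdots+k_n$. By part (1) of Lemma~\ref{lem:prophbar}, whenever $F_{g,n}\big[\begin{smallmatrix}\mathbf a\\\mathbf k\end{smallmatrix}\big]\neq 0$ it factors as $T^{K}\,\Phi$ with $\Phi:=F_{g,n}\big[\begin{smallmatrix}\mathbf a\\\mathbf k\end{smallmatrix}\big]\big|_{T=1}\in\mathbb Q(Q_1,\ldots,Q_r)[\alpha_0]$ a polynomial in $\alpha_0$ of degree $\leq 2g$; in particular $\Phi$ is independent of $T$. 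Substituting the identification $T=\hbar^{-r/2}\Lambda^{r}$ (equivalently $\Lambda^{r}=\hbar^{r/2}T$) turns the $(g,n,\mathbf a,\mathbf k)$-summand of \eqref{eq:Fexp} into $\frac{\hbar^{\,g-1-rK/2}}{n!}\,\Lambda^{rK}\,\Phi\,\prod_{l}x_{k_l}^{a_l}$.

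Next I would reindex by $h:=g-\tfrac r2 K$. Since $g\in\tfrac12\mathbb Z_{\geq 0}$ and $rK\in\mathbb Z$ one has $h\in\tfrac12\mathbb Z$, and part (2) of Lemma~\ref{lem:prophbar} (which says $F_{g,n}=0$ unless $rK\leq 2g$) guarantees $h\geq 0$ on every non-vanishing summand. Conversely, for any $(h,n,\mathbf a,\mathbf k)$ with $h\in\tfrac12\mathbb Z_{\geq 0}$ the corresponding $g=h+\tfrac r2 K$ automatically satisfies $2g-2+n>0$, since $g\geq\tfrac r2 K\geq\tfrac r2 n\geq n$ because $r\geq 2$ and each $k_l\geq 1$; hence the relabelling is a bijection between the index sets appearing on the two sides of \eqref{modifg}. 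Under it $\hbar^{\,g-1-rK/2}=\hbar^{\,h-1}$, so the right-hand side of \eqref{modifg} becomes precisely its left-hand side with $\Phi_{h,n}\big[\begin{smallmatrix}\mathbf a\\\mathbf k\end{smallmatrix}\big]:=F_{h+\frac r2 K,\,n}\big[\begin{smallmatrix}\mathbf a\\\mathbf k\end{smallmatrix}\big]\big|_{T=1}$, which is \eqref{Phihn}, and this lies in $\mathbb Q(Q_1,\ldots,Q_r)[\alpha_0]$ by part (1). One should also record that both sides of \eqref{modifg} are honest formal series in $\hbar^{1/2}$ and $\Lambda^{r}$: for a fixed power of $\hbar^{1/2}$ and of $\Lambda^{r}$ the quantity $K$ is fixed and then $n\leq K$, $\mathbf k\in\mathbb Z_{>0}^{n}$ with $\sum k_l=K$, and $\mathbf a\in[r]^{n}$ range over finite sets, so only finitely many monomials $\prod_l x_{k_l}^{a_l}$ contribute.

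For uniqueness, suppose a second sequence $\Phi'_{h,n}$ also satisfies \eqref{modifg}. Subtracting the two identities and using that the power of $\Lambda$ is tied to the monomial in the $x$'s through $K$, the coefficient of each $\hbar^{\,h-1}\Lambda^{rK}\prod_l x_{k_l}^{a_l}$ must vanish; since the $F_{g,n}$, and hence the $\Phi_{h,n}$, are taken symmetric in their columns, this forces $\Phi_{h,n}=\Phi'_{h,n}$ for all indices.

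I do not expect a genuine obstacle here: the statement is a formal rewriting of \eqref{eq:Fexp}. The only points that need care are invoking Lemma~\ref{lem:prophbar}(2) at exactly the right moment to ensure the reindexing stays inside $\tfrac12\mathbb Z_{\geq 0}$ (so that no negative powers of $\hbar$ beyond $\hbar^{-1}$ appear and no spurious $F_{g,n}$ with $2g-2+n\leq 0$ is invoked), and checking the finiteness statements above so that \eqref{modifg} is a true identity of formal series rather than a purely formal rearrangement.
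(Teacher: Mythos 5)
Your proof is correct and follows exactly the route the paper intends: the corollary is stated as an immediate consequence of Lemma~\ref{lem:prophbar}, with parts (1) and (2) supplying precisely the factorization $F_{g,n}=T^{K}\cdot F_{g,n}|_{T=1}$ and the bound $rK\le 2g$ needed to make the reindexing $h=g-\tfrac{r}{2}K$ land in $\tfrac12\mathbb{Z}_{\ge 0}$. Your additional checks (that $2g-2+n>0$ is automatic for $g=h+\tfrac r2K$, and that coefficient comparison gives uniqueness) correctly fill in the details the paper leaves implicit.
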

\begin{rem}
	\label{rem02s}We call the left-hand side of \eqref{modifg} the \textit{modified genus expansion} of the partition function of the Airy structure. Note that although $F_{0,1} = 0$ and $F_{0,2} = 0$, \eqref{Phihn} gives non-zero $\Phi_{0,1}$ and $\Phi_{0,2}$, and the $\Phi_{g,n}$ are not \emph{a priori} given by a recursion on $2g - 2 + n > 0$. 
\end{rem}

\begin{rem} As a sanity check, we show that the $ d = 1 $ part of the Gaiotto vector $ \ket{\mathfrak G} $ of type A  matches the  result obtained in \cite[Proposition 8.9.5]{Braverman:2014xca}. Using Corollary~\ref{corresum}, we get
	\[
	\left.\frac{\partial \mathcal{Z}}{\partial \Lambda^r} \right|_{\Lambda^r = 0} = \sum_{h \in \frac{1}{2}\mathbb{Z}_{\geq 0}}\sum_{a \in [r]} \hbar^{h-1} \Phi_{h,1}\big[\begin{smallmatrix} a \\ 1 \end{smallmatrix}\big] x^a_1 = \sum_{a \in [r]} \frac{\hbar^{-1} x^a_1}{\prod_{b \neq a}(Q_b-Q_a)},
	\] where we use the fact that only $ h = 0  $ contributes to the sum, thanks to condition (3) of Lemma~\ref{lem:prophbar}, and this contribution is precisely $ \Phi_{0,1}\big[\begin{smallmatrix} a \\ 1 \end{smallmatrix}\big] = F_{\frac{r}{2},1}\big[\begin{smallmatrix} a \\ 1 \end{smallmatrix}\big]\big|_{T=1} = \frac{1}{{\prod_{b \neq a}(Q_b-Q_a)}}. $

\end{rem}

\subsection{The Nekrasov instanton partition function}

Recall from Section~\ref{Sec:instanton} that the instanton partition function $ \mathfrak{Z} $ is identified with the norm squared of the Gaiotto vector using the pairing $  \braket{\ \cdot\ |\  \cdot\  }    $. Using the identification of the Gaiotto vector as the partition function $ \mathcal Z$ of the Airy structure  in Corollary~\ref{cor:AiryG} and the structural properties proved in Section~\ref{Sec:stG}, we will prove some properties of the instanton partition function.

Let us first recall the bilinear pairing $  \braket{\ \cdot\ |\  \cdot\  }  $  that we discussed in Section~\ref{Sec:instanton}. The pairing is uniquely defined by requiring that $\langle  \lambda |\lambda  \rangle  = 1$  and for any rescaled mode $\mathsf{X} $ of $ \mathcal W^k(\mathfrak g) $,
\[
	  \langle u |\mathsf{X} v  \rangle   =  \langle \iota(\mathsf{X}) u | v  \rangle .
\]
The adjoint of $\mathsf{J}_{m}^{a}$, as in equation~\eqref{invol}, is
\[
\iota(\mathsf{J}_m^a) = -\mathsf{J}_{-m}^a - 2(\epsilon_1 + \epsilon_2)\delta_{m,0}  =- \mathsf{J}_{-m}^a - 2\hbar^{\frac{1}{2}}\alpha_0\delta_{m,0}, 
\] 
and in particular, this gives us
\[
\langle x^a_k | x^b_l \rangle  = \frac{\hbar}{k} \delta_{k,l}\delta_{a,b}.
\]

Now, we are ready to prove some properties of  the instanton partition function $\mathfrak{Z}$. By \eqref{eq:Fexp} and Corollary~\ref{corresum}, identifying the partition function $\mathcal{Z}$ with the Whittaker vector $\ket{w}$, we can write $\ket{w} = e^{\Phi}\ket{\lambda}$ with
\begin{equation}
\label{Phidef}
\Phi = \sum_{\substack{h \in \frac{1}{2} \mathbb{Z}_{\geq 0} \\ n \in \mathbb{Z}_{>0}}} \frac{\hbar^{h - 1}}{n!} \sum_{\substack{\mathbf{a} \in [r]^n \\ \mathbf{k} \in \mathbb{Z}_{>0}^n}} \Lambda^{r(\sum_{i = 1}^n k_i)} \Phi_{h,n}\big[\begin{smallmatrix} a_1 & \ldots & a_n \\ k_1 & \ldots & k_n \end{smallmatrix}\big] \prod_{i = 1}^n \frac{\mathsf{J}_{-k_i}^{a_i}}{k_i},
\end{equation}
and $\ket{\lambda}$ the highest-weight state. We now show that we can reconstruct the instanton partition function recursively using Airy structures and topological recursion.

The result will be expressed as a sum over decorated two-level graphs.

\begin{defin}
A two-level graph (see Figure~\ref{Fig2le}) is a finite connected graph with two types of vertices (called in and out) and edges that can only connect an in-vertex to an out-vertex. We require that there exists at least one vertex of each type. We denote $V$ (resp $V_{\rm in}$,  $V_{\rm out}$) the set of vertices (resp. in, out) and $E(v)$ the set of edges incident to a vertex $v$.

A decoration on a two-level graph is the assignment of a half-integer $h(v) \in \frac{1}{2}\mathbb{Z}$ to each vertex $v$ and of an ordered pair $\gamma_e = (a_{e},k_e) \in [r] \times \mathbb{Z}_{> 0}$ to each edge. We denote $\gamma(v) = (\gamma_e)_{e \in E(v)}$ the tuple of labels carried by edges incident to $v$. 

The weight of a decorated two-level graph $\Gamma$ is
\begin{equation}
\label{wgnung}w(\Gamma) = \frac{1}{s(\Gamma)}\,\prod_{v \in V} \hbar^{h(v) - 1} \Lambda^{r(\sum_{e \in E(v)} k_{e})} \Phi_{h(v),|E(v)|}[\gamma(v)] \prod_{e \in E} \frac{\hbar}{k_e},
\end{equation}
including a symmetry factor taking into account the fact that vertices and edges are undistinguishable.
$$
s(\Gamma) = |V_-|!\,|V_+|!\,\prod_{\substack{v \in V_{\rm in} \\ v' \in V_{\rm out}}} |E(v) \cap E(v')|!.
$$

A possibly disconnected decorated two-level graph $\Gamma$ is a finite disjoint union of two-level graphs $\Gamma_1 \cup \cdots \cup \Gamma_n$, and we define its weight as
\begin{equation}
\label{wgnungnun}
w(\Gamma) = \frac{1}{n!} \prod_{i = 1}^n w(\Gamma_i).
\end{equation}
We denote $\mathbf{G}$ (resp. $\widehat{\mathbf{G}}$) the set of (possibly disconnected) decorated two-level graphs.
\end{defin}

\begin{rem}
\label{remon}Notice that $w(\Gamma)$ depends on $\hbar$ only through the factor $\hbar^{-1 + b_1(\Gamma) + \sum_{v \in V} h(v)}$, where $b_1(\Gamma)$ is the first Betti number of $\Gamma$. In particular, the power of $\hbar$ is $\geq -1$.
\end{rem}

\begin{figure}[h!]
\begin{center}
\includegraphics[width = 0.45\textwidth]{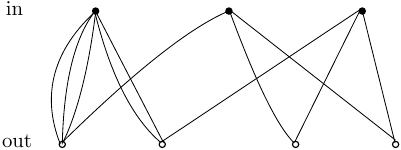}
\caption{\label{Fig2le} A two-level graph with $b_1 = 5$.}
\end{center}
\end{figure}

\begin{prop}\label{instton}
The instanton partition function $\mathfrak{Z} = \braket{ \mathfrak G| \mathfrak G } $ takes the form
\begin{equation}\label{eq:instgraph}
\mathfrak{Z} = \exp\bigg( \sum_{G \in \mathbf G } w(G)\bigg) = \exp\bigg(\sum_{h \in \frac{1}{2}\mathbb{Z}_{\geq0}} \hbar^{h - 1} \mathfrak{F}_h\bigg)
\end{equation}
for some $\mathfrak{F}_h \in \mathbb Q (Q_1,\ldots, Q_r)[\alpha_0][\![\Lambda^r]\!]$. The term $\mathfrak{F}_0$ is called the prepotential, and it is given by the weighted sum over the set $\mathbf{G}_0 \subset \mathbf{G}$ of trees in which all vertices have $h(v) = 0$. 
\end{prop}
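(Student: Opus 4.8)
The plan is to start from the identification, established in Corollary~\ref{cor:AiryG}, of the Gaiotto vector with the partition function $\mathcal{Z}=e^{F}$ of the Airy structure of Theorem~\ref{Whit1prop}, rewritten after the resummation of Corollary~\ref{corresum} as $\ket{\mathfrak{G}}=\ket{w}=e^{\Phi}\ket{\lambda}$ with $\Phi$ as in \eqref{Phidef}, together with the description of $\mathfrak{Z}$ in Section~\ref{Sec:instanton} as the norm $\braket{\mathfrak{G}|\mathfrak{G}}$ for the contragredient pairing. Since $\Phi$ is a (completed) sum of polynomials in the pairwise commuting negative modes $\mathsf{J}^a_{-k}$ ($k>0$) and $\iota$ is an anti-involution acting on them by $\iota(\mathsf{J}^a_{-k})=-\mathsf{J}^a_k$ (see \eqref{invol}), the bra vector is $\bra{w}=\bra{\lambda}\,e^{\iota(\Phi)}$, so that
\[
\mathfrak{Z}=\braket{w|w}=\bra{\lambda}\,e^{\iota(\Phi)}\,e^{\Phi}\,\ket{\lambda}.
\]
Here $e^{\iota(\Phi)}$ is built from positive modes $\mathsf{J}^a_k$ ($k>0$), which annihilate $\ket{\lambda}$, and $e^{\Phi}$ from negative modes, which annihilate $\bra{\lambda}$. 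Expanding both exponentials and normal ordering using $[\mathsf{J}^a_k,\mathsf{J}^b_{-l}]=\hbar\,k\,\delta_{a,b}\delta_{k,l}$ (equivalently the contraction rule $\braket{x^a_k|x^b_l}=\tfrac{\hbar}{k}\delta_{a,b}\delta_{k,l}$), only fully contracted terms survive, because positive modes commute among themselves and so do negative ones: each positive mode of $e^{\iota(\Phi)}$ must be paired with a negative mode of $e^{\Phi}$ carrying the same index $(a,k)$, so every surviving term is a perfect matching between the two sides.

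The next step is the standard Feynman-graph reorganisation of this sum of contractions. A factor $\Phi_{h,n}$ coming from $e^{\Phi}$ (resp. from $e^{\iota(\Phi)}$) becomes an \emph{in}-vertex (resp. \emph{out}-vertex) decorated by $h(v)=h$, with $n=|E(v)|$ half-edges labelled by pairs $(a_i,k_i)\in[r]\times\mathbb{Z}_{>0}$, and a contraction becomes an edge $e$ with $\gamma_e=(a_e,k_e)$ joining an out-vertex to an in-vertex; this is exactly the data of a (possibly disconnected) decorated two-level graph. The numerical factor produced is, per vertex, $\hbar^{h(v)-1}\Lambda^{r\sum_{e\in E(v)}k_e}\Phi_{h(v),|E(v)|}[\gamma(v)]$ and, per edge, $\tfrac{\hbar}{k_e}$, while the factorials $\tfrac1{n!}$ inside the $\Phi_{h,n}$, the $\tfrac1{m!}$'s from the two exponentials and the multiplicities of parallel edges recombine into the symmetry factor $s(\Gamma)$ of \eqref{wgnung}; summing over all contractions yields $\mathfrak{Z}=\sum_{\widehat{G}\in\widehat{\mathbf{G}}}w(\widehat{G})$ with the weights \eqref{wgnung}--\eqref{wgnungnun}, and the linked-cluster (exponential) formula $\sum_{\widehat{G}\in\widehat{\mathbf{G}}}w(\widehat{G})=\exp\big(\sum_{G\in\mathbf{G}}w(G)\big)$ gives the first equality in \eqref{eq:instgraph}. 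Well-definedness follows from Lemma~\ref{lem:prophbar} and Corollary~\ref{corresum}: each $\Phi_{h,n}\in\mathbb{Q}(Q_1,\ldots,Q_r)[\alpha_0]$, the $\Lambda$-degree of a connected graph equals $2r\sum_e k_e$, and a connected two-level graph with $|E|$ edges has $|V|\leq|E|+1$, so only finitely many decorated graphs contribute to any fixed power of $\Lambda^r$ and $\hbar$.

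For the $\hbar$-expansion, \eqref{wgnung} shows that the power of $\hbar$ carried by a connected decorated two-level graph $\Gamma$ is $\sum_{v\in V}(h(v)-1)+|E|=\big(\sum_v h(v)\big)+(|E|-|V|)=-1+b_1(\Gamma)+\sum_v h(v)$, as recorded in Remark~\ref{remon}. Grouping connected graphs by this exponent defines $\mathfrak{F}_h$ as $\hbar^{1-h}$ times the sum of the $w(\Gamma)$ over connected $\Gamma$ with $b_1(\Gamma)+\sum_v h(v)=h$; since $b_1\geq0$, all $h(v)\geq0$, and each $\Phi_{h(v),|E(v)|}\in\mathbb{Q}(Q_1,\ldots,Q_r)[\alpha_0]$ with $\Lambda$-degree controlled as above, this forces $\mathfrak{F}_h\in\mathbb{Q}(Q_1,\ldots,Q_r)[\alpha_0][\![\Lambda^r]\!]$, which is the remaining content of \eqref{eq:instgraph}. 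Finally, $h=0$ is attained exactly when $b_1(\Gamma)=0$ and all $h(v)=0$, i.e. on trees all of whose vertices are decorated by $0$, so $\hbar^{-1}\mathfrak{F}_0$ equals the weighted sum over $\mathbf{G}_0$, which is the claim about the prepotential.

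The conceptual content here is light — it is the familiar passage from a Gaussian contraction of a generating series to a sum of Feynman diagrams — so I expect the genuine work to be bookkeeping rather than new ideas. The two points that need care are: (i) fixing the edge weight with the correct sign, which requires combining the sign in $\iota$ from \eqref{invol} with the normalisations of the contragredient pairing and of $\braket{\lambda|\lambda}$, and this is precisely where a sign slip is easy to make (indeed an earlier version of this statement needed a correction); and (ii) checking that the $\tfrac1{n!}$ and $\tfrac1{m!}$ factors recombine \emph{exactly} into the symmetry factor $s(\Gamma)$ of \eqref{wgnung}, with no residual over- or under-counting, which is the usual subtlety with labelled versus unlabelled vertices and with parallel edges.
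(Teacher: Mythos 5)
Your proposal is correct and follows essentially the same route as the paper's proof: writing $\mathfrak{Z}=\bra{\lambda}e^{\Phi^{\dagger}}e^{\Phi}\ket{\lambda}$, performing the Wick/normal-ordering expansion so that only fully contracted terms survive, identifying the result with the weighted sum over decorated two-level graphs, passing to connected graphs via the exponential formula, and reading off $\mathfrak{F}_0$ from the $\hbar$-count $-1+b_1(\Gamma)+\sum_v h(v)$ of Remark~\ref{remon}. The extra remarks you make on finiteness at each order in $\Lambda^r$ and on the sign/symmetry-factor bookkeeping are consistent with the paper's conventions (in particular with the stated pairing $\braket{x^a_k|x^b_l}=\tfrac{\hbar}{k}\delta_{a,b}\delta_{k,l}$) and do not change the argument.
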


\begin{proof}
We represent $\Phi$ of the form \eqref{Phidef} as a weighted sum over graphs consisting of a single vertex, carrying an index $h \in \frac{1}{2}\mathbb{Z}_{\geq 0}$, with $n$ outgoing leaves carrying labels $(a_i,k_i) \in [r] \times \mathbb{Z}_{> 0}$. The weight of such a graph is
	\begin{equation}
	\label{weivhgzn}
	\frac{\hbar^{h-1}}{n!} \Lambda^{r(\sum_{i=1}^n k_i)} \Phi_{h,n}\big[\begin{smallmatrix} a_1 & \ldots & a_n \\ k_1 & \ldots & k_n \end{smallmatrix}\big] \prod_{i = 1}^n \frac{\mathsf{J}_{-k_i}^{a_i}}{k_i}.
	\end{equation}
and the symmetry factor expresses the fact that  the leaves are considered indistinguishable. Since, for $k \neq 0$, the adjoint of $\mathsf{J}_{k}^a$ with respect to the pairing is $\mathsf{J}_{-k}^a$,  we have
\begin{equation}
\label{Oibnznfs}	\Phi^{\dagger} =  \sum_{\substack{h' \in \frac{1}{2}\mathbb{Z}_{\geq 0} \\ n' \in \mathbb{Z}_{>0}}} \frac{\hbar^{h' - 1}}{n'!} \sum_{\substack{\mathbf{a'} \in [r]^{n'} \\ \mathbf{k} \in \mathbb{Z}_{>0}^{n'}}} \Lambda^{r(\sum_{i = 1}^{n'} k'_i)} \Phi_{h',n'}\big[\begin{smallmatrix} a'_1 & \ldots & a'_{n'} \\ k'_1 & \ldots & k'_n \end{smallmatrix}\big] \prod_{i = 1}^{n'} \frac{\mathsf{J}_{k'_i}^{a'_i}}{k'_i}.
\end{equation}
Likewise, we represent it as a weighted sum over graphs consisting a single vertex, carrying an index $h' \in \frac{1}{2}\mathbb{Z}_{\geq 0}$, with $n'$ incoming leaves carrying labels $(a_i',k_i') \in [r] \times \mathbb{Z}_{> 0}$, with a weight:
\begin{equation}
\label{Oibnznf}
\frac{\hbar^{h'-1}}{n'!} \Lambda^{r(\sum_{i=1}^{n'} k'_i)} \Phi_{h',n'}\big[\begin{smallmatrix} a'_1 & \ldots & a'_n \\ k'_1 & \ldots & k'_n \end{smallmatrix}\big] \prod_{i = 1}^{n'} \frac{\mathsf{J}_{k_i'}^{a_i'}}{k_i'}.
\end{equation}
 Then, $ e^{\Phi} $ (resp.  $ e^{\Phi^\dagger} $) can be expressed as a weighted sum over possibly disconnected graphs whose connected components are of the aforementioned kind, and the weight is the product of the weights of its connected components divided by the order of the automorphism group permuting the connected components. 

The product $e^{\Phi^{\dagger}} e^{\Phi}$ can then be rewritten canonically as a sum of normally ordered monomials using the commutation relations in the Heisenberg algebra. Graphically, this is realised by summing all ways of pairings a subset of outgoing leaves in $\Phi$ with ingoing leaves in $\Phi^{\dagger}$. The pairing of an outgoing leaf with label $(a,k)$ and an ingoing leaf with label $(a',k')$ forms an edge and corresponds to a commutator $[\mathsf{J}_{a',k'},\mathsf{J}^{a}_{-k}] = \hbar k \delta_{a,a'}\delta_{k,k'}$. Therefore, we can assume that only leaves carrying the same label can be paired, and assign this label $(a,k)$ to the resulting edge. The weight of such a graph is the product of
\begin{itemize}
\item the scalar part of the weights from \eqref{Oibnznfs} and \eqref{Oibnznf};
\item an updated symmetry factor taking into account the connected components, vertices and edges of the resulting graph are indistinguishable;
\item the normal ordered product of the negative $J$s from unpaired outgoing leaves and of the positive $J$s from unpaired ingoing leaves
\end{itemize}

The instanton partition function is obtained as $\mathfrak{Z} = \bra{\lambda}e^{\Phi^{\dagger}} e^{\Phi} \ket{\lambda}$. As $\ket{\lambda}$ is killed by positive $J$s acting from its left, $\bra{\lambda}$ is killed by negative $J$s to its right, and $\langle \lambda | \lambda \rangle = 1$, we deduce that the effect of $\bra{\lambda} \cdots \ket{\lambda}$ is to keep in this sum only the graphs in which all leaves have been paired. Such graphs are exactly the possibly disconnected two-level graphs, where the in vertices are those coming from $\Phi^{\dagger}$ and the out vertices are those coming from $\Phi$. The description of their weights  in $\mathfrak{Z}$ also matches \eqref{wgnung}-\eqref{wgnungnun}. Therefore:
$$
\mathfrak{Z} = \sum_{\Gamma \in \widehat{\mathbf{G}}} w(\Gamma).
$$
From the multiplicativity of the weight, we deduce that taking the logarithm amounts to summing over connected graphs only, i.e
$$
\mathfrak{Z} = \exp\bigg(\sum_{\Gamma \in \mathbf{G}} w(\Gamma)\bigg).
$$
By definition, $\mathfrak{F}_0$ is the coefficient of $\hbar^{-1}$ in $\ln \mathfrak{Z}$. We see from Remark~\ref{remon} that only the graphs with $h(v) = 0$ for all vertices $v$ and $b_1(\Gamma) = 0$ (i.e. trees) contribute.
\end{proof}

\begin{rem}
	We note that trees in $\mathbf{G}$ are exactly the graphs in which there is at most one edge between any given pair of vertices.
\end{rem}
 
%
%


\section{Spectral curve topological recursion for type A}
\label{Sec:TRA}
The purpose of this section is to convert the differential constraints for the partition function $\mathcal{Z}$ of the Airy structure of the previous section into a topological recursion \`a la Chekhov--Eynard--Orantin, which is based on residue analysis on a spectral curve. This conversion is done in three steps:
\begin{enumerate}
\item We start with the partition function $\mathcal{Z}$ uniquely specified by an Airy structure. To the partition function, we associate an infinite sequence of \emph{correlators} $\omega_{g,n}$, with $g \in \frac{1}{2}\mathbb{Z}_{\geq 0}$, $n \in \mathbb{Z}_{>0}$, and $2g-2+n > 0$, which are symmetric differentials living on a \emph{spectral curve}.
\item We recast the differential constraints satisfied by $\mathcal{Z}$ as a system of constraints for the correlators $\omega_{g,n}$, known as \emph{abstract loop equations}.
\item We show that the abstract loop equations can be uniquely solved recursively using residue analysis -- the result is the \emph{spectral curve topological recursion}.
\end{enumerate}

\subsection{Topological recursion at self-dual level \texorpdfstring{$\kappa = 1$}{kappa=1}}
\label{SelfdualTRspec}

We start with type A at self-dual level $\kappa=1$, as the calculations are much simpler. In fact, in this case a similar conversion was carried out in \cite{BKS}. Nevertheless, we redo the calculation here in a slightly different way, to lay the foundations for the arbitrary level case in Section \ref{s:TRarbit}.

\subsubsection{Spectral curve and correlators}

Following the three-step approach outlined above, the starting point is the definition of correlators and the geometry of a spectral curve. The spectral curve for the case at hand is a particular case of what was studied in Section 4 of \cite{BKS} (it corresponds to the unramified case in standard polarization). It is given by the quintuple $(C,x,\omega_{0,1},\omega_{\frac{1}{2},1},\omega_{0,2})$, where:
\begin{itemize}
	\item $C$ is the union of $r$ copies of a formal disk, $C=\bigsqcup_{a=1}^r C^a$, with each $C^a$ a formal disk: $C^a = \operatorname{Spec} \mathbb C[\![\zeta]\!]$. We will denote points on $C$ by $z = \left(\begin{smallmatrix} a \\ \zeta \end{smallmatrix}\right) $, to specify in which copy $C^a$ of the formal disk it lives. We also write $\mathfrak{c}(z) = a$ to denote the component that the point $z \in C$ lives in.
	\item $x\,:\;C \to C_0 = \operatorname{Spec} \mathbb  C[\![\zeta]\!]  $ is the forgetful morphism $  \left(\begin{smallmatrix} a \\ \zeta \end{smallmatrix}\right) \mapsto \zeta$. It is unramified, and we denote by $\mathfrak{a} = x^{-1}(0)$ the central fiber, which in the present case is identified with $[r]$. If $\big( \begin{smallmatrix}a\\ \zeta \end{smallmatrix} \big ) \in C$, we write $\mathfrak{f}\big( \begin{smallmatrix}a\\ \zeta \end{smallmatrix} \big )  = x^{-1}(x\big( \begin{smallmatrix}a\\ \zeta \end{smallmatrix} \big ) ) = x^{-1}(\zeta) \subset C$ for the set of preimages of $\zeta \in C_0$, and $\mathfrak{f}'\big( \begin{smallmatrix}a\\ \zeta \end{smallmatrix} \big )  = \mathfrak{f} \big( \begin{smallmatrix}a\\ \zeta \end{smallmatrix} \big ) \setminus \big\{\begin{smallmatrix}a\\ \zeta \end{smallmatrix} \big\} $ for the set of preimages of $\zeta  \in C_0$ with the original point $\big(\begin{smallmatrix} a \\ \zeta \end{smallmatrix}\big) \in C$ excluded.
	\item For $k \in \mathbb{Z}$ and $a \in [r]$ we define differentials $\dd\xi^a_{k},\dd\xi^*_k \in H^0(C,K_C(* \mathfrak{a}))$ by
\begin{equation}
\label{defdifferentials}
\dd\xi^a_k\big(\begin{smallmatrix} b \\ \zeta \end{smallmatrix}\big) = \delta_{a,b} \zeta^{k - 1}\dd \zeta,\qquad \dd\xi^*_k \big(\begin{smallmatrix} b\\ \zeta \end{smallmatrix}\big) := \sum_{a=1}^r \dd\xi^a_{k}\big(\begin{smallmatrix} b\\ \zeta \end{smallmatrix}\big) = \zeta^{k-1} \dd \zeta.
\end{equation}
Then the $1$-form $\omega_{0,1}$ is given by:
\begin{equation}\label{eq:01}
	\omega_{0,1} \big (\begin{smallmatrix} b\\ \zeta \end{smallmatrix} \big) = \sum_{a \in [r]} F_{0,1}\big[ \begin{smallmatrix}a \\ 0 \end{smallmatrix} \big] \dd \xi^a_0\big (\begin{smallmatrix} b\\ \zeta \end{smallmatrix} \big)  =  \frac{Q_b}{\zeta} \dd \zeta,
\end{equation}
with $Q_1,\ldots,Q_r \in \mathbb{C}$. That is, $F_{0,1}\big[ \begin{smallmatrix}a \\ 0 \end{smallmatrix} \big] = Q_a$.
For future use we define the following function on $C$:
\begin{equation}\label{eq:defy}
	y \big (\begin{smallmatrix} b\\ \zeta \end{smallmatrix} \big) = \frac{Q_b}{\zeta}, \qquad \text{so that} \qquad
	\omega_{0,1} = y\,\dd x . 
\end{equation}
\item The $1$-form $\omega_{\frac{1}{2},1}$ is given by:
	\begin{equation}\label{eq:121}
	\omega_{\frac{1}{2},1} \big(\begin{smallmatrix}b \\ \zeta \end{smallmatrix}\big) = - \alpha_0  \dd \xi^*_0 \big(\begin{smallmatrix}b \\ \zeta \end{smallmatrix}\big) = - \frac{\alpha_0}{\zeta} \dd \zeta = 0,
	\end{equation}
	since $\alpha_0 = \kappa^{-\frac{1}{2}} - \kappa^{\frac{1}{2}} = 0$ at the self-dual level.
\item The bidifferential $\omega_{0,2}$ is given by:
	\begin{equation}\label{def51}
\omega_{0,2}\big(\begin{smallmatrix} b_1 & b_2 \\ \zeta_1 & \zeta_2 \end{smallmatrix}\big)  = \frac{\delta_{b_1,b_2}\dd \zeta_1\dd \zeta_2}{(\zeta_1 - \zeta_2)^2}.
	\end{equation}
	We note that as for $|\zeta_1| > |\zeta_2|$
	\begin{equation}\label{eq:exp02}
	\omega_{0,2}(z_1,z_2) = \sum_{\substack{a \in [r] \\ k \in \mathbb{Z}_{>0}}}k \dd \xi^a_{-k}(z_1) \dd \xi^a_{k}(z_2).
	\end{equation}
\end{itemize}

Next we define the correlators $\omega_{g,n}$ from the partition function $\mathcal{Z}$ of Theorem~\ref{Whit1prop} at self-dual level. To the partition function
\begin{equation}
\label{ZdefTR}
\mathcal{Z} = \exp\bigg(\sum_{\substack{g \in \frac{1}{2}\mathbb{Z}_{\geq 0},\,\,n \in \mathbb{Z}_{>0} \\ 2g - 2 + n >0}} \frac{\hbar^{g - 1}}{n!} \sum_{\substack{a_1,\ldots,a_n \in [r] \\ k_1,\ldots,k_n \in \mathbb{Z}_>0}} F_{g,n}\big[\begin{smallmatrix} a_1 & \cdots & a_n \\ k_1 & \cdots & k_n \end{smallmatrix}\big]\,\prod_{l = 1}^n x_{k_l}^{a_l}\bigg),
\end{equation}
we associate the correlators
\begin{equation}
\label{cordefTR}
\omega_{g,n}(z_1,\ldots,z_n) = \sum_{\substack{a_1,\ldots,a_n \in [r] \\ k_1,\ldots,k_n \in \mathbb{Z}_{>0}}} F_{g,n}\big[\begin{smallmatrix} a_1 & \cdots & a_n \\ k_1 & \cdots & k_n \end{smallmatrix}\big]\,\prod_{l = 1}^n \dd \xi_{-k_l}^{a_l}(z_l).
\end{equation}
The correlators $ \omega_{g,n} $ are symmetric meromorphic $ n $-differentials on $ C $, in other words, they are elements of $H^0\big(C^n,K_C(*\mathfrak{a})^{\boxtimes n}\big)^{\mathfrak{S}_n}$. We stress again that each $z_i$ denote a point on a copy of $C$, hence corresponds to a bi-coordinate $z_i = \big(\begin{smallmatrix} b_i \\ \zeta_i \end{smallmatrix} \big)$.

\subsubsection{Abstract loop equations}\label{sec:aleA}

The next step is to turn the differential constraints satisfied by the partition function $\mathcal{Z}$ into the so-called abstract loop equations for the correlators.

For this purpose, we have to introduce a few objects. We first define an operator that takes a series $f \in \mathcal{M}^{\hbar} =\mathbb C \big[\!\big[(x^a_m)_{a \in [r],\,\,m > 0}\big]\!\big][\![\hbar^{1/2}]\!]$ and transforms it into a symmetric differential form on the spectral curve $C$.

\begin{defin}\label{d:adjop}
Let $f \in \mathcal{M}^{\hbar}$. We define
\begin{equation}
\ad_{g,n}(f) = [\hbar^g] \sum_{\substack{a_1,\ldots,a_n \in [r] \\ k_1, \ldots, k_n \in \mathbb{Z}_{>0}}} \left( \frac{\partial}{\partial x^{a_1}_{k_1}} \, \cdots \,\frac{\partial}{\partial x^{a_n}_{k_n}} f \right)_{x^{a_i}_{k_i} = 0} \dd \xi^{a_1}_{-k_1}(w_1)\, \cdots\,\, \dd \xi^{a_n}_{-k_n}(w_n).
\end{equation}
In other words, it picks the terms of order $\hbar^g$ that are homogeneous of degree $n$ in the variables $x^{a_i}_{k_i}$, and replaces these variables by the corresponding $1$-forms $\dd \xi^{a_i}_{-k_i}(w_i)$ on $C$ in variables $w_1,\ldots,w_n$.
\end{defin}

Our goal is to study the differential constraints satisfied by the partition function $\mathcal{Z}$ of Theorem~\ref{Whit1prop} at self-dual level. Those constraints are obtained through the action of the Heisenberg modes $\mathsf{J}^a_k$ on $\mathcal{Z}$. So we first study this action on its own. In fact, it is convenient to work with the Heisenberg fields directly $\mathsf{J}^a(\zeta)$ --- we replace the variable $z$ from Section \ref{Sec:AiryA} by $\zeta$ here as it will become the variable on the formal disks in the spectral curve $C$ and reserve the letter $z$ to refer to points on $C$. We go further and turn these Heisenberg fields into differential forms on $C$. 

\begin{defin}\label{d:H1f}
We define the \emph{Heisenberg $1$-form} $\mathcal{J}(z)$ on $C$ by:
\[
\mathcal{J}(z)  = \sum_{\substack{a \in [r] \\ k \in \mathbb{Z}}} \mathsf{J}^a_k \dd \xi^a_{-k}(z), \qquad \text{so that} \qquad \mathcal{J}\big( \begin{smallmatrix} a \\ \zeta \end{smallmatrix}\big) = \mathsf{J}^a(\zeta) \dd \zeta.
\]
It will be convenient to separate the zero-modes from the positive and negative modes. Thus we define
\[
\mathcal{J}^*(z) =  \sum_{\substack{a \in [r] \\ k \in \mathbb{Z}_{\neq 0}}} \mathsf{J}^a_k \dd \xi^a_{-k}(z),\qquad
 \mathcal{J}_0(z) = \sum_{a \in [r]} \mathsf{J}^a_0 \dd \xi^a_0(z) = \omega_{0,1}(z),
\]
with $\mathcal{J}(z) = \mathcal{J}^*(z) + \omega_{0,1}(z)$.
We call $\mathcal{J}^*(z)$ the \emph{non-zero Heisenberg $1$-form}, and $\mathcal{J}_0(z)$ the \emph{zero Heisenberg $1$-form}.
\end{defin}

We are interested in the action of the Heisenberg modes $\mathsf{J}^a_m$ on $\mathcal{Z}$, using the differential representation \eqref{eq:hrep}, or equivalently in the action of the Heisenberg $1$-form. We first look at the action of the positive and negative modes only:
\begin{defin}\label{d:defomega}
For any $i \in [r]$, we define
\begin{equation}\label{e:defomega}
\Omega'_{g,i,n}(z_{[i]}; w_{[n]}) := \ad_{g,n} \left(\mathcal{Z}^{-1}\, \prod_{k=1}^i \mathcal{J}^*(z_k)\, \mathcal{Z} \right).
\end{equation}
In other words, we act on the partition function with $i$ copies of the non-zero Heisenberg $1$-form, and then use the operator in Definition \ref{d:adjop} to turn the result into a symmetric differential on $C^n$ in the variables $w_{1}, \ldots, w_{n}$.
\end{defin}

This object will play a key role in the following, and we can express it directly in terms of the correlators $\omega_{g,n}$ introduced in \eqref{cordefTR}.

\begin{lem}\label{l:omeg}
For any $i \in [r]$,
\[
\Omega'_{g,i,n}(z_{[i]}; w_{[n]}) =\sum^{\text{no $\omega_{0,1}$}}_{\substack{\mathbf{L} \vdash [i] \\ \sqcup_{L \in \mathbf{L}} N_L = [n] \\ i + \sum_L (g_L - 1) = g}} \prod_{L \in \mathbf{L}} \omega_{g_L,|L| + |N_L|}(z_L,w_{N_L}),
\]
where we discard all summands containing $\omega_{0,1}$, see Section \ref{s:notation} for an explanation of the notation above.
\end{lem}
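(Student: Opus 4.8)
The plan is to unwind the definitions on both sides and match them term by term, using the fact that the partition function $\mathcal{Z} = e^F$ factorizes multiplicatively under exponentiation.

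\textbf{Step 1: Expand the action of the Heisenberg $1$-form on $\mathcal{Z}$.} Recall that $\mathcal{J}^*(z_k) = \sum_{a\in[r],\,m\in\mathbb{Z}_{\neq 0}} \mathsf{J}^a_m\,\dd\xi^a_{-m}(z_k)$, where by \eqref{eq:hrep} the positive modes $\mathsf{J}^a_m$ ($m>0$) act as $\hbar\partial_{x^a_m}$ and the negative modes $\mathsf{J}^a_{-m}$ ($m>0$) act as multiplication by $m\,x^a_m$. Conjugating by $\mathcal{Z}$, each $\mathcal{Z}^{-1}\mathsf{J}^a_m\mathcal{Z}$ with $m>0$ becomes $\hbar\partial_{x^a_m} + \hbar(\partial_{x^a_m}F)$, i.e. a first-order operator whose ``potential part'' is a derivative of $F$; the negative modes are unaffected by conjugation. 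Expanding the product $\prod_{k=1}^i \mathcal{Z}^{-1}\mathcal{J}^*(z_k)\mathcal{Z}$, one gets a sum over ways of distributing the $i$ derivatives $\hbar\partial$ either onto factors $\partial F$ (producing a connected ``block'' of several $z_k$'s tied together by higher derivatives of $F$) or being left to act on the subsequent $\ad_{g,n}$ evaluation (producing the insertions of $w_{[n]}$). This is the standard combinatorial lemma that differentiating $e^F$ repeatedly yields a sum over set partitions with each block weighted by a derivative of $F$; the negative-mode multiplication operators simply supply the remaining variables that are not differentiated.

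\textbf{Step 2: Identify each block with a correlator.} Apply $\ad_{g,n}$, which extracts the $\hbar^g$-coefficient, sets all $x^a_m=0$, and replaces each surviving variable (be it an external $w$-variable from an undifferentiated negative mode, or a $z_k$-variable sitting inside a derivative of $F$) by the corresponding $\dd\xi$. By definition \eqref{cordefTR}, a block consisting of a $p$-th derivative of $F$ with respect to variables indexed by a subset $L\subseteq[i]$ together with $N_L\subseteq[n]$, with the $\hbar$-grading bookkeeping giving genus $g_L$, is exactly $\omega_{g_L,|L|+|N_L|}(z_L,w_{N_L})$. Matching the powers of $\hbar$: each of the $i$ factors $\mathcal{J}^*$ carries an $\hbar$ from $\hbar\partial_{x}$, and each $\omega_{g_L,\bullet}$-block contributes $\hbar^{g_L-1}$ from the expansion $F=\sum\hbar^{g-1}(\cdots)$, so the total power is $\hbar^{\,i+\sum_L(g_L-1)}$, which must equal $\hbar^g$; this reproduces the constraint $i+\sum_L(g_L-1)=g$ in the sum. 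The condition that $\mathbf{L}\vdash[i]$ and $\sqcup_L N_L=[n]$ is precisely the statement that every $z_k$ and every $w_j$ is accounted for exactly once.

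\textbf{Step 3: Explain the ``no $\omega_{0,1}$'' restriction.} The one-form $\omega_{0,1}$ corresponds to $F_{0,1}$, but $F$ by Theorem~\ref{t:KS} contains only terms with $2g-2+n>0$, so $F_{0,1}=0$ and no single $\partial_{x}F$ factor with a one-variable block (of genus $0$) ever appears; more precisely, an isolated block with $|L|+|N_L|=1$ and $g_L=0$ is absent from the expansion of $\ad_{g,n}(\mathcal{Z}^{-1}\prod\mathcal{J}^*\mathcal{Z})$ because such a block would require $\partial_{x}F|_{x=0}$ at order $\hbar^{-1}$ with a single variable, which vanishes. Hence the sum on the right-hand side automatically omits every summand containing a factor $\omega_{0,1}$, which is what the superscript ``$\text{no }\omega_{0,1}$'' records. (Note that $\omega_{0,1}$ does not arise here precisely because we used the \emph{non-zero} Heisenberg $1$-form $\mathcal{J}^*$ rather than $\mathcal{J}$; the zero-modes, which would have produced $\omega_{0,1}=\mathcal{J}_0$, have been stripped off.)

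\textbf{Main obstacle.} The genuine content is purely bookkeeping: tracking the $\hbar$-powers through the conjugation-and-differentiation expansion and confirming the block-structure sum matches the symmetrized partition-function combinatorics. The one subtlety to handle carefully is the interplay between the multiplication operators (negative modes, giving the $w$-insertions via $\ad_{g,n}$) and the derivative operators (positive modes): one must check that an undifferentiated negative mode $\mathsf{J}^a_{-m}$ and a differentiated occurrence inside $\partial_x F$ both get sent to the same type of $\dd\xi$ form by $\ad_{g,n}$, so that the resulting object is genuinely symmetric in $z_{[i]}\cup w_{[n]}$ after restricting to the stated index sets. This is where one invokes that $\ad_{g,n}$ treats the variables $x^a_m$ uniformly and that $F$ is itself symmetric in its arguments, so no ordering ambiguity survives.
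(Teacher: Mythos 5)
The paper does not actually write out a proof of this lemma (it defers to \cite[Section 2]{BBCCN18} and \cite[Section 4]{BKS}), so I am judging your argument on its own merits. Your skeleton --- conjugate the positive modes through $\mathcal{Z}=e^F$, expand the product, organize the result into blocks indexed by a partition, and match powers of $\hbar$ --- is the right one. But there is a genuine gap: you account for only one of the two mechanisms that produce blocks. In your Step 2 every block is ``a $p$-th derivative of $F$'', and since $F$ contains only terms with $2g-2+n>0$ (Theorem~\ref{t:KS}), a derivative of $F$ can never produce a block with $g_L=0$ and $|L|+|N_L|=2$. Yet such blocks, namely the factors $\omega_{0,2}(z_k,z_l)$ and $\omega_{0,2}(z_k,w_j)$ with $\omega_{0,2}$ the bidifferential \eqref{def51}, genuinely occur on the right-hand side --- see Remark~\ref{r:recursive}, where $\Omega'_{g,2,n}$ contains $\omega_{g-1,n+2}(z_1,z_2,w_{[n]})$, which for $g=1$, $n=0$ is exactly $\omega_{0,2}(z_1,z_2)$. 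These terms do not come from $F$ at all: they come from Wick contractions forced by the Heisenberg commutation relations. Concretely, a positive mode $\hbar\partial_{x^a_m}$ sitting in $\mathcal{J}^*(z_k)$ can land on the explicit factor $m x^a_m$ contributed by a negative mode of some $\mathcal{J}^*(z_l)$ to its right, producing $\hbar\,m\,\dd\xi^a_{-m}(z_k)\dd\xi^a_m(z_l)$, which after summing over $a$ and $m$ is $\hbar\,\omega_{0,2}(z_k,z_l)$ by \eqref{eq:exp02}; similarly a derivative $\partial_{x^{a_j}_{k_j}}$ from $\ad_{g,n}$ landing on such an $x^a_m$ produces $\omega_{0,2}(z_l,w_j)$. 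Without these contractions your expansion simply does not reproduce the stated formula, and moreover the undifferentiated $x^a_m$'s from negative modes are killed when $\ad_{g,n}$ sets $x=0$, so in your accounting the negative modes would contribute nothing at all.

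Two related slips follow from the same omission. First, your $\hbar$-count ``each of the $i$ factors $\mathcal{J}^*$ carries an $\hbar$ from $\hbar\partial_x$'' is false for a factor that contributes a negative mode ($\mathsf{J}^a_{-m}=m x^a_m$ carries no $\hbar$ in the representation \eqref{eq:hrep}); the correct bookkeeping is that each block contributes $\hbar^{|L|+g_L-1}$, which one must verify separately for $F$-blocks (where it is $|L|$ derivatives times the $\hbar^{g_L-1}$ from the expansion of $F$) and for contraction blocks (one $\hbar$ for a $(z_k,z_l)$ pairing, none for a $(z_l,w_j)$ pairing), and both happen to give $|L|+g_L-1$. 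Second, Step 1 asserts that the $w$-insertions come from leftover derivatives $\hbar\partial$ ``left to act on the subsequent $\ad_{g,n}$ evaluation'', which contradicts your own closing paragraph and is not how $\ad_{g,n}$ works: it is applied to the resulting element of $\mathcal{M}^\hbar$, so a surviving $\hbar\partial_{x}$ that finds nothing to its right annihilates the constant function and the term drops; the $w$-legs come from the $n$ derivatives inside $\ad_{g,n}$ hitting either the residual $x$-dependence of a $\partial F$ block or an explicit $x^a_m$ from a negative mode. Your Step 3 on the absence of $\omega_{0,1}$ is fine. To repair the proof you should expand the ordered product of $D_k+M_k+P_k$ (derivative, multiplication, and $\partial F$ parts) and track all three types of pairings; this is precisely the computation carried out in the references the paper cites.
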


\begin{proof}
We omit the proof as these combinatorics have been worked out in a number of papers by now, for instance  \cite[Section 2]{BBCCN18} or \cite[Section 4]{BKS}.
\end{proof}

\begin{rem}\label{r:recursive}
We note that
\[
\begin{split}
\Omega'_{g,1,n}(z;w_{[n]}) & = \omega_{g,n+1}(z,w_{[n]}), \\
\Omega'_{g,2,n}(z_1,z_2;w_{[n]}) & = \omega_{g -1,n + 2}(z_1,z_2,w_{[n]}) + \sum_{\substack{J \sqcup J' = [n] \\ h + h' = g}}^{{\rm no}\,\,\omega_{0,1}} \omega_{h,1+|J|}(z_1,w_{J}) \omega_{h,1 + |J'|}(z_2,w_{J'}),
\end{split}
\]
and so on. More generally, for $i \geq 2$, $\Omega'_{g,i,n}(z_{[i]},w_{[n]})$ only involves $\omega_{g',n'}$ with indices satisfying $2g'-2+n' < 2g-2+(n+1)$.
\end{rem}

We will also be interested in the action of the zero-modes, but in a particular way. Recall that $\mathfrak{f}(z)$ is the set of $r$ preimages $x^{-1}(x(z)) \subset C$. We define the following operator, which calculates the action of $i$ non-zero Heisenberg $1$-forms at $i$ of these preimages, and $r-i$ zero Heisenberg $1$-forms at the other preimages minus $\omega_{0,1}$. More precisely:
\begin{defin}\label{d:Qhatsd}
For $i \in [r]$, we define:
\[
\left[ \mathcal{Q}^{(i)} \Omega'_{g,i,n}(\cdot\,; w_{[n]}) \right](z) := \ad_{g,n} \left(\sum_{\substack{Z \subseteq \mathfrak{f}(z) \\ |Z| = i \\ \{z\} \subseteq Z}}  \mathcal{Z}^{-1} \,\prod_{z' \in Z} \mathcal{J}^*(z') \prod_{z'' \in \mathfrak{f}(z) \setminus Z} (\mathcal{J}_0(z'') - \omega_{0,1}(z) )\,\mathcal{Z} \right).
\]
We note that the first sum is over subsets of preimages of cardinality $i$ that include the point $z \in C$. This is in fact implied by the summation, since for any subset that does not include $z$, the expression vanishes, as it involves $\mathcal{J}_0(z) - \omega_{0,1}(z) =0$. 
\end{defin}

These operators are very easy to evaluate, since the zero-modes $J^a_0$ simply act by multiplication, and the action of the non-zero Heisenberg $1$-forms was calculated in Lemma \ref{l:omeg}. We get:
\begin{lem}\label{l:Qi}
For any $i \in [r]$, we have:
\begin{equation}\label{eq:Qi}
\left[ \mathcal{Q}^{(i)} \Omega'_{g,i,n}(\cdot\,; w_{[n]}) \right](z) = \sum_{\substack{Z \subseteq \mathfrak{f}(z) \\ |Z| = i \\ \{z\} \subseteq Z}} \Omega'_{g,i,n}(Z; w_{[n]}) \prod_{z'' \in \mathfrak{f}(z) \setminus Z} (\omega_{0,1}(z'') - \omega_{0,1}(z) ).
\end{equation}
\end{lem}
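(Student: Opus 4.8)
The plan is to exploit the fact that the zero Heisenberg $1$-form acts by multiplication by a scalar $1$-form, so that it is inert under conjugation by $\mathcal{Z}$ and factors cleanly through the operator $\ad_{g,n}$; once this is noticed, the definition of $\Omega'$ (Definition~\ref{d:defomega}) identifies what remains. The whole lemma is essentially the statement that the zero-modes $\mathsf{J}^a_0$ behave as constants.

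First I would record that, at self-dual level, the representation \eqref{eq:hrep} gives $\mathsf{J}^a_0 = Q_a$ (since $\alpha_0 = 0$), so each $\mathsf{J}^a_0$ is a central scalar: it commutes with every mode $\mathsf{J}^b_m$ and in particular with $\mathcal{Z}$ and $\mathcal{Z}^{-1}$. Hence, as operators, $\mathcal{J}_0(z'') = \sum_{a} \mathsf{J}^a_0\,\dd\xi^a_0(z'') = \omega_{0,1}(z'')$, and therefore $\mathcal{Z}^{-1}\mathcal{J}_0(z'')\mathcal{Z} = \omega_{0,1}(z'')$. Consequently, inside the conjugation the product $\prod_{z'' \in \mathfrak{f}(z)\setminus Z}\big(\mathcal{J}_0(z'') - \omega_{0,1}(z)\big)$ becomes the scalar $1$-form $\prod_{z'' \in \mathfrak{f}(z)\setminus Z}\big(\omega_{0,1}(z'') - \omega_{0,1}(z)\big)$, which is independent of the variables $x^a_k$ and of $\hbar$.

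Second, since $\ad_{g,n}$ is $\mathbb{C}$-linear and merely extracts the coefficient of $\hbar^g$ together with the degree-$n$ homogeneous part in the $x^a_k$, this scalar factor pulls straight out of $\ad_{g,n}$. The expression left inside is $\mathcal{Z}^{-1}\prod_{z' \in Z}\mathcal{J}^*(z')\,\mathcal{Z}$, and applying $\ad_{g,n}$ to it is, by Definition~\ref{d:defomega}, exactly $\Omega'_{g,i,n}(Z; w_{[n]})$ (the $i$ insertion points now being the $i$ elements of $Z$). Summing over the admissible subsets $Z \subseteq \mathfrak{f}(z)$ with $|Z| = i$ yields \eqref{eq:Qi}. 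The restriction $\{z\} \subseteq Z$ is automatic, as already noted after Definition~\ref{d:Qhatsd}: for any $Z$ with $z \notin Z$ the factor contains $\mathcal{J}_0(z) - \omega_{0,1}(z) = 0$ and so drops out.

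I do not expect a genuine obstacle here; the only points requiring a line of justification are that $\omega_{0,1}$ carries no $\hbar$-dependence at self-dual level (so that the scalar factor commutes with the $\hbar^g$-extraction built into $\ad_{g,n}$), and that the noncommutativity of the $\mathcal{J}^*(z')$ among themselves need not be addressed, since it is already absorbed into the definition of $\Omega'_{g,i,n}$ and computed in Lemma~\ref{l:omeg}.
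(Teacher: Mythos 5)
Your proposal is correct and matches the paper's own (very brief) justification: the paper simply notes that the zero-modes $\mathsf{J}^a_0$ act by multiplication by scalars, so that $\mathcal{J}_0(z'') = \omega_{0,1}(z'')$ commutes with $\mathcal{Z}$ and pulls out of $\ad_{g,n}$, leaving exactly $\Omega'_{g,i,n}(Z;w_{[n]})$ by Definition~\ref{d:defomega}. Your additional remarks --- that $\omega_{0,1}$ is $\hbar$-independent at self-dual level and that the restriction $\{z\}\subseteq Z$ is automatic because $\mathcal{J}_0(z)-\omega_{0,1}(z)=0$ --- are accurate and merely make explicit what the paper leaves implicit.
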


\begin{rem}\label{r:why}
We can now explain why the fact that only subsets $Z$ that include $\{z \}$ contribute to the summation is key. It implies that the operator $\left[ \mathcal{Q}^{(1)} \Omega'_{g,1,n}(\cdot\,; w_{[n]}) \right](z) $ is ``diagonal'', \emph{i.e.} it involves the evaluation of $\Omega'_{g,1,n}(z;w_{[n]})$ only at the point $z$ and not at other points in the fiber $\mathfrak{f}(z)$. This facilitates the inversion of this operator, which is necessary to obtain the spectral curve topological recursion (see the proof of Theorem~\ref{lemTRA}).
\end{rem}

We are now ready to turn the differential constraints for $\mathcal{Z}$ into a system of constraints for the correlators:

\begin{lem}\label{l:ALE}
The differential constraints $\mathsf{W}^i_m \mathcal{Z} = \hbar^{\frac{r}{2}} T \delta_{i,r} \delta_{m,1} \mathcal{Z}$, for $i \in [r]$ and $m \in \mathbb{Z}_{>0}$, satisfied by the partition function from  Theorem~\ref{Whit1prop} at self-dual level, imply that, for any  $g \in \frac{1}{2} \mathbb{Z}_{\geq 0}$, $n \in \mathbb{Z}_{\geq 0}$, and $ 2g-2+n \geq 0$, the correlators satisfy the system of equations
\begin{equation}\label{eq:TPALE}
\sum_{i=1}^r \left[ \mathcal{Q}^{(i)} \Omega'_{g,i,n}(\cdot\,; w_{[n]}) \right](z)  - \delta_{g,\frac{r}{2}} \delta_{n,0} \frac{T}{\zeta^{r+1}} (\dd \zeta)^r = O\left( \frac{\dd\zeta}{\zeta} \right)^r \qquad \text{as $\zeta = x(z) \to 0$,}
\end{equation}
with the left-hand-side evaluated in \eqref{eq:Qi}. We call this system of constraints the \emph{abstract loop equations}.
\end{lem}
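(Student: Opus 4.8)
The strategy is to encode the differential constraints $\mathsf{W}^i_m \mathcal{Z} = \hbar^{\frac{r}{2}} T \delta_{i,r}\delta_{m,1}\mathcal{Z}$ into a single generating-series identity by assembling the fields $\mathsf{W}^i(z)$ (acting on $\mathcal{Z}$) rather than the individual modes, then interpret the result geometrically on $C$. The starting point is the Miura transform \eqref{Miura}, which at self-dual level $\kappa = 1$ (so $\alpha_0 = 0$) collapses to $\mathsf{W}^i(z) = e_i(\mathsf{J}^1(z),\ldots,\mathsf{J}^r(z))$. Consequently the generating function $\sum_{i=0}^r \mathsf{W}^i(\zeta)\,t^{r-i}$ equals $\prod_{a=1}^r \bigl(t + \mathsf{J}^a(\zeta)\bigr)$, and expanding this product in $t$ converts the collection of constraints $\{\mathsf{W}^i_m \mathcal{Z}\}_{i\in[r],m>0}$ into statements about the action of symmetric functions of the Heisenberg fields on $\mathcal{Z}$. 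The natural way to package the latter is via the Heisenberg $1$-form $\mathcal{J}(z)$ of Definition~\ref{d:H1f} and its splitting $\mathcal{J}(z) = \mathcal{J}^*(z) + \omega_{0,1}(z)$ — so that a product $\prod_{z'\in\mathfrak{f}(z)}\mathcal{J}(z')$ over the full fiber expands, by the binomial-type distribution of the $\mathcal{J}^*$ and $\mathcal{J}_0$ contributions, precisely into the sum over subsets $Z\subseteq\mathfrak{f}(z)$ appearing in Definition~\ref{d:Qhatsd}.

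\textbf{Key steps.} First I would show that, on the spectral curve, $\prod_{z'\in\mathfrak{f}(z)}\mathcal{J}(z')$ — where all $r$ preimages of $\zeta = x(z)$ are used — is the $1$-form version of $\prod_{a=1}^r \mathsf{J}^a(\zeta)\,(\dd\zeta)^r = \mathsf{W}^r(\zeta)(\dd\zeta)^r$, and more generally that the elementary symmetric combinations reproduce all the $\mathsf{W}^i(\zeta)$; this is just the definition of $e_i$ together with the identification $\mathcal{J}(\begin{smallmatrix} a\\ \zeta\end{smallmatrix}) = \mathsf{J}^a(\zeta)\,\dd\zeta$. Second, I would apply $\mathcal{Z}^{-1}(\cdot)\mathcal{Z}$ and the operator $\mathrm{ad}_{g,n}$, using the crucial algebraic fact that $\mathcal{J}_0(z) = \omega_{0,1}(z)$ acts by multiplication, so that inserting the combination $\mathcal{J}_0(z'') - \omega_{0,1}(z)$ at a preimage $z''$ either vanishes (when $z'' = z$) or produces the scalar factor $\omega_{0,1}(z'') - \omega_{0,1}(z)$; expanding $\prod_{z'\in\mathfrak{f}(z)}\bigl(\mathcal{J}^*(z') + \mathcal{J}_0(z')\bigr)$ and grouping by the subset $Z$ of preimages carrying $\mathcal{J}^*$ gives exactly $\sum_{i=1}^r [\mathcal{Q}^{(i)}\Omega'_{g,i,n}(\cdot\,;w_{[n]})](z)$ — note the $i=0$ term is absent because $Z = \emptyset$ would force a factor $\mathcal{J}_0(z) - \omega_{0,1}(z) = 0$, and the constant ($\hbar$-independent, $n=0$) contribution of $\mathcal{Z}^{-1}\mathsf{W}^r\mathcal{Z}$ other than the shift is killed by the $\mathrm{ad}$-projection. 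Third, I would read off the constraint: $\mathsf{W}^i_m\mathcal{Z} = 0$ for all $m>0$ means the field $\sum_m (\mathsf{W}^i_m\mathcal{Z})\zeta^{-m-i}$ has, after the $\mathrm{ad}_{g,n}$ procedure, no terms with $\zeta^{-m-i}$ for $m\geq 1$, i.e. all poles of order $>i$ in $\zeta$ cancel; summing over $i\in[r]$ and multiplying by $(\dd\zeta)^r$, the total expression $\sum_i [\mathcal{Q}^{(i)}\Omega'_{g,i,n}](z)$ is $O\bigl((\dd\zeta/\zeta)^r\bigr)$, i.e. has a pole of order at most $r$ — except for the single shifted mode $\mathsf{W}^r_1\mathcal{Z} = \hbar^{r/2}T\mathcal{Z}$, which contributes, upon extracting $[\hbar^g]$ and the $n=0$ part, the term $\delta_{g,r/2}\delta_{n,0}\,T\zeta^{-r-1}(\dd\zeta)^r$ on the left. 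Rearranging gives \eqref{eq:TPALE}.

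\textbf{Main obstacle.} The conceptual content is light, but the bookkeeping is where care is needed: one must verify that applying $\mathrm{ad}_{g,n}$ to $\mathcal{Z}^{-1}\prod_{z'}\mathcal{J}(z')\mathcal{Z}$ genuinely distributes over the subset-expansion as claimed, i.e. that conjugation by $\mathcal{Z}$ of a product of fields, followed by extraction of the homogeneous-degree-$n$, order-$\hbar^g$ part, commutes correctly with splitting off the zero-mode pieces — this rests on Lemma~\ref{l:omeg} (identifying $\Omega'_{g,i,n}$ with sums of products of $\omega_{g_L,\cdot}$, no $\omega_{0,1}$) and on the multiplicativity of the $\mathcal{J}_0$ action, and the only subtlety is tracking which $\omega_{0,1}$-containing terms have already been removed by passing to $\mathcal{J}^*$ versus which survive as the scalar prefactors $\omega_{0,1}(z'')-\omega_{0,1}(z)$. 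I expect the cleanest route is to cite the analogous combinatorial lemmas of \cite{BBCCN18} and \cite{BKS} for the distribution identity and then simply match indices; the genuinely new (though routine) check is the precise location of the inhomogeneous term, namely that the shift $-\hbar^{r/2}T$ of $\mathsf{W}^r_1$ lands exactly at $g = r/2$, $n=0$ with coefficient $T\zeta^{-r-1}(\dd\zeta)^r$, which follows from $\mathsf{W}^r(\zeta) = \sum_m \mathsf{W}^r_m \zeta^{-m-r}$ so that the $m=1$ mode sits at $\zeta^{-r-1}$.
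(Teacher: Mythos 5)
Your proposal is correct and follows essentially the same route as the paper: rewrite the Miura transform at self-dual level as a product over the fiber with the formal variable specialized to $-y(z)$ (equivalently $-\omega_{0,1}(z)/\dd\zeta$), recast the constraints as the pole-order statement $\mathcal{Z}^{-1}\mathsf{W}^i(\zeta)\mathcal{Z} - \hbar^{\frac{r}{2}}T\delta_{i,r}\zeta^{-r-1} = O(\zeta^{-i})$, and then expand by the subset of preimages carrying $\mathcal{J}^*$ and apply $\ad_{g,n}$ to land on \eqref{eq:TPALE}. The one point to state explicitly (which you leave implicit) is that each $\mathsf{W}^i(\zeta)$ enters weighted by $\omega_{0,1}(z)^{r-i}(\dd\zeta)^i$, which is exactly what makes the error uniform of order $(\dd\zeta/\zeta)^r$ and what produces the $-\omega_{0,1}(z)$ shifts in Definition~\ref{d:Qhatsd}.
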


\begin{proof}
Our starting point is the quantum Miura transform for the strong generators of the $\mathcal{W}^{\mathsf{k}}(\mathfrak{gl}_r)$-algebra, \eqref{Miura}:
\[
  	 \sum_{i=0}^{r}  \mathsf{W}^{i}(\zeta) \nabla_\zeta^{r - i}  = \big( \mathsf{J}^1(\zeta)+\nabla_\zeta \big) \big( \mathsf{J}^2(\zeta)+\nabla_\zeta \big) \,\cdots\, \big( \mathsf{J}^r(\zeta)+ \nabla_\zeta \big), 
\]
where we defined $\nabla_\zeta = \hbar^{\frac{1}{2}} \alpha_0 \partial_\zeta$ for convenience. At self-dual level, this simplifies since we set $\alpha_0$ to zero after commuting all derivatives to the right. In other words, at self-dual level we can think of $\nabla_\zeta$ as a formal variable that commutes with everything; the result is the expression for the $\mathsf{W}^i(\zeta)$ in terms of elementary symmetric polynomials \eqref{Wcritlev}. In fact, at self-dual level we can replace $\nabla_\zeta$ by any commuting object in the equation above. We choose to set $\nabla_\zeta \longrightarrow - y(z)$, where $y(z)$ was defined in \eqref{eq:defy}. We get:
\[
 \sum_{i=0}^{r} (-1)^{r-i} \mathsf{W}^{i}(\zeta) y(z)^{r - i}  = \big( \mathsf{J}^1(\zeta)- y(z) \big) \big( \mathsf{J}^2(\zeta)- y(z) \big) \,\cdots\, \big( \mathsf{J}^r(\zeta)- y(z) \big)\,.
\]
We multiply both sides by $(\dd \zeta)^r$ to turns this equation into a differential on $C$, using the definition of the Heisenberg $1$-forms:
\begin{equation}\label{eq:step}
 \sum_{i=0}^{r} (-1)^{r-i} \mathsf{W}^{i}(\zeta)\, \omega_{0,1}(z)^{r - i}  (\dd \zeta)^i = \prod_{z' \in \mathfrak{f}(z)} \big( \mathcal{J}^*(z') + \mathcal{J}_0(z')- \omega_{0,1}(z) \big)\,,
\end{equation}
where on the right-hand-side we separated the zero-modes.

The partition function $\mathcal{Z}$ satisfies the constraints $\mathsf{W}^i_m \mathcal{Z} = \hbar^{\frac{r}{2}} T \delta_{i,r} \delta_{m,1} \mathcal{Z}$ for $i \in [r]$ and $m \in \mathbb{Z}_{>0}$. Those can be recast as
\begin{equation}\label{eq:ZZconst}
\mathcal{Z}^{-1} \mathsf{W}^i(\zeta) \mathcal{Z} - \hbar^{\frac{r}{2}}T  \delta_{i,r} \zeta^{-r-1} = O\left(\zeta^{-i}\right) \qquad \text{as $\zeta \to 0$}.
\end{equation}
Since $\omega_{0,1}(z) = O\left( \frac{\dd \zeta}{\zeta} \right)$ as $\zeta \to 0$, we obtain that
\[
\mathcal{Z}^{-1}\left( \sum_{i=0}^{r} (-1)^{r-i} \mathsf{W}^{i}(\zeta) \omega_{0,1}(z)^{r - i}  (\dd \zeta)^i \right) \mathcal{Z} - \hbar^{\frac{r}{2}}T  \frac{(\dd \zeta)^r}{\zeta^{r+1}} = O\left( \frac{\dd \zeta}{\zeta}\right)^r \qquad \text{as $\zeta \to 0$.}
\]
By \eqref{eq:step}, this means that
\[
\mathcal{Z}^{-1}\, \prod_{z' \in \mathfrak{f}(z)} \big( \mathcal{J}^*(z') + \mathcal{J}_0(z')- \omega_{0,1}(z) \big) \,\mathcal{Z} - \hbar^{\frac{r}{2}}T  \frac{(\dd \zeta)^r}{\zeta^{r+1}}  = O\left( \frac{\dd \zeta}{\zeta}\right)^r \qquad \text{as $\zeta \to 0$.}
\]
Expanding the left-hand-side in terms with $i$ contributions of the non-zero Heisenberg modes, taking  $\ad_{g,n}$, and using Definition \ref{d:Qhatsd}, we obtain \eqref{eq:TPALE}.
\end{proof}

\subsubsection{Spectral curve topological recursion}

We now solve the abstract loop equations in Lemma \ref{l:ALE} recursively, leading to the spectral curve topological recursion.

\begin{thm}
\label{lemTRA} For $2g - 2 + n \geq 0$, the correlators associated to the partition function of Theorem~\ref{Whit1prop} at self-dual level satisfy the topological recursion:
\begin{equation}
\label{TRA}\begin{split}
\omega_{g,n + 1}(z_0,z_{[n]}) & = \sum_{o \in \mathfrak{a}} \Res_{z = o} \left(\sum_{i=2}^r K(z_0,z) \left[ \mathcal{Q}^{(i)} \Omega'_{g,i,n}(\cdot\,; z_{[n]}) \right](z) \right) \\
& \quad + T\delta_{g,\frac{r}{2}}\delta_{n,0} \sum_{a = 1}^r \frac{\dd \xi_{-1}^a(z_0)}{\prod_{b \neq a} (Q_b - Q_a)},
\end{split}
\end{equation}
with the recursion kernel
\begin{equation}
\label{OKOK} 
K(z_0,z) := \frac{-\int_o^{z} \omega_{0,2}(z_0,\cdot)}{\prod_{z' \in \mathfrak{f}'(z)}\left( \omega_{0,1}(z') - \omega_{0,1}(z) \right)},
\end{equation}
where  $o$ is the point in $\mathfrak{a}$ in the same component of $C$ as $z$, and we recall that $\mathfrak{f}'(z) = \mathfrak{f}(z) \setminus \{ z \}$.
\end{thm}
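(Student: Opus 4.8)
The plan is to derive the topological recursion \eqref{TRA} by solving the abstract loop equations of Lemma~\ref{l:ALE} for the single unknown $\omega_{g,n+1}(z_0,z_{[n]})$, exploiting the ``diagonal'' structure pointed out in Remark~\ref{r:why}. First I would isolate the $i=1$ term in the sum $\sum_{i=1}^r [\mathcal{Q}^{(i)}\Omega'_{g,i,n}(\cdot\,;w_{[n]})](z)$. By Lemma~\ref{l:Qi} and Remark~\ref{r:why} the operator $\mathcal{Q}^{(1)}$ is diagonal: the only admissible subset $Z\subseteq\mathfrak{f}(z)$ of cardinality $1$ containing $z$ is $Z=\{z\}$, so
\[
\big[\mathcal{Q}^{(1)}\Omega'_{g,1,n}(\cdot\,;w_{[n]})\big](z) = \Omega'_{g,1,n}(z;w_{[n]})\prod_{z'\in\mathfrak{f}'(z)}\big(\omega_{0,1}(z')-\omega_{0,1}(z)\big) = \omega_{g,n+1}(z,w_{[n]})\prod_{z'\in\mathfrak{f}'(z)}\big(\omega_{0,1}(z')-\omega_{0,1}(z)\big),
\]
using $\Omega'_{g,1,n}=\omega_{g,n+1}$ from Remark~\ref{r:recursive}. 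Thus the abstract loop equation \eqref{eq:TPALE} can be rewritten as an equation of the schematic form $\omega_{g,n+1}(z,w_{[n]})\cdot P(z) = R_{g,n}(z;w_{[n]}) + O((\dd\zeta/\zeta)^r)$, where $P(z)=\prod_{z'\in\mathfrak{f}'(z)}(\omega_{0,1}(z')-\omega_{0,1}(z))$ and $R_{g,n}$ collects the $i\geq 2$ terms $-\sum_{i=2}^r[\mathcal{Q}^{(i)}\Omega'_{g,i,n}](z)$ together with the source term $\delta_{g,\frac r2}\delta_{n,0}\,T\zeta^{-r-1}(\dd\zeta)^r$. Note that $P(z)$ has a pole of order exactly $r-1$ at each point of $\mathfrak{a}$ (since each $\omega_{0,1}(z')-\omega_{0,1}(z)$ is $O(\dd\zeta/\zeta)$), which matches the order of the error term, so the equation does pin down $\omega_{g,n+1}$ up to a part regular at $\mathfrak{a}$.

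Next I would invert this relation by a residue/Cauchy-kernel argument, exactly as in the standard passage from abstract loop equations to topological recursion (e.g.\ \cite{BKS,BBCCN18,BEO}). Since $\omega_{g,n+1}(\cdot,w_{[n]})$ is, in its first variable, a meromorphic differential on $C$ with poles only along $\mathfrak{a}$ (and of the constrained type), it is reconstructed from its principal parts by the Cauchy kernel $\int_o^{z}\omega_{0,2}(z_0,\cdot)$: one has $\omega_{g,n+1}(z_0,w_{[n]}) = \sum_{o\in\mathfrak{a}}\Res_{z=o}\big(\int_o^{z}\omega_{0,2}(z_0,\cdot)\big)\omega_{g,n+1}(z,w_{[n]})$ for $2g-2+n\ge 0$ (here one uses that the case $(g,n)=(0,1)$ is excluded from the left-hand side of the recursion, and $\omega_{0,2}$ supplies the kernel). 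Dividing the loop equation by $P(z)$ and substituting gives, up to the sign conventions in \eqref{OKOK},
\[
\omega_{g,n+1}(z_0,z_{[n]}) = \sum_{o\in\mathfrak{a}}\Res_{z=o}\Big(\sum_{i=2}^r K(z_0,z)\big[\mathcal{Q}^{(i)}\Omega'_{g,i,n}(\cdot\,;w_{[n]})\big](z)\Big) + (\text{residue of the source term}),
\]
where $K(z_0,z)=\big(-\int_o^z\omega_{0,2}(z_0,\cdot)\big)\big/\prod_{z'\in\mathfrak{f}'(z)}(\omega_{0,1}(z')-\omega_{0,1}(z))$. The contributions of the $O((\dd\zeta/\zeta)^r)$ error and of any piece of $\omega_{g,n+1}$ regular at $\mathfrak{a}$ drop out because $K(z_0,z)$ has a simple zero at $z=o$ (it is $O(\dd\zeta)$ there) while $R_{g,n}/P$ has at worst the allowed pole order; one checks the residue of $K(z_0,z)\cdot O((\dd\zeta/\zeta)^r)/P(z)$ vanishes since $K/P$-times-error is $O(\dd\zeta)\cdot O(\zeta^{-r})\cdot O(\zeta^{r-1})=O(\dd\zeta/\zeta^0)$ regular — this is the routine degree bookkeeping I would spell out carefully.

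For the source term, I would compute $\sum_{o\in\mathfrak{a}}\Res_{z=o}\big(K(z_0,z)\cdot\delta_{g,\frac r2}\delta_{n,0}(-1)\cdot(-\,T\zeta^{-r-1}(\dd\zeta)^r)\big)$ directly on each formal disk: on the component $C^a$ one has $\prod_{z'\in\mathfrak{f}'(z)}(\omega_{0,1}(z')-\omega_{0,1}(z)) = \prod_{b\ne a}\big(\tfrac{Q_b}{\zeta}-\tfrac{Q_a}{\zeta}\big)(\dd\zeta)^{r-1} = \zeta^{-(r-1)}(\dd\zeta)^{r-1}\prod_{b\ne a}(Q_b-Q_a)$, and $-\int_o^z\omega_{0,2}(z_0,\cdot) = -\sum_{k\ge 1}k\,\dd\xi^a_{-k}(z_0)\,\tfrac{\zeta^k}{k}$ by \eqref{eq:exp02}, so extracting the residue at $\zeta=0$ yields exactly $T\,\delta_{g,\frac r2}\delta_{n,0}\sum_{a=1}^r \dd\xi^a_{-1}(z_0)/\prod_{b\ne a}(Q_b-Q_a)$, matching the last line of \eqref{TRA}. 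I would double-check that this is consistent with condition (3) of Lemma~\ref{lem:prophbar}, i.e.\ $F_{r/2,1}[\begin{smallmatrix}a\\1\end{smallmatrix}] = T/\prod_{b\ne a}(Q_b-Q_a)$, which is precisely the statement that $\omega_{r/2,1}$ equals this source term — a reassuring cross-check. The main obstacle I anticipate is not any single computation but the careful justification of the two analytic facts underpinning the inversion: (a) that $\omega_{g,n+1}$, as a differential in $z$ on the formal disk $C$, genuinely has no poles outside $\mathfrak{a}$ and that its expansion coefficients are finite (this follows from Lemma~\ref{lem:prophbar} / Remark~\ref{refinite} and the definition \eqref{cordefTR}, but must be invoked cleanly); and (b) that the $O((\dd\zeta/\zeta)^r)$ remainder in the abstract loop equations really does contribute nothing after multiplication by $K$ and taking residues, which requires matching the pole order $r-1$ of $P(z)$ against the vanishing order of $K(z_0,z)$ at the ramification points and the regularity of $R_{g,n}/P$ — the degree/pole-order bookkeeping is where an error could hide, so I would present it explicitly rather than by analogy.
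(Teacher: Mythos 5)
Your proposal is correct and follows essentially the same route as the paper's proof: isolate the $i=1$ term using the diagonal form of $\mathcal{Q}^{(1)}$ from Lemma~\ref{l:Qi}, invert it with the Cauchy kernel built from $\omega_{0,2}$, and check that the $O\left(\frac{\dd\zeta}{\zeta}\right)^r$ remainder and the source term are handled by the vanishing of $K$ at $\mathfrak{a}$ and a direct residue computation. The only imprecision is your parenthetical claim that $K(z_0,z)$ has a \emph{simple} zero at $z=o$: the numerator $\int_o^z\omega_{0,2}(z_0,\cdot)$ contributes $O(\zeta)$ and the reciprocal of the denominator contributes $O(\zeta^{r-1})$, so $K$ vanishes to order $r$ there (as the paper states), which is exactly what your subsequent order bookkeeping $O(\zeta)\cdot O(\zeta^{-r})\cdot O(\zeta^{r-1})=O(1)$ actually uses.
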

Note that $K(z_0,z)$ is defined for all $z \in C$, but the formula defining it depends on the connected component to which $z$ belongs; namely, we must choose the base point $o$ to be the unique point of $\mathfrak{a}$ belonging to the same component as $z$.
\begin{proof}
By Lemma \ref{l:ALE}, we know that the correlators satisfy the abstract loop equations
\begin{equation}\label{eq:ll}
\sum_{i=1}^r \left[ \mathcal{Q}^{(i)} \Omega'_{g,i,n}(\cdot\,; w_{[n]}) \right](z)  - \delta_{g,\frac{r}{2}} \delta_{n,0} \frac{T}{\zeta^{r+1}} (\dd \zeta)^r = O\left( \frac{\dd\zeta}{\zeta} \right)^r \qquad \text{as $\zeta \to 0$.}
\end{equation}
We extract from the left-hand-side the term with $i=1$:
\[
 \left[ \mathcal{Q}^{(1)} \Omega'_{g,1,n}(\cdot\,; w_{[n]}) \right](z).
\]
Since $\Omega'_{g,1,n}(z; w_{[n]}) = \omega_{g,n+1}(z,w_{[n]})$, to solve the loop equations we simply need to invert the operator $\mathcal{Q}^{(1)}$. This can be done easily using residue analysis. From Lemma \ref{l:Qi}, we know that
\[
\left[ \mathcal{Q}^{(1)} \Omega'_{g,i,n}(\cdot\,; w_{[n]}) \right](z) = \Omega'_{g,1,n}(z; w_{[n]}) \prod_{z' \in \mathfrak{f}'(z)} (\omega_{0,1}(z') - \omega_{0,1}(z) ),
\]
where $\mathfrak{f}'(z) = \mathfrak{f}(z) \setminus \{ z \}$. All we have to do to invert this operator is divide by the product term and use Cauchy formula to obtain:
\begin{equation}\label{eq:inversion}
 \Omega'_{g,1,n}(z_0; w_{[n]}) =  -\sum_{o \in \mathfrak{a}} \Res_{z = o} K(z_0;z) \left[ \mathcal{Q}^{(1)} \Omega'_{g,i,n}(\cdot\,; w_{[n]}) \right](z) ,
\end{equation}
with $K(z_0;z)$ defined in \eqref{OKOK}. The application of Cauchy formula relies on the fact that $\omega_{0,2}(z,z_0)$ has a pole on the diagonal $z = z_0$ only, so $\int_{o}^{z} \omega_{0,2}(\cdot,z_0) \sim \frac{\dd \zeta_0}{\zeta_0 - \zeta}$ when $\zeta \rightarrow \zeta_0$, \emph{i.e.} it has a simple pole. 

We then substitute back in the abstract loop equations \eqref{eq:ll}. Noting that $K(z_0,z)$ has a zero of degree $r$ at $ o \in \mathfrak{a}$, the residue of $K(z_0,z)$ times the right-hand-side of \eqref{eq:ll} at $o \in \mathfrak{a}$ vanishes. We thus obtain:
\begin{equation}
\label{eq530} \Omega'_{g,1,n}(z_0; w_{[n]})  = \sum_{o \in \mathfrak{a}} \Res_{z = o} K(z_0,z) \left( \sum_{i=2}^r \left[ \mathcal{Q}^{(i)} \Omega'_{g,i,n}(\cdot\,; w_{[n]}) \right](z)  -  \delta_{g,\frac{r}{2}} \delta_{n,0} \frac{T}{\zeta^{r+1}} (\dd \zeta)^r \right).
\end{equation}
We evaluate the term proportional to $T$:
\[
\begin{split}
- \sum_{o \in \mathfrak{a}} \Res_{z = o} K(z_0,z) \frac{T}{\zeta^{r+1}} (\dd \zeta)^r & = \sum_{a=1}^r \Res_{\zeta = 0} \frac{\zeta \dd \xi^a_{1}(z_0)}{\zeta_0 (\zeta_0 - \zeta)} \frac{T}{\prod_{b \neq a}(Q_b - Q_a)} \frac{\zeta^{r-1}( \dd \zeta)^r}{\zeta^{r+1} (\dd \zeta)^{r-1}} \\
 & = T \sum_{a=1}^r \frac{ \dd \xi^a_{-1}(z_0).}{\prod_{b \neq a}(Q_b - Q_a)}
\end{split}
\]
to conclude the proof.
\end{proof}

\begin{rem}
	We remark that, if $u(z_1,\ldots,z_i)$ is a $1$-form with respect to each of its $i$ variables, we have
	$$
	K(z_0,z) \big[\mathcal{Q}^{(i)} u\big](z)  = \sum_{\substack{\{z\} \subset Z \subseteq \mathfrak{f}(z) \\ |Z| = i}} \frac{-\int_o^z \omega_{0,2}(\cdot,z_0)}{\prod_{z' \in \mathfrak{f}(z) \setminus Z} (\omega_{0,1}(z') - \omega_{0,1}(z))}\,u(Z).
	$$
	This property can be used to rewrite the topological recursion in the form presented in \cite{BKS}.
	\end{rem}

\begin{rem}
Remark \ref{r:recursive} ensures that \eqref{TRA} is a recursion on $2g-2+n > 0$, while $\omega_{0,1}$ and $\omega_{0,2}$ can be thought as initial data. The fact that we started from an Airy structure guarantees the symmetry of the $\omega_{g,n}$ produced by this recursion, despite the seemingly special role of the variable $z_0$ in the recursive formula. 
\end{rem}

\begin{rem}
	Using the properties for the $ F_{g,n} $ proved in Lemma~\ref{lem:prophbar}, it is straightforward to bound the order of poles in the correlators $ \omega_{g,n} $. We leave this as an exercise for  the interested reader.
\end{rem}

\subsection{General dilaton shift and change of polarization}
\label{Generald}
From a given Airy structure $(\mathsf{H}_m^a)_{a,m}$, we can obtain a class of new ones $(\tilde{\mathsf{H}}_m^a)_{a,m}$, based on an isomorphic algebra of modes, by  changes of polarization and (when it is well-defined) dilaton shifts \cite{BBCCN18} (see also \cite{KO10} for earlier work). Such an Airy structure takes the form $(\tilde{\mathsf{H}}_m^a)_{a,m}=(\hat{\mathcal{U}}\mathsf{H}_m^a \hat{\mathcal{U}}^{-1})_{a,m}$ and the new partition function is $\tilde{\mathcal{Z}} =\hat{\mathcal{U}}\mathcal{Z}$, where
$$  
\hat{\mathcal{U}} = \exp\left(\sum_{\substack{a \in [r] \\ k \in \mathbb{Z}_{>0}}} \big(F_{0,1}\big[\begin{smallmatrix} a \\ - k \end{smallmatrix}\big] + \hbar^{\frac{1}{2}}\,F_{\frac{1}{2},1}\big[\begin{smallmatrix} a \\ -k \end{smallmatrix}\big]\big)\,\,\frac{\mathsf{J}_k^a}{\hbar\,k} + \sum_{\substack{a,b \in [r] \\ k,l \in \mathbb{Z}_{> 0}}} F_{0,2}\big[\begin{smallmatrix} a & b \\ -k & -l \end{smallmatrix}\big]\,\frac{\mathsf{J}_{k}^{a}\mathsf{J}_l^b}{2\hbar\,kl}\right).
$$ 
When we act on the Airy structure of Theorem~\ref{Whit1prop}, this action on the partition function is well-defined for arbitrary choices of scalars $F_{0,2}\big[\begin{smallmatrix} a & b \\ -k & -l \end{smallmatrix}\big] = F_{0,2}\big[\begin{smallmatrix} b & a \\ -l & -k \end{smallmatrix}\big]$, $F_{0,1}\big[\begin{smallmatrix} a \\ -k \end{smallmatrix}\big]$ and $F_{\frac{1}{2},1}\big[\begin{smallmatrix} a \\ -k \end{smallmatrix}\big]$.

The effect of $F_{0,2}$ is called a change of polarization and expresses the new $F_{g,n}$s as a sum over stable graphs with vertex weight given by the old $F_{g,n}$s, while the effect of $F_{0,1}$ and $F_{\frac{1}{2},1}$ is called a dilaton shift, as it indeed amounts to shifting 
$$ 
x_{k}^a \rightarrow x_k^a + \frac{1}{k}\big(F_{0,1}\big[\begin{smallmatrix} a \\ -k \end{smallmatrix}\big] + \hbar^{\frac{1}{2}} F_{\frac{1}{2},1}\big[\begin{smallmatrix} a \\ -k \end{smallmatrix}\big]\big).
$$

\subsubsection{Spectral curve description}

We can also provide a spectral curve formulation of topological recursion corresponding to this more general class of Airy structures. 

\begin{itemize}
\item The underlying  curve itself does not change: it is still $C = \bigsqcup_{a = 1}^r C^a$ with $C^a = {\rm Spec}\,\mathbb{C}[\![\zeta]\!]$ equipped with the forgetful (unramified) map $x\,:\,C \rightarrow C_0 :=  {\rm Spec}\,\mathbb{C}[\![\zeta]\!]$ simply given by $x\big(\begin{smallmatrix} a \\ \zeta \end{smallmatrix}) = \zeta$. 
\item The dilaton shift amounts to modifying the $1$-form $\omega_{0,1}$ as:
\begin{equation}\label{eq:w01shift}
\omega_{0,1} \big (\begin{smallmatrix} b\\ \zeta \end{smallmatrix} \big) = Q_b \dd \xi^*_{0}  \big (\begin{smallmatrix} b\\ \zeta \end{smallmatrix} \big) + \sum_{\substack{a \in [r] \\ k \in \mathbb{Z}_{> 0} }} F_{0,1} \big[ \begin{smallmatrix} a \\ -k \end{smallmatrix} \big] \dd \xi_{k}^a \big( \begin{smallmatrix} b \\ \zeta \end{smallmatrix} \big),
\end{equation}
and the $1$-form $\omega_{\frac{1}{2},1}$ as:
\begin{equation}\label{eq:w12shift}
\omega_{\frac{1}{2},1}\big(\begin{smallmatrix} b \\ \zeta \end{smallmatrix}\big)  = \widetilde{Q}_a  \dd \xi^*_{0}  \big (\begin{smallmatrix} b\\ \zeta \end{smallmatrix} \big) + \sum_{\substack{a \in [r] \\  k \in \mathbb{Z}_{>0}}} F_{\frac{1}{2},1}\big[\begin{smallmatrix} a \\ -k \end{smallmatrix}\big]  \dd \xi_k^a\big( \begin{smallmatrix} b \\ \zeta \end{smallmatrix} \big), 
\end{equation}
with $\widetilde{Q}_a =-  \alpha_0 =  \kappa^{\frac{1}{2}} - \kappa^{-\frac{1}{2}}= 0$  for all $a \in [r]$. In fact, we still get an Airy structure if we choose $(\widetilde{Q}_a)_{a = 1}^r$  to be arbitrary complex numbers;  clearly,   the proof of Theorem~\ref{Whit1prop} is unchanged if we choose the  zero-modes  to be $\mathsf{J}_0^a = Q_a + \hbar^{\frac{1}{2}}\widetilde{Q}_a$ instead of \eqref{eq:hrep}.
\item The change of polarization modifies the standard bidifferential  $\omega_{0,2}$ of \eqref{def51} to the most general  symmetric bidifferential on the curve $C \times C$ whose only singularity is a double pole on the diagonal with biresidue $1$:
\begin{equation}\label{eq:genw02}
\omega_{0,2}\big(\begin{smallmatrix} b_1 & b_2 \\ \zeta_1 & \zeta_2 \end{smallmatrix}\big) = \frac{\delta_{b_1,b_2}\,\dd \zeta_1 \dd \zeta_2}{(\zeta_1 - \zeta_2)^2} + \sum_{\substack{a_1, a_2 \in [r] \\ k_1, k_2 \in \mathbb{Z}_{>0}}} F_{0,2}\big[\begin{smallmatrix} a_1 & a_2 \\ - k_1 & - k_2 \end{smallmatrix}\big]\, \dd \xi^{a_1}_{k_1} \big(\begin{smallmatrix} b_1  \\ \zeta_1 \end{smallmatrix}\big)  \dd \xi^{a_2}_{k_2} \big(\begin{smallmatrix}  b_2 \\ \zeta_2 \end{smallmatrix}\big) .
\end{equation}
The change of polarization also affects the basis of differentials. For any $a \in [r]$, for $k \geq 0$ the $\dd \xi^a_k \big(\begin{smallmatrix} b \\ \zeta \end{smallmatrix}) = \delta_{a,b} \zeta^{k - 1}\dd \zeta$ stay the same, but the remaining differentials become:
\begin{equation}\label{eq:newbasis}
\dd \xi_{-k}^a\big(\begin{smallmatrix} b \\ \zeta \end{smallmatrix}) = \frac{\delta_{a,b}\dd \zeta}{\zeta^{k + 1}} + \sum_{l \in \mathbb{Z}_{>0}} \frac{F_{0,2}\big[\begin{smallmatrix} a & b \\ -k & -l \end{smallmatrix}\big]}{k}\,\zeta^{l - 1}\dd \zeta, \qquad k > 0.
\end{equation}
The basis of differentials is modified such that
\begin{equation}
\label{526}\omega_{0,2}(z_1,z_2) \,\,\,\mathop{=}_{|x(z_1)| > |x(z_2)|} \,\,\,\sum_{\substack{a \in [r] \\ k \in \mathbb{Z}_{>0}}} k \dd \xi_{-k}^a(z_1)\,\dd \xi_{k}^a(z_2),
\end{equation}
In fact, the choice of $\omega_{0,2}$ is equivalent to a choice  of polarization for the symplectic vector space
$$ 
\bigcap_{o \in \mathfrak{a}} {\rm Ker}\big[\mathop{{\rm Res}}_{o}\,:\, H^0(C,K_C(*\mathfrak{a})) \longrightarrow \mathbb{C}\big]	 \cong \bigoplus_{a = 1}^r	\big\{ \varpi \in   \mathbb C(\!(\zeta)\!) \dd \zeta\,\, \big|\,\, \Res_{\zeta = 0} \varpi = 0  \big\}.
$$
equipped with the symplectic form 
$$
(\varpi_1,\varpi_2) = \sum_{o \in \mathfrak{a}} \mathop{{\rm Res}}_{z = o} \bigg(\int^{z} \varpi_1(\cdot) \bigg) \varpi_2(z).
$$
See \cite[Section 5.1]{BBCCN18} for more details. 
\end{itemize}

At the self-dual level $\kappa = 1$,  the constraints $\hat{\mathcal{U}}(\mathsf{W}_m^i - T\hbar^{\frac{r}{2}}\delta_{m,1}\delta_{i,r})\hat{\mathcal{U}}^{-1} \tilde{\mathcal{Z}}= 0$ for $a \in [r]$ and $m \in \mathbb{Z}_{> 0}$ can be converted into a spectral curve topological recursion just as in Theorem \ref{lemTRA}, following along the same steps with slightly modified definitions.  The end result is that the correlators
$$
\omega_{g,n}(z_1,\ldots,z_n) = \sum_{\substack{a_1,\ldots,a_n \in [r] \\ k_1,\ldots,k_n > 0}} F_{g,n}\big[\begin{smallmatrix} a_1 & \cdots & a_n \\ k_1 & \cdots & k_n \end{smallmatrix}\big] \prod_{i = 1}^n \dd \xi_{-k_i}^{a_i}(z_i),\qquad 2g - 2 + n > 0,
$$ 
from the partition function $\tilde{\mathcal{Z}}$
can be calculated using the formula~\eqref{TRA}, with the spectral curve $(C,x,\omega_{0,1},\omega_{\frac{1}{2},1},\omega_{0,2})$ described above in \eqref{eq:w01shift}, \eqref{eq:w12shift}, and \eqref{eq:genw02}. Again, the Airy structure property  guarantees that the $\omega_{g,n}$ produced by this topological recursion are symmetric in their $n$ variables.

\begin{rem}
As the forgetful map $x: C \to C_0$ in the spectral curve considered here is unramified, the topological recursion that we obtain is different from the original topological recursion of Chekhov--Eynard--Orantin and its higher generalizations, which are based on ramified coverings. In fact, for $r=2$ what we obtain is the topological recursion without branched covers found in \cite[Section 10]{ABCD}; the present work therefore gives it a gauge-theoretic interpretation. From the point of view of Airy structures, the fact that we consider unramified coverings is translated into the statement that our Airy structures were constructed from untwisted modules of the Heisenberg algebra --- see Remark \ref{r:difference} and \cite{BCHORS}. 

We note as well that we could consider more general spectral curves, where the map $x: C \to C_0$ is ramified for some components and unramified for others. From the point of view of Airy structures, this would correspond to constructing Airy structures as modules of direct sums of $\mathcal{W}^{\mathsf{k}}(\mathfrak{gl}_r)$-algebras, with some factors corresponding to modules of the type of Theorem~\ref{Whit1prop}, and others based on the twisted modules studied in \cite{BBCCN18} and \cite{BM,BKS}. Admissible dilaton shifts and changes of polarization could then be applied to mix the summands. You can mix and match at your whim!
\end{rem}

\begin{rem}
We noted in Remark \ref{r:shift} that we could have shifted the one-mode $\mathsf{W}^r_1$ by $\sum_{j \geq 2} \hbar^{\frac{j}{2}} T_j$ for arbitrary $T_2,T_3, \ldots \in \mathbb{C}$ and still obtain Airy structures. In the spectral curve topological recursion it would merely change the second line of \eqref{TRA} to
$$
\sum_{j \geq 2} T_j \delta_{g,\frac{j}{2}}\delta_{n,1} \sum_{a  = 1}^r \frac{\dd \xi_{-1}^a(z_0)}{\prod_{b \neq a} (Q_b - Q_a)}.
$$
However, the structural properties and vanishing properties of the correlators (for instance Lemma~\ref{lem:prophbar}) would be different.  
\end{rem}

\subsection{Topological recursion at arbitrary level \texorpdfstring{$ \kappa $}{kappa}}

\label{s:TRarbit}

In this section we obtain a spectral curve topological recursion, as in Theorem~\ref{lemTRA}, but corresponding to the Airy structure of Theorem~\ref{Whit1prop} for arbitrary level $\kappa$. It turns out that the calculation is more subtle and in fact gives rise to a non-commutative formulation of topological recursion, due to the non-commutative nature of the quantum Miura transform.

We follow again the three-step approach outlined in the beginning of Section \ref{Sec:TRA} to turn an Airy structure into a spectral curve topological recursion, but steps (2) and (3) are significantly more involved now that $\kappa$ is arbitrary.

\subsubsection{Nekrasov--Shatashvili regime}
\label{NSregime}
Before we implement these steps, it is beneficial to rearrange the expansion of the partition function. Recall that the partition function of the Airy structure in Theorem~\ref{Whit1prop} reads $\mathcal{Z} = e^F$, where the free energy $F$ has an expansion of the form of \eqref{eq:Fexp}, reproduced here for convenience:
\begin{equation}\label{eq:fee}
F=\sum_{\substack{g \in \frac{1}{2}\mathbb{Z}_{\geq 0},\,\,n \in \mathbb{Z}_{>0} \\ 2g - 2 + n >0}} \frac{\hbar^{g - 1}}{n!} \sum_{\substack{a_1,\ldots,a_n \in [r] \\ k_1,\ldots,k_n \in \mathbb{Z}_>0}} F_{g,n}\big[\begin{smallmatrix} a_1 & \cdots & a_n \\ k_1 & \cdots & k_n \end{smallmatrix}\big]\,\prod_{l = 1}^n x_{k_l}^{a_l}.
\end{equation}
As proven in Lemma \ref{lem:prophbar}, at arbitrary level the coefficients satisfy the property:
$$
\frac{F_{g,n}\big[\begin{smallmatrix} a_1 & \cdots & a_n \\ k_1 & \cdots & k_n \end{smallmatrix}\big]} { T^{k_1 + \cdots + k_n}} \in   \mathbb{Q}(Q_1,\ldots,Q_r)\big[\alpha_0 \big].
$$
Recall that $\alpha_0 = \kappa^{-\frac{1}{2}} - \kappa^{\frac{1}{2}}$. Let us now define
\begin{equation}\label{boxedalpha}
\alpha = \epsilon_1 + \epsilon_2 = \hbar^{\frac{1}{2}} \alpha_0,
\end{equation}
which has degree $1$ according to the grading \eqref{eq:grading} and \eqref{eq:grading2}.
We would like to rearrange the expansion of the free energy using the variables $(\hbar, \alpha)$ instead of $(\hbar, \alpha_0)$. The expansion then becomes adapted to study the Nekrasov-Shatashvili regime, where $\epsilon_2 \to 0$ and $\epsilon_1$ is set to a finite value, or equivalently $\hbar \to 0$ and $\alpha$ is set to a finite value.

\begin{lem}
\label{lemNS} There exists a sequence of coefficients
\[
F_{g,n}^\alpha\big[\begin{smallmatrix} a_1 & \cdots & a_n \\ k_1 & \cdots & k_n \end{smallmatrix}\big] \in  \mathbb{Q}(Q_1,\ldots,Q_r)[T][\![\alpha]\!],
\]
with $g \in \frac{1}{2} \mathbb{Z}_{\geq 0}$, $n \in \mathbb{Z}_{>0}$, and $2g-2+n>0$, such that the free energy \eqref{eq:fee} for the partition function of Theorem~\ref{Whit1prop} takes the form
\[
F=\sum_{\substack{g \in \frac{1}{2}\mathbb{Z}_{\geq 0},\,\,n \in \mathbb{Z}_{>0} \\ 2g - 2 + n >0}} \frac{\hbar^{g - 1}}{n!} \sum_{\substack{a_1,\ldots,a_n \in [r] \\ k_1,\ldots,k_n \in \mathbb{Z}_>0}} F_{g,n}^\alpha\big[\begin{smallmatrix} a_1 & \cdots & a_n \\ k_1 & \cdots & k_n \end{smallmatrix}\big]\,\prod_{l = 1}^n x_{k_l}^{a_l}
\]
after the identification \eqref{boxedalpha}.
\end{lem}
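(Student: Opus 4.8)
The plan is to substitute $\alpha_0 = \hbar^{-1/2}\alpha$ into the expansion \eqref{eq:fee} and to reorganize the double sum according to the power of $\hbar$. The only structural input needed is the shape of the coefficients from Lemma~\ref{lem:prophbar}: whenever $F_{g,n}\big[\begin{smallmatrix} \mathbf{a} \\ \mathbf{k} \end{smallmatrix}\big]$ is non-zero it equals $T^{k_1+\cdots+k_n}\,P_{g,n}\big[\begin{smallmatrix} \mathbf{a} \\ \mathbf{k} \end{smallmatrix}\big](\alpha_0)$ with $P_{g,n}\big[\begin{smallmatrix} \mathbf{a} \\ \mathbf{k} \end{smallmatrix}\big]\in\mathbb{Q}(Q_1,\ldots,Q_r)[\alpha_0]$. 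After the identification \eqref{boxedalpha} every monomial of \eqref{eq:fee} becomes
\[
\hbar^{g-1}\,F_{g,n}\big[\begin{smallmatrix} \mathbf{a} \\ \mathbf{k} \end{smallmatrix}\big]\prod_{l=1}^n x_{k_l}^{a_l} = T^{k_1+\cdots+k_n}\sum_{j\ge 0}\Big([\alpha_0^j]\,P_{g,n}\big[\begin{smallmatrix} \mathbf{a} \\ \mathbf{k} \end{smallmatrix}\big]\Big)\,\hbar^{g-1-\frac{j}{2}}\,\alpha^j\prod_{l=1}^n x_{k_l}^{a_l},
\]
so the $\hbar$-exponent of the part carrying $\alpha^j$ drops by $j/2$. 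The new genus is then $g' = g-\tfrac j2$, and I want to repackage the $\alpha$-dependence into coefficients $F_{g',n}^\alpha$.

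To control which $(g',n)$ appear I would first upgrade the degree estimate of Lemma~\ref{lem:prophbar}(1) to
\[
\deg_{\alpha_0}F_{g,n}\big[\begin{smallmatrix} \mathbf{a} \\ \mathbf{k} \end{smallmatrix}\big]\ \le\ 2g - r\,(k_1+\cdots+k_n).
\]
This is proved by the very induction on $2g-2+n$ used in the proof of Lemma~\ref{lem:prophbar} for arbitrary $\kappa$: a summand of \eqref{productF2} is $\alpha_0^{d}\prod_{L\in\mathbf{L}}F_{g_L,|L|+|N_L|}\big[\begin{smallmatrix} \mathbf{c}_L & \mathbf{a}_{N_L} \\ \mathbf{p}_L & \mathbf{k}_{N_L} \end{smallmatrix}\big]$, and constraint (vi) gives $\sum_{s\in U_+}p_s + \sum_{L}\sum_{l\in N_L}k_l = k_1+\cdots+k_n$, so by the inductive hypothesis its $\alpha_0$-degree is at most $d + \sum_L 2g_L - r(k_1+\cdots+k_n) = 2g - d - 2|U_+| + 2|\mathbf{L}| - r(k_1+\cdots+k_n)\le 2g - r(k_1+\cdots+k_n)$ by (vii') and $|\mathbf{L}|\le|U_+|$; the base case $F_{\frac r2,1}\big[\begin{smallmatrix} a \\ 1 \end{smallmatrix}\big]=T_a$ is $\alpha_0$-independent and satisfies $0\le 2\cdot\tfrac r2 - r$. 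With this bound, a term carrying $\alpha^j$ is nonzero only for $0\le j\le 2g - r(k_1+\cdots+k_n)$, hence its $\hbar$-exponent $g-1-\tfrac j2\ge \tfrac r2(k_1+\cdots+k_n)-1\ge \tfrac r2-1\ge 0$ (recall $r\ge 2$) is a non-negative half-integer; in particular no negative or genuinely half-integer powers of $\alpha$ ever occur.

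I would then \emph{define}
\[
F_{g,n}^\alpha\big[\begin{smallmatrix} \mathbf{a} \\ \mathbf{k} \end{smallmatrix}\big] := T^{k_1+\cdots+k_n}\sum_{j\ge 0}\Big([\alpha_0^j]\,P_{g+\frac j2,n}\big[\begin{smallmatrix} \mathbf{a} \\ \mathbf{k} \end{smallmatrix}\big]\Big)\,\alpha^j,
\]
with the convention that the $j$-th summand is $0$ when $F_{g+\frac j2,n}\big[\begin{smallmatrix} \mathbf{a} \\ \mathbf{k} \end{smallmatrix}\big]=0$. For fixed $(g,n,\mathbf{a},\mathbf{k})$ each coefficient of $\alpha^j$ is a single element of $\mathbb{Q}(Q_1,\ldots,Q_r)$, so $F_{g,n}^\alpha\big[\begin{smallmatrix} \mathbf{a} \\ \mathbf{k} \end{smallmatrix}\big]$ is a well-defined element of $\mathbb{Q}(Q_1,\ldots,Q_r)[T][\![\alpha]\!]$ — indeed a monomial in $T$ times a power series in $\alpha$. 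By the sharpened bound, $F_{g,n}^\alpha\big[\begin{smallmatrix} \mathbf{a} \\ \mathbf{k} \end{smallmatrix}\big]\ne 0$ forces $g\ge\tfrac r2(k_1+\cdots+k_n)\ge\tfrac r2$, hence $2g-2+n\ge r-1>0$, so only the claimed indices occur. Finally, agreement with \eqref{eq:fee} is a routine coefficient comparison: extracting the coefficient of $\hbar^{g'-1}\prod_l x_{k_l}^{a_l}$ from $\sum_{g,n}\tfrac{\hbar^{g-1}}{n!}\sum F_{g,n}^\alpha\big[\begin{smallmatrix} \mathbf{a} \\ \mathbf{k} \end{smallmatrix}\big]\prod_l x_{k_l}^{a_l}$ after $\alpha=\hbar^{1/2}\alpha_0$ returns $\tfrac{1}{n!}T^{k_1+\cdots+k_n}\sum_{j\ge 0}\big([\alpha_0^j]P_{g',n}\big[\begin{smallmatrix} \mathbf{a} \\ \mathbf{k} \end{smallmatrix}\big]\big)\alpha_0^j = \tfrac1{n!}F_{g',n}\big[\begin{smallmatrix} \mathbf{a} \\ \mathbf{k} \end{smallmatrix}\big]$, which is \eqref{eq:fee}.

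Most of this is bookkeeping; the step that requires care is the sharpened degree estimate, and through it the verification that no terms with $2g-2+n\le 0$ (which would be the slots $(0,1),(0,2),(\tfrac12,1)$) survive the reorganization. This is automatic once $\deg_{\alpha_0}F_{g,n}\le 2g-r(k_1+\cdots+k_n)$ is available, but it is exactly the point at which the precise combinatorics of the quantum Miura transform enters, rather than only the weaker bound $\deg_{\alpha_0}F_{g,n}\le 2g$ recorded in Lemma~\ref{lem:prophbar}(1); an alternative, more delicate, route would be to show directly that the top $\alpha_0$-degree coefficients $[\alpha_0^{2g}]F_{g,n}$ (and the subleading ones for $n\le 2$) vanish.
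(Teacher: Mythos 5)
Your proof is correct, and its core is the same as the paper's: decompose each $F_{g,n}$ as a polynomial in $\alpha_0$, substitute $\alpha_0=\hbar^{-1/2}\alpha$, and re-collect terms by their $\hbar$-exponent, arriving at $F^{\alpha}_{g,n}=\sum_{d\ge 0}\alpha^{d}\,[\alpha_0^{d}]F_{g+\frac d2,n}$. Where you go beyond the paper is in justifying that the re-indexed sum stays in the stable range. The paper's proof invokes only the bound $\deg_{\alpha_0}F_{g,n}\le 2g$ from Lemma~\ref{lem:prophbar}(1), which guarantees $g-\frac d2\ge 0$ but not $2(g-\frac d2)-2+n>0$; it leaves implicit that no contributions land in the slots $(0,1)$, $(\frac12,1)$, $(0,2)$, which is exactly what the statement (with its sum restricted to $2g-2+n>0$) demands. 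Your sharpened estimate $\deg_{\alpha_0}F_{g,n}\big[\begin{smallmatrix}\mathbf{a}\\ \mathbf{k}\end{smallmatrix}\big]\le 2g-r(k_1+\cdots+k_n)$ closes this: the induction is the same one used for Lemma~\ref{lem:prophbar}, your use of constraint (vi) together with (vii') and $|\mathbf{L}|\le|U_+|$ checks out, the base case is consistent, and the bound forces $g-\frac d2\ge\frac r2(k_1+\cdots+k_n)\ge\frac r2\ge 1$, hence stability. So you are not taking a structurally different route, but you supply a refinement of Lemma~\ref{lem:prophbar}(1)--(2) that is genuinely needed for the lemma as stated and that the paper's terse proof omits. (One small wording slip: the quantity you bound at the end is the $\hbar$-exponent $g-1-\frac d2$; the powers of $\alpha$ are non-negative integers by construction, so the clause about ``half-integer powers of $\alpha$'' should refer to $\hbar$.)
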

\begin{proof}
We know from Lemma \ref{lem:prophbar} that:
\[
		\frac{F_{g,n}\big[\begin{smallmatrix} a_1 & \cdots & a_n \\ k_1 & \cdots & k_n \end{smallmatrix}\big]} { T^{k_1 + \cdots + k_n}} \in   \mathbb{Q}(Q_1,\ldots,Q_r)\big[\alpha_0 \big],
	\]
	and that it is a polynomial in $\alpha_0$ of degree at most $2g$.  We can then write
	\[
	F_{g,n}\big[\begin{smallmatrix} a_1 & \cdots & a_n \\ k_1 & \cdots & k_n \end{smallmatrix}\big] = \sum_{d=0}^{2g} \alpha_0^d F_{g,n,d}\big[\begin{smallmatrix} a_1 & \cdots & a_n \\ k_1 & \cdots & k_n \end{smallmatrix}\big],
	\]
and using the redefinition $\alpha_0 = \hbar^{-\frac{1}{2}} \alpha$, the solution to our problem is
\[
F_{g,n}^{\alpha} \big[\begin{smallmatrix} a_1 & \cdots & a_n \\ k_1 & \cdots & k_n \end{smallmatrix}\big]  = \sum_{d \in \mathbb{Z}_{\geq 0}} \alpha^{d}\,F_{g + \frac{d}{2},n,d} \big[\begin{smallmatrix} a_1 & \cdots & a_n \\ k_1 & \cdots & k_n \end{smallmatrix}\big].
\]
\end{proof}

\noindent From now on we will work with this expansion in $(\hbar, \alpha)$.

\subsubsection{The spectral curve and the correlators}

The first step is to define the geometry. The spectral curve is almost the same as in Section~\ref{SelfdualTRspec}, and we keep notations from there. Our spectral curve is given by the quintuple $(C,x,\omega^\alpha_{0,1},\omega^\alpha_{\frac{1}{2},1},\omega^\alpha_{0,2})$, with:
\begin{equation}
\label{eq:w02arbit}
\begin{split}
C & = \bigsqcup_{a=1}^r C^a,\\
x & : ~C \to C_0 \text{ the forgetful map,}\\
\omega^\alpha_{0,1} \big( \begin{smallmatrix} a \\ \zeta \end{smallmatrix}\big) & = y^\alpha \big( \begin{smallmatrix} a \\ \zeta \end{smallmatrix}\big) \dd \zeta = \frac{Q_a - \alpha}{\zeta} \dd \zeta,\\
\omega^\alpha_{\frac{1}{2},1} \big( \begin{smallmatrix} a \\ \zeta \end{smallmatrix}\big) & = 0,\\
\omega^\alpha_{0,2}\big(\begin{smallmatrix} a_1 & a_2 \\ \zeta_1 & \zeta_2 \end{smallmatrix}\big)  & = \frac{\delta_{a_1,a_2}\dd \zeta_1\dd \zeta_2}{(\zeta_1 - \zeta_2)^2}.
\end{split}
\end{equation}
The only difference is in the $1$-form $\omega^\alpha_{0,1} $, which is due to the fact that we are now working with $(\hbar, \alpha)$ instead of $(\hbar, \alpha_0)$. We recall that we denote points on $C$ by $z = \left(\begin{smallmatrix} a \\ \zeta \end{smallmatrix}\right) $, and that we use $\mathfrak{c}(z) = a$ to denote the component that the point $z \in C$ lives in.

To the partition function $\mathcal{Z} = e^F$ uniquely defined by the Airy structure of Theorem~\ref{Whit1prop}, whose free energy has an expansion (in $(\hbar,\alpha)$) of the form
\[
F=\sum_{\substack{g \in \frac{1}{2}\mathbb{Z}_{\geq 0},\,\,n \in \mathbb{Z}_{>0} \\ 2g - 2 + n >0}} \frac{\hbar^{g - 1}}{n!} \sum_{\substack{a_1,\ldots,a_n \in [r] \\ k_1,\ldots,k_n \in \mathbb{Z}_>0}} F_{g,n}^\alpha\big[\begin{smallmatrix} a_1 & \cdots & a_n \\ k_1 & \cdots & k_n \end{smallmatrix}\big]\,\prod_{l = 1}^n x_{k_l}^{a_l},
\]
we associate \emph{Nekrasov-Shatashvili (NS) correlators} by
\[
\label{NScor} \omega_{g,n}^{\alpha}(z_1,\ldots,z_n) := \sum_{\substack{a_1,\ldots,a_n \in [r] \\ k_1,\ldots,k_n \in \mathbb{Z}_{>0}}} F_{g,n}^{\alpha}\big[\begin{smallmatrix} a_1 & \cdots & a_n \\ k_1 & \cdots & k_n \end{smallmatrix}\big] \prod_{i = 1}^n \dd \xi_{-k_i}^{a_i}(z_i),
\]
where the $\dd\xi_{-k}^{a}$ are the differentials defined in \eqref{defdifferentials}. Our goal is to formulate a topological recursion satisfied by the NS correlators.

\subsubsection{Abstract loop equations}

The next step is to recast the differential constraints satisfied by the partition function $\mathcal{Z}$ as a set of abstract loop equations for the NS correlators. It turns out that the result is structurally the same as for self-dual level -- but the devil is in the details. Let us first state the result, and then delve into its intricacies.

We proceed along the same path as in Section \ref{SelfdualTRspec}. We define the map $\ad_{g,n}$ as in Definition \ref{d:adjop} and the Heisenberg $1$-form as in Definition \ref{d:H1f}. We define the action of the non-zero Heisenberg $1$-forms on $\mathcal{Z}$ as in Definition \ref{d:defomega}, which can be calculated as in Lemma \ref{l:omeg}, but now for the NS correlators. To highlight the distinction with the former correlators, we use the notation
$\Omega'^{\alpha}_{g,i,n}(z_{[i]}; w_{[n]})$. 

We obtain the following lemma.

\begin{lem}\label{l:ALEarbit}
The differential constraints $\mathsf{W}^i_m \mathcal{Z} = \hbar^{\frac{r}{2}} T \delta_{i,r} \delta_{m,1} \mathcal{Z}$, for $i \in [r]$ and $m \in \mathbb{Z}_{>0}$, satisfied by the partition function from Theorem~\ref{Whit1prop} at arbitrary level, imply that, for any  $g \in \frac{1}{2} \mathbb{Z}_{\geq 0}$, $n \in \mathbb{Z}_{\geq 0}$, and $ 2g-2+n \geq 0$, the NS correlators satisfy the system of equations
\begin{equation}\label{eq:ALEarbit}
\sum_{i=1}^r \left[ \widehat{\mathcal{Q}}^{(i)} \Omega'^{\alpha}_{g,i,n}(\cdot\,; w_{[n]}) \right](z)  - \delta_{g,\frac{r}{2}} \delta_{n,0} \frac{T}{\zeta^{r+1}} (\dd \zeta)^r = O\left( \frac{\dd\zeta}{\zeta} \right)^r \qquad \text{as $x(z) = \zeta \to 0$.}
\end{equation}
We call this system of constraints the \emph{abstract loop equations}.
\end{lem}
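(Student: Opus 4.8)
\textbf{Proof strategy for Lemma~\ref{l:ALEarbit}.} The plan is to mimic the self-dual proof of Lemma~\ref{l:ALE} as closely as possible, but to be careful about the non-commutativity of $\nabla_\zeta = \hbar^{\frac{1}{2}}\alpha_0\partial_\zeta$ with the Heisenberg fields. First I would start from the quantum Miura transform \eqref{Miura} in the form
\[
\sum_{i=0}^r \mathsf{W}^i(\zeta)\,\nabla_\zeta^{r-i} = \big(\mathsf{J}^1(\zeta)+\nabla_\zeta\big)\big(\mathsf{J}^2(\zeta)+\nabla_\zeta\big)\cdots\big(\mathsf{J}^r(\zeta)+\nabla_\zeta\big).
\]
The key idea is to substitute $\nabla_\zeta \longrightarrow -y^\alpha(z)$, where now $y^\alpha(z) = \frac{Q_a-\alpha}{\zeta}$ for $z=\big(\begin{smallmatrix} a\\ \zeta\end{smallmatrix}\big)$ is the slope of $\omega^\alpha_{0,1}$ from \eqref{eq:w02arbit}. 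Because $\nabla_\zeta$ no longer commutes with $\mathsf{J}^b(\zeta)$, this substitution is only literally valid if one keeps track of the commutators $[\nabla_\zeta,\mathsf{J}^b(\zeta)] = \hbar^{\frac{1}{2}}\alpha_0\,\partial_\zeta\mathsf{J}^b(\zeta)$. I would therefore expand the product on the right-hand side, moving all occurrences of $\nabla_\zeta$ to the right; the correction terms generated are precisely the ones responsible for the shift $Q_a \to Q_a-\alpha$ in $\omega^\alpha_{0,1}$, as already seen at the level of $\pi_1(\mathsf{W}^i(z))$ in the proof of Theorem~\ref{Whit1prop} and in the passage to the $(\hbar,\alpha)$ expansion in Lemma~\ref{lemNS}. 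The upshot is an identity of the schematic form
\[
\sum_{i=0}^r (-1)^{r-i}\mathsf{W}^i(\zeta)\,\omega^\alpha_{0,1}(z)^{r-i}(\dd\zeta)^i = \prod_{z'\in\mathfrak{f}(z)}\big(\mathcal{J}^*(z') + \mathcal{J}_0(z') - \omega^\alpha_{0,1}(z)\big) + (\text{lower-order corrections}),
\]
where the correction terms are supported at order $\hbar^{\geq 1/2}$ and, crucially, do not spoil the $O\big(\tfrac{\dd\zeta}{\zeta}\big)^r$ estimate; this is where the work is.

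The next step is to conjugate by $\mathcal{Z}$ and use the differential constraints $\mathsf{W}^i_m\mathcal{Z} = \hbar^{\frac{r}{2}}T\delta_{i,r}\delta_{m,1}\mathcal{Z}$, recast as in \eqref{eq:ZZconst}, namely $\mathcal{Z}^{-1}\mathsf{W}^i(\zeta)\mathcal{Z} - \hbar^{\frac{r}{2}}T\delta_{i,r}\zeta^{-r-1} = O(\zeta^{-i})$ as $\zeta\to 0$. Since $\omega^\alpha_{0,1}(z) = O\big(\tfrac{\dd\zeta}{\zeta}\big)$, multiplying the $i$-th constraint by $\omega^\alpha_{0,1}(z)^{r-i}(\dd\zeta)^i$ and summing gives that the conjugated left-hand side equals $\hbar^{\frac{r}{2}}T\,\tfrac{(\dd\zeta)^r}{\zeta^{r+1}} + O\big(\tfrac{\dd\zeta}{\zeta}\big)^r$. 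I would then apply $\ad_{g,n}$ to both sides, expand the product over $\mathfrak{f}(z)$ into the number $i$ of factors contributing a non-zero Heisenberg $1$-form and $r-i$ factors contributing $\mathcal{J}_0(z')-\omega^\alpha_{0,1}(z)$, and recognise the result as $\sum_{i=1}^r \big[\widehat{\mathcal{Q}}^{(i)}\Omega'^\alpha_{g,i,n}(\cdot\,;w_{[n]})\big](z)$, where $\widehat{\mathcal{Q}}^{(i)}$ is the arbitrary-level analogue of the operator in Definition~\ref{d:Qhatsd} (this is presumably the operator introduced just before the lemma statement in the full text). Taking the coefficient of $\hbar^g$ and the homogeneous degree-$n$ part in the $x$-variables then yields \eqref{eq:ALEarbit} for each $g,n$ with $2g-2+n\geq 0$; the term $\delta_{g,r/2}\delta_{n,0}\tfrac{T}{\zeta^{r+1}}(\dd\zeta)^r$ is picked out from $\hbar^{r/2}T\tfrac{(\dd\zeta)^r}{\zeta^{r+1}}$ exactly as in the self-dual case.

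\textbf{Main obstacle.} The genuine difficulty is the bookkeeping of the non-commutative corrections: unlike the $\kappa=1$ case, where $\nabla_\zeta$ can be treated as a formal commuting variable and set to $-y(z)$ with impunity, here each commutator $[\nabla_\zeta,\mathsf{J}^b(\zeta)]$ produces a term with an extra factor of $\hbar^{\frac{1}{2}}\alpha_0 = \alpha$ and a derivative $\partial_\zeta\mathsf{J}^b(\zeta)$, and these proliferate when one expands a product of $r$ factors. I would need to argue that after conjugation by $\mathcal{Z}$ and application of $\ad_{g,n}$, all such correction terms either (i) reassemble into the shifted slope $Q_a-\alpha$ appearing in $\omega^\alpha_{0,1}$ and in the definition of $\widehat{\mathcal{Q}}^{(i)}$, or (ii) contribute only to the $O\big(\tfrac{\dd\zeta}{\zeta}\big)^r$ error, or (iii) vanish identically because the relevant power of $\hbar$ or $\alpha$ forces $2g-2+n<0$. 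Controlling the interplay of the $\hbar$-grading (from $\deg\hbar^{1/2}=1$), the conformal-weight filtration, and the $\alpha$-expansion of Lemma~\ref{lemNS} simultaneously is the technical heart of the argument; once that is done, the inversion of $\widehat{\mathcal{Q}}^{(1)}$ and the passage to the topological recursion proper (the analogue of Theorem~\ref{lemTRA}) will follow by residue analysis as before, modulo the non-commutative structure of the recursion kernel that the authors advertise in Theorem~\ref{lemTRAarbit}.
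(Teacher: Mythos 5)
Your skeleton (Miura transform, conjugation by $\mathcal{Z}$, the recast constraints \eqref{eq:ZZconst2}, expansion over the number $i$ of non-zero Heisenberg insertions, and application of $\ad_{g,n}$) matches the paper's proof, but the step you yourself flag as ``the technical heart'' is handled in a way that would not go through. You propose to substitute $\nabla_\zeta \longrightarrow -y^\alpha(z)$ and to absorb the commutators $[\nabla_\zeta,\mathsf{J}^b(\zeta)]=\alpha\,\partial_\zeta\mathsf{J}^b(\zeta)$ either into the shift $Q_a\to Q_a-\alpha$ or into the $O\big(\tfrac{\dd\zeta}{\zeta}\big)^r$ error. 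Neither is possible: the shift $Q_a\to Q_a-\alpha$ in $\omega^\alpha_{0,1}$ already comes from the zero-mode representation $\mathsf{J}_0^a=Q_a-\alpha$ in \eqref{eq:hrep}, not from these commutators, and the derivative terms $\partial_\zeta^{b_l}\mathsf{J}^{a_l}(\zeta)$ visible in \eqref{Miuraexplicit} act on the \emph{non-zero} modes, hence on the correlators themselves; they are exact contributions to the loop equations at the relevant order and cannot be discarded. This is precisely why the paper's operator $\widehat{\mathcal{Q}}^{(i)}$ (Definition~\ref{d:QII}) is a genuine differential operator rather than the multiplication operator of Definition~\ref{d:Qhatsd} that you implicitly assume (``presumably the operator introduced just before the lemma'').

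The two ingredients missing from your proposal are the spectrally ordered product and, above all, the conjugation by the auxiliary function $\Psi(z)=\zeta^{1+r-\mathfrak{c}(z)-Q_{\mathfrak{c}(z)}/\alpha}$ of Definition~\ref{d:psi} and Lemma~\ref{lem520}. The paper keeps $\nabla_\zeta$ as an operator, writes the Miura identity as $\sum_i \mathsf{W}^i(\zeta)\nabla_\zeta^{r-i}(\dd\zeta)^i$ sandwiched as $\mathcal{Z}^{-1}\Psi^{-1}(z)(\cdots)\Psi(z)\mathcal{Z}$, and uses $\Psi^{-1}(z)\nabla_\zeta^{r-i}\Psi(z)=N\zeta^{i-r}$ to convert the constraints \eqref{eq:ZZconst2} into the $O\big(\tfrac{\dd\zeta}{\zeta}\big)^r$ estimate; without $\Psi$ there is no clean way to control $\nabla_\zeta^{r-i}$ and the estimate fails. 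The $\Psi$-conjugation also forces the surviving subsets $Z\subseteq\mathfrak{f}(z)$ in $\widehat{\mathcal{Q}}^{(i)}$ to contain at least one point $z'\succeq z$, which is what later makes $\widehat{\mathcal{Q}}^{(1)}$ triangular and invertible in Theorem~\ref{lemTRAarbit}. Since your argument replaces $\nabla_\zeta$ by a function, it effectively reproduces only the self-dual proof and leaves the arbitrary-level case unproved.
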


The complexity here is in the definition of the operator $\left[\widehat{\mathcal{Q}}^{(i)} \Omega'^{\alpha}_{g,i,n}(\cdot\,; w_{[n]}) \right](z)$ --- it is not the same as in Definition \ref{d:Qhatsd}. In spirit, it achieves the same goal, that is, it incorporates the action of the zero-modes. However, recall that the starting point of the proof of Lemma \ref{l:ALE} was the quantum Miura transform, and the fact that at self-dual level we can think of $\nabla_\zeta$ as a formal symbol that commutes with everything. This is certainly not the case anymore at arbitrary level. This non-commutativity gives rise to a much more intricate definition for the operator $\left[ \widehat{\mathcal{Q}}^{(i)} \Omega'^{\alpha}_{g,i,n}(\cdot\,; w_{[n]}) \right](z)$, to which we now turn to.

We first need to define a notion of ordering for points in the spectral curve $C$, and for products of functions or differentials on $C$. Recall that the spectral curve $C = \bigsqcup_{a=1}^r C^a$ has a natural notion of ordering, given by indexing the components with $a\in [r]$.

\begin{defin}
Pick any two points $z_1, z_2 \in C$. We say that $z_1 \preceq z_2$ if $\mathfrak{c}(z_1) \leq \mathfrak{c}(z_2)$, and call this ordering \emph{the spectral ordering}. 
We define the \emph{spectrally ordered product of functions (or differentials) on $C$},
\[
\overset{\nearrow}{\mathcal{SP}}\left( f_1(z_1)\, \cdots\, f_n(z_n) \right),
\]
for any set of distinct points $z_1, \ldots, z_n \in \mathfrak{f}(z)$,
as taking the product where the order of the factors from left to right is in increasing spectral order of the points $z_k \in C$. We denote the reverse ordering, where the order of the factors from left to right is in decreasing spectral order, by
\[
\overset{\searrow}{\mathcal{SP}}\left( f_1(z_1) \,\cdots \,f_n(z_n) \right),
\]
\end{defin}

For example, given functions $f,g,h$ on $C$, we have:
\[
\begin{split}
\overset{\nearrow}{\mathcal{SP}} \left( f\big( \begin{smallmatrix} 1 \\ \zeta \end{smallmatrix}\big) g\big( \begin{smallmatrix} 4 \\ \zeta \end{smallmatrix}\big) h\big( \begin{smallmatrix} 2 \\ \zeta \end{smallmatrix}\big)  \right) & = f\big( \begin{smallmatrix} 1 \\ \zeta \end{smallmatrix}\big)  h\big( \begin{smallmatrix} 2 \\ \zeta \end{smallmatrix}\big) g\big( \begin{smallmatrix} 4 \\ \zeta \end{smallmatrix}\big), \\
\overset{\searrow}{\mathcal{SP}} \left( f\big( \begin{smallmatrix} 1 \\ \zeta \end{smallmatrix}\big) g\big( \begin{smallmatrix} 4 \\ \zeta \end{smallmatrix}\big) h\big( \begin{smallmatrix} 2 \\ \zeta \end{smallmatrix}\big)  \right) & = g\big( \begin{smallmatrix} 4 \\ \zeta \end{smallmatrix}\big)  h\big( \begin{smallmatrix} 2 \\ \zeta \end{smallmatrix}\big) f\big( \begin{smallmatrix} 1 \\ \zeta \end{smallmatrix}\big).
\end{split}
\]

\begin{rem}
The spectrally ordered product should not be confused with normal ordering. Normal ordering is about ordering of the modes of the fields while spectral ordering is about ordering of the dependence in the points in $C$.
\end{rem}

Now we would like to define the operator $\widehat{\mathcal{Q}}^{(i)}$ as in Definition \ref{d:Qhatsd}, but with $-\omega_{0,1}(z)$ replaced by $\nabla_\zeta$, where\footnote{More precisely, $\nabla_{\zeta}$ applied to a $k$-form $u$ in the variable $\zeta$ should be understood as $$\nabla_{\zeta} = (\dd \zeta)^{k + 1} \partial_{\zeta}\bigg(\frac{u(\zeta)}{(\dd \zeta)^{k}}\bigg).$$} $\nabla_\zeta =\alpha\,\dd_{\zeta}$. Indeed, this is precisely the combination that will appear in the quantum Miura transform \eqref{Miura} at arbitrary level. However, since $\nabla_\zeta$ certainly does not commute with Heisenberg $1$-forms, one must be careful with ordering of the variables, which is why we introduced the spectrally ordered product. 

Naively, we would like to define the operators as
\[
\left[ \widehat{\mathcal{Q}}^{(i)} \Omega'^{\alpha}_{g,i,n}(\cdot\,; w_{[n]}) \right](z) = \ad_{g,n} \left(\sum_{\substack{Z \subseteq \mathfrak{f}(z) \\ |Z| = i }}  \mathcal{Z}^{-1}\, \overset{\nearrow}{\mathcal{SP}}\left( \prod_{z' \in Z} \mathcal{J}^*(z') \prod_{z'' \in \mathfrak{f}(z) \setminus Z} (\mathcal{J}_0(z'') +\nabla_\zeta ) \right)\,\mathcal{Z} \right),
\]
with the spectral ordering of the factors done with respect to the variables in $\mathfrak{f}(z)$.
However, this is too naive. Recall that in Definition \ref{d:Qhatsd}, the sum was naturally restricted to subsets $Z \subseteq \mathfrak{f}(z)$ that contain $\{z \} \subseteq Z$, because of the factors $(\mathcal{J}_0(z'') - \omega_{0,1}(z))$. As highlighted in Remark \ref{r:why}, this property was crucial to find the inverse of the operator $\mathcal{Q}^{(1)}$ --- which is required to get topological recursion --- as it became diagonal. Unfortunately, in the present case there is no such natural restriction. Instead, we can make sure that $Z$ always includes at least one point of spectral order greater or equal than $z$.  While this is not as nice as having $\{z \} \subseteq Z$, it will be sufficient to find the inverse of $\widehat{\mathcal{Q}}^{(1)}$, as it becomes triangular.

To achieve this, we multiply on the right by a particular function $\Psi(z)$ on $C$, whose restrictions $\psi_b(\zeta) = \Psi\big(\begin{smallmatrix} b \\ \zeta \end{smallmatrix} \big)$ to the various components of $C$ give a basis of solutions of
$$
\hat{\mathcal{Y}} \psi_b(\zeta) =  \overset{\nearrow}{\mathcal{SP}}\bigg(\prod_{\substack{z' \in \mathfrak{f}(z)}}   (\mathcal{J}_0(z') +\nabla_\zeta )  \bigg)  \psi_b(\zeta) = (\dd \zeta)^{r} \prod_{a = 1}^r \big((\epsilon_1 + \epsilon_2)\partial_\zeta + \tfrac{Q_a - (\epsilon_1 + \epsilon_2)}{\zeta}\big) \psi_b(\zeta) = 0.
$$

\begin{defin}\label{d:psi}
Let $\Psi(z)$ be a function on $C$ such that:
\begin{equation*}
\begin{split}
 \overset{\nearrow}{\mathcal{SP}}\bigg(\prod_{\substack{z' \in \mathfrak{f}(z) \\ z' \succeq z}}   (\mathcal{J}_0(z') +\nabla_\zeta )  \bigg) \Psi(z) & = 0,\\
 \overset{\nearrow}{\mathcal{SP}}\bigg(\prod_{\substack{z' \in \mathfrak{f}(z) \\ z' \succ z}}   (\mathcal{J}_0(z') +\nabla_\zeta )  \bigg) \Psi(z) & \neq 0.
\end{split}
\end{equation*}
In other words, $\Psi \big( \begin{smallmatrix} a \\ \zeta \end{smallmatrix}\big)$ is killed by the spectral ordered product of all factors $ (\mathcal{J}_0\big( \begin{smallmatrix} a' \\ \zeta \end{smallmatrix}\big)+\nabla_\zeta ) $ with  $a' \geq a$, but it is not killed by any product with a smaller number of factors. 
For any $z_0 \in \mathfrak{f}(z)$, we also define the differential of degree $r-\mathfrak{c}(z_0)$:
\[
\Psi_{z_0}(z) = \overset{\nearrow}{\mathcal{SP}}\bigg(\prod_{\substack{z' \in \mathfrak{f}(z) \\ z' \succ z_0}}   (\mathcal{J}_0(z') +\nabla_\zeta )  \bigg) \Psi(z). 
\]
In particular, $\Psi_{z}(z) = 0$ if $z_0 \prec z$, and we set $\Psi_{z_0}(z) = \Psi(z)$ if $\mathfrak{c}(z_0) = r$.
\end{defin}

\begin{lem}
\label{lem520} Recall that $\mathcal{J}_0\big(\begin{smallmatrix}a \\ \zeta \end{smallmatrix}\big)= \frac{Q_a - \alpha}{\zeta} \dd \zeta$, and $\nabla_\zeta = \alpha\, \dd_{\zeta}$. Then the choice
\begin{equation}\label{eq:psi}
\Psi(z) = \zeta^{1 + r-\mathfrak{c}(z) - \frac{Q_{\mathfrak{c}(z)}}{\alpha}}
\end{equation}
matches Definition \ref{d:psi} (for generic constants $Q_a$s). For this choice of $\Psi(z)$, we get:
\[
\Psi_{z_0} (z) =(\dd \zeta)^{r-\mathfrak{c}(z_0)} \left(\prod_{j=\mathfrak{c}(z_0)+1}^r \left(Q_j-Q_{\mathfrak{c}(z)} + \alpha(j-\mathfrak{c}(z)) \right) \right) \zeta^{1+\mathfrak{c}(z_0)-\mathfrak{c}(z)-\frac{Q_{\mathfrak{c}(z)}}{\alpha}}.
\]
\end{lem}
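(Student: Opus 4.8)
The plan is to verify the two claims by a direct computation, exploiting the fact that $\Psi(z)$ is chosen precisely as a monomial in $\zeta$ so that each factor $\mathcal{J}_0\big(\begin{smallmatrix}a\\\zeta\end{smallmatrix}\big)+\nabla_\zeta = \big(\tfrac{Q_a-\alpha}{\zeta}+\alpha\partial_\zeta\big)(\dd\zeta)$, when applied to a power $\zeta^s$, returns a constant multiple of $\zeta^{s-1}$, thereby behaving like a shift operator on exponents. First I would record the elementary identity
\[
\Big(\alpha\partial_\zeta + \tfrac{Q_a-\alpha}{\zeta}\Big)\zeta^{s} = \big(\alpha s + Q_a - \alpha\big)\,\zeta^{s-1} = \alpha\big(s - 1 + \tfrac{Q_a}{\alpha}\big)\,\zeta^{s-1},
\]
so that each such factor lowers the exponent by $1$ and multiplies by the ``eigenvalue'' $\alpha(s-1+Q_a/\alpha)$ evaluated at the \emph{current} exponent $s$. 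Since $\nabla_\zeta$ applied to a $k$-form is interpreted (per the footnote) as $(\dd\zeta)^{k+1}\partial_\zeta(\,\cdot\,/(\dd\zeta)^k)$, iterating $m$ such factors on $\zeta^s$ produces $(\dd\zeta)^m$ times a product of $m$ consecutive eigenvalues times $\zeta^{s-m}$; this is the only computation needed, and it is completely mechanical once the ordering of factors is pinned down.

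Next I would carry out the bookkeeping for the spectrally ordered products. In $\overset{\nearrow}{\mathcal{SP}}\big(\prod_{z'\in\mathfrak f(z),\,z'\succeq z}(\mathcal J_0(z')+\nabla_\zeta)\big)$ the factors are ordered with the \emph{smallest} component index on the left and the largest on the right; since differential operators act on what stands to their right, the rightmost factor acts first. With $z=\big(\begin{smallmatrix}a\\\zeta\end{smallmatrix}\big)$, i.e. $\mathfrak c(z)=a$, the factors involved are indexed by $b=a,a+1,\dots,r$, so reading right to left we apply the factor for $b=r$, then $b=r-1$, \dots, finally $b=a$; there are $r-a+1$ of them. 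Starting from $\Psi(z)=\zeta^{1+r-a-Q_a/\alpha}$ with exponent $s_0 = 1+r-a-Q_a/\alpha$, after applying the factor for $b=r$ the exponent drops to $s_0-1$, after $b=r-1$ to $s_0-2$, and so on. The factor indexed by $b$ is the $(r-b+1)$-th to act, hence it acts on the exponent $s_0-(r-b)$ and contributes the eigenvalue $\alpha\big(s_0-(r-b)-1+Q_b/\alpha\big) = \alpha\big(-a+b-Q_a/\alpha+Q_b/\alpha\big) = (Q_b-Q_a)+\alpha(b-a)$. In particular the factor indexed by $b=a$ contributes $(Q_a-Q_a)+\alpha\cdot 0 = 0$, so the \emph{full} product $\prod_{b=a}^{r}$ annihilates $\Psi(z)$, proving the first displayed equation of Definition~\ref{d:psi}; whereas for the product $\prod_{b=a+1}^{r}$ (the subset $z'\succ z$) every eigenvalue $(Q_b-Q_a)+\alpha(b-a)$ with $b>a$ is nonzero for generic $Q$'s, so $\Psi(z)$ is not annihilated, giving the second (inequality) line. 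This also explains the stated genericity hypothesis on the $Q_a$'s: it is exactly the condition that none of the factors $(Q_b-Q_a)+\alpha(b-a)$ with $b\neq a$ vanishes.

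Finally, for the formula for $\Psi_{z_0}(z)$ with $\mathfrak c(z_0)=c$, the operator is $\overset{\nearrow}{\mathcal{SP}}\big(\prod_{z'\succ z_0}(\mathcal J_0(z')+\nabla_\zeta)\big)$, i.e. the product over $b=c+1,\dots,r$, which is a differential \emph{of degree $r-c$} by the $(\dd\zeta)$-counting above. Applying it to $\Psi(z)=\zeta^{1+r-a-Q_a/\alpha}$ exactly as before: the $r-c$ factors act right to left, the factor indexed by $b$ is the $(r-b+1)$-th to act on exponent $s_0-(r-b)$, and contributes $(Q_b-Q_{a})+\alpha(b-a) = Q_b - Q_{\mathfrak c(z)} + \alpha(b-\mathfrak c(z))$; the exponent drops by $r-c$ in total, landing at $s_0-(r-c) = 1+c-a-Q_a/\alpha = 1+\mathfrak c(z_0)-\mathfrak c(z)-\tfrac{Q_{\mathfrak c(z)}}{\alpha}$. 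Collecting the $(\dd\zeta)^{r-c}$, the product of eigenvalues $\prod_{j=c+1}^{r}\big(Q_j-Q_{\mathfrak c(z)}+\alpha(j-\mathfrak c(z))\big)$, and the resulting power of $\zeta$ yields precisely the claimed expression, and the boundary cases ($\Psi_z(z)=0$ when $z_0\prec z$ because then $b=\mathfrak c(z)$ appears among the factors and its eigenvalue vanishes; $\Psi_{z_0}(z)=\Psi(z)$ when $\mathfrak c(z_0)=r$ because the product is empty) follow immediately.

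\textbf{Main obstacle.} No step is conceptually deep; the only real care is in getting the order of operator composition consistent with the convention that the spectrally ordered product places small indices on the left and that operators act to the right, so that the rightmost factor is applied first and the factor indexed by $b$ ends up acting on the exponent $s_0-(r-b)$. Getting a sign or an off-by-one wrong here would scramble which eigenvalue is which; I would double-check it on the smallest case $r=2$ (so $\Psi\big(\begin{smallmatrix}1\\\zeta\end{smallmatrix}\big) = \zeta^{2-Q_1/\alpha}$, $\Psi\big(\begin{smallmatrix}2\\\zeta\end{smallmatrix}\big)=\zeta^{1-Q_2/\alpha}$) before writing the general argument.
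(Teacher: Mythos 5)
Your computation is correct and is exactly the "direct calculation" that the paper's proof consists of: the monomial $\Psi$ turns each factor $\mathcal{J}_0+\nabla_\zeta$ into an exponent-shift operator with eigenvalue $(Q_b-Q_{\mathfrak c(z)})+\alpha(b-\mathfrak c(z))$, the $b=\mathfrak c(z)$ factor kills it, and collecting the remaining eigenvalues and the $(\dd\zeta)$-degrees gives the stated formula for $\Psi_{z_0}(z)$. Your ordering bookkeeping (rightmost factor acts first, factor $b$ hits exponent $s_0-(r-b)$) is the one delicate point and you have it right.
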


\begin{proof}
Direct calculation.
\end{proof}

\begin{rem}
We remark that Definition~\ref{d:psi} does not specify uniquely $\Psi(z)$. Any two such functions  $\Psi_1$ and $\Psi_2$ --- thought of as column vectors with components $\Psi_i\big(\begin{smallmatrix} a \\ \zeta \end{smallmatrix}\big)$, $a \in [r]$, $i=1,2$ --- must be related by $\Psi_1 = U \Psi_2$ with $U$ a constant, upper triangular, invertible matrix. In any case, from now on we choose $\Psi(z)$ as in \eqref{eq:psi}.
\end{rem}

\begin{defin}\label{d:QII}
For $i \in [r]$, we define:
\begin{equation*}
\begin{split}
& \quad \left[ \widehat{\mathcal{Q}}^{(i)} \Omega'^{\alpha}_{g,i,n}(\cdot\,; w_{[n]}) \right](z) \\
& = \ad_{g,n} \left(\sum_{\substack{Z \subseteq \mathfrak{f}(z) \\ |Z| = i }}  \mathcal{Z}^{-1} \Psi^{-1}(z) \, \overset{\nearrow}{\mathcal{SP}}\left( \prod_{z' \in Z} \mathcal{J}^*(z') \prod_{z'' \in \mathfrak{f}(z) \setminus Z} (\mathcal{J}_0(z'') +\nabla_\zeta ) \right)\, \Psi (z) \mathcal{Z} \right),
\end{split}
\end{equation*}
where we let the differential operator in $\zeta$ act on $\Psi(z)$. Due to the defining properties of $\Psi(z)$, the only non-vanishing terms in the sum over $Z \subseteq \mathfrak{f}(z)$ are for subsets $Z$ that include at least one $z' \in Z$ such that $z' \succeq z$ according to spectral order.
\end{defin}

With this definition of the operator, we can prove Lemma \ref{l:ALEarbit}.

\begin{proof}[Proof of Lemma \ref{l:ALEarbit}]
The starting point is the quantum Miura transform \eqref{Miura}. We multiply right away by $(\dd \zeta)^r$ and separate the zero-modes to get:
\begin{equation} \label{eq:miurat}
  	\sum_{i=0}^{r}  \mathsf{W}^{i}(\zeta) \nabla_\zeta^{r - i}  (\dd \zeta)^i  = \overset{\nearrow}{\mathcal{SP}}\left(\prod_{z' \in \mathfrak{f}(z)} (\mathcal{J}^*(z') + \mathcal{J}_0(z') + \nabla_\zeta) \right)  .
\end{equation}
The partition function $\mathcal{Z}$ satisfies the constraints $\mathsf{W}^i_m \mathcal{Z} = \hbar^{\frac{r}{2}} T \delta_{i,r} \delta_{m,1} \mathcal{Z}$ for $i \in [r]$ and $m \in \mathbb{Z}_{>0}$. Those can be recast as the statement that
\begin{equation}\label{eq:ZZconst2}
\mathcal{Z}^{-1} \mathsf{W}^i(\zeta) \mathcal{Z} - \hbar^{\frac{r}{2}}T  \delta_{i,r} \zeta^{-r-1} = O\left(\zeta^{-i}\right) \qquad \text{as $\zeta \to 0$}.
\end{equation}
We would like to turn those into constraints for
\begin{equation}\label{eq:trt}
\mathcal{Z}^{-1} \Psi^{-1}(z) \left( \sum_{i=0}^{r}  \mathsf{W}^{i}(\zeta) \nabla_\zeta^{r - i}  (\dd \zeta)^i \right) \Psi(z) \mathcal{Z} - \hbar^{\frac{r}{2}}T  \delta_{i,r} \zeta^{-r-1}.
\end{equation}
Since $\Psi\big(\begin{smallmatrix} a\\ \zeta \end{smallmatrix}\big) = \zeta^{r-a+1 - \frac{Q_a}{\alpha}}$,  we have $\Psi^{-1}(z) \nabla_\zeta^{r-i} \Psi(z)= N \zeta^{i-r}$ for some irrelevant but non-zero constant $N$. By \eqref{eq:ZZconst2}, this means that the expression in \eqref{eq:trt} behaves like $\left(\frac{\dd \zeta}{\zeta} \right)^r$ as $\zeta \to 0$. By \eqref{eq:miurat}, this means that
\[
\mathcal{Z}^{-1} \Psi^{-1}(z)\, \overset{\nearrow}{\mathcal{SP}}\left(\prod_{z' \in \mathfrak{f}(z)} (\mathcal{J}^*(z') + \mathcal{J}_0(z') + \nabla_\zeta) \right)\, \Psi(z) \mathcal{Z} - \frac{\hbar^{\frac{r}{2}}T  \delta_{i,r}}{\zeta^{r+1}} = O\left(\frac{\dd \zeta}{\zeta} \right)^r
\]
as $\zeta \to 0$. Expanding the left-hand-side in terms with $i$ contributions of the non-zero Heisenberg modes,  taking $\ad_{g,n}$ and using Definition \ref{d:QII}, we get the abstract loop equations \eqref{eq:ALEarbit}.
\end{proof}

Before we proceed, we remark that it would be nice to have an explicit formula for the $\left[ \widehat{\mathcal{Q}}^{(i)} \Omega'^{\alpha}_{g,i,n}(\cdot\,; w_{[n]}) \right](z)$ as in Lemma \ref{l:Qi}. Unfortunately, this is not so easy to write down, because of the non-commutativity of the $\nabla_\zeta$. However, there is one simple case, which is when $i=r$: in this case only non-zero Heisenberg modes appear, therefore $\widehat{\mathcal{Q}}^{(r)}$ acts as the identity, that is
\[
\left[ \widehat{\mathcal{Q}}^{(r)} \Omega'^{\alpha}_{g,r,n}(\cdot\,; w_{[n]}) \right](z) =  \Omega'^{\alpha}_{g,r,n}(z; w_{[n]}),
\]
 just like at self-dual level.

\subsubsection{The topological recursion}

Lemma \ref{l:ALEarbit} establishes the abstract loop equations for arbitrary $\kappa$. We can solve just as we did at self-dual level in Theorem~\ref{lemTRA}. In fact, the result is again structurally the same:

\begin{thm}
\label{lemTRAarbit} For $2g - 2 + n \geq 0$, the NS correlators associated to the partition function of Theorem~\ref{Whit1prop} at arbitrary level satisfy the topological recursion:
\begin{equation}
\label{TRAarbit}\begin{split}
\omega_{g,n + 1}^{\alpha}(z_0,z_{[n]}) & = \sum_{o \in \mathfrak{a}} \Res_{z = o} \left( \sum_{i=2}^r K^{\alpha}(z_0,z) \left[ \widehat{\mathcal{Q}}^{(i)} \Omega'^{\alpha}_{g,i,n}(\cdot\,; z_{[n]}) \right](z) \right) \\
& \quad + T\delta_{g,\frac{r}{2}}\delta_{n,0} \sum_{a = 1}^r \frac{\dd \xi_{-1}^a(z_0)}{\prod_{b \neq a} (Q_b - Q_a)}.
\end{split}
\end{equation}
\end{thm}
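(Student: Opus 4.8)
The plan is to mirror the proof of Theorem~\ref{lemTRA} step by step, the only new ingredient being that the operator $\widehat{\mathcal{Q}}^{(1)}$ is triangular rather than diagonal with respect to spectral order, so its inversion requires more care. First I would start from the abstract loop equations \eqref{eq:ALEarbit} of Lemma~\ref{l:ALEarbit} and extract the $i=1$ term $\left[ \widehat{\mathcal{Q}}^{(1)} \Omega'^{\alpha}_{g,1,n}(\cdot\,; w_{[n]}) \right](z)$. From Remark~\ref{r:recursive} applied to the NS correlators, $\Omega'^{\alpha}_{g,1,n}(z;w_{[n]}) = \omega^\alpha_{g,n+1}(z,w_{[n]})$, so solving the loop equation amounts to inverting $\widehat{\mathcal{Q}}^{(1)}$ acting on the diagonal data. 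The key computation is an explicit evaluation of $\left[ \widehat{\mathcal{Q}}^{(1)} u \right](z)$ for a $1$-form $u$; using Definition~\ref{d:QII} and the explicit $\Psi(z)$, $\Psi_{z_0}(z)$ from Lemma~\ref{lem520}, one finds that this operator has the form $u(z)\cdot\big(\text{leading coefficient at }\mathfrak{f}'(z)\big) + \sum_{z'\succ z} (\cdots)\,u(z')$, i.e.\ it is upper triangular in spectral order with invertible diagonal entries given by the products $\prod_{j>\mathfrak{c}(z)}\big(Q_j - Q_{\mathfrak{c}(z)}+\alpha(j-\mathfrak{c}(z))\big)$ (non-zero for generic $Q_a$).

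Second, I would invert this triangular operator via residue analysis. The strategy is the same Cauchy-kernel manipulation as in \eqref{eq:inversion}: one defines a recursion kernel
\[
K^{\alpha}(z_0,z) := \frac{-\int_o^{z}\omega^\alpha_{0,2}(z_0,\cdot)}{\Psi_{z}(z)/(\dd\zeta)^{r-\mathfrak{c}(z)}\cdot(\dd\zeta)^{\,?}}
\]
tuned so that $K^\alpha(z_0,z)\big[\widehat{\mathcal Q}^{(1)}u\big](z)$, when summed over residues at $o\in\mathfrak a$, reproduces $u(z_0) = \omega^\alpha_{g,n+1}(z_0,w_{[n]})$; concretely $K^\alpha$ should be built so that division by the diagonal coefficient plus the Cauchy kernel $\int_o^z\omega^\alpha_{0,2}(\cdot,z_0)\sim \dd\zeta_0/(\zeta_0-\zeta)$ picks out the correct term, while the off-diagonal (strictly-$\succ$) terms of $\widehat{\mathcal Q}^{(1)}$ get reshuffled into the $i\geq 2$ part exactly as the identity $\sum_i \widehat{\mathcal Q}^{(i)}(\cdots) = $ (loop equation RHS) allows. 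As in Theorem~\ref{lemTRA}, the kernel $K^\alpha(z_0,z)$ is designed to have a zero of order $r$ at $o\in\mathfrak a$, so the residue against the $O((\dd\zeta/\zeta)^r)$ right-hand side of \eqref{eq:ALEarbit} vanishes, leaving only the $i=2,\ldots,r$ terms and the explicit $T$-term.

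Third, I would evaluate the residue of $K^\alpha(z_0,z)$ against $\delta_{g,r/2}\delta_{n,0}\,T\zeta^{-r-1}(\dd\zeta)^r$. This is a local computation at each $o\in\mathfrak a$ identical in structure to the one at the end of the proof of Theorem~\ref{lemTRA}: since near $o$ in component $a$ one has $\Psi(z)$ and its derivatives contributing only an overall non-zero constant and a power of $\zeta$, the residue collapses to $T\sum_{a=1}^r \dd\xi_{-1}^a(z_0)\big/\prod_{b\neq a}(Q_b-Q_a)$, giving the second line of \eqref{TRAarbit}. Finally, symmetry of the $\omega^\alpha_{g,n}$ in their $n$ arguments follows, as in the self-dual case, from the fact that they descend from an Airy structure (Theorem~\ref{Whit1prop}) via Theorem~\ref{t:KS}, despite the apparently distinguished role of $z_0$.

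The main obstacle I anticipate is the precise bookkeeping in inverting the triangular operator $\widehat{\mathcal{Q}}^{(1)}$: unlike the self-dual case where $\mathcal{Q}^{(1)}$ was literally diagonal (Remark~\ref{r:why}), here one must check that the strictly-upper-triangular part of $\widehat{\mathcal Q}^{(1)}$ (the terms supported on $z'\succ z$) can be consistently absorbed into the $\sum_{i=2}^r$ contribution when passing from \eqref{eq:ALEarbit} to \eqref{TRAarbit} — in other words that the $\Psi$-dressing has been chosen so that the recursion kernel $K^\alpha$ simultaneously inverts the diagonal and kills (via the order-$r$ zero at $\mathfrak a$) everything else. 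Getting the normalization of $K^\alpha$ and the degree count of its zero at $o$ exactly right, while keeping track of the non-commutative ordering $\overset{\nearrow}{\mathcal{SP}}$ throughout, is where the real work lies; the rest is a routine transcription of the self-dual argument.
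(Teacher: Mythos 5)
Your high-level skeleton matches the paper's proof exactly: extract the $i=1$ term from the abstract loop equations of Lemma~\ref{l:ALEarbit}, invert $\widehat{\mathcal{Q}}^{(1)}$ by a residue against a kernel with an order-$r$ zero at $\mathfrak{a}$ (which kills the $O((\dd\zeta/\zeta)^r)$ remainder), evaluate the $T$-term locally, and invoke the Airy-structure origin for symmetry. Indeed the paper's proof of Theorem~\ref{lemTRAarbit} is literally ``same as Theorem~\ref{lemTRA}, using properties (i) and (ii) of $K^\alpha$''. The entire content of the theorem therefore sits in the construction of a kernel with those properties, and this is where your proposal has a genuine gap, in two places.

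First, the ``diagonal'' of $\widehat{\mathcal{Q}}^{(1)}$ is not a multiplication operator. For $z$ in component $a>1$, the $z'=z$ term of \eqref{eq:Q1abstract} is $\Psi^{-1}(z)\,\overset{\nearrow}{\mathcal{SP}}\bigl(\prod_{z''\prec z}(\mathcal{J}_0(z'')+\nabla_\zeta)\bigr)\,u(z)\,\Psi_z(z)$, an order-$(a-1)$ differential operator acting on $u(z)\Psi_z(z)$; your claim that the diagonal entries are the invertible scalars $\prod_{j>\mathfrak{c}(z)}(Q_j-Q_{\mathfrak{c}(z)}+\alpha(j-\mathfrak{c}(z)))$ only holds for $a=1$. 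Consequently your candidate kernel --- the Cauchy kernel divided by such a product --- cannot satisfy property (i). The paper instead integrates by parts to transfer the $\nabla_\zeta$'s onto the kernel and then \emph{requires} the first approximation $R(z_0,z)$ to solve the inhomogeneous ODE \eqref{eq:Uz0}; its existence and uniqueness (subject to the order-$r$ vanishing) is the content of Lemma~\ref{l:Ro}, proved by variation of constants. None of this is division by a constant.

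Second, your mechanism for the strictly-upper-triangular part is wrong. Those terms contribute, after the residue, $\sum_{z_0'\succ z_0}\bigl[\widehat{\mathcal{R}}_{z_0'}\,\omega^\alpha_{g,n+1}\bigr](z_0)$ as in \eqref{eq:tte}: they involve the \emph{same unknown} $\omega^\alpha_{g,n+1}$ at higher points of the fiber over $z_0$, at the same $(g,n)$. They cannot be ``reshuffled into the $i\geq 2$ part'' --- the $\widehat{\mathcal{Q}}^{(i)}$ with $i\geq 2$ act on $\Omega'^{\alpha}_{g,i,n}$, which by Remark~\ref{r:recursive} only involve correlators of strictly lower complexity, so absorbing same-complexity terms there would destroy the recursion on $2g-2+n$. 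Nor are they killed by the order-$r$ zero of the kernel at $\mathfrak{a}$, which only disposes of the loop-equation remainder. The paper cancels them by correcting the kernel itself: $K^\alpha$ is $-R$ plus the finite alternating sum of $\widehat{\mathcal{R}}$-dressings in \eqref{eq:recker}, which is precisely the explicit back-substitution of the triangular system. Without this step (or an equivalent component-by-component downward induction over $\mathfrak{c}(z_0)=r,r-1,\ldots,1$), property (i) fails and the recursion formula does not close.
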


As usual, the complexity lies in the details --- in this case, in the definition of the recursion kernel $K^{\alpha}(z_0,z)$, which is not given by \eqref{OKOK}. However, it achieves precisely the same goal, namely the inversion of the operator $ \left[ \widehat{\mathcal{Q}}^{(1)} \Omega'^{\alpha}_{g,1,n}(\cdot\,; w_{[n]}) \right](z)$. But given the complexity of the $\widehat{\mathcal{Q}}$ operators at arbitrary level, it is significantly more difficult to obtain the recursion kernel.

Let us first define what the kernel is supposed to do, assuming its existence. We want the recursion kernel $K^{\alpha}(z_0,z)$ to extract $\Omega'^{\alpha}_{g,1,n}(z; w_{[n]})$ from the knowledge of the quantity $\big[\widehat{\mathcal{Q}}^{(1)}\Omega'^{\alpha}_{g,1,n}(\cdot\,; w_{[n]})\big](z)$. To clarify, let us abstract the definition of the $\widehat{\mathcal{Q}}^{(1)}$ operator from the particular context of $\Omega'^{\alpha}_{g,1,n}$.
\begin{defin}\label{d:Q1arbit}
We define the operator $\widehat{\mathcal{Q}}^{(1)}$, which takes as input a meromorphic $1$-form $u$ on $C$, and outputs a meromorphic differential of degree $r$ on $C$ by the formula:
\begin{equation}
\label{eq:Q1abstract}
\begin{split}
[\widehat{\mathcal{Q}}^{(1)} u](z) & = \sum_{\substack{z' \in \mathfrak{f}(z) \\ z' \succeq z}} \Psi^{-1}(z)   \overset{\nearrow}{\mathcal{SP} }\left( u(z') \prod_{z'' \in \mathfrak{f}(z) \setminus \{ z' \}} (\mathcal{J}_0(z'') + \nabla_\zeta) \right) \Psi(z) \\
& = \sum_{\substack{z' \in \mathfrak{f}(z) \\ z' \succeq z}} \Psi^{-1}(z)  \overset{\nearrow}{\mathcal{SP}} \left( \prod_{\substack{z'' \in \mathfrak{f}(z) \\  z'' \prec z'}}  (\mathcal{J}_0(z'') + \nabla_\zeta) \right) u(z') \Psi_{z'}(z),
\end{split}
\end{equation}
where $\Psi_{z'}(z)$ was introduced in Definition \ref{d:psi}. Then, $\big[\widehat{\mathcal{Q}}^{(1)}\Omega'^{\alpha}_{g,1,n}(\cdot\,; w_{[n]})\big](z)$ from Definition~\ref{d:QII} is just an instance of application of the operator $\widehat{\mathcal{Q}}^{(1)}$ to the $1$-form $\Omega'^{\alpha}_{g,1,n}(z; w_{[n]}) $ in the variable $z$.
\end{defin}

We can now present our wish list for the kernel $K^{\alpha}$. 
\begin{itemize}
\item[(i)] For any meromorphic $1$-form $u$ on $C$ that is a linear combination of $(\dd \xi_{-k}^a)_{k,a}$, we have
\[
u(z_0) = -\sum_{o \in \mathfrak{a}} \Res_{z = o} K^{\alpha}(z_0,z) \big[\widehat{\mathcal{Q}}^{(1)}u\big](z).
\]
This assumption on $u$ is equivalent to
\begin{equation}
\label{uusa}u(z_0) = \sum_{o \in \mathfrak{a}} \Res_{z = o} \bigg(\int_{o}^z \omega_{0,2}(\cdot,z_0)\bigg) u(z),
\end{equation}
and for the spectral curve \eqref{eq:w02arbit} it means concretely that $u\big(\begin{smallmatrix} a \\ \zeta\end{smallmatrix}\big) \in \zeta^{-2}\mathbb{C}[\zeta^{-1}]\dd \zeta$ is assumed for any $a \in [r]$.
\item[(ii)] $K^{\alpha}(z_0,z)$ has a zero of degree $r$ at all $z=o \in \mathfrak{a}$ (as in the self-dual case).
\end{itemize} 
It is not clear \emph{a priori} that such a kernel exists, but we will construct a canonical one explicitly in Section \ref{s:kernel} for generic $(Q_a)_{a = 1}^r$, and thus justify its existence. 

\begin{proof}[Proof of Theorem~\ref{lemTRAarbit}]
Accepting the existence of such a $K^{\alpha}(z_0,z)$, the proof of the spectral curve topological recursion \eqref{TRAarbit} from the abstract loop equations in Lemma \ref{l:ALEarbit} is then precisely the same as in the self-dual case, Theorem~\ref{lemTRA}, so we do not repeat it here. The only small difference is to use property (i) for $u(z) = \Omega'^{\alpha}_{g,1,n}(z,w_{[n]})$ instead of \eqref{eq:inversion}, and the step above \eqref{eq530} remains valid now due to property (ii).
\end{proof}

\subsubsection{The recursion kernel}
\label{s:kernel}

We now turn to the existence of the recursion kernel $K^{\alpha}(z_0,z)$ satisfying the conditions (i) and (ii) above. We need the following auxiliary operator.

\begin{defin}\label{d:Rhat}
We define an operator $\widehat{\mathcal{R}}$ which takes as input a meromorphic $1$-form $u$ on $C$ and a preimage $z' \in \mathfrak{f}(z)$, and outputs a meromorphic $1$-form on $C$. For $z' \succ z$,
\[
\left[ \widehat{\mathcal{R}}_{z' }u \right](z) =  \frac{1}{\Psi_z(z)}\overset{\nearrow}{\mathcal{SP}} \left( \prod_{\substack{z'' \in \mathfrak{f}(z) \\  z\preceq  z'' \prec z'}}(\mathcal{J}_0(z'')+ \nabla_\zeta) \right) u(z') \Psi_{z'}(z),
\]
while if $z' \preceq z$, we set $\left[ \widehat{\mathcal{R}}_{z' }u \right](z) = 0$.
\end{defin}

\begin{lem}\label{l:kernel1}
Assume that we have $R(z_0,z)$ which is a $1$-form in $z_0$ and a differential of degree $1 -r $ in $z$, which has zeroes of order $r$ when $z \rightarrow \mathfrak{a}$ and satisfying
\begin{equation}\label{eq:Uz0}
\overset{\searrow}{\mathcal{SP}} \left( \prod_{\substack{z' \in \mathfrak{f}(z) \\  z' \prec z}}  (\mathcal{J}_0(z')- \nabla_\zeta) \right) R(z_0,z) \Psi^{-1}(z) =  \frac{\int_o^z \omega_{0,2}(\cdot, z_0)}{\Psi_z(z)}.
\end{equation}
Here, $o$ is the point in $\mathfrak{a}$ in the same component of $C$ as $z$, $\omega_{0,2}(z_1,z_2)$ was defined in \eqref{eq:w02arbit}, and for  $\mathfrak{c}(z)=1$ we understand the operator in bracket on the left-hand side as being the identity. Then, the following formula defines a kernel with property (i) and (ii):
\begin{equation}\label{eq:recker}
K^{\alpha}(z_0,z) = -\left(R(z_0,z) + \sum_{i = 1}^{r} (-1)^i \sum_{\substack{Z _0\subseteq \mathfrak{f}_{\succ}(z_0) \\ |Z_0| = i}} \left[ \overset{\searrow}{\mathcal{SP}}\left( \prod_{z_0' \in Z_0} \widehat{\mathcal{R}}_{z_0'} \right) R(\cdot, z) \right](z_0)\right),
\end{equation}
where $\mathfrak{f}_{\succ}(z) = \{z' \in \mathfrak{f}(z) \,\,|\,\,z' \succ z\}$. Furthermore, $K^{\alpha}(z_0,z)$ has a zero of degree $r$ when $z$ approaches a point in $\mathfrak{a}$.
\end{lem}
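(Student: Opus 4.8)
The plan is to verify that the explicit formula \eqref{eq:recker} produces a kernel satisfying properties (i) and (ii), assuming the auxiliary $1$-form $R(z_0,z)$ with the stated pole behaviour and the functional equation \eqref{eq:Uz0}. Property (ii) — the zero of degree $r$ at $z \to \mathfrak{a}$ — is the easy part: by hypothesis $R(z_0,z)$ has zeroes of order $r$ as $z \to \mathfrak{a}$, and the operators $\widehat{\mathcal{R}}_{z_0'}$ act only on the $z_0$-dependence (shifting the component index and multiplying by $\Psi$-ratios and powers of $\nabla_\zeta$ acting in $\zeta_0$), so they do not disturb the behaviour in $z$. Hence every summand on the right-hand side of \eqref{eq:recker} inherits the degree-$r$ zero at $z \to \mathfrak{a}$, and so does $K^\alpha(z_0,z)$.

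The heart of the matter is property (i): for any $u$ that is a finite linear combination of the $\dd\xi^a_{-k}$ (equivalently, $u\big(\begin{smallmatrix} a \\ \zeta\end{smallmatrix}\big) \in \zeta^{-2}\mathbb{C}[\zeta^{-1}]\dd\zeta$), we need
\[
u(z_0) = -\sum_{o \in \mathfrak{a}} \Res_{z = o} K^{\alpha}(z_0,z)\,\big[\widehat{\mathcal{Q}}^{(1)}u\big](z).
\]
First I would substitute the second form of $\big[\widehat{\mathcal{Q}}^{(1)}u\big](z)$ from \eqref{eq:Q1abstract}, a sum over $z' \in \mathfrak{f}(z)$ with $z' \succeq z$ of terms $\Psi^{-1}(z)\,\overset{\nearrow}{\mathcal{SP}}\big(\prod_{z'' \prec z'}(\mathcal{J}_0(z'')+\nabla_\zeta)\big)\,u(z')\,\Psi_{z'}(z)$. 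The key structural observation is that the operator $\widehat{\mathcal{R}}_{z'}$ in Definition~\ref{d:Rhat} is tailored precisely so that pairing $R(z_0,z)$ against one of these $z'$-summands, taking residues at $\mathfrak{a}$, reproduces $-[\widehat{\mathcal{R}}_{z'}u](z)$ up to the "diagonal" contribution $z' = z$. Concretely, when $z' = z$ (the leading triangular term), the reversed spectrally-ordered product of $(\mathcal{J}_0 - \nabla_\zeta)$ factors in \eqref{eq:Uz0} exactly cancels the forward-ordered product of $(\mathcal{J}_0 + \nabla_\zeta)$ factors appearing for that summand — this is where the sign flip between $\overset{\nearrow}{\mathcal{SP}}$ and $\overset{\searrow}{\mathcal{SP}}$ and between $+\nabla_\zeta$ and $-\nabla_\zeta$ is used, together with integration by parts on $\mathbb{C}(\!(\zeta)\!)$ — so that the residue collapses, via \eqref{eq:Uz0} and the Cauchy-kernel property \eqref{uusa} of $\int_o^z\omega_{0,2}(\cdot,z_0)$, to $u(z_0)$ itself. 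The off-diagonal summands ($z' \succ z$) generate exactly the correction terms $\big[\overset{\searrow}{\mathcal{SP}}\big(\prod_{z_0'\in Z_0}\widehat{\mathcal{R}}_{z_0'}\big)R(\cdot,z)\big](z_0)$, and the alternating-sign sum over subsets $Z_0 \subseteq \mathfrak{f}_\succ(z_0)$ in \eqref{eq:recker} is precisely an inclusion–exclusion that telescopes all of them away. So the proof reduces to an induction on $r - \mathfrak{c}(z_0)$ (the number of components "above" $z_0$): the base case $\mathfrak{c}(z_0) = r$ has no correction terms and is the bare cancellation via \eqref{eq:Uz0}, and the inductive step peels off one factor of $\widehat{\mathcal{R}}$.

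The main obstacle I anticipate is bookkeeping the non-commutative ordering correctly through the residue computation: one must track, for each subset $Z_0$ and each $z' \in \mathfrak{f}(z)$, how the spectrally-ordered string of $(\mathcal{J}_0(z'') + \nabla_\zeta)$ operators (acting in $\zeta$) interacts with the $\widehat{\mathcal{R}}_{z_0'}$ operators (acting in $\zeta_0$), and verify that the residue at each $o \in \mathfrak{a}$ localizes correctly despite $\nabla_\zeta$ shifting pole orders. The cleanest way to control this is probably to use the explicit form of $\Psi$ from Lemma~\ref{lem520}, under which $\Psi^{-1}(z)\nabla_\zeta^{j}\Psi(z)$ is a monomial in $\zeta^{-1}$ with an explicit nonzero constant, reducing every ordered product to an elementary rational expression; then the residue analysis becomes a finite manipulation of Laurent tails, and the inclusion–exclusion identity over subsets $Z_0$ can be checked directly. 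Once that combinatorial identity is in place, both (i) and (ii) follow, and the final sentence of the lemma (the degree-$r$ zero of $K^\alpha$) is immediate from the discussion of property (ii) above.
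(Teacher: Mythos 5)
Your proposal follows essentially the same route as the paper: integration by parts to transfer the spectrally-ordered $(\mathcal{J}_0+\nabla_\zeta)$ string onto $R\,\Psi^{-1}$ with the sign flip, evaluation of the diagonal $z'=z$ term to $u(z_0)$ via \eqref{eq:Uz0} and the Cauchy property \eqref{uusa}, identification of the $z'\succ z$ terms as the $\widehat{\mathcal{R}}_{z_0'}$ corrections, and inversion of the resulting triangular system by inclusion--exclusion (your induction on $r-\mathfrak{c}(z_0)$ is just this inversion organized recursively), with property (ii) following because $\widehat{\mathcal{R}}$ does not affect the order of vanishing in $z$. This matches the paper's argument.
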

\begin{proof}
Let us start by considering an arbitrary $R(z_0,z)$, and assume that $u$ is a meromorphic $1$-form. We can write:
\[
\begin{split}
 \sum_{o \in \mathfrak{a}}& \Res_{z = o} R(z_0,z)[\widehat{\mathcal{Q}}^{(1)} u](z) \\
 & =  \sum_{o \in \mathfrak{a}} \Res_{z = o}  \sum_{\substack{z' \in \mathfrak{f}(z) \\ z' \succeq z}} R(z_0,z) \Psi^{-1}(z)  \overset{\nearrow}{\mathcal{SP}} \left(  \prod_{\substack{z'' \in \mathfrak{f}(z) \\  z'' \prec z'}}(\mathcal{J}_0(z'') + \nabla_\zeta) \right) u(z') \Psi_{z'}(z) \\
 & = \sum_{o \in \mathfrak{a}} \Res_{z = o}  \sum_{\substack{z' \in \mathfrak{f}(z) \\ z' \succeq z}}  u(z') \Psi_{z'}(z)  \overset{\searrow}{\mathcal{SP}} \left( \prod_{\substack{z'' \in \mathfrak{f}(z) \\  z'' \prec z'}}  (\mathcal{J}_0(z'')- \nabla_\zeta) \right) R(z_0,z) \Psi^{-1}(z) ,
\end{split}
\]
where in the last line we used integration by parts. We adopted the convention that the operator in bracket is the identity when $\mathfrak{c}(z') = 1$. We define
\begin{equation}
U_{z'}(z_0,z) := \overset{\searrow}{\mathcal{SP}} \left( \prod_{\substack{z'' \in \mathfrak{f}(z) \\  z'' \prec z'}}  (\mathcal{J}_0(z'')- \nabla_\zeta) \right) R(z_0,z) \Psi^{-1}(z),
\end{equation}
so that
\begin{equation}\label{eq:sss}
 \sum_{o \in \mathfrak{a}} \Res_{z = o} R(z_0,z)[\widehat{\mathcal{Q}}^{(1)} u](z) =\sum_{o \in \mathfrak{a}} \Res_{z = o}  \sum_{\substack{z' \in \mathfrak{f}(z) \\ z' \succeq z}}  u(z') \Psi_{z'}(z)  U_{z'}(z_0,z).
\end{equation}
Imagine we can choose $R(z_0,z)$ such that
\begin{equation}\label{eq:ass}
U_z(z_0,z) = \frac{\int_o^z \omega_{0,2}(\cdot, z_0)}{\Psi_z(z)},
\end{equation}
in other words it satisfies \eqref{eq:Uz0}. Then, for any $u$ satisfying \eqref{uusa}, the $z=z'$ term on the right-hand-side of \eqref{eq:sss} is evaluated directly to $u(z_0)$, and we obtain:
\[
\begin{split}
& \sum_{o \in \mathfrak{a}} \Res_{z = o} R(z_0,z)[\widehat{\mathcal{Q}}^{(1)} u](z)\\
 &= u(z_0) + \sum_{o \in \mathfrak{a}} \Res_{z = o}  \sum_{\substack{z' \in \mathfrak{f}(z) \\ z' \succ z}}  u(z') \Psi_{z'}(z)  U_{z'}(z_0,z) \\
&=u(z_0) + \sum_{o \in \mathfrak{a}} \Res_{z = o}  \sum_{\substack{z' \in \mathfrak{f}(z) \\ z' \succ z}}  u(z') \Psi_{z'}(z)  \overset{\searrow}{\mathcal{SP}} \left( \prod_{\substack{z'' \in \mathfrak{f}(z) \\  z \preceq  z'' \prec z'}}  (\mathcal{J}_0(z'')- \nabla_\zeta) \right) U_z(z_0,z) \\
& =u(z_0) + \sum_{o \in \mathfrak{a}} \Res_{z = o}  \sum_{\substack{z' \in \mathfrak{f}(z) \\ z' \succ z}} U_z(z_0,z ) \overset{\nearrow}{\mathcal{SP}} \left( \prod_{\substack{z'' \in \mathfrak{f}(z) \\  z \preceq  z'' \prec z'}}(\mathcal{J}_0(z'')+ \nabla_\zeta) \right) u(z') \Psi_{z'}(z),
\end{split}
\]
where in the last line we integrated by parts. The residue can be evaluated because of \eqref{eq:ass}, and comparing with Definition~\ref{d:Rhat} we get:
\begin{equation}\label{eq:tte}
\sum_{o \in \mathfrak{a}} \Res_{z = o} R(z_0,z)[\widehat{\mathcal{Q}}^{(1)} u](z) 
= u(z_0) +\sum_{\substack{z_0' \in \mathfrak{f}(z_0)}} \left[ \widehat{\mathcal{R}}_{z_0'} u \right](z_0).
\end{equation}
We have obtained a triangular system, as the only non-zero terms in the sum are for $z_0' \succ z_0$). This can be inverted to get a recursion kernel $K^{\alpha}(z_0,z)$ such that we have property (i), namely
\[
-\sum_{o \in \mathfrak{a}} \Res_{z = o} K^{\alpha}(z_0,z)[\widehat{\mathcal{Q}}^{(1)} u](z) 
= u(z_0) .
\]
The explicit inversion of the triangular system leads to \eqref{eq:recker}.

Now assume that $R(z_0,z)$ has a zero of degree $r$ when $z$ approaches a point $ o \in \mathfrak{a}$. Since the operator $\widehat{\mathcal{R}}$ does not change the order of vanishing at $\zeta \to 0$, we conclude that $K^{\alpha}(z_0,z)$ has a zero of degree $r$ at $z=o$, as required in property (ii). 
\end{proof}

Lemma \ref{l:kernel1} gives a general formula for the recursion kernel in terms of $R(z_0,z)$ satisfying \eqref{eq:Uz0}, but we still have to justify the existence of $R(z_0,z)$. We now construct it explicitly, which will also result in an explicit formula for the recursion kernel.

\begin{lem}\label{l:Ro}
Let $z =  \big( \begin{smallmatrix} a \\ \zeta \end{smallmatrix} \big)$ and assume $Q_a - Q_b \notin \alpha \mathbb{Z}$ for $a \neq b$. Then, there is a unique $R(z_0,z)$ satisfying \eqref{eq:Uz0} and vanishing at order $r$ when $z$ approaches points in $\mathfrak{a}$. It is given by:
\begin{equation}
\label{reformula} R(z_0,z) =\frac{\dd \zeta_0}{(\dd \zeta)^{r - 1}} \delta_{a, \mathfrak{c}(z_0)} \sum_{k \in \mathbb{Z}_{>0}}\frac{\zeta^{k+r-1 }}{\zeta_0^{k+1}} 
\sum_{i=1}^{a-1} \frac{1}{Q_a-Q_i +\alpha(a+k-i)  } \prod_{\substack{d \in [r] \\ d \neq a,i}} \frac{1}{Q_d - Q_i +\alpha(d-i) }.
\end{equation}
For $a=1$, the summation over $i$ is understood to become $1$, with the last product becoming a product over $d \in [r]$ with $d \neq 1$.
\end{lem}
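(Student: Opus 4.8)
The plan is to construct $R(z_0,z)$ directly by solving the defining equation \eqref{eq:Uz0} mode by mode, then verify the order of vanishing at $\mathfrak{a}$ and uniqueness. The key observation is that \eqref{eq:Uz0} is a linear equation: given $z=\big(\begin{smallmatrix}a\\\zeta\end{smallmatrix}\big)$, the left-hand side applies the spectrally-ordered product of the operators $(\mathcal{J}_0(z'')-\nabla_\zeta)$ over all preimages $z''\in\mathfrak{f}(z)$ with $z''\prec z$, i.e.\ over components $b=1,\ldots,a-1$, to $R(z_0,z)\Psi^{-1}(z)$, and this must equal $\big(\int_o^z\omega_{0,2}(\cdot,z_0)\big)/\Psi_z(z)$. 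First I would compute the right-hand side explicitly: since $\omega_{0,2}$ is the standard bidifferential on the disjoint union of formal disks, $\int_o^z\omega_{0,2}(\cdot,z_0) = \delta_{a,\mathfrak{c}(z_0)}\sum_{k>0}\frac{\zeta^k}{\zeta_0^{k+1}}\dd\zeta_0$, up to the elementary antiderivative computation, and $\Psi_z(z)$ was evaluated in Lemma~\ref{lem520} (take $z_0=z$ there, giving a product $\prod_{j=a+1}^r(Q_j-Q_a+\alpha(j-a))$ times a power of $\zeta$). So the target of the equation is an explicit Laurent-type series in $\zeta$ with coefficients $\zeta_0^{-k-1}\dd\zeta_0$.

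Next I would make the ansatz that $R(z_0,z)$ is supported on $\mathfrak{c}(z)=\mathfrak{c}(z_0)=a$ (forced by the $\delta_{a,\mathfrak{c}(z_0)}$ on the right-hand side, since the operators preserve the component) and is a sum over $k\in\mathbb{Z}_{>0}$ of terms $c_k(a)\,\zeta^{k+r-1}\zeta_0^{-k-1}\,\dd\zeta_0/(\dd\zeta)^{r-1}$ — the exponent $\zeta^{k+r-1}$ is dictated by requiring a zero of order $r$ at $\zeta=0$ while matching the $\zeta^k$ behaviour of the source after dividing by the appropriate powers of $\zeta$ coming from $\Psi^{-1}$ and from $\Psi_z$. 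The operators $(\mathcal{J}_0(z'')-\nabla_\zeta) = \frac{Q_b-\alpha}{\zeta}\dd\zeta - \alpha\,\dd_\zeta$ act on monomials $\zeta^{m}$ (after stripping the differential bookkeeping that $\nabla_\zeta$ carries) by $\zeta^m\mapsto (Q_b - \alpha - \alpha m)\zeta^{m-1}$, hence act \emph{diagonally} on powers of $\zeta$, and in particular commute with each other when acting on such monomials — so the spectral ordering is irrelevant here and the product over $b=1,\ldots,a-1$ just multiplies eigenvalues. Recall $\Psi^{-1}(z) = \zeta^{-(r-a+1)+Q_a/\alpha}$, so $R(z_0,z)\Psi^{-1}(z)$ has $\zeta$-power $k+r-1 - (r-a+1) + Q_a/\alpha = k+a-2+Q_a/\alpha$ in each term; applying the $a-1$ operators $(\mathcal{J}_0(\begin{smallmatrix}i\\\zeta\end{smallmatrix})-\nabla_\zeta)$ for $i=1,\ldots,a-1$ multiplies by $\prod_{i=1}^{a-1}\big(Q_i - \alpha - \alpha(k+a-1-i+Q_a/\alpha)\big)$ — after simplification this is $\prod_{i=1}^{a-1}\big(Q_i - Q_a - \alpha(k+a-i)\big)$ — and lowers the power to $k+a-2+Q_a/\alpha - (a-1) = k-1+Q_a/\alpha$. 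Comparing against the right-hand side $\big(\sum_{k>0}\frac{\zeta^k}{\zeta_0^{k+1}}\big)\big/\Psi_z(z)$, whose $\zeta$-power (after the same bookkeeping) is $k - 1 + Q_a/\alpha$ with prefactor $1/\prod_{j=a+1}^r(Q_j-Q_a+\alpha(j-a))$, I read off
\[
c_k(a) = \frac{1}{\prod_{j=a+1}^r(Q_j-Q_a+\alpha(j-a))}\cdot\frac{1}{\prod_{i=1}^{a-1}\big(Q_i-Q_a-\alpha(k+a-i)\big)}.
\]
The hypothesis $Q_a-Q_b\notin\alpha\mathbb{Z}$ guarantees all these denominators are nonzero, so $c_k(a)$ is well-defined. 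The main technical obstacle is purely bookkeeping: getting the powers of $\zeta$, the differential degrees (the $\nabla_\zeta$ acting on $k$-forms picks up that $(\dd\zeta)^{k+1}\partial_\zeta(u/(\dd\zeta)^k)$ convention from the footnote), and the index shifts all consistent, and then matching the closed form I wrote above to the form \eqref{reformula} claimed in the lemma — which presents $c_k(a)$ as a \emph{sum} over $i\in[a-1]$ rather than the single product I obtained. So after deriving my expression I would do a partial-fraction manipulation: the claimed formula reorganizes $\prod_{i=1}^{a-1}(\cdots)^{-1}$ (combined with one factor from the other product) into $\sum_{i=1}^{a-1}\frac{1}{Q_a-Q_i+\alpha(a+k-i)}\prod_{d\neq a,i}\frac{1}{Q_d-Q_i+\alpha(d-i)}$, which is a standard partial-fraction / residue identity in the variables $Q_i$; I would verify it by checking it is the expansion of a rational function with simple poles, or equivalently by a residue computation.

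For uniqueness and the order of vanishing: any $R(z_0,z)$ satisfying \eqref{eq:Uz0} and vanishing to order $r$ at $\mathfrak{a}$ is determined because the operator $\overset{\searrow}{\mathcal{SP}}(\prod_{z''\prec z}(\mathcal{J}_0(z'')-\nabla_\zeta))$ composed with multiplication by $\Psi^{-1}(z)$ is, as shown above, diagonal and \emph{injective} on each graded piece $\zeta^m$ — the eigenvalue $\prod_{i<a}(Q_i-Q_a-\alpha m')$ never vanishes under the genericity hypothesis — so $R(z_0,z)\Psi^{-1}(z)$ is uniquely pinned down on the span of the powers of $\zeta$ forced by the prescribed pole structure in $z_0$ and the $r$-fold zero at $\zeta=0$; there is no freedom to add homogeneous solutions compatible with both constraints. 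I would also remark that for $a=1$ the product over $i$ is empty, recovering the stated degenerate case. This completes the construction; plugging this $R$ into \eqref{eq:recker} of Lemma~\ref{l:kernel1} then yields the explicit recursion kernel and, together with Theorem~\ref{lemTRAarbit}, the non-commutative topological recursion at arbitrary level.
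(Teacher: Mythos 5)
Your route is genuinely different from the paper's. The paper treats \eqref{eq:Uz0} as an inhomogeneous ODE of order $a-1$ and solves it by variation of constants, computing Wronskians of the homogeneous solutions $u_b=\zeta^{b-2+Q_b/\alpha}$; the sum over $i\in[a-1]$ in \eqref{reformula} is then the sum over that basis. You instead note that each first-order factor sends $\zeta^m$ to $(Q_b-\alpha-\alpha m)\zeta^{m-1}$, so the composite operator is invertible monomial by monomial, and you read off the coefficients directly. This is cleaner and avoids the Wronskian bookkeeping entirely, and your coefficient $c_k(a)$ is correct (I checked it against \eqref{eq:Uz0} directly and against property (i) of the kernel for $r=2$). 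One caveat on your justification: the operators $\tfrac{Q_b-\alpha}{\zeta}-\alpha\partial_\zeta$ do \emph{not} commute (their commutator is $\alpha(Q_c-Q_b)\zeta^{-2}$), so the spectral ordering is not irrelevant --- it determines which $Q_b$ is paired with which shifted exponent. Your displayed product happens to implement the correct (decreasing) order, pairing $Q_i$ with exponent $m-i+1$, so the computation survives, but the stated reason for ignoring the ordering is wrong and should be replaced by the observation that each factor maps a monomial to a single monomial.

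The genuine gap is the step you defer to ``a standard partial-fraction identity.'' Carrying it out: partial fractions in $k$ turns $\prod_{i=1}^{a-1}(Q_i-Q_a-\alpha(k+a-i))^{-1}$ into $-\sum_{i=1}^{a-1}\frac{1}{Q_a-Q_i+\alpha(a+k-i)}\prod_{1\le d\le a-1,\,d\ne i}\frac{1}{Q_d-Q_i+\alpha(d-i)}$, with the $i$-independent prefactor $\prod_{j=a+1}^{r}(Q_j-Q_a+\alpha(j-a))^{-1}$ remaining outside the sum. This does \emph{not} coincide with \eqref{reformula}: there is an overall sign, and the factors with $d>a$ in \eqref{reformula} read $(Q_d-Q_i+\alpha(d-i))^{-1}$ rather than $(Q_d-Q_a+\alpha(d-a))^{-1}$. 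Already for $a=r=2$ your formula gives $(Q_1-Q_2-\alpha(k+1))^{-1}$ against $(Q_2-Q_1+\alpha(k+1))^{-1}$ in \eqref{reformula}, and the direct verification of property (i) (apply $\widehat{\mathcal{Q}}^{(1)}$ to $\dd\xi^2_{-m}$, which produces exactly the factor $Q_1-Q_2-\alpha(m+1)$) supports \emph{your} coefficient; your sum-over-$i$ form also agrees, up to the sign, with the paper's own intermediate variation-of-constants display, which itself does not match its final \eqref{reformula}. So your analysis is essentially sound, but as written the proposal claims to establish \eqref{reformula} while actually establishing a different formula. You must either carry out the identity and confront the mismatch explicitly, or flag that \eqref{reformula} appears to contain typos; leaving the reconciliation as ``standard'' hides the one place where the argument does not close.
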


\begin{proof}
We need to solve the inhomogeneous differential equation of order $\mathfrak{c}(z)$:
\begin{equation}\label{eq:toso}
\overset{\searrow}{\mathcal{SP}} \left( \prod_{\substack{z' \in \mathfrak{f}(z) \\  z' \prec z}}  (\mathcal{J}_0(z')- \nabla_\zeta) \right) R(z_0,z) \Psi^{-1}(z) =  \frac{\int_o^z \omega_{0,2}(\cdot, z_0)}{\Psi_z(z)}.
\end{equation}
Let $z = \big( \begin{smallmatrix} a \\ \zeta \end{smallmatrix} \big)$, thus $\mathfrak{c}(z) = a$. We recall that $\nabla_\zeta = \alpha\,\dd_{\zeta}$, and according to \eqref{eq:w02arbit} and Lemma~\ref{lem520} we have
\[
\begin{split}
\mathcal{J}_0(z) & = \frac{Q_{a} - \alpha}{\zeta} \dd \zeta, \\
\Psi^{-1}(z) & = \zeta^{-r+a-1+\frac{Q_{a}}{\alpha}},\\
\Psi_z(z) & = (\dd \zeta)^{r - a} \left( \prod_{j = a+ 1}^r (Q_j - Q_a + \alpha(j-a)) \right) \zeta^{1  - \frac{Q_{a}}{\alpha}},\\
\int_o^z \omega_{0,2}(\cdot, z_0) & = \delta_{a, \mathfrak{c}(z_0)}\dd \zeta_0 \sum_{k \in \mathbb{Z}_{>0}} \frac{\zeta^k}{\zeta_0^{k+1}} .
\end{split}
\]
We therefore have a differential equation with respect to $\zeta$, while $\zeta_0$ (or $z_0$) is a spectator variable. We start by looking at the homogeneous differential equation, dropping the differentials $\dd \zeta$ in the intermediate steps as they play no role here:
\[
\overset{\searrow}{\mathcal{SP}} \left( \prod_{\substack{z' \in \mathfrak{f}(z) \\  z' \prec z}}  (\mathcal{J}_0(z')- \nabla_\zeta) \right) u(z)  = \left( \frac{Q_{a-1} - \alpha}{\zeta} - \alpha \partial_\zeta \right) \cdots \left(\frac{Q_1-\alpha}{\zeta} - \alpha \partial_\zeta \right) u \big( \begin{smallmatrix} a \\ \zeta \end{smallmatrix} \big)= 0.
\]
For any $a \in [r]$, a basis of solutions is given by
\[
u_b \big( \begin{smallmatrix} a \\ \zeta \end{smallmatrix} \big) = \zeta^{b-2+\frac{Q_b}{\alpha}}, \qquad b \in [ a-1].
\]
By the method of variation of constants, the general solutions to the inhomogeneous differential equation
\[
\left( \frac{Q_{a-1} - \alpha}{\zeta} - \alpha \partial_\zeta \right) \cdots \left(\frac{Q_1-\alpha}{\zeta} - \alpha \partial_\zeta \right) v \big( \begin{smallmatrix} a \\ \zeta \end{smallmatrix} \big)= g\big( \begin{smallmatrix} a \\ \zeta \end{smallmatrix} \big)
\]
are given by
\[
v \big( \begin{smallmatrix} a \\ \zeta \end{smallmatrix} \big) = \frac{(-1)^{a-1}}{\alpha^{a-1}} \sum_{i=1}^{a-1} u_i  \big( \begin{smallmatrix} a \\ \zeta \end{smallmatrix} \big)  \int^{\zeta}  \frac{{\rm Wr}_i \big( \begin{smallmatrix} a \\ \eta \end{smallmatrix} \big) }{{\rm Wr} \big( \begin{smallmatrix} a \\ \eta \end{smallmatrix} \big)}\dd \eta,
\]
for arbitrary choices of integration constants, where
$$
{\rm Wr} \big( \begin{smallmatrix} a \\ \zeta \end{smallmatrix} \big) = \mathop{\det}_{i,j \in [a - 1]} \left( \partial_{\zeta}^{i-1} u_j \big( \begin{smallmatrix} a \\ \zeta \end{smallmatrix} \big) \right)
$$
is the Wronskian of the system, and ${\rm Wr}_i$ is the Wronskian with the $i$th column replaced by $(0,0,\ldots,0, g\big( \begin{smallmatrix} a \\ \zeta \end{smallmatrix} \big))$. The Wronskian can be calculated:
\[
\begin{split}
{\rm Wr} \big( \begin{smallmatrix} a \\ \zeta \end{smallmatrix} \big) & =\prod_{i=1}^{a-1} \zeta^{i-2+\frac{Q_i}{\alpha}} \prod_{j=1}^{a-1} \zeta^{1-j} \prod_{1 \leq i < j \leq a-1} \left(j-i + \frac{Q_j - Q_i}{\alpha} \right) \\
& = \zeta^{1-a+ \sum_{k=1}^{a-1} \frac{Q_k}{\alpha}}  \prod_{1 \leq c < d \leq a-1} \left( d-c + \frac{Q_d - Q_c}{\alpha} \right).
\end{split}
\]
As for ${\rm Wr}_i$, it can also be calculated:
\[
\begin{split}
{\rm Wr}_i \big( \begin{smallmatrix} a \\ \zeta \end{smallmatrix} \big) & = (-1)^{i+a-1} g\big( \begin{smallmatrix} a \\ \zeta \end{smallmatrix} \big) \prod_{\substack{k=1 \\ k\neq i}}^{a-1} \zeta^{k-2+\frac{Q_k}{\alpha}} \prod_{j=1}^{a-2} \zeta^{1-j}   \prod_{\substack{1 \leq c < d \leq a-1 \\ c,d \neq i}} \left( d-c + \frac{Q_d - Q_c}{\alpha} \right) \\
& = (-1)^{i+a-1} g\big( \begin{smallmatrix} a \\ \zeta \end{smallmatrix} \big) \zeta^{-i+1+  \sum_{k=1}^{a-1} \frac{Q_k}{\alpha} - \frac{Q_i}{\alpha}}   \prod_{\substack{1 \leq c < d \leq a-1 \\ c,d \neq i}} \left( d-c + \frac{Q_d - Q_c}{\alpha} \right) .
\end{split}
\]
Thus
\[
\frac{{\rm Wr}_i \big( \begin{smallmatrix} a \\ \zeta \end{smallmatrix} \big)}{{\rm Wr} \big( \begin{smallmatrix} a \\ \zeta \end{smallmatrix} \big)} =  \frac{(-1)^{a-1}g\big( \begin{smallmatrix} a \\ \zeta \end{smallmatrix} \big) \, \zeta^{a-i-\frac{Q_i}{\alpha}}}{\prod_{\substack{1 \leq d \leq a-1 \\ d \neq i}} \left(d-i + \frac{Q_d - Q_i}{\alpha} \right)}.
\]
Now let us substitute
\[
 g\big( \begin{smallmatrix} a \\ \zeta \end{smallmatrix} \big) = \frac{\int_o^z \omega_{0,2}(\cdot, z_0)}{\Psi_z(z)} = \delta_{a, \mathfrak{c}(z_0)}\sum_{k \in \mathbb{Z}_{>0}}  \frac{\zeta^{k+\frac{Q_a}{\alpha} - 1}}{\zeta_0^{k+1}} \frac{1}{\prod_{j = a+ 1}^r \big(Q_j - Q_a + \alpha(j-a)\big)},
\]
where we dropped the differentials. We get:
\[
\begin{split}
v \big( \begin{smallmatrix} a \\ \zeta \end{smallmatrix} \big) & = \delta_{a, \mathfrak{c}(z_0)} \frac{1}{\alpha} \sum_{i=1}^{a-1}\sum_{k \in \mathbb{Z}_{>0}}  \frac{\zeta^{i-2+\frac{Q_i}{\alpha}}}{\prod_{j = a+ 1}^r \big(Q_j - Q_a + \alpha(j-a)\big) \prod_{\substack{1 \leq d \leq a-1 \\ d \neq i}} \big(Q_d - Q_i+\alpha(d-i)\big) \zeta_0^{k+1}}  \\
& \qquad \qquad \times \int^{\zeta}  \eta^{a+k-i-1+\frac{Q_a-Q_i}{\alpha}}\dd \eta \\
& = \delta_{a, \mathfrak{c}(z_0)} \frac{1}{\prod_{j = a+ 1}^r (Q_j - Q_a + \alpha(j-a)) } \sum_{k \in \mathbb{Z}_{>0}}\frac{\zeta^{a+k-2+\frac{Q_a}{\alpha} }}{\zeta_0^{k+1}} \\
&\qquad \qquad \times   \sum_{i=1}^{a-1} \frac{1}{(Q_a-Q_i +\alpha(a+k-i))\prod_{\substack{1 \leq d \leq a-1 \\ d \neq i}} \left( Q_d - Q_i +\alpha(d-i) \right)  },
\end{split}
\]
where we have made a specific choice of integration constants to get the last line. Finally, we get $R(z_0,z) = v \big( \begin{smallmatrix} a \\ \zeta \end{smallmatrix} \big)  \Psi(z)$, where $\mathfrak{c}(z) = a$ (here we reinserted the differentials for the final result):
\[
 R(z_0,z) =\frac{\dd \zeta_0}{(\dd \zeta)^{1 - r}} \delta_{a, \mathfrak{c}(z_0)} \sum_{k \in \mathbb{Z}_{>0}}\frac{\zeta^{k+r-1 }}{\zeta_0^{k+1}} 
\sum_{i=1}^{a-1} \frac{1}{Q_a-Q_i +\alpha(a+k-i)  } \prod_{\substack{d \in [r] \\ d \neq a,i}} \frac{1}{Q_d - Q_i +\alpha(d-i) }.
\]
We in fact see that the choice of integration constants made is the only one for which $R(z_0,z)$ vanishes at order $r$ when $z$ approaches points of $\mathfrak{a}$.
\end{proof}

We are now ready to obtain an explicit formula for the recursion kernel $K^{\alpha}(z_0,z)$.
\begin{prop}\label{p:kerr}
The following kernel satisfies properties (i) and (ii):
\[
K^{\alpha}\big( \begin{smallmatrix} a \\ \zeta_0\end{smallmatrix}, \begin{smallmatrix} b \\ \zeta \end{smallmatrix} \big) = - \left(R\big( \begin{smallmatrix} a \\ \zeta_0\end{smallmatrix}, \begin{smallmatrix} b \\ \zeta \end{smallmatrix} \big) + \frac{\dd \zeta_0}{(\dd \zeta)^{r - 1}} \sum_{m=1}^{b-a} \sum_{b=a_m > \ldots > a_1>a} (-1)^m  \sum_{k \in \mathbb{Z}_{>0}} A_k(\mathbf{a}) \frac{\zeta^{k+r-1}}{\zeta_0^{k+1}}\right),
\]
when $b \geq a$, and $0$ otherwise, with
\begin{equation}
\label{eq:coeffi}
\begin{split}
A_k(\mathbf{a}) & = \prod_{\ell = 0}^{m-1} \frac{\prod_{j=a_\ell}^{a_{\ell+1}-1} (Q_j - Q_{a_\ell} + \alpha(j-a_\ell-k)) }{\prod_{j=a_\ell+1}^{a_{\ell+1}} (Q_j - Q_{a_\ell} + \alpha(j-a_{\ell})} \\
& \quad�\times  \sum_{i=1}^{a_m-1} \frac{1}{(Q_{a_m}-Q_i +\alpha(a_m+k-i))}\prod_{\substack{d \in [r] \\ d \neq i, a_m}} \frac{1}{ Q_d - Q_i +\alpha(d-i) } .
\end{split}
\end{equation}
\end{prop}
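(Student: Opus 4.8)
The strategy is to combine Lemma~\ref{l:kernel1} and Lemma~\ref{l:Ro}: the former expresses any kernel with properties (i) and (ii) as \eqref{eq:recker} in terms of a solution $R(z_0,z)$ of \eqref{eq:Uz0} that vanishes to order $r$ along $\mathfrak{a}$, and the latter produces the unique such $R$ explicitly in \eqref{reformula}. Hence properties (i) and (ii) of the claimed $K^{\alpha}$ are automatic, and all that remains is to evaluate the iterated operators $\widehat{\mathcal{R}}_{z_0'}$ occurring in \eqref{eq:recker} on this explicit $R$ and to check that the result reorganizes into the advertised chain sum with coefficients \eqref{eq:coeffi}.

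First I would record the elementary building block. For a monomial $1$-form $u\big(\begin{smallmatrix} c \\ \eta\end{smallmatrix}\big) = \delta_{c,c_0}\,\eta^{p-1}\dd\eta$ supported on a single component and a preimage $z' \in \mathfrak{f}(z)$ with $\mathfrak{c}(z') > \mathfrak{c}(z)$, Definition~\ref{d:Rhat} together with the explicit data $\mathcal{J}_0\big(\begin{smallmatrix} j \\ \zeta\end{smallmatrix}\big) = \tfrac{Q_j-\alpha}{\zeta}\dd\zeta$, $\nabla_\zeta = \alpha\,\dd_\zeta$ and the formulas for $\Psi$, $\Psi_{z'}$ from Lemma~\ref{lem520} show that $\big[\widehat{\mathcal{R}}_{z'}u\big]$ is again a monomial in $\zeta$, multiplied by a ratio of linear factors of the form $\big(Q_j - Q_\bullet + \alpha(j-\bullet-p)\big)$ in the numerator (from the $(\mathcal{J}_0(z'')+\nabla_\zeta)\zeta^{q}$-type computations) and $\big(Q_j - Q_\bullet + \alpha(j-\bullet)\big)$ in the denominator (from $\Psi_{z'}(z)$). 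The exponent of $\zeta$ shifts by a bounded amount, the component index is forced to jump strictly upward, and since $\widehat{\mathcal{R}}_{z'}$ vanishes whenever $z' \preceq z$, the $\overset{\searrow}{\mathcal{SP}}$-ordered composition in \eqref{eq:recker} is triangular: only chains with strictly increasing component indices survive.

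Next I would iterate. Fix $z_0 = \big(\begin{smallmatrix} a \\ \zeta_0\end{smallmatrix}\big)$ and a subset $Z_0 \subseteq \mathfrak{f}_\succ(z_0)$ of size $m$, with components $a_1 < \cdots < a_m$, and set $a_0 := a$. Applying the $\widehat{\mathcal{R}}$'s in decreasing spectral order and using the building-block computation, each step $\ell\mapsto\ell+1$ contributes exactly the ratio $\prod_{j=a_\ell}^{a_{\ell+1}-1}(Q_j-Q_{a_\ell}+\alpha(j-a_\ell-k))\big/\prod_{j=a_\ell+1}^{a_{\ell+1}}(Q_j-Q_{a_\ell}+\alpha(j-a_\ell))$, where $k$ is the exponent label carried by the monomial $\zeta_0^{-k-1}$ of $R$ and is conserved along the chain; the innermost factor $R(\cdot,z)$ supplies the remaining sum $\sum_{i=1}^{a_m-1}\big(Q_{a_m}-Q_i+\alpha(a_m+k-i)\big)^{-1}\prod_{d\neq i,a_m}\big(Q_d-Q_i+\alpha(d-i)\big)^{-1}$ from \eqref{reformula}. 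Multiplying these contributions and tracking the overall $\zeta^{k+r-1}/\zeta_0^{k+1}$ and the differential $\dd\zeta_0/(\dd\zeta)^{r-1}$, one obtains precisely $A_k(\mathbf{a})$ of \eqref{eq:coeffi} with $\mathbf{a}=(a_1,\dots,a_m)$ and $a_m=b$. Summing over $m$, over strictly increasing chains $b=a_m>\cdots>a_1>a$, and adding the $R$ term itself (the $i=0$ summand of \eqref{eq:recker}) yields the stated formula; the vanishing of $K^{\alpha}$ for $b<a$ is immediate since $R$ requires $\mathfrak{c}(z_0)=\mathfrak{c}(z)$ and every surviving chain requires $a\le b$.

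The main obstacle I anticipate is the bookkeeping of the two spectral orderings $\overset{\nearrow}{\mathcal{SP}}$ and $\overset{\searrow}{\mathcal{SP}}$ together with the non-commutativity of $\nabla_\zeta$ with the multiplication operators $\mathcal{J}_0$: one must verify that the exponent offsets in the numerator of $A_k(\mathbf{a})$ emerge as $\alpha(j-a_\ell-k)$ with the correct sign, and that the intermediate powers of $\zeta$ telescope exactly across each chain. A secondary point to make explicit is that all sums over $k\in\mathbb{Z}_{>0}$ converge in the relevant completion, and that the genericity hypothesis $Q_a-Q_b\notin\alpha\mathbb{Z}$ keeps every denominator nonzero, so that the specific choice of integration constants in Lemma~\ref{l:Ro}---the one enforcing order-$r$ vanishing---propagates correctly through the iteration.
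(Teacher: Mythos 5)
Your proposal is correct and follows essentially the same route as the paper's proof: it invokes Lemma~\ref{l:kernel1} and Lemma~\ref{l:Ro} to reduce everything to evaluating the iterated $\widehat{\mathcal{R}}_{z_0'}$ on the explicit $R$, computes the elementary action of $\widehat{\mathcal{R}}_{z'}$ on the monomial $\zeta_0^{-(k+1)}$ (with $k$ conserved along the chain), and composes these single-step ratios over strictly increasing chains $a=a_0<a_1<\cdots<a_m=b$ to assemble $A_k(\mathbf{a})$, exactly as in the paper. The points you flag as potential obstacles (ordering bookkeeping, sign of the $-\alpha k$ offset, telescoping of intermediate powers of $\zeta$, and genericity of the $Q_a$) are precisely the verifications the paper carries out, so there is no gap.
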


\begin{proof}
We need to write the action of the operator $\widehat{\mathcal{R}}$ from Definition \ref{d:Rhat} explicitly. (We drop the differentials for clarity.) Let $z = \big(\begin{smallmatrix} a \\ \zeta \end{smallmatrix} \big)$, and $z' = \big(\begin{smallmatrix} b \\ \zeta \end{smallmatrix} \big)$ with $b>a$. Substituting explicit expressions in the definition, we get:
\begin{equation*}
\begin{split}
& \quad \left[ \widehat{\mathcal{R}}_{z' }u \right](z) \\
& =   \frac{\zeta^{\frac{Q_a}{\alpha} - 1}}{\prod_{j=a+1}^b (Q_j-Q_a + \alpha(j-a))}
\left(\frac{Q_a - \alpha}{\zeta} + \alpha \partial_\zeta \right) \cdots \left( \frac{Q_{b-1} - \alpha}{\zeta} + \alpha \partial_\zeta \right) \cdot
 u\big( \begin{smallmatrix} b \\ \zeta \end{smallmatrix}\big) \zeta^{1+b-a-\frac{Q_a}{\alpha}}.
\end{split}
\end{equation*}
We now apply this operator to $R(z_0,z)$, in the variable $z_0$, with $\mathfrak{c}(z_0) = a$ and $\mathfrak{c}(z_0') = b$ as above. In fact, the only dependence in $\zeta_0$ in $R(z_0,z)$ is in the monomial $\zeta_0^{-(k+1)}$, which is summed over $k$. So let us calculate the action on this monomial first. We get:
\[
\left[ \widehat{\mathcal{R}}_{z_0'}  \zeta_0^{-(k+1)} \right] =   \frac{ \prod_{j=a}^{b-1} (Q_{j}-Q_a + \alpha(j-a-k) )}{\prod_{j=a+1}^b (Q_j-Q_a + \alpha(j-a))}  \zeta_0^{-(k+1)}.
\]
Thus, for any given sequence $\mathbf{a} = (a_0 < a_1 < \ldots< a_m)$, we define the coefficient \eqref{eq:coeffi}.
It follows that, if $z_0^0 < z_0^1 < \ldots< z_0^m$, with $\mathfrak{c}(z_0^i) = a_i$, we have:
\[
\left[ \widehat{\mathcal{R}}_{z_0^m}\, \cdots\, \widehat{\mathcal{R}}_{z_0^1} K^{\alpha}(\cdot, z) \right](z_0^0) = -\frac{\dd \zeta_0}{(\dd \zeta)^{r - 1}} \delta_{a_m, \mathfrak{c}(z)} \sum_{k \in \mathbb{Z}_{>0}} A_k(\mathbf{a}) \frac{\zeta^{k+r-1}}{\zeta_0^{k+1}}.
\]
The result of the Proposition follows by imposing that $a_m = b$ and $a_0 = a$, and summing over all possibilities.
\end{proof}

\begin{rem}
If $\alpha$ is set to a finite non-zero value as in the Nekrasov-Shatashvili regime, one needs to assume the non-resonance condition $Q_a - Q_b \notin \alpha \mathbb{Z}$ for any $a \neq b$ for the above expressions to be well-defined. It is possible  to treat  a resonant situation (but we will not do it) by introducing a function $\Psi(z)$ containing logarithms. On the other hand, if $\kappa$ is set to a non-zero value, as $\alpha = -\hbar^{\frac{1}{2}}(\kappa^{\frac{1}{2}} - \kappa^{-\frac{1}{2}})$ for finite $\kappa$, all expressions above should be thought as power series expansion in $\alpha \rightarrow 0$ and it suffices to assume $Q_a \neq Q_b$ (as already mentioned in Theorem~\ref{Whit1prop}).
\end{rem}

\subsubsection{The $r = 2$ case}
\label{r2kappagen}
For $r = 2$, the topological recursion in Theorem~\ref{lemTRAarbit} simplifies drastically. First, the only term in the summation is $i=2$, and
\[
\left[ \widehat{\mathcal{Q}}^{(2)} \Omega'^\alpha_{g,2,n}(\cdot\,; w_{[n]}) \right](z) = \Omega'^\alpha_{g,2,n}(z;w_{[n]}).
\]
Second, the equation $(\mathsf{J}^1(\zeta) + \mathsf{J}^2(\zeta))\mathcal{Z}= O(\zeta^{-1})$ is part of the Airy structure, and it results in
$$
\omega^{\alpha}_{g,n+1}\big(\begin{smallmatrix} 1\\ \zeta \end{smallmatrix},z_{[n]}\big) + \omega^{\alpha}_{g,n+1}\big(\begin{smallmatrix}2 \\ \zeta \end{smallmatrix},z_{[n]}\big) = O\bigg(\frac{\dd \zeta}{\zeta} \bigg).
$$
Since $\omega^{\alpha}_{g,n}$ does not contain positive powers of $\zeta$ --- owing to the definition of the basis of differentials $\dd\xi_{-k}^a$ in \eqref{defdifferentials} --- this yields an equality
\begin{equation}
\label{tiurugngiu}\omega_{g,n + 1}^{\alpha}\big(\begin{smallmatrix} 2 \\ \zeta \end{smallmatrix},z_{[n]}\big) = - \omega^{\alpha}_{g,n + 1}\big(\begin{smallmatrix} 1 \\ \zeta \end{smallmatrix},z_{[n]}\big).
\end{equation}
We therefore only need to know
$$
\omega^{\alpha}_{g,n}\big(\begin{smallmatrix} 2 & \cdots & 2 \\ \zeta_1 & \cdots & \zeta_n \end{smallmatrix}\big) =: \omega_{g,n}^{\alpha,(2)}(\zeta_1,\ldots,\zeta_n).
$$
Since the recursion kernel $K^{\alpha}\big( \begin{smallmatrix}a \\ \zeta_0\end{smallmatrix}, \begin{smallmatrix} b \\ \zeta\end{smallmatrix} \big)$ is non-vanishing only if $b \geq a$, we see that we only need
\[
K^{\alpha}\big( \begin{smallmatrix}2 \\ \zeta_0\end{smallmatrix}, \begin{smallmatrix} 2 \\ \zeta\end{smallmatrix} \big)  = - R\big( \begin{smallmatrix}2 \\ \zeta_0\end{smallmatrix}, \begin{smallmatrix} 2 \\ \zeta\end{smallmatrix} \big)   = - \frac{\dd \zeta_0}{\dd \zeta} \sum_{k \in  \mathbb{Z}_{>0}} \frac{\zeta^{k+1}}{\zeta_0^{k+1}} \frac{1}{Q_2-Q_1 + \alpha(k+1)} ,
\]
which we evaluated using Proposition \ref{p:kerr} and Lemma \ref{l:Ro}. This leads to
\[
\begin{split}
& \quad \omega_{g,n + 1}^{\alpha,(2)}(\zeta_0,\zeta_{[n]}) \\
& = \mathop{{\rm Res}}_{\zeta = 0}\,\dd \zeta\, K^{\alpha}\big( \begin{smallmatrix}2 \\ \zeta_0\end{smallmatrix}, \begin{smallmatrix} 2 \\ \zeta\end{smallmatrix} \big) \bigg(\omega^{\alpha,(2)}_{g - 1,n + 1}(\zeta,\zeta,\zeta_{[n]}) + \sum_{\substack{h + h' = g \\ J \sqcup J' = [n]}}^{{\rm no}\,\,\omega_{0,1}} \omega^{\alpha,(2)}_{h,1+|J|}(\zeta,\zeta_{J})\omega^{\alpha,(2)}_{h',1+|J'|}(z,z_{J'})\bigg) \\
& \quad + \frac{T\delta_{g,\frac{r}{2}}\delta_{n,0}}{Q_1 - Q_2}\,\frac{\dd \zeta_0}{\zeta_0^2}.
\end{split} 
\]

\subsection{Non-commutative topological recursion}
 
\label{NCTR}
A key role in obtaining the abstract loop equations in Lemma~\ref{l:ALEarbit} and the topological recursion in Theorem~\ref{lemTRAarbit} is played by the $D$-module with regular singularities on the base curve $C_0$ annihilated by
$$
\hat{\mathcal{Y}} = \prod_{a = 1}^r \left( \omega_{0,1}\big(\begin{smallmatrix} a\\ \zeta \end{smallmatrix} \big) + \nabla_{\zeta} \right).
$$
This can be viewed as a non-commutative curve (sometimes called quantum curve). 

As in Section~\ref{Generald}, the topological recursion formula in the form of Theorem~\ref{lemTRAarbit} will apply verbatim for the NS correlators of the Airy structures obtained from the ones of Theorem~\ref{Whit1prop} via general dilaton shifts and changes of polarization.  Those shifts and changes of polarization correspond to modifying the spectral curve as follows:
\begin{equation}
\label{thegeneralsp}\begin{split}
\omega_{0,1}\big(\begin{smallmatrix} a \\ \zeta \end{smallmatrix}\big) & = \frac{Q_a}{\zeta} \dd \zeta + \sum_{k > 0} F_{0,1}\big[\begin{smallmatrix} a \\ -k \end{smallmatrix}\big]\,\zeta^{k - 1} \dd \zeta, \\
\omega_{0,2}\big(\begin{smallmatrix} a_1 & a_2 \\ \zeta_1 & \zeta_2 \end{smallmatrix}\big) & = \frac{\delta_{a_1,a_2}\,\dd \zeta_1 \dd \zeta_2}{(\zeta_1 - \zeta_2)^2} + \sum_{k_1,k_2 > 0} F_{0,2}\big[\begin{smallmatrix} a_1 & a_2 \\ k_1 & k_2 \end{smallmatrix}\big]\,\zeta_1^{k_1 - 1}\zeta_2^{k_2 - 1} \dd \zeta_1 \dd \zeta_2, \\
\dd \xi_{-k}^a\big(\begin{smallmatrix} b \\ \zeta \end{smallmatrix}) & = \frac{\delta_{a,b}\dd \zeta}{\zeta^{k + 1}} + \sum_{l > 0} \frac{F_{0,2}\big[\begin{smallmatrix} a & b \\ -k & -l \end{smallmatrix}\big]}{k}\,\zeta^{l - 1}\dd \zeta.
\end{split}
\end{equation}
By allowing $Q_a$ to be $\alpha$ dependent, we can  set the zero-mode $\mathsf{J}_0^a = Q_a - \alpha$ to be any value.

We can find a basis $\Psi\big(\begin{smallmatrix} a \\ \zeta \end{smallmatrix} \big)$ of the $D$-module of the form
$$
\Psi\big(\begin{smallmatrix} a \\ \zeta \end{smallmatrix} \big) = \zeta^{a - \frac{Q_a}{\alpha}} \cdot \Xi_a(\zeta),\qquad a \in [r],
$$
where $\Xi_a(\zeta) = 1 + O(\zeta)$ are formal power series in $\zeta$. Likewise, the right-hand side in \eqref{eq:psi} is now multiplied by a formal power series in $\zeta$ with constant term $1$. We can construct the Cauchy kernel
$$ 
G\big(z_0,z)= \int^{z}_o \omega_{0,2}(\cdot,z_0) \,\,\,\mathop{=}_{|x(z)| < |x(z_0)|}\,\,\, \sum_{\substack{k > 0 \\ a \in [r]}} k\,\dd \xi_{-k}^a(z_0) \dd \xi_k^a(z),
$$
where we recall that $\dd \xi_k^a \big(\begin{smallmatrix} b \\ \zeta \end{smallmatrix}\big) = \delta_{a,b} \frac{\zeta^k}{k}$. The auxiliary operator $\widehat{\mathcal{R}}$ is still given by Definition~\ref{d:Rhat}. We can still find a unique solution $R(z_0,z)$ solving \eqref{eq:Uz0} and vanishing at order $r$ when $z \rightarrow \mathfrak{a}$. It is no longer defined as \eqref{reformula}: for $z = \big(\begin{smallmatrix} a \\ \zeta \end{smallmatrix}\big)$, the factor $\frac{\dd\zeta_0}{\zeta_0^{k + 1}}$ is replaced by $\dd \xi_{-k}^{a}(z_0)$ and the $\zeta^{k + r - 1}$ is multiplied by a formal power series in $\zeta$  of the form $1 + O(\zeta)$. Then, the assumptions of Lemma~\ref{l:kernel1} are satisfied  and  hence we obtain  a recursion kernel using the formula \eqref{eq:recker}.

Thus, we can  propose \eqref{TRAarbit} as a general definition of the topological recursion taking as input not a spectral curve, but a $D$-module of degree $r$ with regular singularities on a local curve. There are two types of assumptions we can impose for this definition to make sense:
\begin{itemize}
\item If we consider $\alpha$ as a formal parameter near $0$, we should assume that $Q_a \neq Q_b + O(\alpha)$ for any $a \neq b$.
\item If we set $\alpha$ to a finite non-zero value, we should rather assume the non-resonance condition $Q_a - Q_b \notin \alpha\mathbb{Z}$.
\end{itemize}
Then, the fact that it corresponds to an Airy structure guarantees symmetry of the $\omega_{g,n}^{\alpha}$, which would be non-trivial to prove from \eqref{TRAarbit} itself.

Topological recursion for non-commutative curves first appeared in the context of $\beta$-matrix models (with identification $\kappa = \frac{\beta}{2}$ and $\hbar^{-1} = N$ the size of the matrix) \cite{CE06,BEMPf}, and was first developed as an abstract construction \cite{EMq0,CEMq1,CEMq2} for degree $2$ (non-commutative) hyperelliptic curves. It was further studied in \cite{BelEyn0,BelEyn} for more general $D$-modules on global curves. One particularity in this work is that the global geometry allowed  the authors to move the residue to surround the Stokes line, or even squeeze the contour further to the (infinitely many) zeroes of $\Psi\big(\begin{smallmatrix} a \\ \zeta\end{smallmatrix}\big)$ (called Bethe roots) --- in this form, with their assumptions,  the $i > 2$ terms in the recursion formula vanish. In comparison, our work bypasses the (interesting) global difficulties and proposes an appropriate local definition. While our formula is simpler as we only take residues at the regular singularities as opposed to extended contours or a sum of residues at infinitely many points, it is more intricate, for $ r > 2 $, due to the non-trivial differential operators $\widehat{\mathcal{Q}}^{(i)}$ from Definition~\ref{d:QII}. For $r = 2$ our formula has the same structure\footnote{Yet, among the simplifications mentioned in Section~\ref{r2kappagen}, the only one that does not apply for general change of polarization is the equality \eqref{tiurugngiu}.} as the commutative case. Only the recursion kernel changes, and it is now obtained as a solution of a differential equation which, in fact, reproduces the recursion kernel of \cite{CEMq1,CEMq2,BelEyn0}. For $r > 2$ our formula is new and can be viewed as the non-commutative generalization of the higher topological recursion of \cite{BHLMR,BE2}.

\section{Gaiotto vectors of type B}

\label{Sec:AiryB}

The connection between Gaiotto vectors and Whittaker vectors given in Theorem  \ref{ThmBFN} is only valid when $G$ is a simply-laced Lie group. However, when $G$ is of type $B_r$, \emph{i.e} $G = {\rm SO}(2r+1)$, the proposal from physics is that the Gaiotto vector should be a Whittaker vector for an appropriate twisted module of $\mathcal{W}^{\mathsf{k}}(\mathfrak{gl}_{2r})$ (see \cite{KMST} for the physics proposal). The twist in question is a folding symmetry of the Dynkin diagram of type A that gives the Dynkin diagram of type B.

We can repeat the steps of Section~\ref{Sec:AiryA} to construct Gaiotto vectors of type B as partition functions of Airy structures realized as $\mathcal{W}^{\mathsf{k}}(\mathfrak{gl}_{2r})$-modules, and the steps of Section~\ref{Sec:TRA} to obtain the corresponding spectral curve topological recursion. For simplicity, in this section we will restrict to the self-dual level $\kappa =1$.

\subsection{The Airy structure}

In this subsection, we assume that $r \geq 2$. We consider the $ \mathcal W $-algebra $ \mathcal  W^{\mathsf{k}}(\mathfrak{gl}_{2r}) \subset \mathcal{H}(\mathfrak{gl}_{2r})$ at the self-dual level $\kappa=1$. We choose the canonical orthonormal basis $( \chi^a)_{a = 1}^{2r}$ of the Cartan subalgebra $\mathfrak{h} = \mathbb{C}^{2r}$, and define the automorphism $\sigma$ which acts as follows,
$$
\forall a \in [r],\qquad \sigma(\chi^a) = - \chi^{2r + 1 - a}.
$$ 
We choose a diagonal basis for the action, namely
$$
v^{\pm,a} := \chi^a \mp \chi^{2r + 1 - a},\qquad a \in [r],
$$ such that $\sigma(v^{\pm,a}) = \pm v^{\pm,a}$. The inverse base change is given as follows:
$$
\forall a \in [r],\qquad \chi^a = \tfrac{1}{2}(v^{-,a} + v^{+,a}),\qquad \chi^{2r + 1 - a} = \tfrac{1}{2}(v^{-,a} - v^{+,a}).
$$
The diagonal basis has the property that
\begin{equation}
	\label{sclarB}
	\forall \gamma,\gamma' \in \{\pm 1\},\qquad \langle v^{\gamma,  a},v^{\gamma',a'}\rangle = 2\delta_{\gamma,\gamma'}\delta_{a,a'}.
\end{equation}
This automorphism $ \sigma $ extends to an automorphism of the Heisenberg algebra $ \mathcal{H}(\mathfrak{gl}_{2r}) $, using which we construct a $\sigma$-twisted $\mathcal{H}(\mathfrak{gl}_{2r})$-module $\mathcal{M}^{\hbar} = \mathbb C \big[\!\big[(x^a_m)_{a \in [r],\,\,m > 0}\big]\!\big][\![\hbar^{1/2}]\!] $ as follows. We  have the twisted state-field correspondence,
\begin{equation}\label{eq:vfield}
	Y_{\sigma}(v_{-1}^{+,a} \ket{0},z) = \sum_{m \in \mathbb{Z}} \frac{J_{2m}^a}{z^{m + 1}} =: v^{+,a}(z) ,\qquad Y_{\sigma}(v_{-1}^{-,a},\ket{0},z) = \sum_{m \in \mathbb{Z} + \frac{1}{2}} \frac{J_{2m}^a}{z^{m + 1}} =: v^{-,a}(z),
\end{equation}
where the representation of the rescaled Heisenberg modes $\mathsf{J}^a_m$ is\footnote{For general $\kappa$ one would add $ -\hbar^{\frac{1}{2}}\alpha_0 $ to the zero-mode, as in \eqref{eq:hrep}.}, for $a \in [r]$ and $m  \in \mathbb{Z}$,
\begin{equation}\label{eq:heis}
	\mathsf{J}^a_{m} = \left\{
	\begin{array}{lr}
		\hbar \frac{\partial}{\partial x^a_m}&  m > 0,\\
		-m x^{a}_{-m}& m < 0, \\
		Q_a& m = 0, 
	\end{array} \right.
\end{equation}
with $Q_1, \ldots, Q_r \in \mathbb{C}$.

Since $\kappa=1$, the $ \mathcal{W}^{\mathsf{k}}(\mathfrak{gl}_{2r}) $-algebra generators obtained from the quantum Miura transform have the following transformation properties,
$$
\sigma\big(e_i(\chi^1_{-1},\ldots,\chi^{2r}_{-1}) \ket{0}\big) = (-1)^i e_i(\chi^1_{-1},\ldots,\chi^{2r}_{-1}) \ket{0},
$$
and hence we get the mode expansion
\begin{equation}\label{eq:modestwist}
	i \in [2r],\qquad W^i(z) := Y_\sigma\big(e_i(\chi^1_{-1},\ldots,\chi^{2r}_{-1}) \ket{0}, z\big) = \sum_{k \in \mathbb Z + \epsilon(i)} \frac{W^i_k}{z^{k+i}},
\end{equation}
where $\epsilon(i)  = 0$ if $i$ is even and $\epsilon(i) = \frac{1}{2}$ if $i$ is odd. The twisting by $ \sigma $ changes the OPE of the Heisenberg fields, which in turn modifies the expression for the fields $W^i(z)$ in terms of the Heisenberg fields (\eqref{Wcritlev} in the untwisted case at self-dual level) in accordance with the product formula \cite[Proposition 3.2]{BakalovMilanov2}.  However, for the rescaled fields $\mathsf{W}^i(z)$, it is fairly easy to see from the product formula\footnote{Examples of explicit computations using the product formula can be found in \cite[Section 4.1.2]{BBCCN18}.} that the (quantum) ``correction terms'' come with powers of $ \hbar $.  (We recall the definition of the rescaled modes in $\mathcal{A}^{\hbar}$ from Lemma \ref{l:Ahbar}.)

Our goal is to construct a Whittaker vector for the rescaled positive modes $\mathsf{W}^i_k$ using Airy structures. More precisely, we would like to construct a state $\ket{w}$ that is annihilated by all positive modes $\mathsf{W}^i_k$, $i \in [2 r]$, $ k \in ( \mathbb{Z}_{>0}- \epsilon(i))$, except for the mode $\mathsf{W}^{2r-1}_{1/2}$ acting on $\ket{w}$ as a constant. According to \cite{KMST}, $\ket{w}$ should then correspond to the Gaiotto vector of type B.

We follow the strategy outlined in Section \ref{s:WhitAiry}:
\begin{enumerate}
	\item We need to show that $S' = \{\mathsf{W}^{2r-1}_{1/2}\}$ is an extraneous subset of the set of positive modes $S_+$.
	\item We need to construct an Airy structure differential representation for the positive modes $\mathsf{W}^i_k$. 
\end{enumerate}
It then follows that the partition function for the shifted Airy structure is the sought-for Whittaker vector $\ket{w}$.

Unfortunately, at this stage we do not have a proof of the first step. It is clear by Proposition 3.14 in \cite{BBCCN18} (see also footnote \ref{f:subalgebra}) that the set of positive modes $S_+$ satisfies the subalgebra condition in Definition \ref{d:subVOA}). However, we do not have an analog of Proposition \ref{thecoth} for the existence of the extraneous subset for the present twisted module. We thus formulate the first condition as a conjecture:

\begin{conj}
	\label{theconjB} Let $S_+ = \{\mathsf{W}^i_k\,\,|\,\,i\in [2 r],\,\, k \in \left (\mathbb{Z}_{>0}-\epsilon(i) \right )\}$ be the subset of positive modes of the strong generators in \eqref{eq:modestwist}, and $S' = \{\mathsf{W}^{2r-1}_{1/2}\} \subset S_+$. Then $S'$ is an extraneous subset, according to the terminology of  Definition \ref{d:extra}.
\end{conj}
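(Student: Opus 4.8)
\textbf{Proof proposal for Conjecture \ref{theconjB}.}

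The plan is to mimic the argument of Theorem \ref{Lem24} and Proposition \ref{thecoth}, adapted to the $\sigma$-twisted module. The key structural inputs that need to be transported to the twisted setting are: (a) a filtration argument showing that $(\mathsf{W}^{2r-1}_{1/2})^d\ket{\lambda}$ sits at the top of Li's filtration in its conformal-weight subspace, analogous to the role of $(\mathsf{W}^\ell_{-1})^d\ket{\lambda}$ in Section \ref{SecWhi1}, so that the Whittaker vector $\ket{w_{2r-1}^\lambda}$ exists and is unique for generic $\lambda$; and (b) the genericity/linear-independence argument of Theorem \ref{Lem24} that promotes the subalgebra condition \eqref{Asplus} to the stronger extraneous condition \eqref{eq:extra} with $S' = \{\mathsf{W}^{2r-1}_{1/2}\}$ and $S'' = S_+\setminus S'$. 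Once the existence of $\ket{w_{2r-1}^\lambda}$ for generic $\lambda$ is established, the proof of Theorem \ref{Lem24} applies essentially verbatim: an ordered polynomial $\mathsf{A}\in\mathcal{A}^\hbar$ annihilating all the $\ket{w_{2r-1}^\lambda}$ must be ``good'' (i.e.\ lie in $\mathcal{A}^\hbar\cdot S_{+\setminus(2r-1)}$), because restricting to the constant term $\ket{\lambda}$ and varying $\lambda$ forces the polynomial factors $p(\mathsf{W}^i_0)$ and $q(\mathsf{W}^{2r-1}_{1/2})$ to vanish.

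First I would set up the twisted analog of Assumption \ref{assumption2}: the twisted module $\mathsf{M}^\lambda = \mathcal{M}^\hbar$ carries an action of the rescaled modes $\mathsf{W}^i_k$, $k\in\mathbb{Z}_{>0}-\epsilon(i)$, the zero modes $\mathsf{W}^i_0$ act as multiplication by $\lambda_i = \lambda_i(Q)$, and Li's standard filtration passes to the twisted module because the correction terms in the twisted state--field correspondence (via the product formula of \cite{BakalovMilanov2}) come with strictly positive powers of $\hbar$, as noted in the excerpt. The involution $\rho$ of $\mathcal{W}^{\mathsf{k}}(\mathfrak{gl}_{2r})$ acting as $(-1)^{\Delta_i}$ on generators still exists (it is the involution used at self-dual level), so Lemma \ref{l:oneh} gives $[\mathcal{A}^\hbar,\mathcal{A}^\hbar]\subseteq\hbar\mathcal{A}^\hbar$ on the twisted current algebra as well, hence \eqref{Asplus} holds for $S_+$ in the twisted setting by the argument of \cite[Proposition 3.14]{BBCCN18}.

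Second I would prove existence of $\ket{w_{2r-1}^\lambda}$ for generic $\lambda$. Here the natural route is to imitate Section \ref{SecExistence}: realize the twisted $\mathcal{W}^{\mathsf{k}}(\mathfrak{gl}_{2r})$-module as a ($\sigma$-twisted, folded) reduction of a twisted affine module for $\widehat{\mathfrak{gl}}_{2r}$, and invoke generic simplicity of the latter (the twisted analog of Lemma \ref{lem:simpleaffine}, where the relevant weight lattice is the $\sigma$-invariant part). Generic simplicity of the twisted Verma module then gives, via non-degeneracy of the (twisted) Kac--Shapovalov pairing restricted to conformal-weight subspaces, the existence and uniqueness of the vectors $\ket{w^d}$ dual to $(\mathsf{W}^{2r-1}_{1/2})^d\ket{\lambda}$, exactly as in the proof of Theorem \ref{whittaker}; their generating series in $\Lambda$ is $\ket{w_{2r-1}^\lambda}$. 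I expect the main obstacle to be precisely this step: establishing generic simplicity (or at least the absence of a proper submodule for generic $\lambda$) of the $\sigma$-twisted Verma module of $\mathcal{W}^{\mathsf{k}}(\mathfrak{gl}_{2r})$, since the standard BGG/Kac--Kazhdan machinery and the reduction theorems of \cite{Ara} are stated for untwisted modules, and the twisted reduction functor and its exactness/top-level properties would need to be carefully checked. This is exactly why the statement is left as a conjecture in the paper; a full proof would hinge on extending the representation-theoretic results of Sections \ref{SecWhi1}--\ref{SecExistence} to twisted modules, which the authors explicitly flag as not yet available.

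Finally, granting existence of $\ket{w_{2r-1}^\lambda}$ for $\lambda$ in a Zariski-dense set, the extraneous property follows: by the Whittaker conditions $[\mathsf{W}^i_k,\mathsf{W}^j_l]\ket{w_{2r-1}^\lambda}=0$ for all positive modes, and since each such commutator already lies in $\hbar\mathcal{A}^\hbar\cdot S_+$ by \eqref{Asplus}, one writes it in the ordered PBW form $\sum_\alpha \mathsf{B}_\alpha\, p_\alpha(\mathsf{W}^\bullet_0)\, q_\alpha(\mathsf{W}^{2r-1}_{1/2})$ with $\mathsf{B}_\alpha\ket{\lambda}$ linearly independent, applies the operator to $\ket{w_{2r-1}^\lambda}$, projects onto the lowest conformal-weight component (the highest-weight line), and uses genericity of $\lambda$ to conclude $p_\alpha q_\alpha = 0$ for every $\alpha$; hence every monomial with a bare $\mathsf{W}^{2r-1}_{1/2}$ as rightmost factor drops out, i.e.\ $[\mathsf{W}^i_k,\mathsf{W}^j_l]\in\hbar\mathcal{A}^\hbar\cdot S_{+\setminus(2r-1)}$. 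This is the content of Definition \ref{d:extra}, so $S' = \{\mathsf{W}^{2r-1}_{1/2}\}$ is extraneous, which combined with the Airy structure representation \eqref{eq:heis}--\eqref{eq:modestwist} and Lemma \ref{l:AiryW} would yield the type B Whittaker vector as a partition function.
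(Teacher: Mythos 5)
This statement is left as a conjecture in the paper precisely because no proof is known: the authors state explicitly (end of Section 3.4) that they expect the arguments of Sections 3.2--3.3 to extend to twisted modules but that ``we do not have a proof at this stage.'' Your proposal reconstructs exactly the strategy the authors anticipate --- transport Theorem \ref{whittaker} and Theorem \ref{Lem24} to the $\sigma$-twisted setting --- and, to your credit, you correctly locate the obstruction. But the proposal is a plan, not a proof: the step you defer (existence of the twisted Whittaker vector $\ket{w_{2r-1}^\lambda}$ for generic $\lambda$, which in the untwisted case rests on simplicity of the Verma module and non-degeneracy of the contragredient pairing on each conformal-weight subspace) is the entire content of the conjecture. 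Without a twisted analogue of Lemma \ref{lem:simpleaffine} and of the reduction results of \cite{Ara} feeding into Proposition \ref{thecoth}, the linear-independence argument of Theorem \ref{Lem24} has nothing to act on, since that theorem is explicitly conditional on the existence of the putative vector for generic highest weight.

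Two further points you gloss over would also need attention in a genuine proof. First, in the twisted module the generators $\mathsf{W}^{2i-1}$ have half-integer modes, so there are no zero modes $\mathsf{W}^{2i-1}_0$; the factorization $\mathsf{A}=\mathsf{B}\cdot p(\mathsf{W}^\bullet_0)\cdot q(\mathsf{W}^{2r-1}_{1/2})$ in the proof of Theorem \ref{Lem24} and the genericity argument in $\lambda$ must be reworked with only the even-index zero modes available, and one must check that the corresponding characters still separate points as $\lambda$ (equivalently the $Q_a$) varies. Second, the PBW-type ordering and the claim that reordering only produces terms lower in Li's filtration rely on free strong generation and commutativity of the associated graded for the \emph{twisted} current algebra; the quantum correction terms from the product formula of \cite{BakalovMilanov2} come with positive powers of $\hbar$, which is encouraging, but this is an assertion to be verified, not a consequence of Assumption \ref{assumption2} as stated. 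In short: your diagnosis of why this is a conjecture matches the authors' own, but no new ground is gained toward closing it.
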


\begin{thm}
	\label{BAiry} Assume that Conjecture~\ref{theconjB} holds. Let $T \in \mathbb{C}$ and $ Q_1, \ldots, Q_r \in \mathbb{C}^*$  such that  $ Q_a \neq \pm Q_b $ for any distinct $ a,b \in [r] $. Let $\mathsf{W}^i_m$ be the differential operators associated to the rescaled modes $\mathsf{W}^i_m$ of the strong generators in \eqref{eq:modestwist} obtained via \eqref{eq:heis}.Then, the family of differential operators: 
	\[
	\forall i \in [2r], \, m \in \left( \mathbb{Z}_{>0}-\epsilon(i) \right), \qquad \overline{\mathsf{W}}^i_m := \mathsf{W}^i_m - \hbar^{\frac{2r - 1}{2}} T\delta_{i,2r-1} \delta_{m,\frac{1}{2}} 
	\] forms an Airy structure.
\end{thm}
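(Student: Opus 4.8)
The plan is to mirror the proof of Theorem~\ref{Whit1prop} as closely as possible, invoking Lemma~\ref{l:AiryW}: it suffices to check (i) that the unshifted family $\{\mathsf{W}^i_m \mid i \in [2r],\, m \in \mathbb{Z}_{>0}-\epsilon(i)\}$ forms an Airy structure, and (ii) that $S' = \{\mathsf{W}^{2r-1}_{1/2}\}$ is an extraneous subset of $S_+$. Point (ii) is precisely the content of Conjecture~\ref{theconjB}, which we are assuming; by Remark~\ref{r:extrasub} this simultaneously gives the subalgebra condition for $S_+$, so the only thing left to verify directly is the \emph{degree condition} of Definition~\ref{d:Airy} for the twisted generators. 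As in the type~A case, this splits into two sub-tasks: showing that the $\mathsf{W}^i_m$ with $m>0$ carry no degree-zero term, and showing that their degree-one part can be triangularized into an Airy structure in normal form by an invertible change of basis among the generators.

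For the degree-zero vanishing, I would argue as in Theorem~\ref{Whit1prop}: the only degree-zero contributions to any $\mathsf{W}^i_m$ would come from monomials built purely out of zero-modes $\mathsf{J}^a_0 = Q_a$, and by inspection of the Miura expression~\eqref{Wcritlev} (here in its $\sigma$-twisted incarnation via the product formula \cite{BakalovMilanov2}) such monomials only ever occur in the part of $W^i(z)$ that has integer $z$-grading matching the zero-mode, hence never for $m>0$. The quantum correction terms produced by the twist come with strictly positive powers of $\hbar$ (as noted in the paragraph preceding Conjecture~\ref{theconjB}), so they only raise the degree and cannot spoil this. For the degree-one part, the key computation is the twisted analogue of the generating-series identity~\eqref{indin}: projecting to degree one kills all correction terms, and one is left with
\[
\sum_{i=1}^{2r} t^{2r-i}\,\pi_1(\overline{\mathsf{W}}^i_m) = \sum_{a=1}^{r} \Big(\prod_{b\neq a}(t+Q_b)(t-Q_b)\Big)\Big[(t+Q_a)\,\mathsf{J}^{+,a}_m + (t-Q_a)\,\mathsf{J}^{-,a}_m\Big]
\]
or an equivalent expression in the diagonal modes $\mathsf{J}^{\pm,a}$ with half-integer/integer modings respectively; the hypothesis $Q_a \neq \pm Q_b$ guarantees that the $2r$ points $\pm Q_a$ are distinct, so one can extract each Heisenberg mode by a residue at $t=-Q_a$ and at $t=Q_a$, exactly as in~\eqref{deg1W}. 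This exhibits an invertible linear combination of the $\overline{\mathsf{W}}^i_m$ of the form $\hbar\partial_{x^a_m} + O(2)$, i.e.\ an Airy structure in normal form, completing (i).

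The main obstacle is bookkeeping the twisted moding: because $W^i(z)$ has modes in $\mathbb{Z}+\epsilon(i)$, the matching of the $z$-expansion of the Miura transform with the desired modes $\mathsf{W}^i_m$, and in particular the identification of $\mathsf{W}^{2r-1}_{1/2}$ as the distinguished shiftable mode, requires care—one must check that the shift $-\hbar^{(2r-1)/2}T\delta_{i,2r-1}\delta_{m,1/2}$ really does land in degree $\geq 2$ (it does, since $(2r-1)/2 \geq 3/2$ for $r\geq 2$) and that the half-integer modings do not obstruct the residue extraction above (they do not, since the $t$-series identity is a formal identity in $t$ independent of the moding). I also expect the honest use of the product formula \cite{BakalovMilanov2} to compute the twisted generators explicitly to be somewhat laborious, but the point is that for the degree condition only the leading (classical) term of that formula matters, and the classical term is still given by the elementary symmetric polynomials~\eqref{Wcritlev}; all genuinely twisted data sits in $\hbar$-higher order and hence is irrelevant to checking that we have an Airy structure. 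Everything else is a transcription of the type~A argument, so once Conjecture~\ref{theconjB} is granted the proof goes through \emph{mutatis mutandis}.
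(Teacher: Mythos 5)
Your overall architecture is exactly the paper's: reduce to Lemma~\ref{l:AiryW}, get the subalgebra condition and extraneousness from Conjecture~\ref{theconjB} (via Remark~\ref{r:extrasub}), dismiss the twist-induced correction terms as $O(\hbar)$ and hence degree $\geq 2$, check the absence of degree-zero terms, and then invert the degree-one part by a generating-series/residue argument. That skeleton is right, and your observations about the shift landing in degree $\geq 2$ are correct.

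However, the displayed degree-one identity is wrong, and the error is not just a convention. Expanding $\pi_1\bigl(\prod_{c=1}^{2r}(t+\chi^c(z))\bigr)$ by Leibniz and grouping the factors $c=a$ and $c=2r+1-a$ (with $\pi_0\chi^a = -\pi_0\chi^{2r+1-a}$) gives a bracket proportional to
\[
t\,\pi_1 v^{-,a} \;-\; \tfrac{Q_a}{2}\,\pi_1 v^{+,a},
\]
not $(t+Q_a)\mathsf{J}^{+,a}_m+(t-Q_a)\mathsf{J}^{-,a}_m$. The correct bracket has the crucial property that the half-integer-moded field $v^{-,a}$ appears only with odd powers of $t$ (hence only in the odd generators $\mathsf{W}^{2i-1}$) and the integer-moded field $v^{+,a}$ only with even powers of $t$ (hence only in the even generators $\mathsf{W}^{2i}$); your bracket violates this and would assign half-integer modes to the even generators, which do not exist. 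Consequently the inversion is \emph{not} a single $2r\times 2r$ system with residues at the $2r$ points $t=\pm Q_a$: at each fixed $m$ only $r$ of the generators possess a mode and only the $r$ unknowns $\mathsf{J}^a_{2m}$, $a\in[r]$, need to be recovered, so one gets two separate rank-$r$ systems — one per parity — whose generating functions live in the squared variable and have simple poles at $Q_a^2/4$. The hypothesis $Q_a\neq\pm Q_b$ enters as $Q_a^2\neq Q_b^2$, and the hypothesis $Q_a\in\mathbb{C}^*$ enters because the integer-mode (even-generator) system carries the extra factor $-Q_a/2$ in front of $\mathsf{J}^a_{2m}$, which must be divided out; your write-up never isolates where $Q_a\neq 0$ is actually used. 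These defects are repairable and lead to the same conclusion, but as written the central computation would not go through.
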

\begin{proof}[Proof of Theorem~\ref{BAiry}]
	By Proposition 3.14 of \cite{BBCCN18} (see also footnote \ref{f:subalgebra}), we know that the subset of positive modes $S_+$ satisfies the subalgebra condition in Definition \ref{d:Airy}. Assuming that Conjecture~\ref{theconjB} holds, the subset $S' = \{\mathsf{W}^{2r-1}_{1/2} \} \subset S_+$ is extraneous. Therefore the subalgebra condition of Definition \ref{d:Airy} continues to hold for the shifted modes $\overline{\mathsf{W}}^i_m$. 
	
	All that remains is to verify that the degree condition in Definition \ref{d:Airy} is satisfied, for either the $\mathsf{W}^i_n$s or the $\overline{\mathsf{W}}^i_n$s --- as the shift has degree $\geq 2$ according to the grading \eqref{eq:grading} and hence does not affect the degree condition.
	
	First, as we have discussed below equation~\eqref{eq:modestwist}, the correction terms in $ \mathsf{W}^i(z) $ due to the twisting come with powers of $ \hbar $. Such terms  do not affect the terms of degrees zero and one, and we can safely ignore them in the following proof. As usual, degree zero terms cannot appear in any of the positive modes.  To analyze the degree one terms, we  notice that the zero-modes only occur in $\mathsf{J}^{+,a}(z)$.  We introduce the notation $Q_{2r + 1 -b} = - Q_b$ for $b \in [r]$. For $m \in \mathbb{Z}_{\geq 0} + \frac{1}{2}$, the contribution of $\mathsf{J}_{2m}^b$ must come from a factor $v^{-}$ in $\mathsf{W}^{2i - 1}_{m}$ for some $i \in [r]$, and we find that the projection to the degree one terms of the modes $\mathsf{W}^{2i-1}_m$ is:
	$$
	\pi_1(\mathsf{W}^{2i - 1}_m) = \sum_{b = 1}^{r} \mathsf{J}_{2m}^{b} 2^{-(2i - 1)}\,\big(e_{2i - 2}(Q_1,\ldots,\widehat{Q_b},\ldots,Q_{2r}) + e_{2i - 2}(Q_1,\ldots,\widehat{Q_{2r + 1 - b}},\ldots,Q_{2r})\big).
	$$
	This can be simplified by a generating series
	$$
	\frac{\sum_{i = 1}^r t^{r - i} \pi_1(\mathsf{W}^{2i - 1}_m)}{\prod_{a = 1}^r (t - \frac{Q_a^2}{4})} = \sum_{b = 1}^{r} \frac{\mathsf{J}_{2m}^b}{t - \frac{Q_b^2}{4}}.
	$$
	Under the assumption that $Q_a \neq \pm Q_b$ for $a,b \in [r]$ distinct, taking the residue at $t = \frac{1}{4}Q_a^2$ yields
	$$
	\forall a \in [r],\qquad \mathsf{J}_{2m}^a = \sum_{i = 1}^r \frac{2^{2i -2}\,Q_a^{2(r - i)}\,\pi_1(\mathsf{W}_{2m}^{2i - 1})}{\prod_{b \neq a} (Q_a^2 - Q_b^2)}.
	$$ 
	Likewise, for $m \in \mathbb{Z}_{> 0}$, the contribution of $\mathsf{J}_{2m}^b$ must come from a factor $v^+$ in $\mathsf{W}^{2i}_m$ for some $i \in [r]$, and we have
	$$
	\pi_1(\mathsf{W}^{2i}_m) = \sum_{b = 1}^r \mathsf{J}_{2m}^b 2^{-2i} \big(e_{2i - 1}(Q_1,\ldots,\widehat{Q_b},\ldots,Q_{2r}) - e_{2i - 1}(Q_1,\ldots,\widehat{Q_{2r + 1 - b}},\ldots,Q_{2r})\big),
	$$
	and so
	$$
	\frac{\sum_{i = 1}^r t^{r - i} \pi_1(\mathsf{W}^{2i}_m)}{\prod_{a = 1}^r (t - \frac{Q_a^2}{4})} =  -\sum_{b = 1}^r \frac{Q_b}{2}\,\frac{\mathsf{J}_{2m}^b}{t - \frac{Q_b^2}{4}} .
	$$
	Taking the residue at $t = \frac{1}{4}Q_b^2$ we get
	$$
	\mathsf{J}_{2m}^b = - \sum_{i = 1}^r \frac{2^{2i - 1}Q_b^{2r - 2i - 1}\,\pi_1(\mathsf{W}_{2m}^{2i})}{\prod_{a \neq b} (Q_b^2 - Q_a^2)},
	$$ where we use the condition that $ Q_a \neq 0 $ for all $ a \in [r] $.
	Hence, the family of operators
	\begin{equation}
		\label{HnormB}\begin{split}
			\mathsf{H}_{2m}^a & = -\sum_{i = 1}^r \frac{2^{2i - 1}Q_a^{2r - 2i - 1}\,\overline{\mathsf{W}}_{2m}^{2i}}{\prod_{b \neq a} (Q_a^2 - Q_b^2)} \qquad m \in \mathbb{Z}_{> 0},\quad a \in [r],  \\
			\mathsf{H}_{2m}^a & = \sum_{i = 1}^r \frac{2^{2i - 2}Q_a^{2r - 2i}\,\overline{\mathsf{W}}_{2m}^{2i - 1}}{\prod_{b \neq a} (Q_a^2 - Q_b^2)}  \qquad m \in \mathbb{Z}_{\geq 0} + \tfrac{1}{2},\quad a \in [r],
		\end{split}
	\end{equation}
	yields an Airy structure in normal form.
\end{proof}

\begin{rem}
	The proof of Lemma~\ref{lem:prophbar} on the $T$ dependence and vanishings in the partition function applies without much modification to this Airy structure. Therefore, the result of Corollary~\ref{corresum} is valid and we can apply Proposition~\ref{instton}. It shows that upon setting $\Lambda^{r} = T\hbar^{\frac{2r - 1}{2}}$, the instanton partition function takes the form
	$$
	\mathfrak{Z} = \exp\bigg(\sum_{h \geq 0} \hbar^{h - 1}\,\mathfrak{F}_h\bigg)
	$$
	and $\mathfrak{F}_h \in \mathbb{Q}(Q_1,\ldots,Q_r)[\![\Lambda^r]\!]$ is computed by a sum of Feynman graphs (which is finite at every given order in $\Lambda^r$).
\end{rem}

\begin{rem}
	We note that the proposal of \cite{KMST} for instanton partition functions of type $   B $ is valid at any level. In this case, the usual basis of generators for the  $ \mathcal W (\mathfrak{gl}_{2r}) $-algebra given by the quantum Miura transformation is not diagonal. In order to get an Airy structure in these cases, one needs to find an appropriate diagonal basis, and then apply the construction of this section. We restricted to the self-dual level $ \kappa = 1 $ in order to avoid this additional complication.
\end{rem}

\subsection{Spectral curve description}

\label{Sec:TRB}

In this section, we will rewrite the recursion on the coefficients $F_{g,n}$ of the partition function of the Airy structure of Theorem~\ref{BAiry} as a spectral curve topological recursion for the associated correlators $ \omega_{g,n} $. Although one could carry out this computation in a very similar style to \cite[Section 4]{BKS}, for the sake of uniformity, we adopt the (slightly different) presentation and notations of Section~\ref{SelfdualTRspec}. In fact, the calculations are very similar to Section~\ref{SelfdualTRspec} and we will only stress the main differences.   

\subsubsection{ Spectral curve}

Following the recipe outlined at the start of Section~\ref{Sec:TRA}, we first define the spectral curve $ (\tilde{C},\tilde{x},\omega_{0,1},\omega_{0,2}) $. As we are at the self-dual level  $\kappa = 1$, we set  $ \omega_{\frac{1}{2},1} = 0 $ --- see \eqref{eq:121}.
\begin{itemize}
	\item  We consider $r$ copies of a formal disk $C^a = {\rm Spec }\, \mathbb{C}[\![\zeta]\!]$, and their union $C = \bigsqcup_{a = 1}^r Ca$. We also introduce $ \tilde{C} :=  \bigsqcup_{a = 1}^r \operatorname{Spec} \mathbb{C}[\![\tilde{\zeta}]\!]  $, and denote points on $\tilde{C} $ by $z = \big( \begin{smallmatrix} a \\ \tilde{\zeta} \end{smallmatrix} \big)$.
	
	\item Recall that we have the unramified covering from Section~\ref{SelfdualTRspec} $ x : C \to C_0 = \operatorname{Spec} \mathbb{C}[\![\zeta]\!] $, which is defined as $ x\big(\begin{smallmatrix} a \\ \zeta \end{smallmatrix}\big) = \zeta$.  We introduce a branched covering $ \pi : \tilde{C} \to C  $ which acts as $ \pi\big( \begin{smallmatrix} a \\ \tilde{\zeta} \end{smallmatrix} \big) = \big( \begin{smallmatrix} a \\ \tilde{\zeta}^2 \end{smallmatrix} \big) $. Finally, we define our branched covering of interest as 
	\[
		 \tilde{x} :=  x \circ \pi 	 : \tilde{C} \to C_0, 
	\] 
	 The set of ramification points is $\mathfrak{a} := \tilde{x}^{-1}(0)$ which  is identified with $[r]$. In contrast to the spectral curve of Section~\ref{SelfdualTRspec}, the map $ \tilde{x} $ in the present situation is a ramified covering.
	
	\item  For $k \in \mathbb{Z}$ and $a \in [r]$ we define differentials $\dd\xi^a_{k},\dd\xi^*_k \in H^0(\tilde{C},K_{\tilde{C}}(* \mathfrak{a}))$ by
	\begin{equation}
		\dd\xi^a_k\big(\begin{smallmatrix} b \\ \tilde{\zeta} \end{smallmatrix}\big) = \delta_{a,b} \tilde{\zeta}^{k - 1}\dd \tilde{\zeta},\qquad \dd\xi^*_k \big(\begin{smallmatrix} b\\ \tilde{\zeta} \end{smallmatrix}\big) := \sum_{a=1}^r \dd\xi^a_{k}\big(\begin{smallmatrix} b\\ \tilde{\zeta} \end{smallmatrix}\big) = \tilde{\zeta}^{k-1} \dd \tilde{\zeta}.
	\end{equation}
	Then the $1$-form $\omega_{0,1}$ is given by:
	\begin{equation}
		\omega_{0,1} \big (\begin{smallmatrix} b\\ \tilde{\zeta} \end{smallmatrix} \big) = \sum_{a \in [r]} F_{0,1}\big[ \begin{smallmatrix}a \\ 0 \end{smallmatrix} \big] \dd \xi^a_0\big (\begin{smallmatrix} b\\ \tilde{\zeta} \end{smallmatrix} \big)  =  \frac{Q_b}{\tilde{\zeta}} \dd \tilde{\zeta},
	\end{equation}
	with $Q_1,\ldots,Q_r \in \mathbb{C^*}$. 
	\item As before, the bilinear differential $\omega_{0,2}$ is given by:
	\begin{equation}\label{eq:w02B}
		\omega_{0,2}\big(\begin{smallmatrix} b_1 & b_2 \\ \tilde{\zeta}_1 & \tilde{\zeta}_2 \end{smallmatrix}\big)  = \frac{\delta_{b_1,b_2}\dd \tilde{\zeta}_1\dd \tilde{\zeta}_2}{(\tilde{\zeta}_1 - \tilde{\zeta}_2)^2}.
	\end{equation}
\end{itemize}

Let us introduce the following notation for convenience. For $z \in \tilde{C}$, we define $\mathfrak{f}(z) := \tilde{x}^{-1}(\tilde{x}(z))$ as the set of preimages of $\tilde{x}(z)$, and $\mathfrak{f}'(z) := \mathfrak{f}(z) \setminus \{z\}$ as the preimages of $\tilde{x}(z)$ with the original $z$ excluded. More precisely, we have
\begin{equation}\label{eq:fB}
	\mathfrak{f}\big(\begin{smallmatrix} a \\ \tilde{\zeta} \end{smallmatrix}\big) = \bigsqcup_{a = 1}^r \big\{\big(\begin{smallmatrix} a \\ \tilde{\zeta} \end{smallmatrix}\big),\big(\begin{smallmatrix} a \\ -\tilde{\zeta} \end{smallmatrix}\big)\big\}.
\end{equation}
We will also need to work with
\[
\mathfrak{f}_-\big(\begin{smallmatrix} a \\ \tilde{\zeta} \end{smallmatrix}\big) := \bigsqcup_{a = 1}^r \big\{\big(\begin{smallmatrix} a \\ -\tilde{\zeta} \end{smallmatrix}\big)\big\},
\]
and if $Z \subseteq \mathfrak{f}(z)$, we shall denote $Z_- := Z \cap \mathfrak{f}_-(z)$. We  introduce the  notation 
\begin{equation}
	\varepsilon_z(z') = \left\{
	\begin{array}{lr}
		1 & \text{if } z' \in \, \mathfrak{f}_-(z),\\
		0 & \text{otherwise}.
	\end{array} \right.
\end{equation}

We define correlators $\omega_{g,n}$ indexed by $n \in \mathbb{Z}_{>0}$ and $g \in \frac{1}{2} \mathbb{Z}_{\geq 0 }$ in terms of the partition function $\mathcal{Z}$ for the Airy structure of Theorem~\ref{BAiry} exactly as before:
\[
\omega_{g,n}(z_1,\ldots,z_n) = \sum_{\substack{a_1,\ldots,a_n \in [r] \\ k_1,\ldots,k_n \in \mathbb{Z}_{>0}}} F_{g,n}\big[\begin{smallmatrix} a_1 & \cdots & a_n \\ k_1 & \cdots & k_n \end{smallmatrix}\big]\,\prod_{l = 1}^n \dd \xi_{-k_l}^{a_l}(z_i).
\]

\subsubsection{Abstract loop equations}

We will turn the differential constraints on the partition function $ \mathcal Z $ associated to the Airy structure of Theorem~\ref{BAiry} into abstract loop equations for the correlators $ \omega_{g,n} $. This section is analogous to Section~\ref{sec:aleA}, and we introduce the following notation. 

\noindent Following Definition~\ref{d:H1f}, we define the Heisenberg $1$-forms $ \mathcal J(z) $ on $ \tilde{C} $ as 
\begin{equation}
	\mathcal{J}(z)  := \sum_{\substack{a \in [r] \\ k \in \mathbb Z} }\mathsf{J}_{k}^a \,\dd \xi_{-k}^a(z).
\end{equation} Separating the Heisenberg zero mode from the rest, we also define
\begin{equation}
	\mathcal J^*(z) := \sum_{\substack{a \in [r] \\ k \in \mathbb Z_{\neq 0}} }\mathsf{J}_{k}^a \,\dd \xi_{-k}^a(z), \qquad \mathcal J_0(z) := \sum_{a \in [r]} \mathsf{J}^a_0 \dd \xi_0^a(z). 
\end{equation}   Using Definition~\ref{d:adjop} of the operators $ \operatorname{ad}_{g,n} $, we also introduce the $ \Omega'_{g,i,n}(z_{[i]}; w_{[n]}) $.	For any $i \in [2r]$, we define
\begin{equation}\label{eq:Omega2}
	\Omega'_{g,i,n}(z_{[i]}; w_{[n]}) = \ad_{g,n} \left(\mathcal{Z}^{-1} \prod_{k=1}^i \mathcal{J}^*(z_k) \mathcal{Z} \right).
\end{equation}
The resulting $ \Omega'_{g,i,n}(z_{[i]}; w_{[n]}) $ is a symmetric differential on $\tilde{C}^n$ in the variables $w_{1}, \ldots, w_{n}$. 

Note that it is important here that the expression in \eqref{eq:Omega2} is not normal ordered. In the setting of Section~\ref{Sec:TRA}, this did not play a role, as the Heisenberg one-forms commuted with each other. However, in the present case, if for a moment we think of $ z_{[i]} $ as a subset of $ \mathfrak f(z) $, we see that, due to the presence of the points $ \big(\begin{smallmatrix} a \\ \tilde{\zeta}  \end{smallmatrix}\big)$ and $\big(\begin{smallmatrix} a \\ -\tilde{\zeta} \end{smallmatrix}\big)$ in the set $ \mathfrak f(z) $ ~\eqref{eq:fB}, the $ \mathcal J^*(z_k) $ do not commute in general.  Thus, it is  important to define $ \Omega'_{g,i,n}(z_{[i]}; w_{[n]}) $ without normal ordering.

With this definition, Lemma~\ref{l:omeg} still holds. Indeed, the computations done in \cite[Section 4]{BKS} prove that:

\begin{lem}
	For any $i \in [2r]$,
	\begin{equation}
		\Omega'_{g,i,n}(z_{[i]}; w_{[n]}) =\sum^{\text{{\rm no} $\omega_{0,1}$}}_{\substack{\mathbf{L} \vdash [i] \\ \sqcup_{L \in \mathbf{L}} N_L = [n] \\ i + \sum_L (g_L - 1) = g}} \prod_{L \in \mathbf{L}} \omega_{g_L,|L| + |N_L|}(z_L,w_{N_L}),
	\end{equation}
In the sum we exclude all summands that contain $\omega_{0,1}$. (See Section \ref{s:notation} for notation.) \qed
\end{lem}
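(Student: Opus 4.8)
The statement to prove is the analogue of Lemma~\ref{l:omeg} in the type B (twisted) setting: that the object $\Omega'_{g,i,n}(z_{[i]};w_{[n]})$ obtained by acting on the partition function with $i$ (\emph{non-normal-ordered}) non-zero Heisenberg $1$-forms decomposes as a sum over set partitions of the correlators $\omega_{g_L,|L|+|N_L|}$, with the summands involving $\omega_{0,1}$ discarded. The proof proposal below assembles the argument from the structure of the $\hbar$-rescaled Heisenberg action and a careful treatment of the non-commutativity.

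\begin{proof}
The proof follows the same strategy as the corresponding combinatorial computation worked out in \cite[Section 4]{BKS} and \cite[Section 2]{BBCCN18}; we only indicate the modifications needed in the twisted setting. Write $\mathcal{Z}=e^{F}$ with $F$ expanded as in \eqref{ZdefTR}, and recall that on $\mathcal{M}^{\hbar}$ the rescaled modes act, via \eqref{eq:heis}, as $\mathsf{J}^a_{-m}=m\,x^a_m$ for $m>0$, as $\mathsf{J}^a_m=\hbar\,\partial_{x^a_m}$ for $m>0$, and as the constant $Q_a$ for $m=0$; the latter produces only $\omega_{0,1}$-contributions, which is why these are separated off into $\mathcal{J}_0$ and excluded from $\mathcal{J}^{*}$. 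For a single factor, conjugation gives $\mathcal{Z}^{-1}\mathcal{J}^{*}(z)\,\mathcal{Z}=\mathcal{J}^{*}(z)+\big(e^{-F}\mathcal{J}^{*}(z)e^{F}-\mathcal{J}^{*}(z)\big)$, and the commutator term $[\mathcal{J}^{*}(z),F]$ replaces each $\partial_{x^a_m}$ acting on $F$ by the corresponding $1$-form $\dd\xi^a_{-k}$, so that after applying $\ad_{g,n}$ one recovers exactly $\omega_{g,n+1}(z,w_{[n]})$ together with the reassembling of products of $F_{g',n'}$'s into the $\omega$'s via the exponential. Iterating, $\mathcal{Z}^{-1}\prod_{k=1}^{i}\mathcal{J}^{*}(z_k)\,\mathcal{Z}$ is a sum over all ways of distributing the $i$ insertion points and the $n$ remaining derivative slots among connected ``blocks'', each block contributing one $\omega_{g_L,|L|+|N_L|}$; the Euler-characteristic bookkeeping $i+\sum_L(g_L-1)=g$ comes from counting powers of $\hbar$ ($\hbar^{g_L-1}$ from each $\omega_{g_L}$, $\hbar$ from each of the $i$ contractions of $\mathsf{J}^a_m=\hbar\partial_{x^a_m}$), precisely as in the untwisted case.

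The one genuinely new point is non-commutativity: as noted in the text, the $\mathcal{J}^{*}(z_k)$ no longer commute, since for $z_k,z_l$ lying over the same base point one has a nontrivial commutator coming from the bosonic brackets $[\mathsf{J}^a_m,\mathsf{J}^b_{-m}]=\hbar\,m\,\delta_{a,b}$ evaluated on the rescaled modes. The resolution is that any such commutator $[\mathcal{J}^{*}(z_k),\mathcal{J}^{*}(z_l)]$ is a \emph{scalar} $1$-form (a $c$-number multiple of $\omega_{0,2}$-type data), hence lies in the center, and in particular annihilates $\mathcal{Z}$ upon conjugation in the sense that it contributes only to $\omega_{0,2}$; since $\omega_{0,2}$ never appears as a factor on the right-hand side anyway (every block with $2g_L-2+|L|+|N_L|=0$ and $g_L=0$ is precisely an $\omega_{0,2}$ or $\omega_{0,1}$ and is excluded), the ordering of the factors in $\prod_{k=1}^{i}\mathcal{J}^{*}(z_k)$ does not affect $\Omega'_{g,i,n}$. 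This is exactly why the definition \eqref{eq:Omega2} is taken without normal ordering: the non-normal-ordered product and any of its reorderings differ by central scalar $1$-forms, all of which drop out after $\ad_{g,n}$ and the exclusion of $\omega_{0,1},\omega_{0,2}$ summands. (The remaining self-contractions, coming from a single factor $\mathcal{J}^{*}(z_k)$ paired with another via $F$, are already accounted for in the connected-block expansion.)

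Finally, the exclusion of the $\omega_{0,1}$ summands: a block $L$ with $|L|+|N_L|=1$ and $g_L=0$ would be $\omega_{0,1}$, but the operator $\mathcal{J}^{*}(z)$ contains by construction no zero-mode, so there is no term in $\mathcal{Z}^{-1}\mathcal{J}^{*}(z)\mathcal{Z}$ whose $\ad_{g,n}$-image is $\omega_{0,1}$; equivalently, $F$ has no $F_{0,1}$ term in the Airy-structure normalization (it only enters through the dilaton shift / the data of the curve), so such blocks simply do not occur, and writing the sum with the superscript ``no $\omega_{0,1}$'' is just making this explicit. Assembling the three observations — the standard connected-block expansion, the centrality of the reordering commutators, and the absence of $\omega_{0,1}$ — gives the claimed identity. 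The details being identical to \cite[Section 4]{BKS} once these points are in place, we omit the routine verification.
\end{proof}

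I expect the main obstacle to be the careful justification of the second paragraph, namely that reorderings of the non-commuting $\mathcal{J}^{*}(z_k)$'s produce only central (scalar) $1$-form corrections proportional to $\omega_{0,2}$-data, and that these are consistently discarded; in the twisted setting the bracket structure on $\mathfrak{f}(z)=\{(\begin{smallmatrix}a\\\tilde\zeta\end{smallmatrix}),(\begin{smallmatrix}a\\-\tilde\zeta\end{smallmatrix})\}$ needs to be checked against \eqref{eq:w02B} to be sure no anomalous non-scalar term appears. Everything else is bookkeeping that can be imported verbatim from \cite{BKS,BBCCN18}.
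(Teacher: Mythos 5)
There is a genuine error in your second paragraph, and it is exactly at the point you yourself flagged as the main obstacle. You argue that the reordering/contraction terms coming from $[\mathcal{J}^{*}(z_k),\mathcal{J}^{*}(z_l)]$ can be discarded because ``$\omega_{0,2}$ never appears as a factor on the right-hand side anyway''. But the sum in the statement excludes only the $\omega_{0,1}$ summands, \emph{not} the $\omega_{0,2}$ ones: a block $L=\{k,l\}$ with $g_L=0$ and $N_L=\emptyset$ contributes a factor $\omega_{0,2}(z_k,z_l)$ with both arguments among the insertion points, and such blocks are retained (compare Remark~\ref{r:recursive} in the type A case, where the term $\omega_{g-1,n+2}(z_1,z_2,\dots)$ and the product terms make these unstable contributions explicit). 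These $\omega_{0,2}(z_k,z_l)$ factors are precisely the contractions produced by the non-normal-ordered product of the $\mathcal{J}^{*}(z_k)$ via the product formula for twisted modules; each contraction carries one power of $\hbar$, which is what makes the bookkeeping $i+\sum_L(g_L-1)=g$ come out right for such a block ($\hbar^{1}=\hbar^{(g_L-1)+|L|}$ with $g_L=0$, $|L|=2$). The paper's remark immediately after the lemma makes this the whole point: $\omega_{0,2}\big(\begin{smallmatrix} a & a \\ \tilde{\zeta} & -\tilde{\zeta}\end{smallmatrix}\big)=-\frac{(\dd\tilde{\zeta})^2}{4\tilde{\zeta}^2}$ is nonzero on pairs of points in the same fiber, and it is exactly this term that encodes the non-commutativity when $\Omega'_{g,i,n}$ is later evaluated on subsets $Z\subseteq\mathfrak{f}(z)$.

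If you drop these terms as you propose, the identity you prove is a different (and false) one, and the downstream argument breaks: the abstract loop equations of Lemma~\ref{lem:aleB} are derived from the non-normal-ordered form \eqref{Wux} of the Miura transform, and converting to the actual (normally ordered) $\mathcal{W}$-generators produces exactly these $\omega_{0,2}$ corrections; in the $r=1$ example at the end of the section the recursion visibly contains $\omega_{g,n}(\tilde{\zeta},-\tilde{\zeta})$-type diagonal evaluations that originate from them. The fix is not to argue that reorderings are central and hence negligible, but to use the twisted product formula (as in \cite[Section 4.1.2]{BBCCN18}) to rewrite $\prod_k\mathcal{J}^{*}(z_k)$ as a sum of normally ordered monomials times contraction factors, identify each contraction with $\hbar\,\omega_{0,2}(z_k,z_l)$ as a rational (hence symmetric and ordering-independent) expression, and then let the normally ordered part act on $\mathcal{Z}=e^{F}$ to produce the remaining blocks. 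Your first and third paragraphs (the connected-block expansion and the absence of $\omega_{0,1}$ because $\mathcal{J}^{*}$ has no zero-mode) are fine.
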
 The subtlety of the non-commuting $ \mathcal J^*(z_k) $ is captured  precisely by the $ \omega_{0,2} $ present in the above expression. This is apparent as soon as we notice, using equation~\eqref{eq:w02B}, that
\[
\omega_{0,2}\big(\begin{smallmatrix} a  & a \\ \tilde{\zeta} & -\tilde{\zeta} \end{smallmatrix}\big) = - \frac{(\dd \tilde{\zeta})^2}{4 \tilde{\zeta}^2}.
\] Such terms arise upon using the product formula for twisted modules as in \cite[Section 4.1.2]{BBCCN18} to the product of the $  \mathcal J^*(z_k) $ in \ref{eq:Omega2}.

 Finally, by modifying the output of  Lemma~\ref{l:Qi} to the present situation, we  define the  following differentials for $ i \in [2r] $, which incorporate the action of the zero modes:
\begin{equation}\label{eq:QB}
\begin{split}
	&\left[\mathcal Q^{(i)} \Omega'_{g,i,n}(\cdot\,;w_{[n]}) \right]
	(z) :=  \\
	 & \qquad \sum_{\substack{  Z \subseteq \mathfrak{f}(z) \\ |Z| = i \\ \{z\} \subseteq Z}} (-1)^{|Z_-|}\Omega'_{g,|Z|,n}(Z;w_{[n]}) \prod_{z' \in \mathfrak{f}(z) \setminus Z} \big((-1)^{\varepsilon_z(z')}\omega_{0,1}(z') - \omega_{0,1}(z)\big) .
	\end{split}
\end{equation}

Using these differentials, we can formulate the $ B $-type abstract loop equations. 

\begin{lem}\label{lem:aleB}
	The set of  differential constraints 
	\[
	\forall i \in [2r],\,\, m \in \left(-\epsilon(i) + \mathbb{Z}_{>0}\right), \qquad \mathsf{W}^i_m \mathcal Z = \hbar^{\frac{2r - 1}{2}} T\delta_{i,2r-1} \delta_{m,\frac{1}{2}} \mathcal Z
	\] on the partition function $ \mathcal Z $ imply that 
	\begin{equation}\label{eq:aleB}
		\sum_{i=1}^{2r} 	\left[\mathcal Q^{(i)}\Omega'_{g,i,n}(\cdot\,;w_{[n]})\right](z)  + \delta_{g,\frac{r}{2} - 1} \delta_{n,0}  2^{2r-1} T \omega_{0,1}(z) \frac{(\dd \tilde{\zeta})^{2r-1}}{\tilde{\zeta}^{2r}} = O\bigg(\frac{\dd \tilde{\zeta}}{\tilde{\zeta}} \bigg)^{2r} .
	\end{equation}This set of constraints is referred to as the \emph{B-type abstract loop equations}.
\end{lem}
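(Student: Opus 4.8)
The plan is to follow the template of Lemma~\ref{l:ALE}, adapting each step to the twisted, ramified setting. The starting point is the quantum Miura transform at self-dual level for $\mathcal{W}^{\mathsf{k}}(\mathfrak{gl}_{2r})$, in the form $\sum_{i=0}^{2r}(-1)^{2r-i}\mathsf{W}^i(\zeta)\,y^{2r-i}=\prod_{a=1}^{2r}(\mathsf{J}^a(\zeta)-y)$, where I substitute $y=\omega_{0,1}(z)/\dd\zeta$ for a point $z\in\tilde C$ above $\zeta$, just as in the type A case. The new feature is that the Heisenberg field $\mathsf{J}^a(\zeta)$ of $\mathcal{H}(\mathfrak{gl}_{2r})$ must now be expressed in the diagonal basis: the $2r$ factors $\mathsf{J}^1,\ldots,\mathsf{J}^{2r}$ pair up into $r$ pairs $\{a,2r+1-a\}$, and in each pair the combinations $v^{\pm,a}$ carry modes in $\mathbb{Z}$ (for $+$) and $\mathbb{Z}+\tfrac12$ (for $-$) under the $\sigma$-twist. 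Concretely, I expect the product $\prod_{a=1}^{2r}(\mathsf{J}^a(\zeta)-y)$ to reassemble, upon multiplying by $(\dd\tilde\zeta)^{2r}$ and using $\tilde x=x\circ\pi$, into $\prod_{z'\in\mathfrak f(z)}\big((-1)^{\varepsilon_z(z')}\mathcal{J}^*(z')+(-1)^{\varepsilon_z(z')}\mathcal{J}_0(z')-\omega_{0,1}(z)\big)$ — the sign $(-1)^{\varepsilon_z(z')}$ recording that the antipodal preimage $\big(\begin{smallmatrix}a\\-\tilde\zeta\end{smallmatrix}\big)$ picks up the $v^{-}$ (half-integer) contribution with a relative sign. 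This is the combinatorial heart of the argument and I expect it to be the main obstacle: one must carefully track how the elementary symmetric polynomial structure of \eqref{eq:modestwist} interacts with the two-sheeted cover $\pi$ and with the twisted state–field correspondence \eqref{eq:vfield}, and verify that the correction terms from the product formula for twisted modules indeed only contribute at positive powers of $\hbar$ (hence do not affect the loop equations at a given $g$ beyond what is already packaged in $\omega_{0,2}$).

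Granting that identity, the next step is to recast the differential constraints. The constraint $\mathsf{W}^i_m\mathcal{Z}=\hbar^{\frac{2r-1}{2}}T\delta_{i,2r-1}\delta_{m,\frac12}\mathcal{Z}$ for all positive $m\in\mathbb{Z}_{>0}-\epsilon(i)$ translates, exactly as in \eqref{eq:ZZconst}, into $\mathcal{Z}^{-1}\mathsf{W}^i(\zeta)\mathcal{Z}-\hbar^{\frac{2r-1}{2}}T\delta_{i,2r-1}\zeta^{-\frac{2r-1}{2}-i}\big|_{i=2r-1}=O(\zeta^{-i+1+\epsilon(i)})$ as $\zeta\to0$ — here I must be careful with the fractional powers of $\zeta$ coming from the half-integer modes, and pass to the variable $\tilde\zeta$ with $\zeta=\tilde\zeta^2$ so that everything becomes a Laurent expansion in $\tilde\zeta$. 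Multiplying the Miura identity by the appropriate power of $\omega_{0,1}(z)$ and $(\dd\tilde\zeta)$, and using that $\omega_{0,1}(z)=O(\dd\tilde\zeta/\tilde\zeta)$, the left side of the reassembled product, conjugated by $\mathcal{Z}$, differs from $\hbar^{\frac{2r-1}{2}}T$ times the source term by $O(\dd\tilde\zeta/\tilde\zeta)^{2r}$. The source term: the shifted mode sits at $i=2r-1$, $m=\tfrac12$, which in the generating series $\sum_i(-1)^{2r-i}\mathsf{W}^i(\zeta)\omega_{0,1}(z)^{2r-i}(\dd\tilde\zeta)^i$ appears multiplied by one power of $\omega_{0,1}(z)$ and carries $\zeta^{-\frac12-(2r-1)}(\dd\zeta)^{2r-1}=\tilde\zeta^{-(4r-1)}(\dd\tilde\zeta)^{2r-1}\cdot 2^{-(2r-1)}$ after the change of variables; collecting the constants $2^{2r-1}$ from the base change \eqref{sclarB} gives precisely the source term $\delta_{g,\frac{r}{2}-1}\delta_{n,0}2^{2r-1}T\,\omega_{0,1}(z)\frac{(\dd\tilde\zeta)^{2r-1}}{\tilde\zeta^{2r}}$ appearing in \eqref{eq:aleB}. (The $g$-shift to $\frac r2-1$ rather than $\frac{2r-1}{2}$ arises because the $\hbar^{\frac{2r-1}{2}}$ in front of $T$ combines with the $\hbar$-grading of the extra $\omega_{0,1}$ factor; I would double-check this bookkeeping against Lemma~\ref{l:AiryW} and the degree of $\mathsf{W}^{2r-1}_{1/2}$.)

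Finally, I expand the reassembled product $\mathcal{Z}^{-1}\prod_{z'\in\mathfrak f(z)}\big((-1)^{\varepsilon_z(z')}\mathcal{J}^*(z')+(-1)^{\varepsilon_z(z')}\mathcal{J}_0(z')-\omega_{0,1}(z)\big)\mathcal{Z}$ according to the number $i$ of factors in which the non-zero part $\mathcal{J}^*$ is selected. Applying $\ad_{g,n}$ term by term and using that the zero modes $\mathsf{J}^a_0$ act by the scalars $Q_a$ (with $Q_{2r+1-a}=-Q_a$, so that on the antipodal sheet $(-1)^{\varepsilon_z(z')}\mathcal{J}_0(z')$ indeed reproduces the correct sheet value), each term with $|Z|=i$ produces exactly the combination $(-1)^{|Z_-|}\Omega'_{g,|Z|,n}(Z;w_{[n]})\prod_{z'\in\mathfrak f(z)\setminus Z}\big((-1)^{\varepsilon_z(z')}\omega_{0,1}(z')-\omega_{0,1}(z)\big)$, summed over $Z\ni z$ of size $i$ — which is precisely $\big[\mathcal Q^{(i)}\Omega'_{g,i,n}(\cdot;w_{[n]})\big](z)$ as defined in \eqref{eq:QB}. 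The restriction $\{z\}\subseteq Z$ in \eqref{eq:QB} is automatic, exactly as in Definition~\ref{d:Qhatsd}, because the factor for $z'=z$ is $(-1)^{\varepsilon_z(z)}\mathcal{J}_0(z)-\omega_{0,1}(z)=\mathcal{J}_0(z)-\omega_{0,1}(z)=0$ when $z\notin Z$. Summing over $i$ from $1$ to $2r$ and matching the $O(\dd\tilde\zeta/\tilde\zeta)^{2r}$ bound yields \eqref{eq:aleB}. Throughout I should note that the bound $O(\dd\tilde\zeta/\tilde\zeta)^{2r}$ rather than the naive $O(\dd\tilde\zeta/\tilde\zeta)^{i}$ for each individual mode is obtained, as in the type A proof, from multiplying by the appropriate powers of $\omega_{0,1}(z)=O(\dd\tilde\zeta/\tilde\zeta)$ and using that $\epsilon(i)$ only lowers the exponent by a half-integer, so the total is still $\ge 2r$ after accounting for the $(\dd\tilde\zeta)$ versus $(\dd\zeta)$ conversion.
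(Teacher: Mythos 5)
Your proposal follows the same route as the paper's proof: rewrite the quantum Miura transform in the diagonal basis so that the $2r$ factors reassemble into $\prod_{z'\in\mathfrak f(z)}\big(u+(-1)^{\varepsilon_z(z')}\mathcal J(z')\big)$ on the two-sheeted cover (with the non-normally-ordered product encoding the twisted corrections via the $\omega_{0,2}\big(\begin{smallmatrix}a&a\\ \tilde\zeta&-\tilde\zeta\end{smallmatrix}\big)$ terms in $\Omega'_{g,i,n}$, exactly as you flag), set $u=-\omega_{0,1}(z)$, recast the constraints in the variable $\tilde x=\tilde\zeta^2$, and expand by the number of $\mathcal J^*$ insertions before applying $\ad_{g,n}$. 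The only quibble is bookkeeping: the factor $2^{2r-1}$ in the source term comes from the Jacobian $(\dd\tilde x)^{2r-1}=2^{2r-1}\tilde\zeta^{2r-1}(\dd\tilde\zeta)^{2r-1}$ (the $\tfrac12$'s from the basis change \eqref{sclarB} are already absorbed into $\mathcal J$ on the cover), not from the inner product normalization as you suggest.
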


\begin{proof}
	The proof is analogous  to the proof of Lemma~\ref{l:ALE}.  As mentioned before, the main difference here is that we are constructing the spectral curve topological recursion from a twisted module of the Heisenberg algebra.  Moreover, in the case of such twisted modules (as studied in \cite{BBCCN18} and \cite{BKS}), we should use $\tilde x = \tilde{\zeta}^2$ as the variable of the $\mathcal{W}$-algebra fields. Then, the quantum Miura transform~\eqref{Miura} yields the following differential on $ \tilde{C} $
	\begin{equation}
		\label{Wux}
		\begin{split}
			\mathsf{W}(u;\tilde x) & := \sum_{i = 1}^{2r} u^{2r - i}\,\mathsf{W}^i(\tilde x)(\dd \tilde x)^{i} \\
			& = \prod_{b = 1}^r  \big(u + \tfrac{1}{2}\mathsf{v}^{+,b}(\tilde x)\dd \tilde x + \tfrac{1}{2}\mathsf{v}^{-,b}(\tilde x)\dd \tilde x\big)\big(u + \tfrac{1}{2}\mathsf{v}^{-,b}(\tilde x)\dd \tilde x - \tfrac{1}{2} \mathsf{v}^{+,b}(\tilde x)\dd \tilde x\big)   \\
			&=  \prod_{b = 1}^r  \Big(u +  \mathcal{J}\big(\begin{smallmatrix} b \\ \tilde{\zeta} \end{smallmatrix}\big) \Big)\Big(u - \mathcal{J}\big(\begin{smallmatrix} b \\ - \tilde{\zeta} \end{smallmatrix}\big)\Big)  ,
		\end{split}	
	\end{equation} where the fields $ \mathsf{v}^{\pm,b}(z) $ were defined in \eqref{eq:vfield} (we are using the rescaled fields here). Notice that the expressions are not normally ordered. The normal ordering will introduce quantum corrections as we have discussed above, but we want to keep the expressions without normally-ordering them precisely because of how the $ 	\Omega'_{g,i,n}(z_{[i]}; w_{[n]}) $ are defined.
	
	Upon setting $ u = - \omega_{0,1}\big(\begin{smallmatrix} a \\  \tilde{\zeta} \end{smallmatrix}\big) $ in the last equation, and separating the zero modes, we get 
\begin{equation*}
\mathsf{W}( - \omega_{0,1}\big(\begin{smallmatrix} a \\  \tilde{\zeta} \end{smallmatrix}\big);\tilde x)  =  \prod_{b = 1}^r  \Big(  - \omega_{0,1}\big(\begin{smallmatrix} a \\  \tilde{\zeta} \end{smallmatrix}\big)  + \mathcal J_0\big(\begin{smallmatrix} b \\  \tilde{\zeta} \end{smallmatrix}\big) +  \mathcal{J}^*\big(\begin{smallmatrix} b \\ \tilde{\zeta} \end{smallmatrix}\big) \Big)\Big( - \omega_{0,1}\big(\begin{smallmatrix} a \\  \tilde{\zeta} \end{smallmatrix}\big) - \mathcal J_0\big(\begin{smallmatrix} b \\  -\tilde{\zeta} \end{smallmatrix}\big) - \mathcal{J}^*\big(\begin{smallmatrix} b \\ - \tilde{\zeta} \end{smallmatrix}\big)\Big) \, .
\end{equation*}
We can rewrite this in a form that is more convenient for us:
\begin{equation}
\label{eq:WQ}
\begin{split}
		\mathsf{W}( - \omega_{0,1}(z) ;\tilde{x}) & =   \prod_{  z' \in \mathfrak f(z)}  \Big( (-1)^{\varepsilon_z(z')} \mathcal J_0(z') - \omega_{0,1}(z)  +  (-1)^{\varepsilon_z(z')} \mathcal{J}^*(z')  \Big)   \,\,, \\ 
		& =   \sum_{\substack{ Z \subseteq \mathfrak{f}(z) \\ |Z| = i }} (-1)^{|Z_-|} \prod_{z' \in Z} \mathcal J^*(z') \prod_{z'' \in \mathfrak{f}(z) \setminus Z} \big((-1)^{\varepsilon_z(z'')}\mathcal J_0(z'') - \omega_{0,1}(z)\big),
\end{split}
\end{equation}
where we used that the $ \mathcal J_0 $ commute with all the $ \mathcal J^* $ in order to obtain the second equality.
	
	The constraints defining the partition function are $(\mathsf{W}_m^i - T\hbar^{\frac{r}{2} - 1}\delta_{i,2r - 1}\delta_{m,\frac{1}{2}})\mathcal{Z} = 0$ for $m \in \big( - \epsilon(i) + \mathbb{Z}_{> 0}\big)$ and $i \in [2r]$. This can be recast as
	\begin{equation}
		\label{Zws}\mathcal{Z}^{-1} \mathsf{W}^i(\tilde{x}) (\dd \tilde x)^i \mathcal{Z} - T\hbar^{\frac{r}{2} - 1}\delta_{i,2r - 1}\,\frac{(\dd \tilde{x})^{2r - 1}}{\tilde{x}^{2r - \frac{1}{2}}}  = O\bigg(\frac{\dd \tilde{x}}{\tilde{x}}\bigg)^i,
	\end{equation}
	where we note that for odd $i$, the error would be $\tilde{x}^{\frac{1}{2}}\,O\big(\frac{\dd \tilde{x}}{\tilde{x}}\big)^i$ but as $\mathsf{W}^i(\tilde{x})$ has only half-integer powers of $\tilde{x}$, this amounts to \eqref{Zws}. Since $\frac{\dd \tilde{x}}{\tilde{x}} = \frac{2\dd \tilde{\zeta}}{\tilde{\zeta}}$ and for any $a \in [r]$ we have $\omega_{0,1}\big(\begin{smallmatrix} a \\ \tilde{\zeta} \end{smallmatrix}\big) = O\big(\frac{\dd \tilde{\zeta}}{\tilde{\zeta}}\big)$,  we can combine the equations~\eqref{Zws} for all $ i $, using the definition of $ \mathsf{W}(u;\tilde{x}) $ in \eqref{Wux}, into the condition
	\begin{equation}
		\label{iufnisugn}\mathcal{Z}^{-1} \left( 	\mathsf{W}( - \omega_{0,1}(z) ;\tilde{x}) + 2^{2r-1} T \hbar^{\frac{r}{2}-1} \omega_{0,1}(z) \frac{(\dd \tilde{\zeta})^{2r-1}}{\tilde{\zeta}^{2r}} \right)  \mathcal{Z}  = O\bigg(\frac{\dd \tilde{\zeta}}{\tilde{\zeta}} \bigg)^{2r}.
	\end{equation}

	Now we need to write the left hand side of the above equation in terms of the correlators $ \omega_{g,n} $. In order to do this, we act by the operator $ \operatorname{ad}_{g,n} $, defined in \ref{d:adjop}, on the LHS. Noting that the  Heisenberg $1$-form  $ \mathcal{J}_0(z) $ acts on $ \mathcal Z $ as $ \omega_{0,1}(z) $, and applying $ \operatorname{ad}_{g,n} $ to the expression for $ \mathsf{W}(-\omega_{0,1}(z);\tilde{x}) $ obtained in equation~\ref{eq:WQ}, we recognize the expression~\eqref{eq:QB} for the $ \mathcal Q^{(i)}(\cdot\,; w_{[n]}) (z) $:
\begin{equation*}
\begin{split}
		& \quad \operatorname{ad}_{g,n} \left( \mathcal{Z}^{-1} \left(\sum_{ \substack{  Z \subseteq \mathfrak{f}(z) \\ |Z| = i  \\ \{z\}\subseteq  Z}} (-1)^{|Z_-|} \prod_{z' \in Z} \mathcal J^*(z') \prod_{z'' \in \mathfrak{f}(z) \setminus Z} \big((-1)^{\varepsilon_z(z'')}\mathcal J^0(z'') - \omega_{0,1}(z)\big) \right) \mathcal{Z} \right) \\
		& =  \left[\mathcal Q^{(i)}(\cdot\,; w_{[n]}) \right](z).
\end{split}
\end{equation*}
We restricted to the subsets $ Z $ that contain $ z $ because  the  above expression vanishes if $ z'' = z $.
	Finally, we note that 
	\[
	\operatorname{ad}_{g,n} \left( 2^{2r-1} T \omega_{0,1}(z) \hbar^{\frac{r}{2}-1} \frac{(\dd \tilde{\zeta})^{2r-1}}{\tilde{\zeta}^{2r}} \right) = \delta_{g,\frac{r}{2} - 1} \delta_{n,0}  2^{2r-1} T \omega_{0,1}(z)\frac{(\dd \tilde{\zeta})^{2r-1}}{\tilde{\zeta}^{2r}},
	\] and this proves the lemma.
\end{proof}

\subsubsection{ The $ B $-type topological recursion}

Finally, we solve the abstract loop equations obtained in Lemma~\ref{lem:aleB} using residue analysis on the spectral curve. We call the resulting formula the \textit{$ B $-type topological recursion}.
\begin{thm}
	\label{TRBcase}For $2g - 2 + n \geq 0$, the correlators associated to the partition function of Theorem~\ref{BAiry} satisfy the B-type topological recursion
	\begin{equation}
		\label{lemTRBEq}\begin{split}
			\omega_{g,n+1}(z_0,z_{[n]}) & = \sum_{o \in \mathfrak{a}} \mathop{{\rm Res}}_{z = o} \bigg(\sum_{i=2}^{2r} K(z_0;z)  \left[ \mathcal{Q}^{(i)} \Omega'_{g,i,n}(\cdot ;z_{[n]} )\right](z)\bigg) \\
			& \quad + 2^{2r - 2}T \,\delta_{g,\frac{r}{2}-1}\delta_{n,0}\, \sum_{a=1}^r \frac{\dd \xi^a_{-1}(z_0)}{\prod_{b \neq a} (Q_a^2 - Q_b^2)},
		\end{split}
	\end{equation}
	with the $  \left[ \mathcal{Q}^{(i)} \Omega'_{g,i,n}(\cdot ;z_{[n]}) \right](z)$ defined in \eqref{eq:QB}, and the recursion kernel given by:
	\begin{equation}\label{eq:recB}
		\begin{split}
			K(z_0;z) & := -\frac{\int_o^{z} \omega_{0,2}(z_0, \cdot)}{\prod_{z' \in \mathfrak{f}'(z)} \big((-1)^{\varepsilon_z(z')}\omega_{0,1}(z') - \omega_{0,1}(z)\big)},
		\end{split}
		\end{equation}
		where $o$ is the point in $\mathfrak{a}$ in the same component of $\tilde{C}$ as $z$. 
\end{thm}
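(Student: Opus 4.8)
The plan is to follow the same three-step strategy used at self-dual level in Theorem~\ref{lemTRA}, adapting each step to the ramified covering $\tilde x\colon \tilde C\to C_0$ and to the sign-twisted structure encoded in $\varepsilon_z$. By Lemma~\ref{lem:aleB} we already have the $B$-type abstract loop equations, so the only thing left to do is to solve them recursively by residue analysis. First I would extract from the sum $\sum_{i=1}^{2r}[\mathcal{Q}^{(i)}\Omega'_{g,i,n}](z)$ the term $i=1$, which by the same computation as in Lemma~\ref{l:Qi} equals
\[
\left[\mathcal{Q}^{(1)}\Omega'_{g,1,n}(\cdot\,;w_{[n]})\right](z)=\Omega'_{g,1,n}(z;w_{[n]})\prod_{z'\in\mathfrak{f}'(z)}\big((-1)^{\varepsilon_z(z')}\omega_{0,1}(z')-\omega_{0,1}(z)\big),
\]
because only the subset $Z=\{z\}$ contributes (all other singletons make the product vanish through the factor $(-1)^{\varepsilon_z(z)}\omega_{0,1}(z)-\omega_{0,1}(z)=0$). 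Since $\Omega'_{g,1,n}(z;w_{[n]})=\omega_{g,n+1}(z,w_{[n]})$ by the analogue of Remark~\ref{r:recursive}, the $\mathcal{Q}^{(1)}$ operator is again ``diagonal'', so one inverts it by dividing by the product and applying the Cauchy formula: this produces exactly the kernel $K_o(z_0;z)$ of \eqref{eq:recB}, using that $\int_o^z\omega_{0,2}(z_0,\cdot)$ has a simple pole on the diagonal $z=z_0$.

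Next I would substitute the inverted expression back into the abstract loop equation~\eqref{eq:aleB}. The key analytic observation is that $K_o(z_0;z)$ has a zero of order $2r$ at each $o\in\mathfrak{a}$: indeed, the denominator $\prod_{z'\in\mathfrak{f}'(z)}((-1)^{\varepsilon_z(z')}\omega_{0,1}(z')-\omega_{0,1}(z))$ has a pole of order $2r-1$ in $\tilde\zeta$ at the ramification point (each of the $2r-1$ factors $\sim Q/\tilde\zeta$), and the numerator $-\int_o^z\omega_{0,2}(z_0,\cdot)\sim -\tilde\zeta$-type behaviour near $o$ contributes a further zero; counting the $(\dd\tilde\zeta)$ powers against the $O(\dd\tilde\zeta/\tilde\zeta)^{2r}$ error term shows the residue of $K_o(z_0;z)$ times the right-hand side of \eqref{eq:aleB} vanishes at $o$. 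Hence the only surviving inhomogeneous contribution comes from the explicit $\delta_{g,\frac{r}{2}-1}\delta_{n,0}$ source term, and I would evaluate
\[
-\sum_{o\in\mathfrak{a}}\Res_{z=o}K_o(z_0;z)\,2^{2r-1}T\,\omega_{0,1}(z)\frac{(\dd\tilde\zeta)^{2r-1}}{\tilde\zeta^{2r}}
\]
by the same elementary residue computation as in the proof of Theorem~\ref{lemTRA}: writing $\omega_{0,1}(z)=\tfrac{Q_a}{\tilde\zeta}\dd\tilde\zeta$, the product $\prod_{z'\in\mathfrak{f}'(z)}((-1)^{\varepsilon_z(z')}\omega_{0,1}(z')-\omega_{0,1}(z))$ in the kernel denominator evaluates at leading order to $\prod_{b\neq a}(Q_b^2-Q_a^2)$ times elementary powers of $\tilde\zeta$ (the factor $-2\omega_{0,1}(z)$ from $z'=(\begin{smallmatrix}a\\-\tilde\zeta\end{smallmatrix})$ combining with the $(\begin{smallmatrix}b\\\pm\tilde\zeta\end{smallmatrix})$ factors to give $-2Q_a\prod_{b\neq a}(Q_b^2-Q_a^2)/\tilde\zeta^{2r-1}$ up to signs). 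Collecting constants gives precisely the $2^{2r-2}T\delta_{g,\frac{r}{2}-1}\delta_{n,0}\sum_a\dd\xi^a_{-1}(z_0)/\prod_{b\neq a}(Q_a^2-Q_b^2)$ term of \eqref{lemTRBEq}, and the remaining sum over $i=2,\ldots,2r$ of $K_o(z_0;z)[\mathcal{Q}^{(i)}\Omega'_{g,i,n}](z)$ is exactly the recursive right-hand side.

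Finally, as in the self-dual case, Remark~\ref{r:recursive}'s analogue guarantees that for $i\geq 2$ the quantity $\Omega'_{g,i,n}(z_{[i]};w_{[n]})$ involves only $\omega_{g',n'}$ with $2g'-2+n'<2g-2+(n+1)$, so \eqref{lemTRBEq} is genuinely a recursion on $2g-2+n>0$ with $\omega_{0,1},\omega_{0,2}$ as initial data, and the fact that we started from an Airy structure (Theorem~\ref{BAiry}) ensures the $\omega_{g,n}$ so produced are symmetric despite the distinguished role of $z_0$. I expect the main obstacle to be purely bookkeeping: carefully tracking the signs coming from $\varepsilon_z$ and $|Z_-|$ through the $\mathcal{Q}^{(i)}$ operators and confirming that the pole-order count at the ramification points $\mathfrak{a}$ is exactly $2r$ (not off by one), since the ramified covering changes the local behaviour of $\omega_{0,1}$ and of the kernel relative to Section~\ref{SelfdualTRspec}. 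No genuinely new idea beyond Theorem~\ref{lemTRA} is needed; the proof is ``the same, \emph{mutatis mutandis}'', and I would simply write it as such, pointing to Lemma~\ref{lem:aleB}, Lemma~\ref{l:Qi}'s analogue, and the self-dual proof for the residue manipulations.
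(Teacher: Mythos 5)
Your proposal is correct and follows essentially the same route as the paper's proof: extract the diagonal $i=1$ term from the abstract loop equations of Lemma~\ref{lem:aleB}, invert $\mathcal{Q}^{(1)}$ via the Cauchy formula with the kernel \eqref{eq:recB}, use the order-$2r$ vanishing of the kernel at $\mathfrak{a}$ to kill the error term, and evaluate the $T$-source residue using the leading behaviour $-2Q_a\prod_{b\neq a}(Q_a^2-Q_b^2)(\dd\tilde\zeta)^{2r-1}/\tilde\zeta^{2r-1}$ of the denominator product. The sign bookkeeping and pole-order count you flag as the main obstacles are exactly the points the paper checks, and they work out as you anticipate.
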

\begin{rem}
	Although the nature of the spectral curve is different, the only structural differences between the $B_r$-type topological recursion formula~\eqref{lemTRBEq} and the $\mathfrak{gl}_{2r}$ topological recursion formula~\eqref{TRA} is the presence of the signs $(-1)^{|Z_-|}$ and $(-1)^{\epsilon_z(z')}$ (\emph{i.e.}, evaluating a correlator at a point in $\mathfrak{f}_-(z)$ comes with a minus sign) and a summation over $r$ (instead of $2r$) ramification points. 
\end{rem}

\begin{proof}
	We apply the steps in the proof of Theorem~\ref{lemTRA} to the $ B $-type abstract loop equations of Lemma~\ref{lem:aleB}. We notice that
\[
\left[\mathcal Q^{(1)} \Omega'_{g,1,n}(\cdot\,;w_{[n]}) \right]
	(z)  = \Omega'_{g,1,n}(z; w_{[n]})  \prod_{z' \in \mathfrak{f}(z) \setminus Z} \big((-1)^{\varepsilon_z(z')}\omega_{0,1}(z') - \omega_{0,1}(z)\big) .
	\]
	Thus, we can invert it with the recursion kernel \eqref{eq:recB}, which gives
	\[
	 \Omega'_{g,1,n}(z; w_{[n]}) = - \sum_{o \in \mathfrak{a}} \mathop{{\rm Res}}_{z = o} K(z_0;z) \left[\mathcal Q^{(1)} \Omega'_{g,1,n}(\cdot\,;w_{[n]}) \right]
	(z).
	\]
	We substitute back in the abstract loop equations \eqref{eq:aleB}. Noting that the kernel has a zero of degree $2r$ at $o \in \mathfrak{a}$, the residue of $K(z_0,z)$ times the right-hand-side of \eqref{eq:aleB} at $o \in \mathfrak{a}$ vanishes. We thus obtain:
\[
\begin{split}
 \Omega'_{g,1,n}&(z_0; w_{[n]})  =\\
&  \sum_{o \in \mathfrak{a}} \Res_{z = o} K(z_0,z) \left( \sum_{i=2}^r \left[ \mathcal{Q}^{(i)} \Omega'_{g,i,n}(\cdot\,; w_{[n]}) \right](z)  +  \delta_{g,\frac{r}{2}-1} \delta_{n,0} 2^{2r-1} T \omega_{0,1}(z) \frac{(\dd \tilde \zeta)^{2r-1}}{ \tilde \zeta^{2r}} \right).
 \end{split}
\]
We evaluate the term proportional to $T$:
\[
\begin{split}
2^{2r-1} \sum_{o \in \mathfrak{a}}& \Res_{z = o} K(z_0,z)  T \omega_{0,1}(z) \frac{(\dd \tilde \zeta)^{2r-1}}{ \tilde \zeta^{2r}}  \\
&=- 2^{2r-1}\sum_{a=1}^r \Res_{\tilde \zeta = 0} \frac{\tilde \zeta \dd \xi^a_{1}(z_0)}{
\tilde \zeta_0 (\tilde \zeta_0 - \tilde \zeta)} \left(\frac{T Q_a}{- 2 Q_a \prod_{b \neq a}(Q_a^2 - Q_b^2)}\right) \frac{\tilde \zeta^{2r-1}( \dd \tilde \zeta)^{2r}}{\tilde \zeta^{2r+1} (\dd \tilde \zeta)^{2r-1}} \\
 & =2^{2r-2} T \sum_{a=1}^r \frac{ \dd \xi^a_{-1}(z_0).}{\prod_{b \neq a}(Q_a^2 - Q_b^2)},
\end{split}
\]
where we used the fact that, for $z = \big(\begin{smallmatrix} a \\ \tilde \zeta \end{smallmatrix}\big)$:
\[
 \prod_{z' \in \mathfrak{f}(z) \setminus Z} \big((-1)^{\varepsilon_z(z')}\omega_{0,1}(z') - \omega_{0,1}(z) \big)=  - 2 Q_a  \frac{(\dd \tilde \zeta)^{2r-1}}{\tilde \zeta^{2r-1}} \prod_{b \neq a} (Q_a^2- Q_b^2).
\]
This concludes the proof.

\end{proof}

\begin{rem}
	The discussion of Section~\ref{Generald} also applies here: we can consider more general spectral curves of type B, that is spectral curves in which ramification points come in pairs. Equation~\eqref{lemTRBEq} defines a topological recursion of type $B_r$, for general $\omega_{0,1}$, $\omega_{\frac{1}{2},1}$ and $\omega_{0,2}$ as in \eqref{eq:w01shift}, \eqref{eq:w12shift} and \eqref{eq:genw02}, except that  well-definedness now requires $Q_a^2 \neq Q_b^2$ for any $a \neq b$, with $ Q_a  \in \mathbb C^*$.
\end{rem}

\begin{rem}
	In this section we assumed $r \geq 2$ merely because $B_2 = \mathfrak{so}_3 = \mathfrak{sl}_2 = A_1$, but the theory applies as well for $r = 1$ provided that $Q_1 \neq 0$, $T = 0$. In this case the Airy structure is not shifted, and hence its partition function should correspond to a highest-weight vector. Furthermore, $\omega_{0,1}$ is not odd, and \eqref{lemTRBEq} gives symmetric $\omega_{g,n}$. Since the spectral curve has only one component, we write the points on $\tilde C$ directly as $\tilde \zeta$. The recursion in that case reads
	\[
	\begin{split}
	\omega_{g,n + 1}&(\tilde \zeta_0, \tilde \zeta_{[n]}) \\
	&= \mathop{{\rm Res}}_{\tilde \zeta= 0} \frac{\int_0^{\tilde \zeta} \omega_{0,2}(\cdot,\tilde \zeta_0)}{2 \omega_{0,1}(\tilde \zeta)}\bigg(\omega_{g - 1,n + 2}(\tilde \zeta,-\tilde \zeta) + \sum_{\substack{h + h' = g \\ J \sqcup J' = \{\tilde \zeta_1,\ldots,\tilde \zeta_n\}}}^{{\rm no}\,(0,1)} \omega_{h,1 + |J|}(\tilde \zeta,J)\omega_{h',1+|J'|}(-\tilde \zeta,J')\bigg).
	\end{split}
	\]
\end{rem}

\section{Gaiotto vectors of type C and D}

\label{Sec:AiryCD}

We now construct Gaiotto vectors for type C and D using Airy structures. In this section we focus on the self-dual level $\kappa=1$.

If $G$ is of type $D$, then it is simply-laced, and the connection between Gaiotto vectors and Whittaker vectors in Theorem   \ref{ThmBFN} holds. Thus the Gaiotto vector of type D is a Whittaker vector for $\mathcal{W}^{\mathsf{k}}(\mathfrak{so}_{2n})$. For type C, Theorem \ref{ThmBFN} does not hold, but from physics \cite{KMST} it is expected that the Gaiotto vector should be a Whittaker vector for an appropriate twisted module of $\mathcal{W}^{\mathsf{k}}(\mathfrak{so}_{2r})$, as in Section \ref{Sec:AiryB}. In this section we show that these Whittaker vectors can be constructed using Airy structures.

We do not study here the spectral curve topological recursion for type C and D, as it will be more complicated given the form of the $\mathcal{W}$-generators in type D. We also do not discuss general level $\kappa$ as the $\mathcal{W}$-generators are not completely explicitly known when $\kappa \neq 1$.

\subsection{Type \texorpdfstring{$D$}{D}}

Let $r \geq 2$. We consider $\mathcal{W}^{\mathsf{k}}(\mathfrak{so}_{2r}) \subset \mathcal{H}(\mathfrak{so}_{2r})$ at self-dual level.
The roots of $  \mathfrak{so}_{2r} $ can be described as $\pm \chi^i\pm \chi^j$ where $(\chi^i)_{i = 1}^{r}$ is an orthonormal basis for the Cartan subalgebra $\mathfrak{h} = \mathbb{C}^r$. The following vectors in $\mathcal{H}(\mathfrak{so}_{2r})$ --- viewed as a subalgebra of the lattice VOA, see \cite[Section 3.2.2]{BBCCN18} --- strongly generate the $\mathcal{W}^{\mathsf{k}}(\mathfrak{so}_{2r}) $-algebra at self-dual level:
\begin{equation}
\label{eq:dbasis}  
\begin{split}
	\nu^d &= \bigg(\sum_{i=1}^r {\rm e}^{\chi^i}_{-d}{\rm e}^{-\chi^i}_{-1} + {\rm e}^{-\chi^i}_{-d}{\rm e}^{\chi^i}_{-1}\bigg) \ket{0} \qquad d \in \{2,4,6,\ldots, 2r-2\}\,,\\ 
	\widetilde{\nu}^r &= \chi^{1}_{-1} \chi^{2}_{-1}\,\cdots\,\chi^{r}_{-1} \ket{0}\,.
\end{split}
\end{equation}
The conformal weight of these vectors are $ 2,4,\ldots,2r-2$ and $ r $, which are indeed the Dynkin exponents of $ \mathfrak{so}_{2r} $.  The dual Coxeter number is $h^\vee = 2r - 2$.

We define the modes of the generators associated to these elements as 
\begin{equation}\label{eq:modesD}
	Y(\nu^d,z) := \sum_{m \in \mathbb Z} W^d_m z^{-m-d}, \qquad Y(\tilde{\nu}^r,z) := \sum_{m \in \mathbb Z}\widetilde{W}^r_m z^{-m-r},
\end{equation}
with the corresponding rescaled modes $\mathsf{W}^d_m$ and $\widetilde{\mathsf{W}}^r_m$,
and construct a representation of $ \mathcal{W}^{\mathsf{k}}(\mathfrak{so}_{2r}) $ by restricting the following representation of the Heisenberg algebra:
\[
\chi^a(z) = \sum_{ k \in \mathbb Z} J^a_m z^{-m-1}, \qquad a \in [r].
\] We use our usual differential representation for the rescaled Heisenberg modes $ \mathsf{J}^a_m $ (recall that $\alpha_0 = 0$ since $\kappa=1$):
\begin{equation} \label{eq:hrepD}
	\mathsf{J}^a_{m} = \left\{
	\begin{array}{lcr}
		\hbar \frac{\partial}{\partial x^a_m}& &  m > 0,\\[0.3ex]
		Q_a & & m = 0, \\[0.3ex]
		-m x^{a}_{-m}& &  m < 0. \end{array}\right.
\end{equation}

By now our strategy should be familiar. We want to construct a Whittaker vector $\ket{w}$ annihilated by all rescaled positive modes $\mathsf{W}^i_m$ except for the one-mode $\mathsf{W}^{2r-2}_1$ that acts as a constant on $\ket{w}$. (We recall the definition of the rescaled modes in $\mathcal{A}^{\hbar}$ from Lemma \ref{l:Ahbar}.) We proceed as usual: we first show that $S' = (\mathsf{W}^{2r-2}_1) \subset S_+$ is extraneous, and then we construct an Airy structure differential representation for the modes in $S_+$. The partition function of the shifted Airy structure corresponds to the Whittaker vector $\ket{w}$.

As we are back in the realm of Proposition  \ref{thecoth}, we already know that $S' \subset S_+$ is extraneous. So what remains is to construct the Airy structure differential representation.

\begin{thm}\label{prop:typeD}
	Let $T \in \mathbb{C}$, $Q_1, \ldots, Q_r \in \mathbb{C^*}$, and suppose that $Q_a \neq \pm Q_b$ for any distinct $a,b \in [r]$.  Let $\mathsf{W}^i_m$ and $\widetilde{\mathsf{W}}^r_m$  be the differential operators associated to the rescaled modes of the strong generators in \eqref{eq:modesD} obtained via \eqref{eq:hrepD}. Then the family of differential operators 
	$$
	\begin{array}{ll} \overline{\mathsf{W}}^d_m:= \mathsf{W}^d_m -\hbar^{r - 1}T\delta_{d,2r - 2} \delta_{m, 1} & \quad m \in \mathbb{Z}_{>0},\quad d \in \{2,4, \ldots, 2r-2\} \,,\\[0.5ex]
	\widetilde{\mathsf{W}}^r_m & \quad m \in \mathbb{Z}_{>0} ,  \end{array}
	$$
	forms an Airy structure.
\end{thm}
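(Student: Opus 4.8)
The plan is to repeat verbatim the strategy of the proofs of Theorems~\ref{Whit1prop} and~\ref{BAiry}. By Lemma~\ref{l:AiryW}, to show the shifted family is an Airy structure it suffices to check that (a)~the unshifted family $S_+$ of positive modes $\{\mathsf{W}^d_m\mid d\in\{2,4,\ldots,2r-2\},\,m\in\mathbb{Z}_{>0}\}\cup\{\widetilde{\mathsf{W}}^r_m\mid m\in\mathbb{Z}_{>0}\}$, realized via the differential representation~\eqref{eq:hrepD}, forms an Airy structure, and (b)~the one-element set $S'=\{\mathsf{W}^{2r-2}_1\}$ is extraneous in the sense of Definition~\ref{d:extra} (the shift $-\hbar^{r-1}T$ has degree $2(r-1)\ge 2$, so Lemma~\ref{l:AiryW} applies). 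Since $\mathcal{W}^{\mathsf{k}}(\mathfrak{so}_{2r})$ is a principal $\mathcal{W}$-algebra and $\mathfrak{so}_{2r}$ is simple for $r\ge 3$, point~(b) is precisely~\eqref{eq:condpost} of Proposition~\ref{thecoth}: among the conformal weights $2,4,\ldots,2r-2$ and $r$ of the generators~\eqref{eq:dbasis}, the value $2r-2$ is the unique maximum (as $2r-2>r$ for $r\ge 3$), attained by $\nu^{2r-2}$, so the extraneous mode of Proposition~\ref{thecoth} is exactly $\mathsf{W}^{2r-2}_1$; for $r=2$, where $\mathfrak{so}_4\cong\mathfrak{sl}_2\oplus\mathfrak{sl}_2$ is only semisimple, one checks~(b) by hand from the commutation relations of the two commuting Virasoro subalgebras. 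Granting~(b), the subalgebra condition for $S_+$ — hence for the shifted family — follows from Remark~\ref{r:extrasub}. So the whole problem reduces to the \textbf{degree condition}: no positive mode has a degree-$0$ component (immediate, since a positive mode always contains at least one non-zero Heisenberg mode), and for each fixed $m>0$ the degree-$1$ parts of the $r$ operators must span $\bigoplus_{a=1}^r\mathbb{C}\,\mathsf{J}^a_m$.

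The core of the argument is thus the computation of the degree-$1$ projections $\pi_1(\overline{\mathsf{W}}^{d}_m)$ and $\pi_1(\widetilde{\mathsf{W}}^r_m)$. For $\widetilde{\mathsf{W}}^r$ this is trivial: $Y(\widetilde\nu^r,z)=\,\norder{\mathsf{J}^1(z)\cdots\mathsf{J}^r(z)}$, so with the zero-modes set to $\mathsf{J}^b_0=Q_b$ one gets $\pi_1(\widetilde{\mathsf{W}}^r_m)=\sum_{a=1}^r\bigl(\prod_{b\neq a}Q_b\bigr)\mathsf{J}^a_m$. For $\nu^d$ one must first re-express the lattice-vertex-algebra generators~\eqref{eq:dbasis} inside the Heisenberg vertex algebra: using $Y({\rm e}^{\chi^i},z)\,{\rm e}^{-\chi^i}_{-1}\ket{0}=(\text{cocycle sign})\,z^{-1}\exp\!\bigl(\sum_{n>0}\tfrac{\chi^i_{-n}}{n}z^n\bigr)\ket{0}$ and the $\chi^i\mapsto-\chi^i$ analogue, one finds that the monomials of odd length in the $\chi^i_{-n}$ cancel between the two terms of~\eqref{eq:dbasis}, so that $\nu^d$ is a sum over $i$ of Heisenberg monomials in the $\chi^i_{-n}$ of \emph{even} length only (for $d=2$ this recovers the Sugawara vector $\sum_i(\chi^i_{-1})^2\ket{0}$). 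Passing to fields and substituting the zero-modes, the length-$2$, length-$4$, $\ldots$ pieces give $\pi_1(\overline{\mathsf{W}}^{2s}_m)=\sum_{a=1}^r c_{s,a}(Q,m)\,\mathsf{J}^a_m$ for $s\in\{1,\ldots,r-1\}$, where $c_{s,a}(Q,m)$ is polynomial in $m$, \emph{odd} in $Q_a$, \emph{even} and symmetric in $(Q_b)_{b\neq a}$, of total $Q$-degree $\leq 2s-1$, with leading term proportional to $Q_a\,e_{s-1}\!\bigl((Q_b^2)_{b\neq a}\bigr)$; this shape is forced by the invariance of $\nu^{2s}$ under arbitrary sign changes and permutations of the $\chi^a$.

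It then remains to show that the $r\times r$ matrix $M(Q,m)$ with rows $(c_{s,a})_a$ for $s=1,\ldots,r-1$ and last row $\bigl(\prod_{b\neq a}Q_b\bigr)_a$ is invertible for every $m\in\mathbb{Z}_{>0}$, under the hypotheses $Q_a\neq 0$ and $Q_a\neq\pm Q_b$. Packaging the first $r-1$ rows into a generating series in a spectral variable $t$ and isolating $\mathsf{J}^a_m$ by a residue at $t=Q_a^2$ (as in~\eqref{deg1W} for type A), $\det M(Q,m)$ reduces, at leading $Q$-order, to the Jacobian of a system of basic invariants of $\mathfrak{so}_{2r}$ — the $\nu^{2s}$ contributing the invariants of degrees $2,4,\ldots,2r-2$, and $\widetilde\nu^r$ the Pfaffian-type invariant $Q_1\cdots Q_r$ — evaluated at the point $Q$; by Chevalley's theorem this Jacobian is a non-zero constant times $\prod_{a<b}(Q_a^2-Q_b^2)$, which the hypothesis $Q_a\neq\pm Q_b$ makes non-zero, while the remaining $m$-dependent factor (coming from the derivative terms in $\nu^d$) is a product of integers with no root at a positive integer. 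Consequently $M(Q,m)$ is invertible, and the operators $\mathsf{H}^a_m:=\sum_{s=1}^{r-1}\bigl(M(Q,m)^{-1}\bigr)_{a,s}\,\overline{\mathsf{W}}^{2s}_m+\bigl(M(Q,m)^{-1}\bigr)_{a,r}\,\widetilde{\mathsf{W}}^r_m$ take the normal form $\hbar\partial_{x^a_m}+O(2)$. Hence $\{\overline{\mathsf{W}}^d_m,\widetilde{\mathsf{W}}^r_m\}$ is an Airy structure, which is the claim.

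The main obstacle is the middle step. Unlike the type-A generators (and the self-dual type-B ones), the $\nu^d$ genuinely contain higher-derivative terms, so $\pi_1(\overline{\mathsf{W}}^{2s}_m)$ acquires a non-trivial dependence on $m$; one must carry out the Heisenberg expansion of~\eqref{eq:dbasis} in enough detail to (i)~confirm that these derivative corrections preserve the sign-change symmetry type of the coefficients $c_{s,a}$, and (ii)~control the $m$-dependent prefactor of $\det M(Q,m)$ and verify it cannot vanish for $m\ge 1$. Everything else is the same bookkeeping as in Section~\ref{Sec:AiryA}; note also that the restriction to the self-dual level $\kappa=1$ is exactly what makes the generators~\eqref{eq:dbasis} explicit enough to run this computation at all.
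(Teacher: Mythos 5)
Your reduction to the degree condition (via Lemma~\ref{l:AiryW}, Proposition~\ref{thecoth} and Remark~\ref{r:extrasub}) matches the paper, and your aside about $r=2$, where $\mathfrak{so}_4$ is not simple, is a legitimate point the paper glosses over. But the step you flag as the ``main obstacle'' --- controlling an alleged $m$-dependence of $\pi_1(\mathsf{W}^{2s}_m)$ coming from the derivative terms in $\nu^d$ --- rests on a miscount of degrees, and since you leave that step unresolved, the proof is incomplete. The point is that the $\hbar$-rescaling is governed by Li's filtration, i.e.\ by the \emph{number} of Heisenberg modes in a monomial, not by conformal weight. A monomial of length $s<d$ inside $\nu^d$ (any term containing a $\chi^a_{-n}$ with $n\geq 2$, i.e.\ a derivative) picks up an extra factor $\hbar^{(d-s)/2}$ after rescaling, so for a positive mode it contributes degree at least $(d-s)+1\geq 2$ in the grading \eqref{eq:grading}. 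Only the top-length part $\sum_a(\chi^a_{-1})^d\ket{0}$ of $\nu^d$ survives in $\pi_1$, which is exactly the content of the citation to \cite[Lemma 3.7]{BakalovMilanov2} in the paper's proof: $\pi_1(\mathsf{W}^d_m)=\sum_{a}d\,Q_a^{d-1}\mathsf{J}^a_m$ for all $m>0$, with no $m$-dependence and no derivative corrections whatsoever. (Your guessed leading coefficient $Q_a\,e_{s-1}\bigl((Q_b^2)_{b\neq a}\bigr)$ is also not the right polynomial --- it is the pure power $2s\,Q_a^{2s-1}$; the elementary symmetric functions appear only in the \emph{inverse} matrix.)

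Once this is in place, the invertibility question collapses: after factoring $Q_a^{-1}$ from each column, the matrix built from $\pi_1(\mathsf{W}^2_m),\pi_1(\mathsf{W}^4_m),\ldots,\pi_1(\mathsf{W}^{2r-2}_m)$ and $\pi_1(\widetilde{\mathsf{W}}^r_m)=\sum_a\bigl(\prod_{b\neq a}Q_b\bigr)\mathsf{J}^a_m$ is a Vandermonde matrix in the $Q_a^2$, invertible precisely when $Q_a\neq\pm Q_b$ and $Q_a\neq 0$, with an explicit inverse in terms of $e_{r-i}\bigl((Q_b^2)_{b\neq a}\bigr)$. No Chevalley-Jacobian argument, leading-order analysis, or control of an $m$-dependent prefactor is needed. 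Your detour through the lattice-vertex-operator expansion of \eqref{eq:dbasis} and the cancellation of odd-length monomials is not wrong, but it is doing work that the grading already does for you, and the part of your argument that would actually close the proof (item (ii) in your last paragraph) is missing.
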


As usual, the direct consequence of this Proposition is that the partition function $\mathcal{Z}$ uniquely associated to this Airy structure can be identified with the searched-for Whittaker vector $\ket{w}$,  which in turn corresponds to the Gaiotto vector $\ket{\mathfrak{G}}$ of type D.

\begin{proof}
	By Proposition \ref{thecoth} we know that  $S' = \{\mathsf{W}^{2r-2}_1\} \subset S_+$ is extraneous, which also implies that the set of positive modes in $S_+$ satisfy the subalgebra condition. We need to show that the operators also satisfy the degree condition.
	
	As we only allow positive modes, the above  operators do not have terms of degree zero. Let us now analyze the degree one terms. The field $\widetilde{\mathsf{W}}^r(z)$ coincides with the top degree field in $\mathcal{H}(\mathfrak{gl}_r)$, so we know from \eqref{pi1critlev} that its degree one projection reads:
	$$
	\pi_1(\widetilde{\mathsf{W}}_m^r) = \sum_{a = 1}^r \bigg(\prod_{\substack{b \in [r] \\ b \neq a}} Q_{b}\bigg) \mathsf{J}_m^{a} = \bigg(\prod_{b = 1}^r Q_b\bigg) \sum_{a = 1}^r  (Q_a^{-1}\mathsf{J}_m^a).
	$$
	We also know (see for example, \cite[Lemma 3.7]{BakalovMilanov2}) that
	$$
	\pi_1(W^d(z)) = \sum_{a = 1}^{r } (\chi^a(z))^{d}.
	$$
	Hence the term of degree $ 1 $ for its rescaled positive modes is
	$$
	\pi_1(\mathsf{W}^d_m)  = \sum_{a=1}^N d(Q_a)^{d-1} \mathsf{J}_m^a,\qquad m > 0.
	$$
	This can be put in the form
	$$
	\left(\begin{array}{ccc} 1 & \cdots & 1 \\[0.5ex] Q_1^2 & \cdots & Q_r^2 \\[0.5ex]  \vdots & & \vdots \\[0.5ex]  Q_1^{2(r - 1)} & \cdots & Q_r^{2(r - 1)} \end{array}\right) \left(\begin{array}{c} Q_1^{-1} \mathsf{J}_m^1 \\[0.5ex] Q_2^{-1}\mathsf{J}_m^2 \\[0.5ex]  \vdots \\[0.5ex]  Q_r^{-1} \mathsf{J}_m^r \end{array}\right) = \left(\begin{array}{c} (Q_1\,\cdots\, Q_r)^{-1} \pi_1(\widetilde{\mathsf{W}}_m^r) \\[0.5ex]  \frac{1}{2}\pi_1(\mathsf{W}_m^2) \\[0.5ex]  \vdots \\[0.5ex]  \frac{1}{2(r - 1)} \pi_1(\mathsf{W}_m^{2r - 2}) \end{array}\right).
	$$
	The square matrix on the left-hand side is a Vandermonde matrix. It is invertible if and only if $Q_a^2 \neq Q_b^2 $ for any distinct $a,b \in [r]$, and then its inverse $A$ is given as follows,
	$$
	A_{a,i} = \frac{(-1)^{r - i} e_{r - i}\big((Q_b^2)_{b \neq a}\big)}{\prod_{b \neq a} (Q_a^2 - Q_b^2)},\qquad a,i \in [r].
	$$	
	where $e_j$ is the $j$-th elementary symmetric function. In particular, we have
	$$
	A_{a,1} = (-1)^{r - 1} \prod_{b \neq a} \frac{Q_b^2}{Q_a^2 - Q_b^2}.
	$$
	Thus, we conclude that the operators
	$$
	\mathsf{H}^a_m = \frac{Q_a}{\prod_{b \neq a} (Q_a^2 - Q_b^2)}\bigg(\frac{\widetilde{\mathsf{W}}_m^r}{Q_1\,\cdots\,Q_r} + \sum_{d \in \{2,4,\ldots,2r - 2\}} (-1)^{r - 1 - d/2}\,e_{r - 1 - d/2}\big((Q^2_b)_{b \neq a}\big)\,\frac{\mathsf{W}_m^d}{d}\bigg)
	$$
	define an Airy structure in normal form, with $\pi_1(\mathsf{H}^a_m) = \mathsf{J}_{m}^a$. 
\end{proof}

\subsection{Type C}

The situation for type C is similar to that of type B. Theorem \ref{ThmBFN} does not apply, as $G$ is not simply connected. However, 
the Gaiotto vector corresponding to $ G = \operatorname{Sp}(2r) $ (type C) can be constructed as a state in a certain  twisted module of $ \mathcal W^{\mathsf k}(\mathfrak{so}_{2r+2}) $ (type D). The twist corresponds to the folding symmetry of the Dynking diagram used to obtain type $ C $ from type $ D $. Thus we can construct the Gaiotto vector of type C using Airy structures.

Recall that the  Cartan subalgebra of $ \mathfrak{so}_{2r+2} $ has an orthonormal basis $ \left(\chi^i\right)_{i=1}^{r+1} $. The folding symmetry  group $ \mathbb Z/2\mathbb Z $ is generated by the automorphism $ \sigma $ which acts as
\[
\chi^{r+1} \leftrightarrow -\chi^{r+1}, \qquad 	\chi^{i} \leftrightarrow \chi^{i} \text{ for }  i \in [r] .
\]  As the above action is already diagonal, by following the usual algorithm, we construct a $ \sigma $-twisted $ \mathcal H(\mathfrak{so}_{2r+2}) $-module as 
\[
Y_\sigma(\chi^a_{-1}\ket{0}, z) = \sum_{m\in  \mathbb Z} J^a_{m} z^{-m-1}, \text{ for } a \in [r], \,\, \text{ and } \,\,
Y_\sigma( \chi^{r+1}_{-1}\ket{0}, z) = \sum_{k \in \mathbb Z +\frac{1}{2}} J^{r+1}_{k} z^{-k-1}.
\]  Note that the twist field associated to $  \chi^{r+1}_{-1}\ket{0} $ does not have a zero-mode. Here we choose a representation of the Heisenberg algebra as follows. For $ a $ in $ [r] $ and $ m \in \mathbb Z $, we define the rescaled Heisenberg modes:
\begin{equation}\label{eq:heiC1}
	\mathsf{J}^a_{m} = \left\{
	\begin{array}{lr}
		\hbar \frac{\partial}{\partial x^a_m}&  m > 0,\\
		-m x^{a}_{-m}& m < 0, \\
		Q_a  & m = 0, 
	\end{array} \right.
\end{equation} and for $ r+1 $, we define

\begin{equation}\label{eq:heiC2}
	\mathsf{J}^{r+1}_{k} = \left\{
	\begin{array}{lr}
		\hbar \frac{\partial}{\partial x^{r+1}_{k}}&  k > 0,\\
		-k x^{r+1}_{-k}& k < 0,
	\end{array} \right.
\end{equation} where $ k \in \frac{1}{2} + \mathbb Z $.

\noindent 	The action of $ \sigma $ on the generators~\eqref{eq:dbasis} of  $ \mathcal W_{\mathsf k}(\mathfrak{so}_{2r+2}) $ is diagonal:
\[
\qquad	\nu^d \to \nu^d,\qquad \tilde{\nu}^{r+1} \to - \tilde{\nu}^{r+1},
\] where $ d  = 2,4,\ldots, 2r $. Hence we get the following mode expansion
\begin{equation}\label{eq:modesC}
	Y_\sigma(\nu^d,z) := \sum_{k \in \mathbb Z} W^d_k z^{-k-d}, \qquad Y_\sigma(\tilde{\nu}^{r+1},z) := \sum_{k \in \frac{1}{2}+ \mathbb Z}\widetilde{W}^{r+1}_k z^{-k-r-1},
\end{equation}
with the corresponding rescaled modes $\mathsf{W}^d_k$ and $\widetilde{\mathsf{W}}^{r+1}_m$ (those are the modes in $\mathcal{A}^{\hbar}$ as in Lemma \ref{l:Ahbar}).

We are looking for a Whittaker vector $\ket{w}$ that is annihilated by all positive modes $\mathsf{W}^d_k$ and $\widetilde{\mathsf{W}}^{r+1}_k$ from \eqref{eq:modesC}, except for $\widetilde{\mathsf{W}}^{r+1}_{1/2}$ which acts on $\ket{w}$ as constant. We proceed as usual: we need to show that $S' = \{\widetilde{\mathsf{W}}^{r+1}_{1/2}\} \in S_+$ is extraneous, and then construct an Airy structure differential representation for the positive modes in $S_+$. 

However, as in the type B case, Proposition \ref{thecoth} does not apply here, and hence we do not have a proof that $S' \subset S_+$ is extraneous. We formulate this as a conjecture:

\begin{conj}
	\label{theconjC} Let $S_+$ be the subset of positive modes in \eqref{eq:modesC}. Then the subset $S' = \{\widetilde{\mathsf{W}}^{r+1}_{1/2}\} \in S_+$ is extraneous, according to Definition \ref{d:extra}.
\end{conj}

\begin{thm}
	\label{Prop73C} Assume that Conjecture \ref{theconjC} holds. Let $T \in \mathbb{C}$, $Q_1, \ldots, Q_r \in \mathbb{C}^*$, and suppose that $Q_a \neq \pm Q_b$ for any distinct $a,b \in [r]$. Let $\mathsf{W}^i_m$ and $\widetilde{\mathsf{W}}^{r+1}_m$ be the differential operators associated to the rescaled modes in \eqref{eq:modesC} via \eqref{eq:heiC1} and \eqref{eq:heiC2}. Then the family of differential operators 
	$$
	\begin{array}{ll} \mathsf{W}^d_m  & \quad m \in \mathbb{Z}_{>0},\quad d \in \{2,4, \ldots, 2r\} \,,\\[0.5ex]
	\overline{\mathsf{W}}^{r+1}_m := \widetilde{\mathsf{W}}^{r+1}_m  -\hbar^{\frac{r+1}{2}}T  \delta_{m, \frac{1}{2}} & \quad m \in \left(\mathbb{Z}_{>0}-\frac{1}{2} \right) \end{array}
	$$
	forms an Airy structure.
\end{thm}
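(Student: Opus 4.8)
The plan is to reproduce, almost verbatim, the three-stage argument used for the type B and type D statements (Theorems~\ref{BAiry} and \ref{prop:typeD}) via Lemma~\ref{l:AiryW}. Granting Conjecture~\ref{theconjC}, the subset $S'=\{\widetilde{\mathsf{W}}^{r+1}_{1/2}\}$ of the positive modes in \eqref{eq:modesC} is extraneous in the sense of Definition~\ref{d:extra}; by Remark~\ref{r:extrasub} the full set of positive modes then satisfies the subalgebra condition of Definition~\ref{d:Airy}, and since the shift $-\hbar^{\frac{r+1}{2}}T\,\delta_{m,1/2}$ has degree $r+1\ge 2$ for the grading \eqref{eq:grading}, the subalgebra condition persists for the shifted family $\{\mathsf{W}^d_m,\ \overline{\mathsf{W}}^{r+1}_m\}$. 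So the only thing left to verify is the degree condition for these differential operators, built from \eqref{eq:heiC1}--\eqref{eq:heiC2}.

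For the degree condition I would argue as in the previous proofs. No operator has a degree-zero term: a monomial made entirely of zero-modes only feeds the zero-mode $W^d_0$, never a positive mode, and $\widetilde W^{r+1}$ has no zero-mode at all. The quantum corrections produced by normal-ordering the product formula for the $\sigma$-twisted $\mathcal{H}(\mathfrak{so}_{2r+2})$-module carry strictly positive powers of $\hbar$ (cf.\ the discussion below \eqref{eq:modesC} and \cite[Section 4.1.2]{BBCCN18}), hence are invisible to the degree-zero and degree-one projections and can be dropped. For degree one I would use $\pi_1(W^d(z))=\sum_{a=1}^{r+1}(\chi^a(z))^d$ as in \cite[Lemma 3.7]{BakalovMilanov2}, together with the crucial feature that the twist field attached to $\chi^{r+1}$ has no zero-mode: since $d$ is even and $\ge 2$, a degree-one contribution from $(\chi^{r+1}(z))^d$ would require at least one zero-mode factor for $\chi^{r+1}$, which does not exist. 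Hence, for $m\in\mathbb{Z}_{>0}$,
\[
\pi_1(\mathsf{W}^d_m)=\sum_{a=1}^{r} d\,Q_a^{d-1}\,\mathsf{J}^a_m,\qquad d\in\{2,4,\dots,2r\},
\]
exactly as in \eqref{pi1critlev}, while for the top generator, built from $\tilde\nu^{r+1}=\chi^1_{-1}\cdots\chi^{r+1}_{-1}\ket{0}$, the single non-zero mode of a degree-one monomial is forced to come from the $\chi^{r+1}$-factor (the only one lacking a zero-mode), the remaining factors contributing $Q_1,\dots,Q_r$, so
\[
\pi_1(\widetilde{\mathsf{W}}^{r+1}_k)=\Big(\textstyle\prod_{a=1}^{r}Q_a\Big)\,\mathsf{J}^{r+1}_k,\qquad k\in\mathbb{Z}_{>0}-\tfrac12.
\]

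The last step is to invert these relations. For each integer $m>0$ the $r\times r$ matrix $\big(d\,Q_a^{d-1}\big)_{d\in\{2,\dots,2r\},\,a\in[r]}$ becomes, after factoring $d$ out of each row and $Q_a$ out of each column, a Vandermonde matrix in the variables $Q_a^2$, which is invertible exactly under the hypotheses $Q_a\ne 0$ and $Q_a\ne\pm Q_b$ for $a\ne b$. Writing $A$ for the inverse Vandermonde matrix, the operators
\[
\mathsf{H}^a_m=Q_a^{-1}\!\!\sum_{d\in\{2,4,\dots,2r\}}\frac{A_{a,\,d/2}}{d}\,\mathsf{W}^d_m\ \ (a\in[r],\ m\in\mathbb{Z}_{>0}),\qquad
\mathsf{H}^{r+1}_k=\Big(\textstyle\prod_{a=1}^{r}Q_a\Big)^{-1}\overline{\mathsf{W}}^{r+1}_k\ \ (k\in\mathbb{Z}_{>0}-\tfrac12)
\]
then satisfy $\pi_1(\mathsf{H}^a_m)=\mathsf{J}^a_m$ for all $a\in[r+1]$, so they form an Airy structure in normal form and therefore the original family is an Airy structure. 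The only genuine obstacle here is Conjecture~\ref{theconjC}, the twisted-module analogue of Proposition~\ref{thecoth}, which we assume; granted that, I expect no further difficulty, the argument being a routine transcription of the type D computation with the extra bookkeeping confined to the half-integer mode $\widetilde{\mathsf{W}}^{r+1}$ and the absent zero-mode of $\chi^{r+1}$.
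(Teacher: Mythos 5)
Your proposal is correct and follows essentially the same route as the paper's proof: assuming Conjecture~\ref{theconjC} for the subalgebra condition, discarding the $\hbar$-weighted twisting corrections, computing $\pi_1(\mathsf{W}^d_m)=\sum_{a=1}^r d\,Q_a^{d-1}\mathsf{J}^a_m$ and $\pi_1(\widetilde{\mathsf{W}}^{r+1}_k)=\big(\prod_a Q_a\big)\mathsf{J}^{r+1}_k$, and inverting the Vandermonde system in the $Q_a^2$ under the hypothesis $Q_a\neq\pm Q_b$, $Q_a\neq 0$. Your explicit justification that the absent zero-mode of $\chi^{r+1}$ forces $\mathsf{J}^{r+1}$ out of $\pi_1(\mathsf{W}^d_m)$ is in fact slightly more detailed than the paper's remark to the same effect.
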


As usual, the immediate consequence is that the partition function $\mathcal{Z}$ uniquely associated to this Airy structure can be identified with the Gaiotto vector $\ket{\mathfrak{G}}$ of type C.

\begin{proof}
	By Proposition 3.14 of \cite{BBCCN18} (see also footnote \ref{f:subalgebra}), we know that the subset of positive modes $S_+$ satisfies the subalgebra condition in Definition \ref{d:Airy}. Assuming that Conjecture~\ref{theconjC} holds,  the subset $S' = \{\widetilde{\mathsf{W}}^{r+1}_{1/2}\} \in S_+$ is extraneous, and so the shifted differential operators also satisfy the subalgebra condition in Definition \ref{d:Airy}. What remains is to show that they satisfy the degree condition.

	As in the case of type $ B $, the correction terms in the fields $ Y_\sigma(\nu^d,z) $ and $ Y_\sigma(\tilde{\nu}^{r+1},z) $ due to the twisting come with powers of $ \hbar $,  and thus contribute terms of degree $\geq 2$ to the differential operators. Therefore we can safely ignore them in the proof.	
	
	As usual, it is clear that there are no terms of degree zero. Let us analyze the degree one terms. This proof is analogous to the proof of  Theorem~\ref{prop:typeD}, \emph{i.e.} the case of type $ D $, and hence we will be brief. As we have already noted, $ \nu^d $ is invariant under the automorphism, hence we have 
	\[
	\pi_1(\mathsf{W}^d_m) = \sum_{a=1}^{r} d (Q_a)^{d-1} \mathsf{J}^a_m.
	\] Note that $ \mathsf{J}^{r+1}_m $ does not appear on the right hand side of the above equation. This can be inverted as in the type B case:
	$$
	\mathsf{J}_a^m = Q_a^{-1} \sum_{i = 1}^r \frac{(-1)^{r - i}\,e_{r - i}\big((Q_b^2)_{b \neq a}\big)}{\prod_{b \neq a} (Q_a^2 - Q_b^2)}\,\frac{\pi_1(\mathsf{W}_m^{2i})}{2i}. 
	$$
	From the definition of the generator $ \tilde{\nu}^{r+1} $, we see that
	\[
	\pi_1(\widetilde{\mathsf{W}}_m^{r+1}) =  \bigg(\prod_{a = 1}^r Q_{a}\bigg) \mathsf{J}_m^{r+1}.
	\]
	Therefore, the family of operators
	\begin{equation}
		\begin{split}
			\mathsf{H}^a_m &= \sum_{d \in \{2,4,\ldots,2r\}} \frac{(-1)^{r  - \frac{d}{2}}}{d}\, \frac{e_{r  - \frac{d}{2}}\big((Q^2_b)_{b \neq a}\big)}{Q_a \prod_{b \neq a} (Q_a^2 - Q_b^2)}\,\mathsf{W}_m^d, \qquad  a \in [r], \\
			\mathsf{H}^{r+1}_m &= \frac{\widetilde{\mathsf{W}}^{r+1}_m}{\prod_{a = 1}^r Q_a},
		\end{split}
	\end{equation}
	define an Airy structure in normal form with $\pi_1(\mathsf{H}^a_m) = \mathsf{J}_{m}^a$. (The $ e_i $ in the above equations denotes the $ i $-th elementary symmetric function.) It then follows that the shifted differential operators also form an Airy structure, assuming that Conjecture~\ref{theconjC} holds.
	
\end{proof}

%

\end{document}